\numberwithin{equation}{section}
\theoremstyle{plain}
\newtheorem{thm}{Theorem}[section]
\newtheorem{lem}[thm]{Lemma}
\newtheorem{prop}[thm]{Proposition}
\newtheorem{cor}[thm]{Corollary}
\newtheorem{conj}[thm]{Conjecture}
\theoremstyle{definition}
\newtheorem{defn}[thm]{Definition}
\newtheorem{exmp}[thm]{Example}
\theoremstyle{remark}
\newtheorem{rem}[thm]{Remark}
\newcommand{\C}{{\mathbb C}}
\newcommand{\Z}{{\mathbb Z}}
\newcommand{\R}{{\mathbb R}}
\newcommand{\be}{\begin{equation*}}    
\newcommand{\ee}{\end{equation*}}    
\newcommand{\ben}{\begin{equation}}    
\newcommand{\een}{\end{equation}}    
\newcommand{\bea}{\begin{align*}}    
\newcommand{\eea}{\end{align*}}    
\newcommand{\bean}{\begin{align}}    
\newcommand{\eean}{{\end{align}}}
\newcommand{\gl}{\mathfrak{gl}}
\newcommand{\slt}{\mathfrak{sl}_2}
\newcommand{\slth}{\widehat{\mathfrak{sl}}_2}
\newcommand{\e}{\epsilon}
\newcommand{\la}{\lambda}
\newcommand{\res}{{\rm res}}
\newcommand{\on}{\operatorname}
\newcommand{\mc}{\mathcal}
\newcommand{\floori}[1]{ \left\lfloor #1 \right\rfloor}
\newsavebox{\mycases}
\begin{document}

\author{D. Masoero, E. Mukhin, and A. Raimondo }

\address{DM: Grupo de F\'isica Matem\'atica da Universidade de Lisboa \\
Campo Grande Edif\'icio C6, 1749-016 Lisboa, Portugal}
\email{dmasoero@gmail.com}
\address{EM: Department of Mathematical Sciences,
Indiana University Indianapolis,
402 N. Blackford St., LD 270, 
Indianapolis, IN 46202, USA}
\email{emukhin@iu.edu} 
\address{AR: Dipartimento di Ingegneria Gestionale, dell’Informazione e della Produzione, Universit\`a di Bergamo, Italy, and INFN, Sezione di Milano-Bicocca, Piazza della Scienza 3, 20126 Milano, Italy.}
\email{andrea.raimondo@unibg.it}

\title{$Q$-functions for lambda opers}
\begin{abstract}
We consider the Schr{\"o}dinger operators which are constructed from the $\lambda$-opers corresponding to solutions of the $\widehat{\mathfrak{sl}}_2$ Gaudin Bethe Ansatz equations. We define and study the connection coefficients called the $Q$-functions. We conjecture that the $Q$-functions obtained from the $\lambda$-opers
 coincide with the $Q$-functions of the Bazhanov-Lukyanov-Zamolodchikov opers with the monster potential related to the quantum KdV flows. We give supporting evidence for this conjecture.
\end{abstract}

\maketitle

\tableofcontents

\section{Introduction}
In this paper we study the Schr{\"o}dinger operators of the form 
\bean\label{L Gaudin-1}
L^G=\partial_x^2-\left(\frac{l(l+1)}{x^2}+\sum_{i=1}^{d_0}\frac{2}{(x-s_i)^2}+\sum_{i=1}^{d_0}\frac{k+s_i/(s_i-1)}{x(x-s_i)}+x^k(x-1)\la^2\right),
\end{align}
where $k>-2$, $\lambda\in\C \setminus\{0\}$ is a spectral parameter and the  $s_i$  are pairwise distinct non-zero complex numbers such that for every $i=1,\dots,d_0$ the monodromy around $x=s_i$ is trivial for every value of $\la$. The motivation for our study comes from integrable systems. The main results can be summarized as follows.
\begin{itemize}
\item We show that the operators \eqref{L Gaudin-1} are obtained from the Bethe Ansatz equations for the  $\slth$ Gaudin model, and prove that the connection coefficients ($Q$-functions) obtained by considering the central connection problem
for the operator \eqref{L Gaudin-1} satisfy the \lq quantum Wronskian condition\rq\,  (or the $QQ$-system) appearing in the quantum KdV model \cite{BLZ1}. This provides an ODE/IM correspondence between operators \eqref{L Gaudin-1} and the quantum KdV model. We describe some analytical properties of the $Q$-functions obtained from \eqref{L Gaudin-1}.
\item The classical ODE/IM correspondence \cite{DTa,DT,BLZ4} studies 
operators known as \lq monster potential Schr{\"o}dinger operators\rq :
\ben\label{1}
L^{BLZ}=\partial_x^2-\left(\frac{\bar l(\bar l+1)}{x^2}+\frac{1}{x}+\sum_{i=1}^{d}\Big(\frac{2}{(x-\bar{z}_i)^2}+\frac{  \bar k-2}{x(x-\bar{z}_i)}\Big) -x^{\bar k-2}\bar\la\right).
\een
The central connections coefficients of such operators is another family solutions to the $QQ$-system.  The operators \eqref{L Gaudin-1} and the monster potential operators \eqref{1} do not seem to be directly related by a change of coordinates except for some special cases related to $d=0$, see Section \ref{sec:reproductiononopers}. However,  we provide  algebraic and analytic evidence 
that the $Q$-functions obtained from \eqref{L Gaudin-1} coincide with those obtained from the monster potential operators \eqref{1}.
\end{itemize}

The appearance of the monster potential operators \eqref{1} in the quantum KdV model is a mystery from the Bethe Ansatz point of view. On the other hand, operators \eqref{L Gaudin-1} are derived in a rather familiar algebraic way. We hope the connection between the two families of operators will provide a boost to the study of both ODE/IM correspondence and of the quantum KdV/ affine Gaudin models.

We describe our motivation and results in more detail. 

\bigskip

\noindent
{\bf Affine Gaudin models and quantum KdV.} The quantum KdV model is defined by a commuting family of linear operators, Hamiltonians, acting in a generic Virasoro Verma module $M_{c,\Delta}$, where $c$ is the central charge and $\Delta$ the highest weight. These Hamiltonians are given by integrals of local Virasoro currents and are called local integrals of motions, see \cite{BLZ1}. The eigenvalues of the Hamiltonians for a given common eigenvector are expected to be encoded in a pair of entire functions $\bar{Q}_\pm(\bar{\la})$ of the spectral parameter $\bar{\la}$, known as $Q$-functions of the quantum KdV model. The expansion of the $Q$-functions at $\bar{\la}=\infty$ produces eigenvalues of local integral of motions, and one conjectures that the expansions of the $Q$-functions at $\bar{\la}=0$ produce eigenvalues of the so called non-local integrals of motion, \cite{BLZ2}. The $Q$-functions $\bar{Q}_\pm(\bar{\la})$ satisfy the quantum Wronskian condition (or the $QQ$-system):
\be
\gamma \bar{Q}_+(q\bar{\la})\bar{Q}_-(q^{-1}\bar{\la})-\gamma^{-1}\bar{Q}_+(q^{-1}\bar{\la})\bar{Q}_-(q\bar{\la})=1,
\ee
where $\gamma=e^{2\pi ip}$ and $q=e^{\pi i\beta^2}$ are related to $c$ and $\Delta$ by \eqref{cdeltabetap}.

An alternative approach to the quantum KdV model is provided by affine Gaudin models, an affine analogue of Gaudin models associated to simple Lie algebras. It is expected \cite{FF,FJM} that the non-local quantum KdV integrals of motion are related to the  $\slth$ affine Gaudin model acting in the space of singular vectors in a tensor product $L_{2m,2l}\otimes L_{1,0}$ of irreducible $\slth$ Verma module of generic highest weight $(2m,2l)$ with the basic level one module. The subspace of singular vectors of a given $\slt$-weight $2r$ ($r\in\Z$) is naturally a Virasoro Verma module $M_{c,\Delta_r}$ via the Sugawara construction, see \eqref{coset}, where
$$
c=1-\frac{6}{(k+2)(k+3)}, \qquad \Delta_r=\frac{(l-2r-kr)(l+1-2r-kr)}{(k+2)(k+3)},
$$
and $k=2m+2l$ is the level of $L_{2m,2l}$. It is predicted that the quantum KdV integrals of motion commute with affine Gaudin Hamiltonians and, moreover, that non-local integrals of motion of \cite{BLZ1} are images of affine Gaudin Hamiltonians.

Despite many attempts, the affine Gaudin models are not well understood. The first non-trivial Hamiltonian is conjectured in \cite{LVY2}. The affine Gaudin model is expected to be an appropriate limit of the XXZ models associated to quantum toroidal algebras. These XXZ models are described and can be studied by the Bethe Ansatz method, \cite{FJMM0,FJMM1}. However, the limit is difficult to calculate.
 At the moment, the study of affine Gaudin models is effectively reduced to the study of algebraic structures of the Bethe Ansatz equations (BAE) and corresponding opers.   The BAE for the $\slth$-module  $L_{2m,2l}\otimes L_{1,0}$ is an algebraic system on variables $\{s_i\}_{i=1}^{d_0}$, $\{t_j\}_{j=1}^{d_1}$:
 \begin{subequations}\label{Gaudin BAE BLZ intro}
\bean
\frac{1/2}{s_i-1}+ \frac{k/2-l}{s_i}+\sum_{j=1}^{d_1}\frac{1}{s_i-t_j}-\sum_{j=1, j\neq i}^{d_0}\frac{1}{s_i-s_j}=0, \label{Gaudin BAE BLZ 1 intro}\\
\frac{l}{t_i}+\sum_{j=1}^{d_0}\frac{1}{t_i-s_j}-\sum_{j=1, j\neq i}^{d_1}\frac{1}{t_i-t_j}=0.
\label{Gaudin BAE BLZ 2 intro}
\end{align}
\end{subequations}
It is expected that eigenvalues of Hamiltonians of affine Gaudin model are values of explicit polynomials of $s_i,t_j$ symmetric in both $s_i$ and $t_j$ and independent of $m,l,d_0,d_1$.
Set
\begin{equation}\label{eq:rintro}
r=d_0-d_1.
\end{equation}
It is expected that for generic $l,k$, the number of solutions to \eqref{Gaudin BAE BLZ intro}, up to permutation of variables $t_i$ and of variables $s_j$, equals to the number of partitions of $d_0-r^2$ and the corresponding eigenvectors of the Gaudin Hamiltonians form a basis of the vectors of degree $\Delta_r-d_0+r^2$ in the Virasoro Verma module $M_{c,\Delta_r}\subset L_{2m,2l}\otimes L_{1,0}$.  We further set 
\begin{equation}\label{y0y1intro}
y_0(x)=\prod_{i=1}^{d_0}(x-s_i),\qquad y_1(x)=\prod_{i=1}^{d_1}(x-t_i),
\end{equation}
and
\begin{equation}\label{T0T1intro}
T_0(x)=x^{2m}(x-1),\qquad T_1(x)=x^{2l}.
\end{equation}
Then the BAE \eqref{Gaudin BAE BLZ  intro} says that functions $T_0y_1^2/y_0^2$ and  $T_1y_0^2/y_1^2$ have no residues at $s_i$ and $t_i$. The solutions with different $r$ can be connected to the solutions with $r=0$ via an explicit recursive procedure  known  as trigonometric reproduction procedure, \cite{MV2}, therefore, often one can concentrate on the $r=0$ sector.
\bigskip

\noindent
{\bf Affine Gaudin models and $\lambda$-opers.}
The affine opers related to the BAE were first suggested in \cite{FF}, then studied further in \cite{LVY1,LVY2}. Another version of affine opers called $\la$-opers was introduced in \cite{GLVW} motivated by the Kondo problems in products of chiral $\on{SU}(2)$ WZW models. Given polynomials \eqref{y0y1intro} whose roots solve the BAE \eqref{Gaudin BAE BLZ intro}  define the functions
\begin{gather*}
a_+(x)=-\frac 12 \ln'\frac{T_0y_1^2}{y_0^2}, \qquad  a_-(x)=-\frac 12 \ln'\frac{T_1y_0^2}{y_1^2},\\ 
 P(x)=T_0(x)T_1(x)=x^k(x-1),
 \end{gather*}
where $T_0$, $T_1$ as in \eqref{T0T1intro}, and define two operators
\begin{align*}
L_0=(\partial_x+a_+(x))(\partial_x-a_+(x))-P(x)\la^2, \\
L_1=(\partial_x+a_-(x))(\partial_x-a_-(x))-P(x)\la^2,
\end{align*}
see \eqref{L01 gen}. Then $L_1$ has form \eqref{L Gaudin-1}. We note that this construction is similar to the one described in \cite{MV1,MV2}, where given a solution  of the Gaudin BAE associated to any Kac Moody algebra, one naturally assigns a set of $\slt$-opers, one for each Dynkin node, and then uses the reproduction procedure  to produce families of solutions of the BAE.   For the present case of $\slth$ there are two nodes and the two $\slt$-opers are $(\partial_x+a_\pm(x))(\partial_x-a_\pm(x))$. In this case, we can add the same term $\la^2 P=\la^2 T_0T_1$ to the two $\slt$-opers to produce  the same $\la$-oper of \cite{GLVW}.  Thus adding $\la^2 P$, where  $\la$ is an arbitrary non-zero constant, makes those two $\slt$-opers gauge equivalent.

The operators $L_0$ and $L_1$  have poles of order two at $t_i$ and $s_i$, respectively, and the BAE \eqref{Gaudin BAE BLZ intro} imply that monodromy around these poles is trivial for all $\la$.  In particular, $L_1$ does not depend on $t_i$. We note incidentally that for an operator of the form  \eqref{L Gaudin-1} the trivial monodromy at $\lambda=0$ is equivalent to the trivial monodromy for all $\lambda$. We expect that for generic $l,k$ all operators of the form \eqref{L Gaudin-1} are obtained from solutions to BAE \eqref{Gaudin BAE BLZ intro}. Thus going to $L_1$ we effectively eliminate variables $t_i$ from the Bethe Ansatz equations. In this paper we study operators $L^G$ which are operators $L_1$ for some $\la$-oper. The $L^G$ opers are also special cases of more general operators considered in \cite{KL}.

\bigskip

\noindent
{\bf ODE/IM correspondence for $L^G$.}
At $x=0$ the operator $L^G$ has a regular singular point, and for generic $l$ the equation $L^G\Psi(x,\la)=0$ has two  Frobenius solutions $\chi^+,\chi^-$. The functions $\chi^+,\chi^-$ have the form $x^{l+1}u(x,\la)$,  and $x^{ -l}v(x,\la)$, respectively,  where $u,v$ are power series in variables $x$ and $\la^2 x^{k+2}$, see Corollary \ref{cor:frobenius}.
At $x=\infty$, the operator $L^G$ has an irregular singular point, and there is a unique Sibuya solution $\psi^{0}(x,\la)$ which vanishes exponentially fast as $x\to \infty$ in the sector $-\pi+\epsilon <2\on{arg} \la +(k+3)\on{arg} x<\pi+\epsilon$ for all $\epsilon>0$, see Proposition \ref{prop:sibuya}. Since all non-zero singularities are apparent, we can unambiguously write
\be 
\psi^{(0)}(x,\la)=A_-(\la) \chi^+(x,\la) + A_+(\la) \chi^-(x,\la).
\ee 
We define the $Q$-functions
\be
Q_\pm(\la)=\pm e^{-\frac{\pi i}{4}}(l+\tfrac12)^{\frac{1}{2}}\la^{-\frac{k+3}{4}\pm(l+r+\frac12)} A_\pm (\la^{\frac{k+3}{2}}),
\ee
where, as always,  $r=d_0-d_1$, see \eqref{eq:rintro}. We prove in Theorem  \ref{thm:QQsystem} that these functions are holomorphic for $\la\in\C^*$ and that they satisfy the $QQ$-system:
\begin{equation}\label{QQsystemintro-1}
\gamma_{r} Q_+(q\la)Q_-(q^{-1}\la)-\gamma_{r}^{-1}Q_+(q^{-1}\la)Q_-(q\la)=1,
\end{equation}
where
\be
\gamma_{r}=e^{\frac{2l-2r(k+2)+1}{k+3}\pi i}, \qquad q=e^{\frac{k+2}{k+3}\pi i}.
\ee
This is a version of the ODE/IM correspondence, see \cite{DTa, DT, BLZ4}. The  existence of the $QQ$-system \eqref{QQsystemintro-1} for $L^G$ was suggested in \cite{KL}. The limit $\la \to 0$ is a singular limit for the operator $L^G$, since the formal substitution $\la=0$ in \eqref{L Gaudin-1} leads to an operator with a Fuchsian (instead of irregular) singularity $x=\infty$. We study the limit as $\la\to 0$ of the $Q$-functions and we conjecture that if $L^G$ is obtained from a solution to the Gaudin  BAE \eqref{Gaudin BAE BLZ intro} with $d_0-d_1=r$, then the function $Q_{\pm}(\la)$ is entire, namely it extends holomorphically to $0$.
Due to reproduction procedure the general $r$ case, generically, can be deduced from $r=0$, see propositions \ref{prop:R0Qpm} and \ref{prop:R1Qpm} for the explicit action of the reproduction on $Q_{\pm}$.

We also consider lateral connection problems for the operator $L^G$, defining Stokes sectors and Sibuya solutions $
\psi^{(j)}$ ($j\in\Z$) which are subdominant in those sectors, and obtaining entire functions $T_j(\la)$ which satisfy the $TQ$-system and fusion relations, see Theorem \ref{thm:TQ}.

\bigskip

\noindent
{\bf Comparison with the BLZ monster potential oper.}
In the classical ODE/IM correspondence, one considers  Schr{\"o}dinger operators $L^{BLZ}$ of the form \eqref{1}, with $ \bar{k} \in (0,1), \; \Re \bar{l}>-\frac12,$ and where the $\bar{z}_i$ are non-zero  distinct complex numbers such that
 the monodromy about them is trivial for every $\la$, see \cite{BLZ4,MR1}. The operators $L^{BLZ}$ are expected to describe the spectrum of quantum KdV model.  More precisely,  one expects that eigenvectors of weight $\Delta-d$ of the quantum KdV integrals of motion in the Virasoro Verma module $M_{c,\Delta}$ are in a natural bijection with the operators \eqref{1} where $\bar l, \bar k$ are related to the central charge and the highest weight of the Virasoro module, see \eqref{charge and weight general}, and that the eigenvalues of each local integral of motion can be computed as a fixed polynomial evaluated on the positions of the apparent singularities $z_i$ in the corresponding operator, see \cite{BLZ4,Lit}.

Our main conjecture is that the $Q$-functions for the operators $L^G$ and $L^{BLZ}$ essentially coincide. More precisely, for each operator
\eqref{1} and each $r\in \Z$ there exists an operator \eqref{L Gaudin-1} with $d_0=d+r^2$ and 
\begin{equation} \label{eq:parametersintro}
\bar k=\frac{k+2}{k+3}, \qquad \bar{l}+\frac12=\frac{l-(k+2)r+\frac12}{k+3},  
\end{equation}
such that the $Q$-functions of the two operators coincide, see \eqref{eq:Q=barQ}.  Moreover, this correspondence is bijective: for each operator $L^G$ there exists an $r$ and an operator $L^{BLZ}$ with $d=d_0-r^2$ such that the $Q$-functions coincide.

We have the following analytic reasons for this correspondence. The functions $Q_{\pm}$ for
$L^G$ and the function $\bar{Q}_\pm$ for $L^{BLZ}$ in both cases are entire functions of $\la$ which satisfy the same  $QQ$-system. Moreover, the asymptotics of zeros of the functions $Q_{+}$  and  $\bar{Q}_+$ coincide in the limit $l\to\infty$. These properties uniquely determine the $Q$-function, see \cite{CM1}.

Note that the top operators $L^G$ with $d_0=0$ and $L^{BLZ}$ with $d=0$ are related by a simple - though $\la$ dependent - change of variables, see \eqref{eq:phichange}, and therefore have the same $Q$-functions. Moreover, the other "top" operators $L^G$ for each $r\in\Z$ have  $(d_0,d_1)=(r^2,r^2-r)$ and are related to $d_0=d_1=0$ by the reproduction procedure, see Section \ref{sec:reproductiononopers}, and, therefore, can be related to $L^{BLZ}$ with $d=0$ by a change of variables and a gauge transformation. However, we do not expect such a simple relation in the general case.

Also, note that the limit $\la \to 0$ is a singular limit for $L^G$ and a regular limit for $L^{BLZ}$. As a consequence, at $\la=0$, the $L^G$ and $L^{BLZ}$ are very different: at $\la=0$, $L^{G}$ only has Fuchsian singularities and the solutions $L^{G}\Psi=0$ are quasi-polynomials, see \eqref{kernel}, while  at $\la=0$, $L^{BLZ}$ has one irregular singularity at $\infty$ and the solutions $L^{BLZ}\Psi=0$ have an essential singularity at $\infty$. For example for the top oper $d_0=0$ the solutions $L^{BLZ}\Psi=0$ at $\la=0$
are given by confluent hypergeometric functions.
\\

\noindent
{\bf Further perspectives.}
The relation between operators \eqref{L Gaudin-1} and \eqref{1} is a manifestation of conjectural duality between the quantum KdV and affine Gaudin models. Note that we do not expect an algebraic map connecting their poles $s_i$ and $z_i$ since $s_i$ and $z_i$ describe non-local and local integrals of motions respectively, which seem to have very different nature. We also note that for affine Gaudin model we expect a standard $\gl_n\times\gl_m$ type duality which is an affine version of \cite{MTV1,MTV2,MTV3}, and $q\to 1$ limit of the quantum toroidal duality described in \cite{FJM2}. In this context, the case we consider is dual to the affine $\gl_1$ Yangian XXX model, see \cite{FJM}, known as intermediate long wave model \cite{Lit}. While the intermediate long wave model  is anticipated to reproduce local integrals of motion of quantum KdV, the corresponding Bethe Ansatz equations are of XXX type. The corresponding opers are not described yet but they are expected to be difference equations (not differential).

We expect that the picture we painted here holds in a much larger generality. The $\la$-opers can be written at least for ADE types, see \cite{GLVW}. The $QQ$-systems, also known as reproduction equations, are known for all Kac-Moody algebras, \cite{MV2}. For operators of type $L^{BLZ}$ the $Q$-functions and the corresponding $QQ$-systems have also been studied in wide generality, see \cite{MRV,MRV2,FH,ESV}. However, very little is known about zeroes of $Q$-functions and their asymptotics.
\\

\noindent
{\bf Organization of the paper.}
The paper is organized as follows. In Section \ref{BAE chapter} we discuss the Bethe Ansatz equations for the $\slth$ Gaudin model. We briefly describe the various forms of the reproduction procedure, recall the $\la$-opers and explain how they give rise to the Schr{\"o}dinger operators with apparent singularities. In Section \ref{BLZ chapter} we concentrate on the case which corresponds to the quantum KdV. We identify the parameters and give a few examples. In Section \ref{Q chapter} we define the $Q_\pm$-functions, $T$-function, and derive the  $QQ$ and $TQ$ relations. This involves the construction of Frobenius solutions at $x=0$ and of the Sibuya solutions at $x=\infty$. In Section \ref{BAE limit chapter} we consider the $l\to \infty$ limit of solutions of the BAE equations. We discuss the existence of solutions and their behaviour in this limit. In Section \ref{compare chapter} we perform the asymptotic analysis of the $Q$-functions. In particular we study $\la\to 0$ limit of $Q$-functions and the asymptotics of zeroes of $Q$-functions as $\la \to +\infty$ and as $l\to +\infty$, using the complex WKB method.  In Section \ref{comparison section} we compare the $Q$-functions of $L^G$ to the $Q$-functions of $L^{BLZ}$.

\section{Gaudin \texorpdfstring{$\slth$}{sl2} model and \texorpdfstring{$\lambda$}{lambda}-opers}\label{BAE chapter}
\subsection{Bethe Ansatz equations for the Gaudin \texorpdfstring{$\slth$}{sl2} model}\label{periodic sec}
The Lie algebra $\slth$ is generated by Chevalley generators $e_i,f_i,h_i$, $i=0,1,$ and derivation ${\rm d}$ with the standard relations. In particular, $h_0+h_1$ is central and $[{\rm d},f_0]=f_0$, $[{\rm d},e_0]=-e_0$, $[{\rm d},e_1]=[{\rm d},f_1]=0$. 
For $m,l,r\in\C$, let $L_{2m,2l;r}$ be the irreducible  highest weight $\slth$-module generated by the vector $v_{2m,2l;r}$ satisfying
\begin{align*}
&e_0v_{2m,2l;r}=e_1v_{2m,2l;r}=0, \\ &h_0v_{2m,2l;r}=2mv_{2m,2l;r}, \quad h_1v_{2m,2l;r}=2lv_{2m,2l;r},\quad  {\rm d}v_{2m,2l;r}=rv_{2m,2l;r}.
\end{align*}
In particular $L_{2m,2l;r}$ has level (or central charge or the eigenvalue of $h_0+h_1$) given by $k=2m+2l$. 
We set $L_{2m,2l}=L_{2m,2l;0}$ and 
\begin{equation}\label{module L}
\mc L=\mathop{\otimes}_{i=1}^n L_{2m_i,2l_i}.
\end{equation}
The module $\mc L$ has level $\sum_{i=1}^n(2m_i+2l_i)$.
For non-negative integers $d_0,d_1\in\Z_{\geq 0}$,  denote the subspace of $\mc L$ of  vectors of weight $\sum_{i=1}^n (2m_i,2l_i,0)-d_0\alpha_0-d_1\alpha_1$  by $\mc L_{d_0,d_1}$: 
\begin{align*}
\mc L_{d_0,d_1}= \{v\in \mc L\ |\   &h_0v=(\sum_{i=1}^n 2m_i-2d_0+2d_1)v,\\ &h_1v=(\sum_{i=1}^n 2l_i+2d_0-2d_1)v,\ {\rm d}v=d_0v\},
\end{align*}
and the subspace of singular vectors (of the same weight) by $\mc L_{d_0,d_1}^{sing}$:
\be
\mc L_{d_0,d_1}^{sing}= \{v\in \mc L_{d_0,d_1}\ |\  e_0v=e_1v=0\}.
\ee
We have $\dim \mc L_{d_0,d_1}^{sing}\leq \dim \mc L_{d_0,d_1}<\infty$. We also denote
\be
\mc L^{sing}=\mathop{\bigoplus}_{d_0,d_1=0}^\infty \mc L^{sing}_{d_0,d_1}.
\ee
We call the pair $(d_0,d_1)$ the depth pair.
Fix the evaluation parameters $z_1,\dots,z_n\in\C$,  $z_i\neq z_j$. 
The highest weights and evaluation parameters are encoded in functions $T_0,T_1$:
\begin{align}\label{T}
T_0=\prod_{i=1}^n(x-z_i)^{2m_i}, \qquad T_1=\prod_{i=1}^n(x-z_i)^{2l_i}.
\end{align}
We also set
\be
P=T_0T_1=\prod_{i=1}^n(x-z_i)^{k_i}, \qquad k_i=2m_i+2l_i.
\ee
We consider the Gaudin $\slth$ model (or affine Gaudin $\slt$ model) in   $\mc L$. While the Hamiltonians of the affine Gaudin model are not well understood, conjecturally, they commute with $\slth$ and therefore act in the finite-dimensional space $\mc L_{d_0,d_1}^{sing}$. Moreover, generically, their eigenvectors and eigenvalues are expected to be found by the method of Bethe Ansatz based on solutions of a system of algebraic equations which we now describe. For that purpose, consider a pair of polynomials in $x$, 
\begin{equation}\label{y0 y1}
y_0=\prod_{i=1}^{d_0}(x-s_i), \qquad y_1=\prod_{j=1}^{d_1}(x-t_j). 
\end{equation}
We are interested only in zeroes of $y_0,y_1$; sometimes it is convenient to multiply these monic polynomials by non-zero numbers. A pair $(y_0,y_1)$ is called generic if $s_i,t_j$ are all distinct, $T_0(s_i)\not\in\{ 0,\infty\}$, and $T_1(t_j)\not\in\{ 0,\infty\}$.
We say that a pair of polynomials $y=(y_0,y_1)$ is a generalized solution of the Gaudin Bethe Ansatz equations (BAE) if functions
$$
 T_0y_1^2/y_0^2, \qquad T_1y_0^2/y_1^2, 
 $$
have no residues at $x=s_i$ and $x=t_j$ for all $i,j$. We call a pair $y$ a solution of BAE if it is a generalized solution of BAE and generic. 
There is an equivalent formulation of BAE for a generic pair $y$, see \cite{MV1}. A generic pair $y=(y_{0},y_{1})$ is a solution of BAE if and only if zeroes of $y_{0},y_{1}$ satisfy
\begin{subequations}\label{Gaudin BAE}
\bean
\sum_{j=1}^n\frac{m_j}{s_i-z_j}+\sum_{j=1}^{d_1}\frac{1}{s_i-t_j}-\sum_{j=1, j\neq i}^{d_0}\frac{1}{s_i-s_j}=0, \label{Gaudin BAE 1}\\
\sum_{j=1}^n\frac{l_j}{t_i-z_j}+\sum_{j=1}^{d_0}\frac{1}{t_i-s_j}-\sum_{j=1, j\neq i}^{d_1}\frac{1}{t_i-t_j}=0, \label{Gaudin BAE 2}
\end{align}
\end{subequations}
for all $i,j$.
It is expected that eigenvalues of Hamiltonians of affine Gaudin model in 
 $\mc L_{d_0,d_1}$  are given as values of some explicit polynomials of $s_i,t_j$ symmetric in both $s_i$ and $t_j$ and independent of $n,m_i,l_i$.
The BAE \eqref{Gaudin BAE} are one of the central objects of our study. Note that we do not distinguish between solutions which are obtained by permutations of $s_i$ and by permutations of $t_j$.

\subsection{The exponential case}\label{exp sec}
It is expected that there is a generalization of the affine Gaudin system which is called exponential (or quasi-periodic or twisted) Gaudin model. In that terminology the case described in Section \ref{periodic sec} is called periodic Gaudin model.
The Hamiltonians of the exponential Gaudin model should depend on two additional twist parameters $n_0,n_1$. For generic values of $n_0,n_1$, the Hamiltonians are expected to commute with $h_0,h_1$ but not with $\slth$. Therefore, it is expected that Bethe Ansatz produces eigenvectors and eigenvalues in the weight space $\mc{ L}_{d_0,d_1}$.  In the exponential case we have
\be
T_0=e^{2n_0 x}\prod_{i=1}^n(x-z_i)^{2m_i}, \qquad T_1=e^{2n_1 x}\prod_{i=1}^n(x-z_i)^{2l_i}.
\ee
and then the BAE \eqref{Gaudin BAE} take the form
\begin{subequations}\label{Gaudin exp BAE}
\bean
\sum_{j=1}^n\frac{m_j}{s_i-z_j}+\sum_{j=1}^{d_1}\frac{1}{s_i-t_j}-\sum_{j=1, j\neq i}^{d_0}\frac{1}{s_i-s_j}+n_0=0, \label{Gaudin exp BAE 1}\\
\sum_{j=1}^n\frac{l_j}{t_i-z_j}+\sum_{j=1}^{d_0}\frac{1}{t_i-s_j}-\sum_{j=1, j\neq i}^{d_1}\frac{1}{t_i-t_j}+n_1 =0.\label{Gaudin exp BAE 2}
\end{align}
\end{subequations}

\subsection{Reproduction procedure and the trigonometric case}
Let
\ben\label{Wr def}
\on{Wr}\big(y,\tilde y\big)=y\tilde y'-y'\tilde y
\een
denote the Wronskian of two functions.
The reproduction procedure \cite{MV1,MV2} is an effective method, starting from a given  solutions to the Gaudin BAE, to obtain a new solution with new $d_0,d_1$. Iterating the reproduction procedure one gets a family of solutions called a population.
There are three kinds of Gaudin reproduction: reproduction, exponential reproduction, and trigonometric reproduction.  
Consider the periodic case \eqref{Gaudin BAE}. Assume that all $m_i,l_i\in\Z_{\geq 0}$. Then a generic pair $y=(y_0,y_1)$ is a solution if and only if there exist polynomials $R_0[ y_0]$, $R_1[y_1]$ such that
\begin{subequations}\label{rep}
\bean 
&\on{Wr}(y_0,R_0[ y_0])=T_0y_1^2,\label{rep 1}\\
&\on{Wr}(y_1,R_1[ y_1])=T_1y_0^2.\label{rep 2}
\end{align}
\end{subequations}
Moreover, if $(R_0[ y_0],y_1)$ or $(y_0,R_1[ y_1])$ is generic then it is also a solution of BAE 
\eqref{Gaudin BAE}, see \cite{MV1}. One says that these new solutions are obtained by reproduction
from $(y_0,y_1)$ in the $0$-th and the $1$-st direction respectively. Note that one can apply reproduction to these new solutions again.
By abuse of notation, denote $R_0[d_0]=\deg R_0[ y_0]$ and $R_1[ d_1]=\deg R_1[ y_1]$.
In the periodic case, the polynomials $R_0[y_0]$, $R_1[ y_1]$ depend on a parameter (integration constant) and the new weights 
$$(\sum_{i=1}^{n} 2m_i-2R_0[d_0]+2d_1,\sum_{i=1}^{n} 2l_i-2d_1+2R_0[d_0]),$$  
and
$$(\sum_{i=1}^{n} 2m_i-2d_0+2R_1[d_1],\sum_{i=1}^{n} 2l_i-2R_1[d_1]+2 d_0),$$ 
are obtained from the initial weight 
$$(\sum_{i=1}^{n} 2m_i-2 d_0+2d_1,\sum_{i=1}^{n} 2l_i-2d_1+2 d_0)$$ 
by shifted action of simple reflections  $s_0,s_1\in \Z\rtimes\Z/2\Z$  in the  $\slth$ Weyl group. The population  is expected to 
be isomorphic (after an appropriate closure) to the $\slth$ flag variety.

For the exponential reproduction we consider the exponential case \eqref{Gaudin exp BAE}. Assuming that all $m_i,l_i\in\Z_{\geq 0}$, then a generic pair $y=(y_0,y_0)$ is a solution if and only if there exist polynomials $R_0[y_0]$, $R_1[y_1]$ such that
\begin{subequations}\label{exp rep}
\bean 
&\on{Wr}(y_0,e^{2n_0x}R_0[y_0])=T_0y_1^2,\label{exp rep 1}\\
&\on{Wr}(y_1,e^{2n_1x}R_1[y_1])=T_1y_0^2.\label{exp rep 2}
\end{align}
\end{subequations}
Moreover, if $(R_0[y_0],y_1)$ or $(y_0,R_1[y_1])$ is generic then it is also a solution of BAE 
\eqref{Gaudin exp BAE} but with the twist parameters $(n_0,n_1)$ changed respectively to $(R_0[n_0],R_0[n_1])=(-n_0,n_1+n_0)$ and to
$(R_1[n_0],R_1[n_1])=(n_0+n_1,-n_1)$, see \cite{MV2}. In this case (provided that $n_0,n_1$ are non-zero)  $R_0[ y_0]$, $R_1[ y_1]$ are unique and the new weights are obtained by (non-shifted) action of simple reflections  $s_0,s_1\in \Z\rtimes\Z/2\Z$  in the $\slth$ Weyl group. If on each step of the reproduction the new pair of polynomials is generic, then the population is a discrete set identified with the Weyl group (or rather with a free orbit of the Weyl group).

For the trigonometric case, which produces the trigonometric affine Gaudin model, we consider again the periodic case \eqref{Gaudin BAE}. Assume that $m_i,l_i\in\Z_{\geq 0}$ for $i>1$. Note that if $m_1,l_1$ are generic then we can identify
\be
(\mc L)^{sing}=(\mathop{\otimes}_{i=1}^n L_{2m_i,2l_i})^{sing}\simeq \mathop{\otimes}_{i=2}^n L_{2m_i,2l_i},
\ee 
and consider trigonometric affine Gaudin model depending on $m_1,l_1$ as parameters and acting in the module $\mathop{\otimes}_{i=2}^n L_{2m_i,2l_i}$. This model has the same BAE equations \eqref{Gaudin BAE} but the trigonometric reproduction procedure changes parameters $m_1,l_1$ as follows.  A generic pair $y=(y_0,y_1)$ is a solution of BAE if and only if there exist polynomials $R_0[ y_0]$, $R_1[ y_1]$ such that
\begin{subequations}\label{trig rep}
\bean 
&\on{Wr}(y_0, x^{2m_1+1}R_0[ y_0])=T_0y_1^2,\label{trig rep 1}\\
&\on{Wr}(y_1,x^{2l_1+1}R_1[ y_1])=T_1y_0^2.\label{trig rep 2}
\end{align}
\end{subequations}
Moreover, if $(R_0[y_0],y_1)$ is generic then it is also a solution of BAE 
\eqref{Gaudin BAE} but with $(m_1,l_1)$ changed by the  shifted action of simple reflection  $s_0 \in \Z\rtimes\Z/2\Z$  in $\slth$ Weyl group:  
\begin{equation}\label{R_0 m l}
(R_0[m_1],R_0[l_1])=(-m_1-1,l_1+2m_1+1).
\end{equation}
Similarly if   $(y_0,R_1[y_1])$  is generic then it is also a solution of BAE
\eqref{Gaudin BAE} with $(m_1,l_1)$ changed by the shifted action of $s_1$:
\begin{equation}\label{R_1 m l}
(R_1[m_1],R_1[l_1])=(m_1+2l_1+1,-l_1-1),
\end{equation}
see \cite{MV2}. Incidentally, one notes that
$$k_1=2m_1+2l_1=2R_0[m_1]+2R_0[l_1]=2R_1[m_1]+2R_1[l_1],$$
so the central charge $k_1$ is unchanged under trigonometric reproduction. Denote $R_0[d_0]=\deg R_0[y_0]$ and $R_1[d_1]=\deg R_1[y_1]$.
In this case $R_0[y_0]$, $R_1[y_1]$ are again unique and the new weights 
are obtained from the initial weight
by (non-shifted) action of simple reflections  $s_0,s_1\in \Z\rtimes\Z/2\Z$  in $\slth$ Weyl group. If on each step of reproduction the new pair of polynomials is generic, then the population is identified with the Weyl group (or rather with a free orbit of the Weyl group).

\subsection{\texorpdfstring{$\lambda$}.-opers}
Following \cite{GLVW}, we call $\la$-oper a differential operator of the form
\be
\partial_x+A(x,\la),  \qquad A(x,\la)\in\slt,
\ee
considered up to a gauge - up to a conjugation by functions of the form  $\exp(B(x,\la)g)$, where $B(x,\la)$ is a scalar function and $g\in\slt$.\footnote{Such simplistic approach is sufficient for our purposes. For the discussion of admissible gauges for $\la$-opers, see \cite{GLVW}.}. We specify the allowed dependence of $A$ and $B$ on $x,\la$ as needed below. Given a pair of polynomials $y=(y_0,y_1)$, and a pair of functions $T_0$, $T_1$, set 
\be
a_+(x)=-\frac 12 \ln'\frac{T_0y_1^2}{y_0^2}, \qquad a_-(x)=-\frac 12 \ln'\frac{T_1y_0^2}{y_1^2}.
\ee
Note that 
\be
a_+(x)+a_-(x)=-\frac 12 \ln' (T_0T_1) =-\frac 12 \ln' P(x). 
\ee
Consider, see \cite{GLVW}, the operators
\be \label{eq:bL0bL1}
\bar L_0 =\partial_x+\begin{pmatrix} a_+(x) & P(x)\la \\ \la & -a_+(x) \end{pmatrix}, \qquad \bar L_1=\partial_x+\begin{pmatrix} -a_-(x) & \la \\ P(x)\la & a_-(x) \end{pmatrix}.
\ee
It is crucial that $\bar L_0$ and $\bar L_1$ are gauge equivalent (that is they are two representatives of the same oper):
\ben\label{L0 and L1 are equiv}
\bar L_0 = A \,\bar L_1\, A^{-1}, \qquad 
A=\begin{pmatrix} P(x)^{1/2} & 0 \\ 0 & P(x)^{-1/2} \end{pmatrix}.
\een
We consider only $\la$-opers of such form. 
If $\la\neq 0$, the differential operators $\bar L_0$ and $\bar L_1$ can be equivalently written in the scalar forms $L_0$ and $L_1$: the equations
\begin{align*}
\bar L_0 \begin{pmatrix} \psi^{(0)}_1 \\ \psi^{(0)}_2 \end{pmatrix}= \begin{pmatrix} 0 \\ 0 \end{pmatrix},\qquad 
\bar L_1 \begin{pmatrix} \psi^{(1)}_1 \\ \psi^{(1)}_2 \end{pmatrix}= \begin{pmatrix} 0 \\ 0 \end{pmatrix}
\end{align*}
hold if and only if 
\begin{subequations}\label{eq:Ltoscalar}
\begin{align}\label{eq:L1toscalar}
L_0\psi^{(0)}_2=0, \qquad \la \psi_1^{(0)}=-(\psi^{(0)}_2)'+a_+\psi_2^{(0)}, \\
\label{eq:L0toscalar}
L_1\psi^{(1)}_1=0, \qquad \la \psi_2^{(1)}=-(\psi^{(1)}_1)'+a_-\psi_1^{(1)},
\end{align}
\end{subequations}
where
\begin{subequations}\label{L01 gen}
\bean 
L_0=(\partial_x+a_+(x))(\partial_x-a_+(x))-P(x)\la^2, \label{L0 gen}\\
L_1=(\partial_x+a_-(x))(\partial_x-a_-(x))-P(x)\la^2.\label{L1 gen}
\end{align}
\end{subequations} 
The scalar operators $L_0$, $L_1$ admit the following interpretation. Let $y=(y_0,y_1)$ be a solution of the BAE \eqref{Gaudin exp BAE} corresponding to $T_0,T_1$,  and let $(y_0,\tilde{y}_1)$ be the solution obtained from $y$ by exponential reproduction \eqref{exp rep} in the $1$-st direction. Using \eqref{exp rep}, one obtains that the differential operator which has solutions $y_1$ and $e^{2n_1x}\tilde y_1$ has the form
\be
\mc D_1=(\partial - \ln' \frac{T_1y_0^2}{y_1})(\partial -\ln'y_1),
\ee
see \cite{MV2}. This operator is conjugated to $L_1$ at $\la=0$ in such a way that the coefficient of $\partial$ is zero:
\ben\label{lambda=reproduction}
L_1|_{\la=0}= T_1^{-1/2} y_0^{-1}\, \mc D_1\, T_1^{1/2} y_0.
\een
Similarly, $L_0$ at $\lambda=0$ is conjugated to 
the differential operator $\mc D_0$ with solutions $y_0$ and $e^{2n_0x}\tilde y_0$, where $(\tilde{y}_0,y_1)$ is the solution obtained from $y$ by exponential reproduction \eqref{exp rep} in the $0$-th direction. 
Thus operators $\mc D_0$  and $\mc D_1$ after conjugation and addition of the term $-P(x)\la^2$ produce operators $L_0$ and $L_1$ corresponding to the same $\lambda$-oper. In other words after such a change,  operators $\mc D_0$  and $\mc D_1$ become gauge equivalent.

Now we use the BAE to rewrite $L_1,L_0$. 

\begin{prop}\label{prop: L0 L1 from BAE} 
Let $y=(y_0,y_1)$, with $y_0,y_1$ given by \eqref{y0 y1}, be a solution of the BAE \eqref{Gaudin exp BAE}.  Then:
\begin{enumerate}[i)]
\item the operator \eqref{L1 gen} can be written as
\begin{align}\label{L1 first}
L_1= \partial_x^2-& \left(n_1^2+\sum_{i=1}^n\frac{l_i^2+l_i}{(x-z_i)^2}+\sum_{i=1}^n\frac{a_i}{x-z_i}+P(x)\la^2\right.\notag\\
&\left.+\sum_{i=1}^{d_0}\frac{2}{(x-s_i)^2}+\sum_{i=1}^{d_0}\frac{p_i}{x-s_i}\right),
\end{align}
where
\begin{align}
p_i=&\left(\ln' P\right)(s_i),\label{p formula}\\
a_i=&2n_1l_i+\sum_{j=1, j\neq i}^n\frac{2l_il_j}{z_i-z_j}-2l_i\left(\ln'(y_1)-\ln'(y_0)\right)(z_i).
\label{a formula}
\end{align}
In particular, $L_1$ is regular at $x=t_j$  and the monodromy  around $x=s_i$ is trivial for all $\la$.
\item the operator \eqref{L0 gen} can be written as
\begin{align}\label{L0 first}
L_0= \partial_x^2-&\left(n_0^2+\sum_{i=1}^n\frac{m_i^2+m_i}{(x-z_i)^2}+\sum_{i=1}^n\frac{b_i}{x-z_i}+P(x)\la^2\right.\notag\\
&\left.+\sum_{j=1}^{d_1}\frac{2}{(x-t_j)^2}+\sum_{j=1}^{d_1}\frac{q_j}{x-t_j}\right), 
\end{align}
where 
\begin{align*}
q_j&=(\ln'P)(t_j), \\
b_i&=2n_0m_i+\sum_{j=1, j\neq i}^n\frac{2m_im_j}{z_i-z_j}-2m_i\left(\ln'(y_0)-\ln'(y_1)\right)(z_i).
\end{align*}
In particular, $L_0$ is regular at $x=s_i$  and the monodromy  around $x=t_j$ is trivial for all $\la$.
\end{enumerate}
\end{prop}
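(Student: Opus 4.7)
The plan is to prove (i) by direct expansion $L_1 = \partial_x^2 - a_-'(x) - a_-(x)^2 - P(x)\lambda^2$ and matching each contribution to both sides of \eqref{L1 first}; part (ii) then follows from (i) by the manifest symmetry $(T_0,y_0,s_i,m_i,n_0,d_0)\leftrightarrow(T_1,y_1,t_j,l_j,n_1,d_1)$, which swaps the two BAE \eqref{Gaudin exp BAE 1}--\eqref{Gaudin exp BAE 2}, sends $a_-,L_1$ to $a_+,L_0$, and leaves $P=T_0T_1$ invariant.

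For (i), start from the partial-fraction decomposition
\[
a_-(x) = -n_1 - \sum_{i=1}^{n}\frac{l_i}{x-z_i} - \sum_{i=1}^{d_0}\frac{1}{x-s_i} + \sum_{j=1}^{d_1}\frac{1}{x-t_j},
\]
immediate from $a_- = -\tfrac12\ln'(T_1 y_0^2/y_1^2)$ together with \eqref{T} and \eqref{y0 y1}. The second-order poles of $a_-' + a_-^2$ come only from the diagonal pieces of $a_-^2$ paired with the derivative: they contribute $l_i+l_i^2 = l_i(l_i+1)$ at $z_i$, $1+1 = 2$ at $s_i$, and $-1+1 = 0$ at $t_j$, matching \eqref{L1 first} (in particular, the second-order poles at the zeros of $y_1$ cancel, which is what makes $L_1$ independent of the $t_j$ to leading order).

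For the first-order residues, perform a local analysis. At $t_j$, the residue of $a_-' + a_-^2$ equals $-2$ times the left-hand side of \eqref{Gaudin exp BAE 2} and therefore vanishes by BAE; this proves regularity of $L_1$ at $t_j$. Near $s_i$, writing $a_-(x) = -(x-s_i)^{-1} + c_i + O(x-s_i)$, a substitution followed by use of \eqref{Gaudin exp BAE 1} collapses $c_i$ to $-(n_0+n_1) - \sum_j (m_j+l_j)/(s_i-z_j) = -\tfrac12 (\ln' P)(s_i)$, so the residue is $-2c_i = (\ln' P)(s_i) = p_i$, recovering \eqref{p formula}. Near $z_i$, writing $a_-(x) = -l_i/(x-z_i) + e_i + O(x-z_i)$, the residue is $-2l_i e_i$; expanding $e_i$ and recognising $\sum_k (z_i-s_k)^{-1} = \ln'(y_0)(z_i)$ and $\sum_k (z_i-t_k)^{-1} = \ln'(y_1)(z_i)$ reproduces \eqref{a formula}.

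The trivial monodromy at $x=s_i$ requires separate treatment because the indicial exponents of $L_1$ there are $-1$ and $2$, a resonance of order three that generically forces a logarithm. At $\lambda=0$ the factorisation $L_1|_{\lambda=0} = (\partial_x + a_-)(\partial_x - a_-)$ provides the explicit basis $\psi_1 = (T_1 y_0^2/y_1^2)^{-1/2}$, $\psi_2 = \psi_1 \int \psi_1^{-2}\,dx$; since $\psi_1 \sim (x-s_i)^{-1}$ and $\psi_1^{-2}$ vanishes to order two at $s_i$, one has $\psi_2 \sim (x-s_i)^2$ with no logarithmic term. Perturbing in $\lambda^2$, the hierarchy $L_1|_{\lambda=0}\psi^{(m+1)} = P\psi^{(m)}$ has analytic right-hand side and preserves the non-resonant Frobenius structure, so both solutions extend to entire functions of $\lambda$ without logs, giving trivial monodromy for every $\lambda$. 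Equivalently, one may transfer triviality from the scalar $L_0$ (regular at $s_i$ by part (ii)) through the gauge equivalence \eqref{L0 and L1 are equiv}, using that $P(s_i)\neq 0$ so $P^{\pm 1/2}$ is single-valued near $s_i$. The main technical burden is the residue bookkeeping in $a_-' + a_-^2$, which is routine but error-prone; the monodromy step is conceptually the subtlest, but becomes transparent once the $\lambda=0$ factorisation is exploited.
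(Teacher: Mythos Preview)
Your proof is correct and follows essentially the same route as the paper: expand $a_-'+a_-^2$ into partial fractions, read off the double poles directly, kill the residues at $t_j$ and simplify those at $s_i,z_i$ using the BAE, and then deduce trivial monodromy of $L_1$ at $s_i$ from regularity of $L_0$ there via the gauge equivalence \eqref{L0 and L1 are equiv}. One caveat: your first (perturbative) monodromy argument is looser than it reads—at a resonance of gap three the inhomogeneous hierarchy $L_1|_{\lambda=0}\psi^{(m+1)}=P\psi^{(m)}$ \emph{can} generate a $\log$ at order $(x-s_i)^2$, and the cancellation hinges precisely on $p_i=(\ln'P)(s_i)$; since you also give the gauge-equivalence argument (which is the paper's), the proof stands.
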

 \begin{proof}
We have $L_1-\partial^2_x+P(x)\la^2=-a_-^2-a_-'$, and
\begin{align*}
a_-^2+a_-'=&(\ln'T_1)^2/4-(\ln''T_1)/2+(\ln'y_0)^2+(\ln'y_1)^2+\ln'T_1\ln 'y_0\\
&-\ln'T_1\ln'y_1-2\ln'y_0\ln'y_1-\ln''y_0+\ln''y_1\\
=&n_1^2+\sum_{i=1}^n\frac{l_i^2+l_i}{(x-z_i)^2}+\sum_{i=1}^{d_0}\frac{2}{(x-s_i)^2}+\sum_{i=1}^n\frac{a_i}{x-z_i}+\sum_{i=1}^{d_0}\frac{p_i}{x-s_i}.
\end{align*}
Note that $1/(x-t_j)^2+(1/(x-t_j))'=0$ and 
\be
\res_{x=t_j} (a_-^2+a_-')=\left(\ln'y_1'-\ln'(T_1)-2\ln'y_0\right)(t_j)=0,
\ee
due to BAE \eqref{Gaudin exp BAE 1}  for $t_j$. In particular, $L_1$ is regular at $x=t_j$. By a similar computation, one can show that $L_0$ is regular at $x=s_i$, and moreover, that a generic pair $y$ is a solution of BAE if and only if $L_1$ is regular at $x=t_j$ and $L_0$ is regular at $x=s_i$. 
The coefficients $p_i$ and $a_i$ can be computed as follows:
\begin{align*}
p_i=\res_{x=s_i} (a_-^2+a_-')=\left(\ln'(T_1)+\ln'y_0'-2\ln'y_1\right)(s_i)\\ =\left(\ln'(T_1)+\ln'(T_0)\right)(s_i)=\left(\ln' P\right)(s_i),
\end{align*}
where we used the BAE equations \eqref{Gaudin exp BAE 2} for $s_i$, and
\be
a_i=\res_{x=z_i} (a_-^2+a_-')=2n_1l_i+\sum_{j=1, j\neq i}^n\frac{2l_il_j}{z_i-z_j}-2l_i\left(\ln'(y_1)-\ln'(y_0)\right)(z_i).
\ee 
To compute the monodromy of  $L_1$ at $z=s_i$ we recall that the operators $L_0$ and $L_1$ are gauge equivalent by \eqref{L0 and L1 are equiv}. Since $L_0$ is regular around $x=s_i$, it follows that the monodromy of $L_1$ around $x=s_i$ is trivial for all $\la$. 

The coefficients $q_i$, $b_i$ and the monodromy of $L_0$ at $z=t_i$ can be computed in a similar way, exchanging $0 \leftrightarrow 1$, $l_i \leftrightarrow m_i $ in \eqref{L1 first}.
\end{proof}
The operator \eqref{L1 first}  $L_1$ with $n_1=0$ appeared already in \cite[Sec. 6.9]{FF}.
We consider in more detail the following special case of \eqref{L1 first}. If 
\begin{equation}\label{T1 special}
T_1=e^{2n_1x} x^{2l}, 
\end{equation}
or equivalently if $z_1=0$, and $l_2=l_3=\dots=l_n=0$, then the BAE equations \eqref{Gaudin exp BAE} become
\begin{subequations}\label{Gaudin exp BAE special case}
\bean
\sum_{j=1}^n\frac{m_j}{s_i-z_j}+\sum_{j=1}^{d_1}\frac{1}{s_i-t_j}-\sum_{j=1, j\neq i}^{d_0}\frac{1}{s_i-s_j}+n_0=0, \label{Gaudin exp BAE special case 1}\\
\frac{l}{t_1}+\sum_{j=1}^{d_0}\frac{1}{t_i-s_j}-\sum_{j=1, j\neq i}^{d_1}\frac{1}{t_i-t_j}+n_1 =0,\label{Gaudin exp BAE special case 2}
\end{align}
\end{subequations}
and the operator $L_1$ can be written as follows.

\begin{cor}\label{cor: L1 from BAE special case}
If $T_1$ is given by \eqref{T1 special} and $y=(y_0,y_1)$ are a solution of the BAE \eqref{Gaudin exp BAE special case},  then \eqref{L1 first} reads
\begin{align}\label{L1 general}
L_1=\partial_x^2-&\left(n_1^2+\frac{l(l+1)}{x^2}+\frac{2n_1(l+r)}{x}+P(x)\la^2\right.\notag\\
&\left.+\sum_{i=1}^{d_0}\frac{2}{(x-s_i)^2}+\sum_{i=1}^{d_0}\frac{s_ip_i}{x(x-s_i)}\right), 
\end{align}
where $p_i$ is given by \eqref{p formula} and $r=d_0-d_1$. In particular, \eqref{L1 general} does not depend on the variables $t_i$.
\end{cor}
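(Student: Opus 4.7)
\medskip
\noindent
\textbf{Proof proposal.}
The plan is a direct specialization of \eqref{L1 first}: substitute the constraints imposed by $T_1=e^{2n_1x}x^{2l}$, rewrite the simple poles at $x=s_i$ so as to expose the $1/x$ contribution, and then close up the computation with the two BAE in \eqref{Gaudin exp BAE special case}.

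First, I would record that $T_1 = e^{2n_1 x} x^{2l}$ means $z_1=0$, $l_1=l$, and $l_j=0$ for $j\ge 2$. Under this, the terms $(l_i^2+l_i)/(x-z_i)^2$ and $a_i/(x-z_i)$ collapse: the former reduces to $l(l+1)/x^2$, and since the formula \eqref{a formula} for $a_i$ carries $l_i$ as an overall factor, only $a_1/x$ survives among the $a_i$'s. Next, I would use the partial-fraction identity
\[
\frac{p_i}{x-s_i} \;=\; \frac{s_i p_i}{x(x-s_i)}+\frac{p_i}{x}
\]
to rewrite $\sum_{i=1}^{d_0} p_i/(x-s_i)$. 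After this rewriting the total coefficient of $1/x$ in the potential is $a_1+\sum_{i=1}^{d_0} p_i$, while the other terms exactly match \eqref{L1 general}. So the corollary reduces to the single identity
\[
a_1+\sum_{i=1}^{d_0} p_i \;=\; 2n_1(l+r).
\]

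For the left-hand side I would evaluate each piece via Proposition~\ref{prop: L0 L1 from BAE} and the BAE. From \eqref{a formula}, $a_1 = 2n_1 l - 2l\bigl[(\ln y_1)'(0)-(\ln y_0)'(0)\bigr]
= 2n_1 l + 2l\sum_j 1/t_j - 2l\sum_i 1/s_i$. For $p_i=(\ln'P)(s_i)$, expanding $\ln'P = 2(n_0+n_1) + 2l/x + \sum_j 2m_j/(x-z_j)$ at $x=s_i$ and substituting the BAE \eqref{Gaudin exp BAE special case 1} to eliminate $\sum_j 2m_j/(s_i-z_j)$ yields
\[
p_i = 2n_1 + \frac{2l}{s_i} - 2\sum_{j=1}^{d_1}\frac{1}{s_i-t_j} + 2\sum_{j\neq i}\frac{1}{s_i-s_j}.
\]
Summing over $i$, the $s$--$s$ double sum vanishes by antisymmetry. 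The remaining $s$--$t$ double sum is handled by summing \eqref{Gaudin exp BAE special case 2} over all $t_i$: the $t$--$t$ double sum is antisymmetric and vanishes, leaving $\sum_{i,j}1/(s_i-t_j) = l\sum_j 1/t_j + n_1 d_1$. Hence $\sum_i p_i = 2n_1 r + 2l\sum_i 1/s_i - 2l\sum_j 1/t_j$, and adding to $a_1$ the $\sum 1/t_j$ and $\sum 1/s_i$ contributions cancel exactly, giving $2n_1(l+r)$ as required.

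Finally, the independence of \eqref{L1 general} on the $t_j$ is immediate from the final form: the coefficient $s_i p_i$ only uses $p_i=(\ln'P)(s_i)$, which is built from $T_0, T_1$ and $s_i$; the $1/x$ coefficient $2n_1(l+r)$ depends only on $n_1,l,r$; and all other terms are manifestly independent of the $t_j$. The argument is essentially bookkeeping; the one subtle point is keeping track of the sign inherited from the partial-fraction identity so that the $\sum p_i$ contribution is \emph{added} to $a_1$ (rather than subtracted), since the cancellation of the $\sum 1/t_j$ and $\sum 1/s_i$ terms between $a_1$ and $\sum p_i$ depends on this sign being correct.
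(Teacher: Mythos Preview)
Your proof is correct and follows essentially the same route as the paper. The paper's version is slightly more compressed: it adds \emph{all} the BAE \eqref{Gaudin exp BAE special case} at once to obtain the single identity
\[
2l\bigl[(\ln' y_1)(0)-(\ln' y_0)(0)\bigr]=\sum_{i=1}^{d_0}p_i-2n_1r,
\]
and then substitutes this directly into \eqref{a formula}, whereas you process the two families of BAE in two separate stages (first \eqref{Gaudin exp BAE special case 1} to simplify each $p_i$, then the sum of \eqref{Gaudin exp BAE special case 2} to close the $s$--$t$ cross term). Both arguments amount to the same cancellation and the same partial-fraction rewriting $p_i/(x-s_i)=s_ip_i/\bigl(x(x-s_i)\bigr)+p_i/x$, which the paper leaves implicit.
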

\begin{proof}
Adding all BAE equations \eqref{Gaudin exp BAE} we obtain
\be
2l\left(\ln'(y_1)-\ln'(y_0)\right)(0)=\sum_{i=1}^{d_0}\left(\ln' P \right)(s_i)-2n_1(d_0-d_1)=\sum_{i=1}^{d_0}p_i-2n_1r.
\ee
Substituting in \eqref{a formula}, we obtain the corollary.
\end{proof}
Thus, in this case, $L_1$ is written in terms of $s_i$ and parameters of the system only, while the dependence on $t_i$ disappears. To restore $t_j$ from $L_1$, one should find the polynomial $y_1$ in
$\ker \mc D_1$, where $\mc D_1$ is related to $L_1$  by \eqref{lambda=reproduction}.
Unless $n_1=0$ and $l\in\Z_{\geq 0}$, $y_1$ is the only monic polynomial in  $\ker(\mc D_1)$.
In two cases the full kernel  of $L_1|_{\la=0}$ is described using reproduction as follows.
If $n_1=0$, then from \eqref{lambda=reproduction} we get
$$
\ker(L_1|_{\la=0})=\langle y_1,x^{2l+1}\tilde y_1 \rangle T_1^{-1/2} y_0^{-1}, 
$$
where $\tilde y_1$ is the polynomial obtained by the trigonometric reproduction procedure \eqref{trig rep 2} from $(y_0,y_1)$ in the $1$st direction.
Such a polynomial $\tilde y_1$ exists due to BAE and, $L_1$ does not change if 
one replaces $(y_0,y_1)$ with $(y_0,x^{2l+1}\tilde y_1)$, see \cite{MV2}. 
If  $n_1\neq 0$  and $2l\in\Z_{\geq 0}$, then 
\begin{align}\label{kernel}
\ker(L_1|_{\la=0})=\langle y_1,e^{2n_1x} \tilde y_1 \rangle T_1^{-1/2} y_0^{-1}, 
\end{align}
where $\tilde y_1$ is the polynomial
obtained by the exponential reproduction procedure \eqref{exp rep 2} from $(y_0,y_1)$ in the $1$st direction.

\subsection{Trivial monodromy equations}
By Proposition \ref{prop: L0 L1 from BAE} if the operator of the form \eqref{L1 first} (resp. \eqref{L0 first}) is obtained from a solution to the BAE \eqref{Gaudin exp BAE}, then it has trivial monodromy at $z=s_i$ (resp. $z=t_i$) for each $\lambda$. This in particular applies to the special case of \eqref{L1 general}. We are also interested in the monodromy conditions, without assuming BAE, for a generic differential operator of the form $L=\partial_x^2+a(x)-P(x)\la^2,$ where $a(x)$ is a meromorphic function. We consider the monodromy  in two cases.

\begin{lem}
\begin{enumerate}[i)]
\item An operator of the form
\be
L=\partial_x^2-\frac{3/4}{(x-s)^2}+ \frac{a_{-1}}{x-s}+a_0+O(x-s)
\ee
has trivial monodromy  at $x=s$  if and only if
\ben\label{monodromy eq 3/4}
a_{-1}^2+a_0=0.
\een
\item An operator of the form 
\ben\label{-2 case}
L=\partial_x^2-\frac{2}{(x-s)^2}+ \frac{a_{-1}}{x-s}+a_0+a_1(x-s)+O(x-s)^2
\een
has trivial monodromy  at $x=s$ if and only if
\ben\label{monodromy eq}
a_{-1}^3+4a_0a_{-1}+4a_1=0.
\een
\end{enumerate}
\end{lem}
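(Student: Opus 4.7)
The plan is to apply the Frobenius method at the regular singular point $x=s$ and to track the obstruction that would force a logarithmic term in the second (smaller-exponent) solution. Setting $t = x - s$, the indicial equation has roots $\rho_\pm$ with $\rho_\pm(\rho_\pm - 1) = 3/4$ in case (i) (roots $3/2, -1/2$) and $\rho_\pm(\rho_\pm - 1) = 2$ in case (ii) (roots $2, -1$). In each case $N := \rho_+ - \rho_-$ is a positive integer ($N=2$ and $N=3$, respectively), the larger-exponent solution $\psi_+ = t^{\rho_+}(1 + O(t))$ always exists as a convergent Frobenius series, and trivial monodromy is equivalent to the existence of a logarithm-free second solution $\psi_- = t^{\rho_-}\sum_{n \ge 0} c_n t^n$ with $c_0 = 1$.

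Substituting $\psi_-$ into $L\psi_- = 0$ and collecting powers of $t$ yields a recursion in which the coefficient multiplying $c_n$ is $n(n-N)$, so the coefficients $c_1, \ldots, c_{N-1}$ are uniquely determined from the expansion data of the potential. A short direct calculation gives $c_1 = a_{-1}$ in case (i), and $c_1 = a_{-1}/2$, $c_2 = (a_{-1}^2 + 2a_0)/4$ in case (ii). At the resonance index $n = N$ the factor $n(n-N)$ vanishes, and solvability of the recursion forces the remaining inhomogeneous contribution to vanish as well. Writing out that single linear equation produces exactly $a_{-1}^2 + a_0 = 0$ in case (i) and, after multiplying by $4$, $a_{-1}^3 + 4 a_{-1}a_0 + 4a_1 = 0$ in case (ii).

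Conversely, once the stated condition holds, one may take $c_N = 0$ and the recursion then determines $c_n$ uniquely for all $n > N$; standard Frobenius estimates give convergence near $t = 0$, so a genuine logarithm-free second solution exists and the monodromy is scalar, hence trivial in the sense used in the paper. There is no substantial obstacle here: the proof reduces to one or two explicit steps of the Frobenius recursion, and the only care needed is in the bookkeeping of the inhomogeneous term at the single resonance index $n = N$.
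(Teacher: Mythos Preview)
Your proof is correct and follows essentially the same approach as the paper: both use the Frobenius method at $x=s$, observe that the larger-exponent solution always exists, and identify the stated condition as precisely the vanishing of the obstruction to a logarithm-free second solution at the resonance step. Your write-up is in fact slightly more explicit than the paper's, which merely states the indicial roots and asserts that the second Frobenius series exists if and only if the given equation holds.
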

\begin{proof}
\begin{enumerate}[i)]
\item The indicial equation around $x=s$ has roots $3/2,-1/2$. Thus, we have a solution of $L\psi=0$ holomorphic in the vicinity of $x=s$ given by a series 
$\psi=(x-s)^{-1/2}(1+\sum_{i=1}^\infty b_i(x-s)^{i})$, where coefficients $b_i$ are computed recursively. The second solution in the form of a series 
$\psi=(x-s)^{-1/2}(1+\sum_{i=1}^\infty b_i(x-s)^i)$ exists if an only if \eqref{monodromy eq 3/4} holds.
This is the equation equivalent to the triviality of the monodromy around $x=s$ in this case.
\item The indicial equation around $x=s$ has roots $2,-1$. Thus, we have a solution of $L\psi=0$ holomorphic in the vicinity of $x=s$ given by a series 
$\psi=(x-s)^2(1+\sum_{i=1}^\infty b_i(x-s)^i)$, where coefficients $b_i$ are computed recursively. The second solution in the form of a series 
$\psi=(x-s)^{-1}(1+\sum_{i=1}^\infty b_i(x-s)^{i})$ exists if an only if \eqref{monodromy eq} holds. This is the equation equivalent to the triviality of the monodromy around $x=s$ in the case \eqref{-2 case}.
\end{enumerate}
\end{proof}
We apply the second part of the lemma to the operator 
\begin{align}\label{scalar oper}
L_1=\partial_x^2-&\left(n_1^2+\frac{l(l+1)}{x^2}+\frac{2n_1(l+r)}{x}+P(x)\la^2\right.\notag\\
&\left.+\sum_{i=1}^{d_0}\frac{2}{(x-s_i)^2}+\sum_{i=1}^{d_0}\frac{s_ip_i}{x(x-s_i)}\right), 
\end{align}
where $r\in\Z$ and all $s_i$ are distinct. This operator is of the form \eqref{L1 general} but we do not assume that it comes from the BAE. In this case, we have
\begin{align*}
a_{-1}=&-p_i,  \\
a_0=&-n_1^2-\frac{l(l+1)}{s_i^2}-\frac{2n_1(l+r)-p_i}{s_i}\\
&-\sum_{j\neq i}\frac{2}{(s_i-s_j)^2}-\sum_{j\neq i}\frac{s_jp_j}{s_i(s_i-s_j)}-P(s_i)\la^2, \\
a_1=&\frac{\partial}{\partial s_i}a_0=\frac{2l(l+1)}{s_i^3}+\frac{2n_1(l+r)-p_i}{s_i^2}\\
&+\sum_{j\neq i}\frac{4}{(s_i-s_j)^3}-\sum_{j\neq i}\frac{s_jp_j(2s_i-s_j)}{s_i^2(s_i-s_j)^2}-P'(s_i)\la^2, 
\end{align*}
Then the operator $L_1$ has trivial monodromy at $x=s_i$ for all $\la$ if and only if
\ben\label{residue at si}
p_i=\ln'(P) (s_i)
\een
(the $\la^2$ coefficient in equation \eqref{monodromy eq}), cf. \eqref{p formula} and
\ben\label{gen mon eq}
f(s_i)+\sum_{j=1, j\neq i}^{d_0}g(s_i,s_j)=0
\een
 (the constant coefficient (with respect to $\la$)  in equation \eqref{monodromy eq}), where
\begin{align}
f(s)=&-\frac{p^3}{4}+n_1^2p+\frac{2n_1(l+r)-p^2}{s}\notag\\
&+\frac{p(l(l+1)-1)+2n_1(l+r)}{s^2}+\frac{2l(l+1)}{s^3}\,,\label{f gen}
\\
g(s,\tilde s)=&\frac{2p}{(s-\tilde s)^2}+\frac{4}{(s-\tilde s)^3}+\frac{\tilde s p\tilde p}{s(s-\tilde s)}+\frac{\tilde s\tilde p(2s-\tilde s)}{s^2(s-\tilde s)^2}\,, \label{g gen}
\\
 p=&\frac{P'(s)}{P(s)}, \qquad \tilde p=\frac{P'(\tilde s)}{P(\tilde s)}. \notag
\end{align}
From the above computations and by Corollary \ref{cor: L1 from BAE special case} we obtain that the BAE \eqref{Gaudin exp BAE special case} with  $d_0,d_1$ such that $d_0-d_1=r$  imply the trivial monodromy equations \eqref{residue at si}  and \eqref{gen mon eq}.  

We conjecture that the converse is also true.
\begin{conj}\label{gaudin - no monodromy conj}
For generic $k, l$, all operators of the form \eqref{scalar oper} with distinct non-zero $s_i$, $r\in\Z$, and with trivial monodromy around $x=s_i$ for all $i$ are obtained from the solutions of the exponential Gaudin BAE \eqref{Gaudin exp BAE special case} with $d_0-d_1=r$.   \qed 
\end{conj}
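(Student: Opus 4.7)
I would prove the conjecture by reconstructing the missing polynomial $y_1=\prod_{j=1}^{d_1}(x-t_j)$ from the operator $L_1$ and then verifying the BAE. Set $y_0=\prod_{i=1}^{d_0}(x-s_i)$; note that $r$ is determined by $L_1$ via the coefficient $2n_1(l+r)/x$ in \eqref{scalar oper}.

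\textbf{Step 1 (Riccati reformulation).} At $\la=0$, look for a factorisation $L_1|_{\la=0}=(\partial_x+a_-)(\partial_x-a_-)$, i.e.\ $a_-$ satisfies the Riccati equation $a_-'+a_-^2=V(x)$, where $V$ is the non-$P\la^2$ part of the potential in \eqref{scalar oper}. By \eqref{L1 gen}, we need $a_-=-\tfrac12\ln'(T_1y_0^2/y_1^2)$ with $T_1=e^{2n_1x}x^{2l}$, which forces $a_-$ to be rational with residue $-l$ at $0$, residue $-1$ at each $s_i$, additional simple poles of residue $+1$ at $d_1$ points $t_j$, and limit $-n_1$ at $\infty$. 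Comparing the $1/x$-coefficient of $a_-'+a_-^2$ with that of $V$, and using the relation $p_i=\ln'P(s_i)$ from \eqref{residue at si}, pins down $d_1=d_0-r$.

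\textbf{Step 2 (monodromy $\Leftrightarrow$ existence of rational $a_-$).} The remaining trivial monodromy conditions \eqref{gen mon eq} are, I claim, precisely the local obstructions at each $s_i$ to the existence of such a rational $a_-$: they express that the Laurent tail of any local solution to the Riccati equation of residue $-1$ at $s_i$ is single-valued through to the order required to globalise. Once these obstructions vanish, the singular part of $a_-$ at each of $0,s_i$ is determined by $V$, and the remaining finite poles are forced to be simple with residue $+1$ (these are the $t_j$). Define $y_1:=\prod_j(x-t_j)$.

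\textbf{Step 3 (BAE verification).} The vanishing of $\on{Res}_{x=t_j}(a_-'+a_-^2-V)$ is automatic since $a_-$ solves Riccati, and an easy calculation shows this vanishing is exactly \eqref{Gaudin exp BAE special case 2}. For \eqref{Gaudin exp BAE special case 1}, invoke the $\la$-oper gauge equivalence \eqref{L0 and L1 are equiv}: set $a_+:=-\tfrac12\ln'P-a_-$ and $L_0=(\partial_x+a_+)(\partial_x-a_+)-P\la^2$, which by construction has at worst simple poles at each $s_i$; demanding $L_0$ to be regular at $s_i$ (equivalently, that $L_0$ take the form \eqref{L0 first}) translates into \eqref{Gaudin exp BAE special case 1}. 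Finally, for generic $k,l$ the polynomial $y_1$ has distinct roots disjoint from $\{0\}\cup\{s_i\}$, since these are Zariski-open conditions and they hold on some non-empty sub-locus (e.g.\ in a small neighbourhood of any known BAE solution, via the forward direction of Proposition \ref{prop: L0 L1 from BAE}).

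\textbf{Main obstacle.} The crux is Step 2: showing that the $d_0$ monodromy equations \eqref{gen mon eq} are \emph{exactly} the constraints needed for the Riccati equation to admit a rational solution with the prescribed partial-fraction shape. This is a local-to-global statement, and making it rigorous seems to require either an explicit order-by-order calculation of the Laurent tails of $a_-$ at each $s_i$ (comparing the extendability condition with \eqref{monodromy eq}), or a dimension-counting argument showing that the forward map from BAE solutions to trivial-monodromy operators is quasi-finite and étale for generic $k,l$, so that the counts on both sides match.
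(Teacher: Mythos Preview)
This statement is a \emph{conjecture} in the paper and is not proved there; the paper only verifies it for $d_0=1$ (Section~\ref{Depth one subsubsection}) and checks the companion Conjecture~\ref{L0 conj} for $d_1\leq 1$. So you are attempting to prove an open statement.

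Your overall strategy---reconstruct $y_1$ from the $\lambda=0$ Riccati factorisation and then verify the BAE---is natural and, modulo Step~2, essentially works. However, Step~2 conflates two different obstructions. The trivial-monodromy condition at $s_i$ is the local obstruction to the Frobenius solution $(x-s_i)^{-1}(1+\cdots)$ existing without a logarithm; once it holds, \emph{every} solution of $L_1|_{\la=0}\psi=0$ except those in a one-dimensional subspace has a simple pole at $s_i$, so the pole conditions at the $s_i$ are essentially free. The genuine obstruction to a \emph{global} rational $a_-$ of the prescribed shape lies at $0$ and $\infty$: you need the $x^{-l}$ Frobenius solution at $0$ to coincide (up to scalar) with the solution having the correct exponent (if $n_1=0$) or subdominant growth (if $n_1\neq 0$) at $\infty$, and that solution must equal $T_1^{-1/2} y_1/y_0$ with $y_1$ a polynomial of the right degree. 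In the BLZ case $n_1=0$ the integer $r$ is invisible in the operator---your reading of $r$ from the $2n_1(l+r)/x$ coefficient fails there---so which degree to aim for is itself part of the problem; this is why Section~\ref{Depth one subsubsection} finds \emph{three} operators at $d_0=1$, coming from $r=-1,0,1$.

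Step~3 as written also has a gap: you observe that $L_0$ has at worst simple poles at the $s_i$ and then \emph{demand} regularity rather than deduce it. The missing argument is that the gauge equivalence \eqref{L0 and L1 are equiv} (with $A$ regular and invertible at $s_i$) transports trivial monodromy of $L_1$ at $s_i$ to trivial monodromy of the scalar $L_0$; since the potential of $L_0$ has only a simple pole at $s_i$ (indicial exponents $0,1$), trivial monodromy forces that residue to vanish, whence $L_0$ is regular there and \eqref{Gaudin exp BAE special case 1} holds. With this fix, Step~3 is valid \emph{given} Step~2. Your dimension-counting alternative amounts to Conjectures~\ref{conj d0=d1} and~\ref{conj r} together with the expected count of monodromy-free operators, neither of which the paper establishes in general.
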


We will see an example in Section \ref{Depth one subsubsection}. 

\section{The BLZ case}\label{BLZ chapter}
\subsection{Identification of parameters} We consider a special case of \eqref{module L}, by fixing $n=2$ and setting $m_1=m$, $l_1=l$, $m_2=1/2$, $l_2=0$. Moreover, we choose evaluations parameters $z_1=0$ and $z_2=1$. This case will be the main focus of our study and we call it the BLZ case.

According to \cite{FJM}, the periodic affine $\slth$ Gaudin model with
\ben\label{BLZ module}
\mc L=L_{2m,2l}\otimes L_{1,0}, \qquad z_1=0,\ z_2=1,
\een
and with $r=d_0-d_1=0$ is expected to be dual to the quantum KdV flows described in \cite{BLZ4}. The case of $r=d_0-d_1\neq 0$ is not discussed in \cite{FJM} but we expect that the duality extends to all $r$.

The module $L_{1,0}$ is called the basic or vacuum $\slth$-module. It is the simplest non-trivial integrable module and it is well understood. A picture of $L_{1,0}$ is given on Figure \ref{basic module pic}. The module  $L_{1,0}$ has a basis represented on the picture by circles. The weights of these vectors form an orbit of the affine Weyl group $\Z\rtimes\Z/2\Z$. The vertical dashed sectors correspond to the decomposition of the module with respect to affine Cartan subalgebra of $\slth$, that is by the Heisenberg algebra generated by the loop generators $h\otimes t^s$, $s\in\Z$, $h\in\slt$. Each dashed sector is an irreducible Heisenberg Fock module. In particular, the dimension of the space of vectors in $L_{1,0}$ of degree $d$ and weight $2r$ equals the number of partitions of $d-r^2$.
The filled circles with weight $2r$ and degree $r^2$, $r\in\Z$, represent the vectors which are annihilated by $h\otimes t^s$ with $s>0$. 
\begin{figure}
        \centering
       {
\begin{tikzpicture}[scale=1.0]
\draw[->] (-11,10)--(-11,3) node[right]{degree};
\draw (-11.1,9)--(-10.9,9) node[right]{$0$};
\draw (-11.1,8)--(-10.9,8) node[right]{$1$};
\draw (-11.1,7)--(-10.9,7) node[right]{$2$};
\draw (-11.1,6)--(-10.9,6) node[right]{$3$};
\draw (-11.1,5)--(-10.9,5) node[right]{$4$};
\draw (-11.1,4)--(-10.9,4) node[right]{$5$};

\draw[->] (-11,10)--(1,10) node[below]{weight};
\draw (-9.5,10.1)--(-9.5,9.9) node[below]{$-6$};
\draw (-8,10.1)--(-8,9.9) node[below]{$-4$};
\draw (-6.5,10.1)--(-6.5,9.9) node[below]{$-2$};
\draw (-5,10.1)--(-5,9.9) node[below]{$0$};
\draw (-3.5,10.1)--(-3.5,9.9) node[below]{$2$};
\draw (-2,10.1)--(-2,9.9) node[below]{$4$};
\draw (-0.5,10.1)--(-0.5,9.9) node[below]{$6$};

\draw[fill] (-5,9) circle [radius=0.07];
\draw (-5,8) circle [radius=0.07];
\draw (-5.1,7) circle [radius=0.07];
\draw (-4.9,7) circle [radius=0.07];

\draw (-5.2,6) circle [radius=0.07];
\draw (-4.8,6) circle [radius=0.07];
\draw (-5,6) circle [radius=0.07];
\draw (-5.1,5.1) circle [radius=0.07];
\draw (-4.9,5.1) circle [radius=0.07];
\draw (-5.2,4.9) circle [radius=0.07];
\draw (-4.8,4.9) circle [radius=0.07];
\draw (-5,4.9) circle [radius=0.07];
\node at (-5,4) {$\vdots$};

\draw[dashed] (-4.9,8.9) -- (-4.4,3);
\draw[dashed] (-5.1,8.9) -- (-5.6,3);
\draw[dashed] (-3.4,7.9) -- (-3.0,3);
\draw[dashed] (-3.6,7.9) -- (-4.0,3);
\draw[dashed] (-6.4,7.9) -- (-6.0,3);
\draw[dashed] (-6.6,7.9) -- (-7.0,3);
\draw[dashed] (-8.1,4.9) -- (-8.3,3);
\draw[dashed] (-7.9,4.9) -- (-7.7,3);
\draw[dashed] (-2.1,4.9) -- (-2.3,3);
\draw[dashed] (-1.9,4.9) -- (-1.7,3);
\node at (-6.5,4) {$\vdots$};

\draw[fill] (-3.5,8) circle [radius=0.07];
\draw (-3.5,7) circle [radius=0.07];
\draw (-3.6,6) circle [radius=0.07];
\draw (-3.4,6) circle [radius=0.07];
\draw (-3.7,5) circle [radius=0.07];
\draw (-3.5,5) circle [radius=0.07];
\draw (-3.3,5) circle [radius=0.07];
\node at (-3.5,4) {$\vdots$};

\draw[fill] (-6.5,8) circle [radius=0.07];
\draw(-6.5,7) circle [radius=0.07];
\draw(-6.6,6) circle [radius=0.07];
\draw (-6.4,6) circle [radius=0.07];
\draw (-6.7,5) circle [radius=0.07];
\draw (-6.5,5) circle [radius=0.07];
\draw (-6.3,5) circle [radius=0.07];

\draw[fill] (-8,5) circle [radius=0.07];
\draw[fill] (-2,5) circle [radius=0.07];
\draw (-8,4) circle [radius=0.07];
\draw (-2,4) circle [radius=0.07];
\node at (-8,3.3) {$\vdots$};
\node at (-2,3.3) {$\vdots$};

\end{tikzpicture}
\caption{The module $L_{1,0}$.}\label{basic module pic}
}
\end{figure}
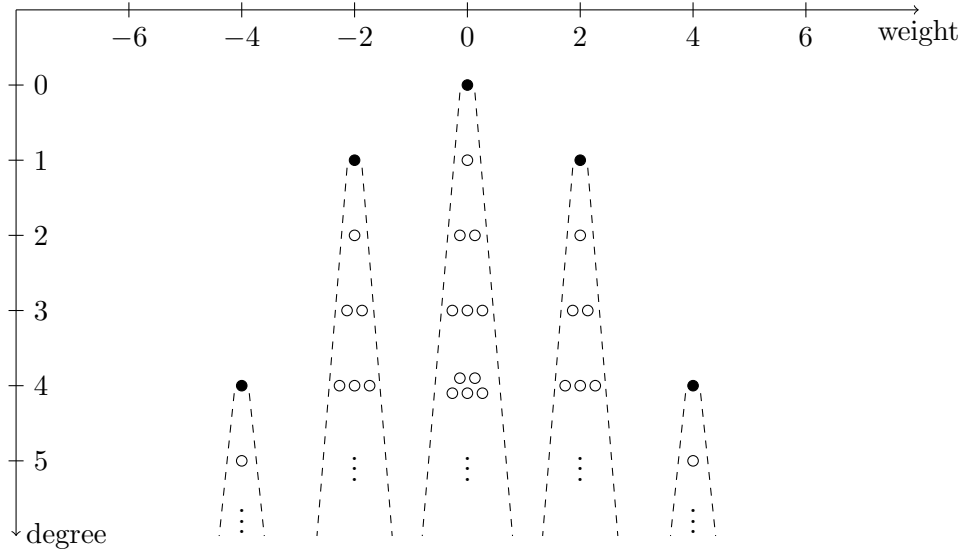
For generic $m,l$, we have  
\begin{equation}\label{Lsing = L10}
\mc L^{sing}\sim L_{1,0}.
\end{equation}
In particular,
\be
\dim \mc L_{d_0,d_0+r}^{sing}=|\{\text{Partitions of } (d_0-r^2)\}|.
\ee
Moreover, as in the GKO coset construction \cite{GKO}, for generic $m,l$ one decomposes the tensor product as
\ben\label{coset}
\mc L=L_{2m,2l}\otimes L_{1,0}=\mathop{\oplus}_{r\in\Z} L_{2m+1-2r,2l+2r;r^2}\otimes M_{c,\Delta_r},
\een
where the multiplicity space $M_{c,\Delta_r}$ is identified with the irreducible Virasoro Verma module of level $c$ and highest weight $\Delta_r$, with:
\ben\label{charge and weight general}
c=1-\frac{6}{(k+2)(k+3)}, \qquad \Delta_r=\frac{(l-2r-kr)(l+1-2r-kr)}{(k+2)(k+3)},
\een
and where, as above, $k=2m+2l$. Here, the Virasoro algebra acts on the multiplicity space by the Sugawara construction \cite{GKO}. On Figure \ref{basic module pic}, the module $M_{c,\Delta_r}$ corresponds to Fock module formed by weight vectors of weight $2r$.
It is expected that the solutions of the Bethe Ansatz, see \eqref{Gaudin BAE BLZ} below,  with 
$$
d_0-d_1=r
$$
correspond to a basis of eigenvectors of the affine Gaudin Hamiltonians in 
the multiplicity space $M_{c,\Delta_r}$. In particular, we should have no solutions unless
\be
d_0\geq r^2
\ee
and a unique solution with $d_0=r^2$, $d_1=r^2-r$ corresponding to the highest weight vector of $M_{c,\Delta_r}$.
For $r=0$, 
\ben\label{charge and weight}
c=1-\frac{6}{(k+2)(k+3)}, \qquad \Delta=\Delta_0=\frac{l(l+1)}{(k+2)(k+3)}.
\een
According to the general rule described in the previous section, for $\mc L$  given by \eqref{BLZ module} we have
\ben\label{BLZ T}
T_0(x)=x^{2m}(x-1), \quad T_1(x)=x^{2l}, \quad P(x)=x^k(x-1),  \quad k=2m+2l.
\een
The Gaudin BAE equations \eqref{Gaudin BAE} read 
\begin{subequations}\label{Gaudin BAE BLZ}
\bean
\frac{1/2}{s_i-1}+ \frac{m}{s_i}+\sum_{j=1}^{d_1}\frac{1}{s_i-t_j}-\sum_{j=1, j\neq i}^{d_0}\frac{1}{s_i-s_j}=0, \label{Gaudin BAE BLZ 1}\\
\frac{l}{t_i}+\sum_{j=1}^{d_0}\frac{1}{t_i-s_j}-\sum_{j=1, j\neq i}^{d_1}\frac{1}{t_i-t_j}=0,
\label{Gaudin BAE BLZ 2}
\end{align}
\end{subequations}
and they are a particular case of \eqref{Gaudin exp BAE special case}. The following lemma will be useful later.
\begin{lem}\label{at inf conjecture}
For the solutions of \eqref{Gaudin BAE BLZ} we have
\bean\label{at inf eq}
\sum_{i=1}^{d_0} \frac{s_i}{s_i-1}=-d_0(k+2) +r(r+1)+2lr,
\end{align}
where, as before, $r=d_0-d_1$. 
\end{lem}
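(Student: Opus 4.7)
The plan is a straightforward moment computation on the Bethe Ansatz equations. First, I would multiply equation \eqref{Gaudin BAE BLZ 1} by $s_i$ and sum over $i=1,\dots,d_0$; separately, I would multiply \eqref{Gaudin BAE BLZ 2} by $t_i$ and sum over $i=1,\dots,d_1$. The antisymmetric self-interaction sums simplify by the standard pairing trick: since $\frac{s_i}{s_i-s_j}+\frac{s_j}{s_j-s_i}=1$, one obtains $\sum_{i\neq j}\frac{s_i}{s_i-s_j}=\binom{d_0}{2}$, and similarly $\sum_{i\neq j}\frac{t_i}{t_i-t_j}=\binom{d_1}{2}$.

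For the cross terms, I would write $\frac{s_i}{s_i-t_j}=1+\frac{t_j}{s_i-t_j}$ and $\frac{t_i}{t_i-s_j}=1+\frac{s_j}{t_i-s_j}$, so that the two summed equations read
\begin{align*}
\tfrac{1}{2}\sum_{i=1}^{d_0}\tfrac{s_i}{s_i-1}+md_0+d_0d_1+\!\!\sum_{i,j}\tfrac{t_j}{s_i-t_j}-\tbinom{d_0}{2}&=0,\\
ld_1+d_0d_1+\sum_{i,j}\tfrac{s_j}{t_i-s_j}-\tbinom{d_1}{2}&=0.
\end{align*}
Introducing $C=\sum_{i,j}\tfrac{s_i}{s_i-t_j}$, the identity $\sum_{i,j}\tfrac{t_j}{s_i-t_j}=C-d_0d_1$ together with $\sum_{i,j}\tfrac{s_j}{t_i-s_j}=-C$ lets one solve the second equation for $C$ and substitute into the first, eliminating all mixed sums in one shot. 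Using $\tfrac{s_i}{s_i-1}=1+\tfrac{1}{s_i-1}$ is not needed here since we already have $\sum \tfrac{s_i}{s_i-1}$ directly.

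What remains is purely algebraic: insert $d_1=d_0-r$ and $k=2m+2l$, and verify that the resulting expression collapses to $-d_0(k+2)+r(r+1)+2lr$. The cross-term cancellation produces the square $(d_0-d_1)^2=r^2$, the linear-in-$d_0$ piece assembles into $-d_0(k+2)$, and the remaining terms give $r+2lr$, yielding the stated formula.

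The main (and only) obstacle is clerical: keeping track of signs when swapping the order of summation in the mixed sums $\sum_{i,j}\tfrac{t_j}{s_i-t_j}$ versus $\sum_{i,j}\tfrac{s_j}{t_i-s_j}$, since a single sign error propagates through and spoils the identification. Beyond that, no Wronskian, reproduction, or analytic input is needed---only the two BAE \eqref{Gaudin BAE BLZ 1}--\eqref{Gaudin BAE BLZ 2}.
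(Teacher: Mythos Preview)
Your proposal is correct and follows exactly the paper's approach: multiply \eqref{Gaudin BAE BLZ 1} by $s_i$, \eqref{Gaudin BAE BLZ 2} by $t_j$, sum, and combine. The only difference is cosmetic---the paper simply adds the two summed equations (using $\frac{s_i}{s_i-t_j}+\frac{t_j}{t_j-s_i}=1$ to cancel the cross terms directly), whereas you isolate $C$ from one and substitute into the other, which amounts to the same thing.
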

\begin{proof}
Add equations
  \eqref{Gaudin BAE BLZ 1}, multiplied by $s_i$, for all $i$ and equations \eqref{Gaudin BAE BLZ 2}, multiplied  by $t_j$, for all $j$. The result is \eqref{at inf eq}.
\end{proof}

The scalar differential monodromy-free operator $L_1$ corresponding to the solutions to the BAE \eqref{Gaudin BAE BLZ} is obtained by setting $n_1=0$ and $P(x)=x^k(x-1)$ in \eqref{L1 general}. Denoting in this case $L_1$ by $L^G$ we thus obtain
\bean\label{L1 BLZ}
L^G=&L_1=\\
&\partial_x^2-\left(\frac{l(l+1)}{x^2}+\sum_{i=1}^{d_0}\frac{2}{(x-s_i)^2}+\sum_{i=1}^{d_0}\frac{k+s_i/(s_i-1)}{x(x-s_i)}+x^k(x-1)\la^2\right).\notag
\end{align}
We will consider the operator $L_0$ in Section \ref{L0 sec}.  The equation for trivial monodromy for operator \eqref{L1 BLZ} at $x=s_i$, see \eqref{gen mon eq}, takes the form
\bean\label{la monodromy eq}
f(s_i)+\sum_{j=1, j\neq i}^{d_0}g(s_i,s_j)=0
\end{align}
 with functions $f(s)$, $g(s,\tilde s)$, \eqref{f gen}, \eqref{g gen} explicitly given by:
\begin{align*}
f(s)= ((k+3)s-(k+2))\,&((k-2l+1)s-(k-2l))\,\times\\&\times\frac{((k+2l+3)s-(k+2l+2))}{4s^3(s-1)^3},
\end{align*}
\begin{align*}
&g(s,\tilde s)=\frac{1}{s^2(s-\tilde s)^3(s-1)(\tilde s-1)}\times \notag \\
&\times \big((k+2)^2s^2-(k+2)(k+3)s^3-k(2k+5)s\tilde s+(k^2+2k-4)s^2\tilde s +\\
&\hspace{30pt} +(k+3)^2s^3\tilde s+k(k+1)\tilde s^2+(k^2+5k+3)s\tilde s^2-\\
&\hspace{30pt} -(2k+7)(k+1)s^2\tilde s^2-(k+1)^2\tilde s^3+(k+1)(k+2)s\tilde s^3\big).
\end{align*}
According to the general Conjecture \ref{gaudin - no monodromy conj} for any $d_0$ and generic $l$, $k$, we expect that the number of operators \eqref{L1 BLZ} with trivial monodromy at each $z=s_i$ for every value of $\lambda$ is finite and equal to the number vectors of degree $d_0$ in $L_{1,0}$, or, equivalently, the number of circles of degree $d=d_0$ in Figure \ref{basic module pic}. Thus, we expect that the number of such operators is 
\be
\sum_{r\in \Z,\ r^2\leq d_0} \on{Part}(d-r^2),
\ee 
where $\on{Part}(s)$ is the number of partitions of $s$. For $d_0=0$ this claim is trivial: in this case there is only one operator \eqref{L1 BLZ} with corresponds to the highest weight vector of $M_{c,\Delta_0}$ in \eqref{coset} and $r=0$. For $d_0=1$ the situation is already nontrivial, we discuss it in Section \ref{Depth one subsubsection}.

\subsection{Action of the reproduction on opers}\label{sec:reproductiononopers}
We now show how the trigonometric reproduction (\ref{trig rep}) acts on the oper \eqref{L1 BLZ} as the conjugation by a gauge transformation.  Assume that $(y_0,y_1)$ is a generic solution to the Gaudin BAE \eqref{Gaudin BAE BLZ} for fixed parameters, $l$ and $m$. In what follows, we omit the dependence on $k=2l+2m$ (since it is fixed by the reproduction) while we explicitly express the dependence on $l,m$.  In particular, we write
$$T_0(x;m)=x^{2m}(x-1),\qquad T_1(x;l)=x^{2l},$$
and
$$a_-(x;l,y_0,y_1)=-\frac12\ln'\left(\frac{T_1(x;l)y_0^2}{y_1^2}\right).$$
Then the oper $\bar{L}_1$ associated to $(y_0,y_1)$, see \eqref{eq:bL0bL1}, is
\begin{equation}\label{bar L reproduction}
\bar{L}_1(x,\la;l,y_0,y_1)=\partial_x+
\begin{pmatrix}
-a_-(x;l,y_0,y_1) & \la\\
P(x)\la & a_-(x;l,y_0,y_1)
\end{pmatrix}
\end{equation}
Similarly, we denote in this subsection
the corresponding scalar operator \eqref{L1 BLZ} by
\begin{equation}\label{scalar reproduction}
L_1(x,\la;l,y_0)=\partial_x^2-V(x,\la;l,y_0)
\end{equation}
Note that \eqref{scalar reproduction} does not depend on $y_1$, as proved in Corollary \ref{cor: L1 from BAE special case}.\\

Let us first consider the trigonometric reproduction in the $0$-th direction. Since $(y_0,y_1)$ is generic, there exists a polynomial  $R_0[y_0]$ satisfying \eqref{trig rep 1}, that is:
$$\on{Wr}(y_0,x^{2m+1}R_0[y_0])=T_0y_1^2.$$
Assume that $(R_0[y_0],y_1)$ is generic, so that it is a solution to the Gaudin BAE with parameters
$$R_0[m]=-m-1,\qquad R_0[l]=k-l+1,$$
see \eqref{R_0 m l}. Note that we have
$$R_0[d_0]=\deg(R_0[y_0])=2d_1-d_0+1,$$
so that if $d_0-d_1=r$ then
$$R_0[r]=R_0[d_0]-d_1=-r+1.$$
Define the action of $R_0$ on $\bar{L}_1$, and $L_1$ repectively by
\begin{equation}\label{eq:R0bL1}
R_0[\bar{L}_1(x,\la;l,y_0,y_1)]= \bar{L}_1(x,\la;R_0[l],R_0[y_0],y_1),
\end{equation}
and
\begin{equation}\label{eq:R0L1}
R_0[L_1(x,\la;l,y_0)]=\partial_x^2-V(x,\la;R_0[l],R_0[y_0]).
\end{equation}
It is known that $R_0[\bar{L}_1(x,\la;l,y_0,y_1)]$ can also be obtained by the action of a gauge transformation on $\bar{L}_1(x,\la;l,y_0,y_1)$. Namely, we have the following lemma.
\begin{lem}\label{lem:rep0Gauge}
Let
\begin{equation}\label{eq:H0rep}
 H_0(x;l,y_0,y_1)=\frac12\ln'\left(\frac{T_1(x;R_0[l])(R_0[y_0])^2}{T_1(x;l)y_0^2}\right),
 \end{equation}
and
\begin{equation}\label{eq:B0}
B_0=
\begin{pmatrix} 1& \frac{H_0}{P\lambda}\\ 0 &1 \end{pmatrix}.
\end{equation}
We have that
\begin{equation}\label{R0bL1}
B_0\bar{L}_1(x,\la;l,y_0,y_1)B_0^{-1}=\bar{L}_1(x,\la;R_0[l],R_0[y_0],y_1).
\end{equation}

\begin{proof}
 We have the identity
\begin{equation*}
a_-(x;l,y_0,y_1)-H_0(x;l,y_0,y_1)=a_-(x;R_0[l],R_0[y_0],y_1).
\end{equation*}
By \eqref{trig rep 1}  we also have
\begin{equation}\label{H0 eq}
\ln'\Big(\frac{H_0(x;l,y_0,y_1)}{P}\Big)-2a_-(x;l,y_0,y_1)+H_0(x;l,y_0,y_1)=0.
\end{equation}
The lemma follows.
\end{proof}

\end{lem}

We now consider the reproduction in the $1$-st direction. In this case, there exists a polynomial  $R_1[y_1]$ satisfying \eqref{trig rep 2}, that is:
$$\on{Wr}(y_1,x^{2l+1}R_1[y_1])=T_1y_0^2.$$
Assuming that $(y_0,R_1[y_1])$ is generic, so that it is a solution to the Gaudin BAE with parameters
$$R_1[m]=k-m-1,\qquad R_1[l]=-l-1,$$
see \eqref{R_1 m l}.
 Note that we have
$$R_1[d_1]=\deg(R_1[y_1])=2d_0-d_1,$$
so that if $d_0-d_1=r$ then
$$R_1[r]=d_0-R_1[d_1]=-r.$$
Define the action of $R_1$ on $\bar{L}_1$ and $L_1$ respectively by
\begin{equation}\label{eq:R1bL1}
R_1[\bar{L}_1(x,\la;l,y_0,y_1)]= \bar{L}_1(x,\la;R_1[l],y_0,R_1[y_1]),
\end{equation}
and
\begin{equation}\label{eq:R1L1}
R_1[L_1(x,\la;l,y_0)]=\partial_x^2-V(x,\la;R_1[l],y_0).
\end{equation}
Notice that $L_1$ is invariant under $R_1$ since, by \eqref{L1 BLZ},
the potential $V(x,\la;l,y_0)$
is invariant under the transformation $l \to R_0[l]=-l-1$.
It is known that $R_1[\bar{L}_1(x,\la;l,y_0,y_1)]$ can also be obtained by the action of a gauge transformation on $\bar{L}_1(x,\la;l,y_0,y_1)$.
\begin{lem}\label{lem:rep1Gauge}
Let
\begin{align}\label{eq:H1rep}
& H_1(x;l,y_0,y_1)=\frac12\ln'\left(\frac{T_ 1(x;R_1[l])(R_1[y_1])^2}{T_1(x;l)y_1^2}\right),\\
& \label{eq:B1}
B_1=
\begin{pmatrix}
1& 0\\
H_1\lambda^{-1} &1
\end{pmatrix}.
\end{align}
Then
\begin{equation}\label{R1bL1}
B_1\bar{L}_1(x,\la;l,y_0,y_1)B_1^{-1}=\bar{L}_1(x,\la;R_0[l],R_0[y_0],y_1).
\end{equation}

\begin{proof}
We have the identity
\begin{equation*}
a_-(x;l,y_0,y_1)+H_1(x;l,y_0,y_1)=a_-(x;R_1[l],y_0,R_1[y_1]).
\end{equation*}

By \eqref{trig rep 2}  we also have
\begin{equation*}
\ln'(H_1(x;l,y_0,y_1))+2a_-(x;l,y_0,y_1)+H_1(x;l,y_0,y_1)=0.
\end{equation*}
The lemma follows.
\end{proof}
\end{lem}

\begin{exmp}
Let us consider the top oper, namely we fix $(y_0,y_1)=(1,1)$ ($d_0=d_1=0$).
By definition
\begin{equation*}
\bar{L}_1(x,\la;l,1,1)=\partial_x+
\begin{pmatrix}
\frac{l}{x}& \lambda \\
P(x)\lambda& -\frac{l}{x}
\end{pmatrix}.
\end{equation*}
The trigonometric reproduction in the 0-th direction yields
\begin{equation}\label{rep ex}
R_0[y_0](x)=x-s_1,\qquad s_1=\frac{2m+2}{2m+1}.
\end{equation}
With $B_0$ as in Lemma \ref{lem:rep0Gauge}, 
\begin{align*}
 B_0 \bar{L}_1(x,\la;1,1)B_0^{-1} & =\partial_x+
\begin{pmatrix}
\frac{l+2m+1}{x}+\frac{1}{x-s_1}& \lambda \\
P(x)\lambda& -\frac{l+2m+1}{x}-\frac{1}{x-s_1}
\end{pmatrix} \\
& = \bar{L}_1(x,\la;R_0[l],x-s_1,1).
\end{align*}
\end{exmp}

We expect that for generic values of the parameters, the Bethe Ansatz is complete. In that case using the reproduction procedure, one can relate any solutions of BAE with any $r$ to a solution of the BAE with $r=0$.


\subsection{The case of degree \texorpdfstring{$1$}{1}}\label{Depth one subsubsection} Consider the operator \eqref{L1 BLZ} with $d_0=1$, that is
\ben\label{depth 1 oper}
L^G=\partial_x^2-\left(\frac{l(l+1)}{x^2}+\frac{2}{(x-s)^2}+\frac{k+s/(s-1)}{x(x-s)}+x^k(x-1)\la^2\right),
\een
where we put  $s=s_1$. Equation \eqref{gen mon eq}, providing the trivial monodromy conditions at  $x=s$ in this particular case takes the form:
\ben\label{L1 depth 1}
\frac{(s-s^{(1)})(s-s^{(0)})(s-s^{(-1)})}{s^3(s-1)^3}=0, 
\een
where
$$s^{(1)}=\frac{k-2l}{k-2l+1},\ \ s^{(0)}=\frac{k+2}{k+3},\ \ s^{(-1)}=\frac{k+2l+2}{k+2l+3}.$$
Thus there are exactly 3 operators of the form \eqref{depth 1 oper} with trivial monodromy at $x=s$. We claim that these three operators to three vectors of degree one in $L_{1,0}$, see Figure \ref{basic module pic}.

The case $s=s^{(1)}$ is the unique solution  of the Gaudin BAE equations for $\mc L_{1,0}^{sing}$:
\ben\label{1,0 BAE}
\frac{k-2l}{s}+\frac{1}{s-1}=0. 
\een
Therefore, $s=s^{(1)}$ corresponds to the highest weight vector of $M_{c,\Delta_1}$ in \eqref{coset} and $r=1$.  Note that $s^{(1)}$ can be obtained from the trivial solution with $d_0=0$ by the reproduction procedure, see \eqref{rep ex}. Indeed, 
\begin{equation*}
\frac{k-R_0[l]}{k-R_0[l]-1}=\frac{2m+2}{2m+1}. 
\end{equation*}

The case $s=s^{(0)}$ corresponds  to the unique solution  of the Gaudin BAE equations for $\mc L^{sing}_{1,1}$. Indeed, for the case of $d=(1,1)$, the Gaudin BAE equations take the form
\begin{align*}
\frac{k-2l}{s}+\frac{1}{s-1}+\frac{2}{s-t}=0,\\
\frac{2l}{t}+\frac{2}{t-s}=0.
\end{align*}
There is a unique solution
\bean\label{1,1 sol}
t=\frac{l}{l+1}\,\frac{k+2}{k+3}, \qquad s=\frac{k+2}{k+3}.
\end{align}
Therefore, $s=s^{(0)}$ corresponds to the unique degree one vector of $M_{c,\Delta_0}$ in \eqref{coset} and $r=0$.
Finally, the case $s=s^{(-1)}$ corresponds  to the unique solution  of the Gaudin BAE equations for $\mc L^{sing}_{1,2}$. 
In this case we have equations
\begin{align*}
\frac{k-2l}{s}+\frac{1}{s-1}+\frac{2}{s-t_1}+\frac{2}{s-t_2}=0,\\
\frac{l}{t_1}+\frac{1}{t_1-s} -\frac{1}{t_1-t_2}=0,\\
\frac{l}{t_2}+\frac{1}{t_2-s} -\frac{1}{t_2-t_1}=0.
\end{align*}
One finds that there exists a unique (up to exchanging $t_1$ and $t_2$) solution $y=(y_1,y_2)$ given by
\bean
&y_0=x-s_2=x-\frac{k+2l+2}{k+2l+3},\label{12 explicit 2} \\ 
&y_1=(x-t_1)(x-t_2)=x^2-\frac{2l-1}{l}s_2x+\frac{2l-1}{2l+1}s_2^2\notag\\
&\hspace{15pt}=x^2-\frac{(2l-1)(k+2l+2)}{l(k+2l+3)}x+\frac{(2l-1)(k+2l+2)^2}{(2l+1)(k+2l+3)^2}.\label{12 explicit 1}
\end{align}
Thus this operator corresponds to the highest weight vector of $M_{c,\Delta_{-1}}$ in \eqref{coset} and $r=-1$.
Note that that the solution \eqref{12 explicit 2}, \eqref{12 explicit 1}, is obtained from $r=1$ solution $(x-s^{(1)},1)$ by the reproduction in the first direction. Indeed, we have
\be
y_0(x;R_1[l])= x-s^{(1)}
\ee 
and
 \be
 \on{Wr} (1, x^{2l+1} y_1(x;R_1[l]))=(2l+3) x^{2l}(x-s^{(1)})^2.
 \ee 

The Gaudin equations \eqref{Gaudin BAE} corresponding to \eqref{BLZ module} and $d_0=1$, $d_1>2$ have no isolated solutions. Indeed, \eqref{Gaudin BAE 2} is the $\slt$ (non-affine) equations corresponding to tensor product of Verma module of highest weight $2l$ and of a three-dimensional module irreducible module of highest weight $2$. The evaluation parameters are $0$ and $s$. There are no singular vectors of depth more than two and hence no BAE solutions, \cite{MV1}.
Thus, all three operators of the form \eqref{depth 1 oper} with trivial monodromy around $x=s$ are obtained from the $\la$-opers corresponding to solutions of Gaudin BAE according to Conjecture \ref{gaudin - no monodromy conj}.

\subsection{The operator \texorpdfstring{$L_0$}{L0}}\label{L0 sec} 
In the case \eqref{BLZ T}, the operator $L_0$ takes the form
\bean
L_0=\partial_x^2-\left(\frac{m(m+1)}{x^2}+\frac{3}{4(x-1)^2}+\sum_{i=1}^{d_1}\frac{2}{(x-t_i)^2}+\right.\hspace{40pt} \label{L0 BLZ}\\ \left. 
+\frac{b_0}{x}+\frac{b_1}{x-1} +\sum_{i=1}^{d_1}\frac{q_j}{x-t_j}+x^k(x-1)\la^2\right), \notag
\end{align}
where 
\bea
&b_0=-m+\sum_{i=1}^{d_0}\frac{2m}{s_i}-\sum_{j=1}^{d_1}\frac{2m}{t_j}, 
\qquad b_1=m+\sum_{i=1}^{d_0}\frac{1}{s_i-1}-\sum_{j=1}^{d_1}\frac{1}{t_j-1},\\
& q_j=\frac{P'(t_j)}{P(t_j)} =\frac{k}{t_j}+\frac{1}{t_j-1}. \\
\end{align*}
Adding all BAE equations \eqref{Gaudin BAE BLZ 1}, \eqref{Gaudin BAE BLZ 2}, we see that
\ben\label{inf reg}
b_0+b_1+\sum_{i=1}^{d_1} q_i=0.
\een
Thus the expansion at infinity (with generic $k$) contains no $1/x$ term.
As we know, operator $L_0$ is monodromy-free at $x=t_j$ and at $x=1$.
We expect the following to be true, cf. Conjecture \ref{gaudin - no monodromy conj}.
\begin{conj}\label{L0 conj}
For generic $m,k$, all operators of the form \eqref{L0 BLZ} with distinct $t_j$ different from $0,1$, satisfying \eqref{inf reg} and monodromy-free at $x=1$ and $x=t_j$, $j=1,\dots, d_1,$ correspond to isolated solutions of \eqref{Gaudin BAE BLZ}.
\end{conj}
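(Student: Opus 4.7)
The plan is to invert the construction of Proposition \ref{prop: L0 L1 from BAE} on the $L_0$ side. Starting from an operator $L_0$ of the form \eqref{L0 BLZ} satisfying the trivial-monodromy conditions at $x=1$ and at each $t_j$, together with the sum rule \eqref{inf reg}, I would reconstruct a pair of polynomials $(y_0, y_1)$ with $y_1(x) = \prod_{j=1}^{d_1}(x - t_j)$ that solves the BAE \eqref{Gaudin BAE BLZ}. By Proposition \ref{prop: L0 L1 from BAE} the $L_0$ associated to this BAE solution then coincides with the given one, yielding the required correspondence. Isolation of the solution is a separate and standard issue, which I would settle at the end by a Jacobian argument at generic $(m,k)$.

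First I would recover $y_0$ from $L_0|_{\lambda=0}$ by factoring it as $(\partial_x+a_+)(\partial_x-a_+)$. The local exponents of $L_0|_{\lambda=0}$ are $(-m, m+1)$ at $x=0$, $(-1/2, 3/2)$ at $x=1$, and $(-1, 2)$ at each $t_j$; the trivial monodromy assumption at $x=1$ and at each $t_j$ means that the Frobenius series with the smaller exponent exists without logarithmic correction. Gluing these local data with the behaviour at infinity controlled by \eqref{inf reg}, I expect to produce a solution of $L_0|_{\lambda=0} u = 0$ of the form
\begin{equation*}
u(x) = x^{-m}(x-1)^{-1/2}\,\frac{y_0(x)}{y_1(x)},
\end{equation*}
where $y_0$ is a polynomial of degree $d_0 = d_1 + r$ and $r \in \Z$ is read off from the leading order of $u$ at $x=\infty$. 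Setting $a_+ := -u'/u$ reproduces the identity $a_+(x) = -\tfrac{1}{2}\ln'(T_0 y_1^2/y_0^2)$ of the $\lambda$-oper ansatz, completing the construction of the pair $(y_0, y_1)$.

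With $(y_0, y_1)$ in hand I would form the scalar partner $L_1$ via \eqref{L1 gen} and observe, using the gauge equivalence \eqref{L0 and L1 are equiv}, that it takes the form \eqref{L1 BLZ} with apparent singularities at the zeros $s_i$ of $y_0$ and that these singularities are monodromy-free for every $\lambda$ (the monodromy of $L_1$ around each $s_i$ is conjugate to the monodromy of $L_0$ around $x=1$ or a $t_j$, hence trivial). Conjecture \ref{gaudin - no monodromy conj} applied to $L_1$ then yields that $(y_0, y_1)$ is a solution of the Gaudin BAE \eqref{Gaudin BAE BLZ} with $d_0 - d_1 = r$, as desired.

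The main obstacle is the first step: rigorously producing the polynomial factor $y_0$ from the local Frobenius data. One must rule out spurious $(x-1)^{1/2}$ tails and $(x-t_j)^2$ tails in the series expansion of $u$ and show that the quotient $(x^{-m}(x-1)^{-1/2})^{-1} y_1 u$ is a polynomial of the expected degree; this is a global matching problem coupling the local Frobenius expansions to the behaviour at infinity enforced by \eqref{inf reg}, and is essentially of the same nature as the problem underlying Conjecture \ref{gaudin - no monodromy conj}. A secondary, and logical, obstacle is the reliance on Conjecture \ref{gaudin - no monodromy conj} itself; the cleanest route would be to prove both conjectures jointly, via a counting argument matching the number of trivial-monodromy operators $L_0$ (respectively $L_1$) of given pole data with the dimension of the corresponding summand of $\mc L^{sing}$ in the decomposition \eqref{coset}.
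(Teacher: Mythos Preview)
This statement is presented in the paper as a \emph{conjecture} and is not proved there; the authors only verify it by direct computation for $d_1=0$ and $d_1=1$ immediately after stating it. So there is no paper proof to compare your proposal against, and what you have written is an attempt to go beyond the paper.

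As for your strategy: the first step (extracting $y_0$) is more tractable than you suggest, since the trivial-monodromy hypotheses force $(x-1)^{1/2}y_1(x)\,u(x)$ to be single-valued and analytic on $\C\setminus\{0\}$ for every solution $u$ of $L_0|_{\la=0}u=0$, and the Frobenius branch with exponent $-m$ at $0$ (unique up to scale for generic $m$) then makes $x^m(x-1)^{1/2}y_1 u$ entire of polynomial growth, hence a polynomial. (Incidentally, $a_+=u'/u$, not $-u'/u$.) The genuine genericity content is that the roots of this polynomial be simple and disjoint from $\{0,1,t_j\}$. Your second step, however, contains a real error: the gauge $A$ in \eqref{L0 and L1 are equiv} acts pointwise, so the monodromy of $L_1$ at $s_i$ is conjugate to that of $L_0$ at the \emph{same} point $s_i$, not at $x=1$ or some $t_j$. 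The correct route is that $L_0$ being regular at each $s_i$ (automatic from the form \eqref{L0 BLZ} once $s_i\notin\{0,1,t_j\}$) gives \eqref{Gaudin BAE BLZ 1} directly, while the gauge equivalence transfers the assumed trivial monodromy of $L_0$ at $t_j$ to $L_1$; since the $(x-t_j)^{-2}$ term of the potential of $L_1$ cancels by construction of $a_-$, trivial monodromy then forces the simple-pole residue to vanish, i.e.\ $L_1$ is regular at $t_j$, which is \eqref{Gaudin BAE BLZ 2}. This argument makes no use of Conjecture~\ref{gaudin - no monodromy conj}; your detour through it is both circular (it is a parallel open statement) and incomplete (nothing guarantees it returns the $y_1$ you started with).
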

We give a couple of examples.
Let $d_1=0$. Then equation \eqref{monodromy eq 3/4}
and relation \eqref{inf reg} have the form
\be
b_1^2-b_0-m(m+1)=0, \qquad b_1+b_0=0.
\ee
and have the solutions
\be
b_1^{(0,0)}=m, \qquad b_1^{(1,0)}=-m-1.
\ee
Clearly, it corresponds to solutions of 
 \eqref{Gaudin BAE BLZ},  given by $y=(1,1)$ (that is $d=(d_0,d_1)=(0,0)$) and $y=(x-2m/(2m+1),1)$ (that is $d=(1,0)$), see \eqref{1,0 BAE}. Also if $d_1=0$, then equations \eqref{Gaudin BAE BLZ 1} are $\slt$ periodic Gaudin BAE equations for tensor product of a Verma module and a $2$-dimensional vector representation. It is well known that such equations have no isolated solutions with $d_0>1$. Thus Conjecture \ref{L0 conj} holds for $d_1=0$.
For $d_1=1$, one also can do the computation explicitly. One has three equations: \eqref{monodromy eq} at $x=3$, \eqref{monodromy eq 3/4} at $x=1$,  and \eqref{inf reg} for three variables $b_0,b_1,t_1$. This system has two solutions
\begin{align*}
t^{(1,1)}&=\frac{(k+2)(k-2m)}{(k+3)(k-2m+2)}, \\ 
b_1^{(1,1)}&=-\frac{2m^2-3km+2k^2-6m+10k+12}{3k-2m+6},
\end{align*}
and
\begin{align*}
t^{(2,1)}&=\frac{(k+2)(k+2m+2)}{(k+3)(k+2m+4)}, \\  
b_1^{(2,1)}&=-\frac{2m^2+3km+2k^2+10m+13k+20}{3k-2m+8}.
\end{align*}
The former corresponds to the solution \eqref{Gaudin BAE BLZ},  with $d=(1,1)$ given by \eqref{1,1 sol}. 
The latter corresponds  to 
 to the unique (up to interchanging of $s_1$ and $s_2$) solution \eqref{Gaudin BAE BLZ} with $d=(2,1)$,
\be
\begin{cases}\frac{2m}{s_1}+\frac{1}{s_1-1}+\frac{2}{s_1-t}-\frac{2}{s_1-s_2}=0,\\
\frac{2m}{s_2}+\frac{1}{s_2-1}+\frac{2}{s_2-t} -\frac{2}{s_2-s_1}=0,\\
\frac{k-2m}{t}+\frac{2}{t-s_1} +\frac{2}{t-s_2}=0
\end{cases}
\ee
given by
\bea
&y_0=(x-s_1)(x-s_1)=\\
&\hspace{0pt}x^2-\frac{(4km+k+10m+4)(k+2m+2)}{(2m+1)(k+3)(k+2m+4)} \,x +\frac{2m(k+2)(k+2m+2)^2}{(2m+1)(k+3)(k+2m+4)^2},\\
&y_1=x-t^{(1,2)}=x-\frac{(k+2)(k+2m+2)}{(k+3)(k+2m+4)}. 
\end{align*}
It follows that Conjecture \ref{L0 conj} holds for $d_1=1$ as well.

Clearly, $L_0$ and $L_1$ are related by a gauge transformation, see \eqref{L0 and L1 are equiv}. This fact is well-known and crucial, see \cite{GLVW}. It also appears in \cite{KL}, equation (7.66).

\subsection{BLZ opers}\label{sub:BLZopers}
In the ODE/IM literature \cite{DTa,DT,DDT,BLZ4,CM1}, the Schr{\"o}dinger  operators associated to Quantum KdV have the form
\begin{equation}\label{eq:blzeq}
\tilde{L}= \partial_w^2-\left(w^{2\alpha}- E+\frac{\tilde{l}(\tilde{l}+1)}{w^2}- 2 \frac{d^2}{dw^2}\sum_{j=1}^{d} \log{(w^{2\alpha+2}-\tilde{z}_j)} \right),
\end{equation}
with $\alpha>0$, $\Re \tilde{l}>-\frac12$ and where  the $\tilde{z}_i$ are distinct non-zero complex numbers such that
the monodromy at $w=z_i$ is trivial for every $i$ and $E$. These operators are called \textit{BLZ opers} or \textit{monster potentials} and they were introduced in \cite{BLZ4} to generalise the original proposal
\cite{DTa,DT, BLZ5}, which corresponds to the case $d=0$. It was conjectured in \cite{BLZ4},
that the eigenvalues of the Hamiltonians of quantum KdV flows are given by suitable symmetric polynomials of $\tilde{z}_i$. These Hamiltonians are written in terms of Virasoro algebra acting in a Verma module $M_{c,\Delta}$, where $c$ and $\Delta$ are written in terms of  the parameters $\tilde{l}$, $\alpha$ as
\begin{equation}\label{c delta original blz}
c=1-\frac{6\alpha^2}{\alpha+1},\qquad \Delta=\frac{(\tilde{l}+\tfrac12)^2-\alpha^2}{4(\alpha+1)},
\end{equation}
see \cite{BLZ4}. The same Virasoro Verma module is the space for our affine Gaudin system. It is expected that the eigenvectors of quantum KdV flows and of the affine Gaudin Hamiltonians coincide. Therefore we should have a natural correspondence of the BLZ operators \eqref{eq:blzeq} with the operators  \eqref{L1 BLZ}.
One of the main purposes of this paper is to study this correspondence. 

In order to compare the two operators, we first apply a change of variable and parameters, which is essentially the one considered in \cite[Sec. 5.7]{FF}, and which was also used in \cite{MR1,MR2}.  Let $L=\partial_w^2-V(w)$ and let $\tilde L^\phi$ be the operator obtained from $L$ by the change of coordinates $w=\phi(z)$.
Then
\begin{align}\label{L change of variable}
&L^\phi=(\phi')^{-1/2} ((\phi')^2 \tilde L^\phi)(\phi')^{1/2}= \partial_z^2-V^\phi(z), \notag\\ 
&V^\phi(z)= \left(\phi'(z)\right)^2V(\phi(z))-\frac12 \{\phi,z\}, 
\end{align}
where 
$$\{\phi,z\}=\frac{\phi'''(z)}{\phi'(z)}-\frac32 \left(\frac{\phi''(z)}{\phi'(z)}\right)^2,$$
is the Schwarzian derivative of $\phi$. 
Introducing new parameters $\bar{l}$, $\bar{k}$, $\bar{\la}$ and $\bar{z}_i$ through the relations
\begin{equation}\label{eq:ztowvariabele}
\tilde{l}+\frac12= \frac{2}{\bar{k}}\left(\bar{l}+\frac{1}{2}\right),\qquad \alpha=\frac{1}{\bar{k}}-1,\qquad E=\left(\frac{2}{\bar{k}}\right)^{2(1-\bar{k})}\bar{\lambda},\qquad \bar{z}_i=\left(\frac{\bar{k}}{2}\right)^2\tilde{z}_i,
\end{equation}
then under the change of variable
\begin{equation}\label{w -> z}
w=\varphi(z)=\left(\frac{2}{\bar{k}}\right)^{\bar{k}}z^\frac{\bar{k}}{2},
\end{equation}
the operator \eqref{eq:blzeq} takes the form
\ben\label{BLZ oper}
L^{BLZ}=L^\phi= \partial_z^2-\left(\frac{\bar l(\bar l+1)}{z^2}+\frac{1}{z} -z^{\bar k-2}\bar\la+\sum_{i=1}^{d}\frac{2}{(z-\bar{z}_i)^2}+\sum_{i=1}^{d} \frac{\bar k-2}{z(z-\bar{z}_i)}\right),
\een
with $ \bar{k} \in (0,1)$, $\Re \bar{l}>-\frac12$. In addition, the parameters \eqref{c delta original blz} now read
\begin{equation}\label{c delta original blz 1}
c=13-6\left(\frac{1}{\bar{k}}+\bar{k}\right),\qquad \Delta=\frac{(\bar{l}+\tfrac12)^2}{\bar{k}}+\frac{1}{2}-\frac{1}{4}\left(\frac{1}{\bar{k}}+\bar{k}\right).
\end{equation}
\begin{rem}\label{rem:blz parameters}
Putting $\beta=\bar{k}^{\frac12}$, $p=\bar{l}+\tfrac12,$ then  \eqref{c delta original blz 1} can be written as

\begin{equation}\label{cdeltabetap}
c=13-6\left(\frac{1}{\beta^2}+\beta^2\right),\qquad \Delta=\left(\frac{p}{\beta}\right)^2+\frac{c-1}{24},
\end{equation}
These expressions coincide with those appearing in \cite{BLZ1}. \qed
\end{rem}
The operator \eqref{BLZ oper} looks more similar to \eqref{L1 BLZ}, and in the case $d=d_0=0$ the two operators are actually related by a ($\lambda$-dependent) change of variable. In fact, under the change of coordinate
\ben\label{eq:phichange}
z=\phi(x)=\left(\frac{\la}{k+3}\right)^{2} x^{k+3},
\een
the $L^G$ oper \eqref{L1 BLZ} (with $d_0=0$) coincides with the BLZ oper \eqref{BLZ oper} (with $d=0$), provided the corresponding parameters are related by:
\ben\label{parameters change}
\bar k=\frac{k+2}{k+3}, \qquad \bar l +\frac12=\frac{l+\frac12}{k+3},\qquad
\bar\la=\left(\frac{\la}{k+3}\right)^{\frac{2}{k+3}}.
\een
We also notice that under this change of parameters the values \eqref{c delta original blz 1} for the central charge and highest weight of the Virasoro module $M_{c,\Delta}$ coincide with those in \eqref{charge and weight}. The relation with the general values \eqref{charge and weight general} is achieved by shifting $\bar{l}\mapsto \bar{l}-\bar{k}r$ in \eqref{BLZ oper} so that, using \eqref{parameters change} we get
$$\bar{l}-\bar{k}r+\frac12=\frac{l+\frac12}{k+3}-\bar{k}r=\frac{l-kr-2r+\frac12}{k+3}.$$
\begin{rem}\label{rem: shift}
The shift $\bar{l}\mapsto \bar{l}-\bar{k}r$ corresponds to an action of the Weyl group of $\slth$ on the oper \eqref{BLZ oper}, and to explain this fact it is sufficient to consider the case $d_0=0$, which we can write in matrix form as
$$\mc{L}'=\partial_z+\begin{pmatrix}\frac{\bar{l}}{z}&\frac1z+z^{\bar{k}-2}\bar{\lambda}\\1&-\frac{\bar{l}}{z}\end{pmatrix}=\partial_z+f_1+\frac{\bar{l}}{z}h_1+\left(\frac1z+z^{\bar{k}-2}\bar{\lambda}\right)e_1.$$
We consider a realization of $\slth$ as $\slth=\slt[\bar{\la},\bar{\la}^{-1}]\oplus\C K\oplus\C\bar{d}$, with $K$ central and $\bar{d}=-\bar{\la}\partial_{\bar{\la}}$, and with $f_0=\bar{\la}e_1$, $e_0=\bar{\la}^{-1}f_1$, and $h_0=K-h_1$. The oper $\mc{L}'$, see  \cite{MR1,MR2}, can be written as a meromorphic oper with values in $\slth$:
$$z^{\bar{k}\bar{d}}z^{\frac12 h_1}\mc{L}'=\partial_z+\frac{1}{z}\left(\hat{f}+\left(\bar{l}-\tfrac12\right)h_1-\bar{k}\bar{d}+ze_1\right),$$
where $\hat{f}=f_0+f_1$. The element $\left(\bar{l}-\tfrac12\right)h_1-\bar{k}\bar{d}$ belongs to the Cartan subalgebra of $\slth$, and if $s_0$, $s_1$ are simple reflections of the $\slth$ Weyl group we get 
$$(s_0s_1)^{r}\left[\left(\bar{l}-\tfrac12\right)h_1-\bar{k}\bar{d}\right]=\left(\bar{l}-r\bar{k}-\tfrac12\right)h_1-\bar{k}\bar{d}\mod{K},$$
which corresponds to the shift $\bar{l}\mapsto \bar{l}-\bar{k}r$.
\qed
\end{rem}
Combining  \eqref{w -> z}, \eqref{eq:ztowvariabele} with \eqref{eq:phichange}, \eqref{parameters change}, we deduce that in the case $d=d_0=0$, the relation between the ground state oper \eqref{eq:blzeq} and the ground state oper \eqref{L1 BLZ}  is given by the transformation
\begin{equation}\label{eq:xtowparameters}
 w= \left(\frac{2 \, \la}{k+2}\right)^{\frac{k+2}{k+3}}x^{\frac{k+2}{2}},\ \  \alpha=\frac{1}{k+2}, \ \ 
 \tilde{l}+\frac{1}{2}=\frac{2\left(l+\frac{1}{2}\right)}{k+2}, \ \  E=\left(\frac{2\la}{k+2}\right)^{\frac{2}{k+3}}.
\end{equation}
For $d_0>0$ the situation is more involved. 
From \eqref{monodromy eq}, we deduce that the trivial monodromy conditions are equivalent to the following system of algebraic equations \cite{BLZ4}, sometimes called the \textit{BLZ system},
\ben\label{BLZ BAE}
\bar f(\bar{z}_i)+\sum_{j=1, j\neq i}^{d} \bar g(\bar{z}_i,\bar{z}_j)=0
\een
with
\bea
&\bar f(z)= \frac{\bar{l}(\bar{l}+1)}{\bar{k}}+\frac{\bar{k}}{4}-z \frac{1-\bar{k}}{\bar{k}^2},\\
&\bar g(z,\tilde z)=z \frac{\bar{k}^2z^2+(2-\bar{k})(2\bar{k}+1)z\tilde z+(1-\bar{k})(2-\bar{k})\tilde{z}^2}{\bar{k}^2(z-\tilde z)^3}.
\end{align*}
We notice that the BLZ system is similar to the system of equations \eqref{gen mon eq} for the poles
$s_i$ of the oper $L^G$.
For example, BLZ oper with one apparent singularity is unique and has the form
\be
L^{BLZ}=\partial_z^2-\frac{\bar l(\bar l+1)}{z^2}-\frac{2}{(z-\bar{z}_1)^2}-\frac{1}{z}+\frac{\bar k-2}{z(z-\bar{z}_1)}+z^{\bar k-2}\bar\la, 
\ee
with
\be
 \bar{z}_1=  \frac{ \bar{k}(\bar{l}+\frac12)^2}{1-\bar{k}}-\frac{1-\bar{k}}{4}.
\ee
According to the general scheme, this operator  should be compared to operator \eqref{depth 1 oper} with $s=s_0$ as in \eqref{L1 depth 1}.  If $d_0>0$, the change of variables \eqref{eq:phichange} does not map $L^G$ opers to BLZ opers. In fact, since the change of variables is $\la$-dependent, the $\la$-independent additional poles $s_i$ in the $x$ co-ordinates are transformed into $\bar{\la}$-dependent poles in the $z$ coordinate; whence the transformed oper cannot be of the form \eqref{BLZ oper} unless $d_0=0$. 

We discuss the correspondence between $L^G$ opers and BLZ opers in Section \ref{comparison section}. This correspondence makes use of the spectral determinant of the opers which is a non-algebraic object. We do not expect an algebraic map which connects poles $s_i$ of $L^G$ to the poles $\bar z_j$ of $L^{BLZ}$ for all $k,l$.

\section{\texorpdfstring{$Q$}{Q}-functions and their properties}\label{Q chapter}
In this section we define and study the $Q$ and $T$ functions for the equation of the form \eqref{L1 BLZ}, adapting the construction in \cite{DTa}, \cite{DT,BLZ4}. Let $\{s_i\}_{i=1}^{d_0}$, $\{t_j\}_{j=1}^{d_1}$ be a solution of the Gaudin BAE \eqref{Gaudin BAE BLZ}.
Let $L^G$ be the operator given in \eqref{L1 BLZ}, and we consider the equation 
\begin{align}
&  L^G \psi=0,\label{diffeq:L1s}
\\
& L^G=\partial_x^2-\left(\frac{l(l+1)}{x^2}+\lambda^2x^k(x-1)+
 \sum_{i=1}^{d_0}\left(\frac{2}{(x-s_i)^2}+\frac{k+s_i/(s_i-1)}{x(x-s_i)}\right)\right),\label{diffeq:L1sop}
\end{align}
where $k\in\R$, $k>-2$, $\Re{l}\neq -1/2$. This implies that for all $i=1,\dots,d_0$ and each $\la$, the monodromy around $x=s_i$ is trivial. We expect that all operators of the form \eqref{diffeq:L1sop} where all $s_i$ are non-zero distinct numbers and the monodromy around $x=s_i$ is trivial, come from solutions of the Gaudin BAE, see  Conjecture \ref{gaudin - no monodromy conj}.
Let $\widetilde{\C^*}$ be the universal cover of $\C^*=\C\setminus\{0\}$ and $\Pi:\,\widetilde{\C^*}\to \C^*$ be the corresponding projection.
The operator \eqref{diffeq:L1sop}, which is defined on the complex plane with a cut, naturally extends to a linear differential operator on $\widetilde{\C^*}\setminus\left(\bigcup_{i=1}^{d_0} \Pi^{-1} (s_i)\right)$:
 \begin{align}\label{diffeq:L1sopPi}
 L^G=\partial_{\Pi(x)}^2-&\left(\frac{l(l+1)}{\Pi(x^2)}+\Pi(\lambda^2)\Pi(x^k)(\Pi(x)-1)+ \right.\notag\\ +
 &\sum_{i=1}^{d_0}\left.\left(\frac{2}{(\Pi(x)-s_i)^2}+\frac{k+s_i/(s_i-1)}{\Pi(x)(\Pi(x)-s_i)}\right)\right).
 \end{align}
By abuse of notation, we omit from now on the projection $\Pi$, writing for instance \eqref{diffeq:L1sop} in place of \eqref{diffeq:L1sopPi}.
Since we have assumed that the monodromy about each pole $x=s_i$ is trivial, for any fixed $\la$ every local solution to equation \eqref{diffeq:L1s} can be analytically continued to a global single-valued analytic function on $\widetilde{\C^*}\setminus\left(\bigcup_{i=1}^{d_0} \Pi^{-1} (s_i)\right)$. In what follows we will be interested in solutions of (\ref{diffeq:L1s}) which are analytic in $x$ and $\la$ separately
in the domain $\left(\widetilde{\C^*} \setminus \bigcup_{i=1}^{d_0} \Pi^{-1} (s_i) \right) \times \widetilde{\C^*}$.
\subsection{Space of solutions, rotated operators, and Dorey-Tateo symmetry}
For $\theta\in\R$, the multiplication by $e^{\pi i \theta}$ in $\C$ extends to the universal cover $\widetilde{\C^*}$. Note that for 
 $x\in\widetilde{\C^*}$, we have  $e^{2\pi i}x  \neq x$.  
 For $(x,\la)\in \widetilde{\C^*}\times \widetilde{\C^*}$ we consider the transformation
 \ben\label{transformationxla}
 (x,\la)\mapsto (e^{2\pi i\theta}x,e^{-(k+2)\pi i\theta}\la).
 \een
\begin{defn}\label{A}
Given the operator $L^G$ as in \eqref{diffeq:L1sop} we denote by $L^G_{[\theta]}$ the operator induced on $L^G$  by the transformation \eqref{transformationxla}, and for a function $\psi(x,\la)$ we define
\ben\label{psi theta}
\psi_{[\theta]}(x,\la)=e^{-\pi i\theta}\psi(e^{2\pi i\theta} x,e^{-(k+2)\pi i\theta}\la).
\een
We denote by  $\mathcal{A}_{[\theta]}$  the vector space of functions in $\ker L^G_{[\theta]}$ which are analytic for $(x,\lambda)$ in the domain $ \left(\widetilde{\C^*} \setminus \bigcup_{i=1}^{d_0} \Pi^{-1} (s_i) \right) \times \widetilde{\C^*}$.  For $\theta=0$ we also write $\mathcal{A}$ in place of $\mathcal{A}_{[0]}$. 
\end{defn}
We will show below, see \eqref{wronsian j j+1}, that the space  $\mathcal{A}_{[\theta]}$ is a free module of rank $2$ over the ring $\mathcal{O}^*_{\la}$ of analytic functions in the variable
$\la$ with domain $\widetilde{\C^*}$, cf. also \cite{MR1}.
In other words, there exists two functions  $\phi,\chi \in \mathcal{A}_{[\theta]} $ such that  $\lbrace \phi(\cdot,\la) ,\chi(\cdot,\la)\rbrace$ is a basis of solutions of equation (\ref{diffeq:L1s})  for every fixed (non-zero) value of $\la$. 
We thus say that $\lbrace \phi,\chi \rbrace$ is an $\mathcal{O}^*_{\la}$-basis of $\mathcal{A}_{[\theta]}$ and we identify $\mathcal{A}_{[\theta]}$ with $\C^2 \otimes_{\mathbb{C}} \mathcal{O}^*_{\la}$  as vector spaces over $\C$. 
We collect below some important invariant properties of $L^G_{[\theta]}$ and $\mathcal{A}_{[\theta]}$, whose proof is elementary.
\begin{lem}\label{lemmaLA}
i) Consider  the transformation
 \ben\label{transformationxla1}
 (x,\la)\mapsto (e^{2\pi i}x,e^{-(k+2)\pi i}\la),
 \een
 which maps $L^G\mapsto L^G_{[1]}$ and $\psi\mapsto \psi_{[1]}$. The operator (\ref{diffeq:L1sop}) is fixed by \eqref{transformationxla1}, namely $L^G_{[1]}=L^G$. It follows from this that for every $\theta\in\R$ the operator $L^G_{[\theta]}$ admits the following invariance
\ben\label{dorey-tateo}
L^G_{[\theta+j]}=L^G_{[\theta]},\qquad j\in\Z,
\een
and that the space $\mathcal{A}_{[\theta]}$ is invariant under \eqref{transformationxla1}:  $\psi\in\mathcal{A}_\theta$ if and only if  $\psi_1\in \mathcal{A}_\theta$.\\
ii)  For every $\theta\in\R$ the operator $L^G_{[\theta]}$ is fixed by the transformation
\ben\label{e pi i la}
(x,\la)\mapsto(x,e^{i\pi}\la),
\een
and therefore the space $\mathcal{A}_{[\theta]}$ is invariant: $\psi(x,\la)\in\mathcal{A}_{[\theta]}$ if and only if $\psi(x,e^{i\pi}\la)\in\mathcal{A}_{[\theta]}$.
\end{lem}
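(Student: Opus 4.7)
Both parts of the lemma are bookkeeping invariances for the coefficients of $L^G$ on the universal cover $\widetilde{\C^*}$; I do not expect any substantive obstacle. The content of (i) lies in the $\la$-dependent term $\la^2 x^k(x-1)$, where the phases picked up by $x^k=\exp(k\log x)$ and by $\la^2$ under \eqref{transformationxla1} cancel: $e^{-2(k+2)\pi i}\cdot e^{2\pi i k}=e^{-4\pi i}=1$. Part (ii) reflects the fact that $L^G$ depends on $\la$ only through $\la^2$.

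For part (i), I would first unwind the definition of $L^G_{[\theta]}$: a direct chain-rule calculation applied to $\psi_{[\theta]}(x,\la)=e^{-\pi i\theta}\psi(e^{2\pi i\theta}x,e^{-(k+2)\pi i\theta}\la)$ shows that, if $V(x,\la)$ denotes the potential in \eqref{diffeq:L1sop}, then
\[
L^G_{[\theta]}=\partial_x^2-e^{4\pi i\theta}\,V\!\left(e^{2\pi i\theta}x,\,e^{-(k+2)\pi i\theta}\la\right).
\]
Thus $L^G_{[1]}=L^G$ reduces to verifying termwise that $V(e^{2\pi i}x,e^{-(k+2)\pi i}\la)=V(x,\la)$. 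The rational summands $\tfrac{l(l+1)}{x^2}$, $\tfrac{2}{(x-s_i)^2}$, and $\tfrac{k+s_i/(s_i-1)}{x(x-s_i)}$ depend on $x$ only through $\Pi(x)$ and are therefore fixed by the deck transformation $x\mapsto e^{2\pi i}x$; for the term $\la^2 x^k(x-1)$ I apply the cancellation above, noting that $x-1=\Pi(x)-1$ is invariant. The periodicity $L^G_{[\theta+j]}=L^G_{[\theta]}$ for $j\in\Z$ then follows by repeating the same argument with $(e^{2\pi i\theta}x,e^{-(k+2)\pi i\theta}\la)$ substituted for $(x,\la)$ and iterating.

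For the invariance of $\mathcal{A}_{[\theta]}$, the same chain-rule identity applied to a solution $\psi\in\ker L^G_{[\theta]}$ (rather than to an element of $\ker L^G$) gives $\psi_1\in\ker L^G_{[\theta+1]}$, which by the periodicity just established equals $\ker L^G_{[\theta]}$. Analyticity is preserved because $x\mapsto e^{2\pi i}x$ is a biholomorphism of $\widetilde{\C^*}$ that stabilises each fibre $\Pi^{-1}(s_i)$, and the reverse implication is obtained by applying the inverse transformation $(x,\la)\mapsto(e^{-2\pi i}x,e^{(k+2)\pi i}\la)$. Part (ii) is then immediate: since every coefficient of $L^G_{[\theta]}$ depends on $\la$ only through $\la^2$, it is trivially unchanged by $\la\mapsto e^{i\pi}\la$, and substituting $e^{i\pi}\la$ for $\la$ in any solution of $L^G_{[\theta]}\psi=0$ yields another element of $\ker L^G_{[\theta]}$ with the same analytic domain.
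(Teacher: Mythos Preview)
Your proof is correct. The paper itself omits the proof entirely, merely remarking that it ``is elementary''; your argument --- direct chain-rule computation of $L^G_{[\theta]}=\partial_x^2-e^{4\pi i\theta}V(e^{2\pi i\theta}x,e^{-(k+2)\pi i\theta}\la)$, followed by the phase cancellation $e^{-2(k+2)\pi i}\cdot e^{2\pi i k}=1$ in the $\la^2x^k(x-1)$ term and the observation that the rational terms are deck-invariant --- is exactly the intended elementary verification.
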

Property  \eqref{dorey-tateo} is known as Dorey-Tateo symmetry.
Below it will be useful to consider the Wronskian (with respect to $x$) of a pair of functions depending on $(x,\la)$.  We denote by
$$\on{Wr}(\psi,\chi)(x,\lambda)=\psi(x,\lambda)\partial_x\chi(x,\la)-\partial_x\psi(x,\la)\chi(x,\la)$$
the Wronksian with respect to the variable $x$ of the pair of functions $\psi(x,\la)$ and $\chi(x,\la)$, see also \eqref{Wr def}.
Under the map \eqref{psi theta} the Wronskian transforms as a function:
\ben\label{wr theta}
\on{Wr}(\psi_{[\theta]},\chi_{[\theta]})(x,\lambda)=\on{Wr}(\psi,\chi) (e^{2\pi i\theta}x,e^{-(k+2)\pi i\theta}\la).
\een
We also introduce the  \textit{monodromy}  transformation, defined as the ($\C$-linear) map
\begin{equation}\label{eq:monodromyoperator}
 \big(\mathcal{M} \psi \big)(x,\la)= \psi(e^{2\pi i} x,e^{-(k+2) \pi i } \la)=-\psi_{[1]}(x,\la),
\end{equation}
and we extend its action on column vectors by setting $\mathcal{M}(\psi,\chi)^t=(\mathcal{M}\psi,\mathcal{M}\chi)^t$.  The action of the monodromy on a given basis $\{\psi,\chi\}\subset \mathcal{A}$ uniquely defines a  matrix, known as \emph{monodromy matrix} (with respect to the chosen basis) and given by
\ben\label{monodromy matrix}
\mathcal{M}
\begin{pmatrix}
\psi\\
\chi
\end{pmatrix}
=
\begin{pmatrix}
c_{11} & c_{12} \\
c_{21} & c_{22} 
\end{pmatrix}
\begin{pmatrix}
\psi \\
\chi 
\end{pmatrix},
\een
for some $c_{ij}\in \mathcal{O}^*_{\la}$.
In order to define the $Q$ functions, we introduce below a set of distinguished solutions at $0$ and $\infty$.

\subsection{Frobenius solutions at \texorpdfstring{$x=0$}{x=0}}
For every fixed $\la$ the point $x=0$ is formally a  Fuchsian singular point of equation (\ref{diffeq:L1s}) because it admits the expansion
\begin{equation*}
 \psi''(x)= \left( \frac{l(l+1)}{x^2} + o\big(|x|^{-2}\big)\right) \psi(x), \qquad x \sim 0.
\end{equation*}
However, $x=0$ it is actually not a Fuchsian singular point whenever $k$ is not integer,
since in this case $x=0$ is a branch-point of the potential. To overcome this difficulty we consider equation (\ref{diffeq:L1s}) as an equation with values
in the space $\mathcal{O}_{\xi}$ of entire functions of the variable $\xi=\la^2 x^{k+2} $ \cite{MR1,MR2}. More precisely,  we consider the change of variables
$$
\begin{cases}
\xi=\lambda^2x^{k+2},\\
\tau=x,
\end{cases}
$$
and notice that $\xi$ is invariant under the monodromy map \eqref{eq:monodromyoperator}. Introducing the operator
$$D_\xi=(k+2)\xi\partial_\xi,$$
one has
\be
\partial_x=\partial_\tau+\frac{D_\xi}{\tau},\qquad \partial_x^2=\partial_\tau^2+\frac{2D_\xi}{\tau}\partial_\tau+\frac{D_\xi^2-D_\xi}{\tau^2},
\ee
and in the new variables the equation reads
\begin{align}\label{eq:L1partial}
&\tilde{L}\psi=0,\notag\\
&\tilde{L}=-\partial_\tau^2- \frac{2D_{\xi}}{\tau}\partial_\tau-
\\&-\frac{D_{\xi}^2-D_{\xi}-l(l+1)+\xi}{\tau^2}
+\frac{\xi}{\tau}+\sum_{i=1}^{d_0}\left(\frac{2}{(\tau-s_i)^2}+\frac{k+s_i/(s_i-1)}{\tau(\tau-s_i)}\right). \notag
\end{align}
For a function $\psi(x,\la)$ we define the function $\Psi(\tau,\xi)$ by the formula
\be
\Psi(\tau,\xi)=\psi(x(\tau,\xi), \la(\tau,\xi))=\psi(\tau, \xi \tau^{-k-2}),
\ee
so that
$$
 \psi(x,\la)= \Psi(x,\la x^{k+2}).
$$
We look for solutions $\Psi(\tau, \xi) $ to the equation $\tilde{L}\Psi=0$ such that for every $x \in \widetilde{\C^*}$ the function $\xi\mapsto\Psi(\tau, \xi) $ is entire. Within this approach, a (generalised) Frobenius solution is a series of the form
\begin{equation}\label{eq:genfrobenius}
\Psi(\tau,\xi)=\tau^c \sum_{j=0}^\infty g_j(\xi)\tau^j, \qquad  g_j \in \mathcal{O}_\xi,\quad  g_0(0)=1,
\end{equation}
for some $c$.  Given such a $c$, we expect the corresponding series to be convergent for every $|\tau|< \min_{i} |s_i|$ and every $\xi$. 
We remark that given a solution of $\tilde{L}\Psi=0$   of the form \eqref{eq:genfrobenius}, the function 
\be
\psi(x,\lambda)=\Psi(x,\lambda^2x^{k+2})=x^c\sum_{j=0}^{\infty}g_j(\lambda^2x^{k+2})x^j
\ee
is a solution to equation \eqref{diffeq:L1s} and
\begin{align*}
&\lim_{x\to 0} x^{-c} \psi(x,\la)=g_0(0)=1 ,\\
& \mathcal{M} \psi = e^{2\pi ic } \psi ,
\end{align*}
where $\mathcal{M}$ is the monodromy operator defined in equation (\ref{eq:monodromyoperator}). To find the Frobenius solutions we apply the operator \eqref{eq:L1partial} to the Frobenius series (\ref{eq:genfrobenius}) and study the recursion on $g_j$. We expand the $\underline{s}$-dependent part of the potentials at $\tau=0$ as
$$ 
\sum_{i=1}^{d_0}\left(\frac{2}{(\tau-s_i)^2}+\frac{k+s_i/(s_i-1)}{\tau(\tau-s_i)}\right)=\sum_{j=-1}^{\infty}f_j\tau^j  , \qquad 0<|\tau|< \min_{i} |s_i|, 
$$
so that the operator $\tilde{L}$ in \eqref{eq:L1partial} reads
\be
\tilde{L}=-\partial_\tau^2-\frac{2D_{\xi}}{\tau}\partial_\tau-\frac{D_{\xi}^2-D_{\xi}-l(l+1)+\xi}{\tau^2}
+\frac{\xi}{\tau}+ \sum_{j=-1}^{\infty}f_j\tau^j.
\ee
Substituting \eqref{eq:genfrobenius} into the equation $\tilde{L}\psi=0$ and expanding in $\tau$ we  obtain the recursion 
\begin{equation}\label{230119-6}
A_{\alpha,j}g_j=\xi g_{j-1}+\sum_{h=0}^{j-1}f_{j-2-h}g_h , \qquad g_{-1}=0, \; g_0(0)=1.
\end{equation}
where
\begin{equation}\label{Aaj}
A_{\alpha,j}:\mc{O}_\xi\to \mc{O}_\xi,\qquad A_{\alpha,j}=D_{\xi}^2+(2\alpha+2j-1)D_{\xi}+P_l(\alpha+j)+\xi,
\end{equation}
and $P_l(\alpha)$   is the indicial polynomial of the operator $-\partial_{\tau}^2+\frac{l(l+1)}{\tau^2}$, given explicitly by
$$
P_l(\alpha)  =\alpha^2-\alpha-l(l+1)=(\alpha-l-1)(\alpha+l).
$$
The operator $A_{\alpha,j}$ is a linear differential operator of the
second order. It
has a Fuchsian singularity at $\xi=0$, an irregular singularity at
$\xi=\infty$, and no other singularities. It follows from this that
$\ker A_{\alpha,j}\cap \mathcal{O}_\xi \neq 0$ if
and only if
the indicial polynomial of the operator $ A_{\alpha,j}$, given by
\begin{equation} \label{230120-1}
 R_{j}(r;\alpha)=\big((k+2)r + \alpha+j -l-1\big)\big((k+2)r+ \alpha+j +l\big).
\end{equation}
has a non-negative integer root. To further characterise this kernel we notice
that it is related to the Bessel equation: more precisely, $y$ is a solution to the Bessel equation
of order $\nu=\frac{2l+1}{k+2}$, namely
\begin{equation}\label{eq:bessel}
z^2 \frac{d^2y}{dz^2}+z\frac{dy}{dz}+(z^2-\nu^2)y=0, \qquad \nu=\frac{2l+1}{k+2},
\end{equation}
if and only if
\begin{equation}\label{eq:kernelbessel}
 A_{\alpha,j} \left( \xi^{\frac{1-2\alpha-2j}{4+2k}} y\left(\frac{2
\xi^{\frac12}}{k+2}\right) \right)=0.
\end{equation}
It follows from this that $\dim \big( \ker A_{\alpha,j} \cap
\mathcal{O}_\xi \big) \leq 1 $
for every value of $\alpha,j$, and moreover that in the case $\dim \big( \ker
A_{\alpha,j} \cap  \mathcal{O}_\xi  \big) =1$ the holomorphic kernel is spanned
by a solution with a zero of order $r_+$ at $0$,
where $r_+$ is the greatest integer root of $R_{j}(r;\alpha)$.  In fact,
if the indicial polynomial (\ref{230120-1}) admits
two integer roots then $\nu \in \mathbb{Z}_{\geq 0}$ and the general
solution of the Bessel equation (\ref{eq:bessel}) has a logarithmic
term \cite{EMOT}. Finally we recall that the general solution of the Bessel equation (\ref{eq:bessel}) can be expressed in terms of the Bessel function $J_{\nu}(z)$, whose Frobenius expansion at $z=0$ is given by
\ben\label{Jnu}
J_{\nu}\left(z\right)=(\tfrac{1}{2}z)^{\nu}\sum_{j=0}^{\infty}(-1)^{j}\frac{(\tfrac{1}{4}z^{2})^{j}}{j!\,\Gamma\left(\nu+j+1\right)}.
\een
In the proposition below, we analyse the recursion (\ref{230119-6}).
Before that, we recall that the space $\mathcal{O}_{\xi}$ is a Frechet space whose topology is induced by the following family of norms,
parametrised by the real positive number $\sigma \in (0, +\infty)$:
$$
 \| f \|_{\sigma}= \sup_{i} |a_i| \sigma^{i}, \qquad \mbox{ where }\qquad f(\xi)=\sum_{i=0}^{\infty} a_i \xi^i.
$$

\begin{prop}\label{prop:frobenius}
Let $k>-2$ and  $\Re l \neq -\frac12$.
\begin{enumerate}[i)]
\item If the recursion (\ref{230119-6}) has a solution then $\alpha=l+1$ or $\alpha=-l$.
\item If $\Re l >-\frac12$, the recursion (\ref{230119-6}) with $\alpha=l+1$ admits a unique solution $\{g_j^+,j\geq0\}$, where $g_j^+$ are entire functions such that  for every $0<\rho < \min_{i} |s_i|$ and every $\sigma >0$ there exists a constant $C_{\rho,\sigma}$ such that
 \begin{equation}\label{eq:recursionconvergence+}
  \|g^+_j \|_{\sigma} \leq C_{\rho,\sigma} \rho^{-j} .
 \end{equation}
The term $g^+_0$ admits the explicit representation 
 \begin{equation}\label{eq:g0+Bessel}
  g^+_0(\xi)=(k+2)^{\frac{2l+1}{k+2}}\Gamma\left(1+\frac{2l+1}{k+2}\right)\xi^{-\frac{2l+1}{k+2}}
  J_{\frac{2l+1}{k+2}}\left(\frac{2\sqrt{\xi}}{k+2}\right),
 \end{equation}
where $J_{\frac{2l+1}{k+2}}(\cdot)$ is the Bessel function of order $\frac{2l+1}{k+2}$, see \eqref{Jnu}. \\
Moreover, if $l$ is such that $2l+1 \notin \lbrace (k+2) i+j, (i,j)\in \mathbb{Z}_{\geq 0}^2 \rbrace$,
the recursion (\ref{230119-6}) with $\alpha=-l$ admits a unique solution $\{g_j^-,j\geq 0\}$, where $g_j^-$ are entire functions such that
 for every $0<\rho < \min_{i} |s_i|$ and every $\sigma >0$ there exists a  constant $C_{\rho,\sigma}$ such that
 \begin{equation}\label{eq:recursionconvergence-}
  \|g^-_j \|_{\sigma} \leq C_{\rho,\sigma} \rho^{-j} .
 \end{equation}
The term $g^-_0$ admits the explicit representation 
 \begin{equation}\label{eq:g0-Bessel}
  g^-_0(\xi)=(k+2)^{-\frac{2l+1}{k+2}}\Gamma\left(1-\frac{2l+1}{k+2}\right)\xi^{+\frac{2l+1}{k+2}}
  J_{-\frac{2l+1}{k+2}}\left(\frac{2\sqrt{\xi}}{k+2}\right),
 \end{equation}
where $J_{-\frac{2l+1}{k+2}}(\cdot)$ is the Bessel function of order $-\frac{2l+1}{k+2}$, see \eqref{Jnu}.
\item If $\Re l <-\frac12$, the recursion (\ref{230119-6}) with $\alpha=-l$ admits a unique solution $\{g_j^-,j\geq 0\}$, where $g_j^-$ are entire functions such that
 for every $0<\rho < \min_{i} |s_i|$ and every $\sigma >0$ there exists a constant $C_{\rho,\sigma}$ such that $ \|g^-_j \|_{\sigma} \leq C_{\rho,\sigma} \rho^{-j} $.
The term $g^-_0$ admits the explicit representation \eqref{eq:g0-Bessel}.  \\
Moreover, if $l$ is such that
$-(2l+1) \notin \lbrace (k+2) i+j, (i,j)\in \mathbb{Z}_{\geq 0}^2 \rbrace$,
the recursion (\ref{230119-6}) with $\alpha=l+1$ admits a unique solution $\{g_j^+,j\geq0\}$, where $g_j^+$ are entire functions such that for every $0<\rho < \min_{i} |s_i|$ and every $\sigma >0$ there exists a constant $C_{\rho,\sigma}$ such that
$\|g^+_j \|_{\sigma} \leq C_{\rho,\sigma} \rho^{-j}$.
The term $g^+_0$ admits the explicit representation \eqref{eq:g0+Bessel}.
\end{enumerate}
\end{prop}
\begin{proof}
\begin{enumerate}[i)]
\item We study \eqref{230119-6} in the case $j=0$ to deduce the values of $\alpha$ for which the recursion can be solved.
The equation reads
\begin{equation}\label{230119-7}
 A_{\alpha,0}g_0=0,\qquad g_0 \in \mathcal{O}_{\xi}, \, g_0(0)=1 .
 \end{equation}
This equation admits a solution if and only if
$r=0$ is the greatest integer root of the indicial polynomial $R_{0}  (r;\alpha)$, given by \eqref{230120-1}. Hence, equation \eqref{230119-7} admits a solution only if
$P_l(\alpha)=0$, namely $\alpha=l+1$ or $\alpha=-l$. 

\item We introduce the notation
$$
A^+_j=A_{l+1,j}=D_\xi^2+(2j+2l+1)D_\xi+j(j+2l+1)+\xi,\qquad j\geq 0.
$$
for operator \eqref{Aaj} with $\alpha=l+1$, with indicial polynomials,  see \eqref{230120-1},
$$
R^+_{j}(r)=R_{j}(r;l)=\left((k+2)r+j\right)\left((k+2)r+j+(2l+1)\right),\qquad j\geq 0.
$$
If $j=0$, then since $\Re (2l+1) \geq 0$ it follows that $r=0$ is the only non-negative integer root $R^+_{0}(r)$. Therefore the equation
$A^+_{0}g_0=0,\; g_0 \in \mathcal{O}_{\xi}, \, g_0(0)=1$ admits a unique solution $g_0^+$. The explicit formula (\ref{eq:g0+Bessel}) for $g_0^+$  is obtained from  (\ref{eq:kernelbessel}) and the Frobenius expansion \eqref{Jnu} of the Bessel function.\\
\noindent
If $j\geq 1$, then by direct inspection we notice that the polynomial $R^+_{j}(r)$ does not have any non-negative integer roots; hence 
$$
 \ker A^+_{j}\cap \mc O_\xi=0,\qquad \forall j\geq1.
$$
We now show that $A^+_{j}$ is an invertible operator in $\mc O_\xi$ and that the following estimate holds:
for every $\sigma>0$, there exists a $C(\sigma)$ such that
\begin{equation}\label{eq:normofinverse+}
 \|(A^+_j )^{-1}\|_{\sigma} \leq \frac{C(\sigma)}{j(\Re(2l+1)+j)},\qquad \forall j\geq 1.
\end{equation}
To prove this inequality, we fix  $\sigma>0$  and let $a,b\in\mc{O}_\xi$ be given by
$$
a(\xi)=\sum_{h=0}^{\infty}a_h\sigma^{-h}\xi^h,\qquad b(\xi)=\sum_{h=0}^{\infty}b_h\sigma^{-h}\xi^h.
$$
so that
$$
\| a\|_{\sigma}=  \sup_h|a_h|, \qquad \| b\|_{\sigma}=  \sup_h|b_h| .
$$
In order to estimate  $\|(A^+_j )^{-1}\|_{\sigma}$ we study $A^+_j a=b$ with $b$ such that $\|b\|_{\sigma}=1$. Equation $A^+_j a=b$ is equivalent to the recursion
$$
R^+_j(h)a_h+\sigma a_{h-1}=b_{h},\qquad h\geq 0, \; a_{-1}=0,
$$
whose explicit solution  is given by
$$
a_h=\sum_{i=0}^h\frac{(-1)^{h-i}\sigma^{h-i}b_i}{\prod_{r=i}^hR^+_j(r)},\qquad h\geq 0.
$$
Since  $|R^+_{j}(r)|^{-1}=O(r^{-2})$ as $r\to\infty$ for every $j\geq1$, then from the above formula we deduce that 
\begin{equation*}
 \sup_{\|b\|_{\sigma}=1} \sup_{h} |a_h| = \sup_{\|b\|_{\sigma}=1} \sup_{h} \left|\sum_{i=0}^h\frac{(-1)^{h-i}\sigma^{h-i}b_i}{\prod_{r=i}^hR^+_j(r)}\right|< \infty,\qquad \forall j \geq 1,
\end{equation*}
which in turn implies that
\begin{equation}\label{eq:boundedinverse+}
 \|(A^+_j )^{-1}\|_{\sigma} < +\infty,\qquad \forall \sigma >0, \forall j \geq 1.
\end{equation}
Therefore, $A^+_{j}$ is invertible. We now prove the estimate \eqref{eq:normofinverse+}. Since $|R^+_{j}(r)|>|R^+_{j}(0)|>0$ for $r>0$, then for all $b \neq 0$ we have
\begin{align}\label{eq:estimateahproof}
\frac{|a_h|}{\|b \|_{\sigma}}&\leq \sum_{i=0}^h \frac{\sigma^{h-i}}{\prod_{r=i}^h|R^+_j(r)|}\leq \sum_{i=0}^h \frac{\sigma^{h-i}}{\prod_{r=i}^h|R^+_j(0)|}\notag\\
&=\frac{1}{|R_j^+(0)|}\sum_{i=0}^h\left(\frac{\sigma}{|R_j^+(0)|}\right)^{h-i}=\frac{1}{|R_j^+(0)|}\frac{1-(\sigma/|R^+_j(0)|)^{h+1}}{1-\sigma/|R^+_j(0)|} \notag\\
&\leq \frac{1}{j(j+\Re(2l+1))}\frac{1-(\frac{\sigma}{j(j+\Re(2l+1))})^{h+1}}{1-\frac{\sigma}{j(j+\Re(2l+1))}}.
\end{align}
Now fix $C'\in(0,1)$. Then, there exists a $j_0$ such that  $\frac{\sigma}{j(j+\Re(2l+1))}\leq C'$, $\forall j \geq j_0$, and from (\ref{eq:estimateahproof}) we deduce that there exists a $C''\in (0,+\infty)$ such that
$$
\frac{|a_h|}{\|b \|_{\sigma}}\leq \frac{C''}{j(j+\Re(2l+1))} , \qquad  \forall b \neq0, \,  \forall j \geq j_0,  \, \forall h \geq 0 .
$$
Combining this estimate with \eqref{eq:boundedinverse+} we obtain (\ref{eq:normofinverse+}), and using the latter we prove that the unique solution of the recursion (\ref{230119-6}) satisfies (\ref{eq:recursionconvergence+}). In fact,
for $j \geq 1$, the recursion (\ref{230119-6})
reads
$$
g^+_j=(A^+_j )^{-1}\xi g^+_{j-1}+\sum_{h=0}^{j-1}f_{j-2-h} (A^+_j )^{-1} g^+_h ,
$$
and using (\ref{eq:normofinverse+}), we deduce that
\begin{equation}\label{230130-5}
\|g^+_j\|_{\sigma} \leq \frac{C(\sigma)}{j(\Re(2l+1)+j)} \left( \sigma \|g^+_{j-1}\|_{\sigma} +\sum_{h=0}^{j-1}|f_{j-2-h}| \|g^+_h\|_{\sigma} \right), \qquad j \geq 1.
\end{equation}
In the above inequality we have used the fact that (with respect to the norm $\| \cdot \|_{\sigma}$) the operator of multiplication by $\xi$ has norm $\sigma $. Now let $\rho$ be such that $0<\rho<\min_i|s_i|$. Denoting   $\widetilde{f}_j=\rho^j |f_j|$ and $\widetilde{g}^+_j=\rho^j \|g^+_j\|_{\sigma}$ we can rewrite \eqref{230130-5} as
$$
 \widetilde{g}^+_j  \leq  \frac{C(\sigma)}{j(\Re(2l+1)+j)} \left( \rho \sigma \widetilde{g}^+_{j-1} +  \rho^2\sum_{h=0}^{j-1} \widetilde{f}_{j-2-h} \widetilde{g}^+_h \right) .
$$
Since
$\sup_j\widetilde{f}_j=\sup_j |f_j|\rho^{j} <+\infty$ then
there exists a $C'_{\rho,\sigma}$ such that
$$
\widetilde{g}^+_j   \leq  \frac{C'_{\rho,\sigma} }{\Re(2l+1)+j}  \left(\sup_{l \leq j-1} \widetilde{g}^+_l \right).
$$
It follows that the sequence $\lbrace \sup_{l \leq j} \widetilde{g}_j^+\rbrace_{j \in \mathbb{Z}_{\geq 0}}=\lbrace \sup_{l \leq j} \rho^jg_j^+\rbrace_{j \in \mathbb{Z}_{\geq 0}}$ is bounded, hence (\ref{eq:recursionconvergence+}) holds.

We study now the recursion (\ref{230119-6}) when $\alpha=-l$ and $l> - \frac12$
We introduce the notation
$$
A^-_j=A_{-l,j}=D_{\xi}^2+(2j-(2l+1))D_{\xi}+j(j-(2l+1))+\xi.
$$
which have indicial polynomials
\begin{equation}\label{230130-1}
R^-_{j}(r)=R_{j}(r;l)=\left((k+2)r+j\right)\left((k+2)r+j-(2l+1)\right).
\end{equation}

\noindent
If $j=0$, the equation $A^-_{0}g_0^-=0,\; g_0^- \in \mathcal{O}_{\xi}, \, g_0^-(0)=1$ admits a (a fortiori unique) solution if and only if
the polynomial $R^-_{0}(r)$ does not have any positive integer root. A direct inspection shows that the latter condition is equivalent to
$2l+1 \notin\lbrace (k+2) i, i\geq 1 \rbrace$. Whenever the latter condition is met, the explicit formula (\ref{eq:g0-Bessel}) for $g_0^-$ follows by the same computation as above.

\noindent
If $j\geq 1$ then by looking at the indicial polynomial \eqref{230130-1} we obtain $\ker A^-_{j}=0$ if and only if
$$2l+1\neq(k+2)i+j,\qquad i,j\in\mathbb{Z}_{\geq 0}, i\geq0\,j\geq1.$$
Assuming $2l+1\notin \lbrace (k+2)i+j,\,\, i,j\in\mathbb{Z}_{\geq 0}, i\geq0\,j\geq1 \rbrace$, one obtains the estimate (\ref{eq:recursionconvergence-}) by following the same steps as in the proof of (\ref{eq:recursionconvergence+}) above.
\item Same as ii), replacing $l \to -l-1$.
\end{enumerate}
\end{proof}
The following corollary gives a basis of solutions of \eqref{diffeq:L1s} with special properties at $x=0$ which we also call Frobenius solutions. Recall the vector space $\mc{A}$  introduced in Definition \ref{A}.
\begin{cor}\label{cor:frobenius}
Let $k>-2$,  $\Re l \neq -\frac12$, and let $g^\pm_j$, $j\in\Z_{\geq0}$ be the entire functions defined in Proposition \ref{prop:frobenius}.
\begin{enumerate}[i)]
\item If $\Re l > -\frac12$ there exists $\chi^+\in\mc{A}$ such that
 \begin{equation}\label{eq:chipmx}
  \chi^+(x,\la)= x^{l+1} \,  \sum_{j=0}^{\infty}g^{+}_j(\la^2 x^{k+2})\, x^j, \qquad x\to 0.
 \end{equation}   
If $2l+1 \notin \lbrace (k+2) i+j, (i,j)\in \mathbb{Z}_{\geq 0}^2 \rbrace$, then  there exists $\chi^-\in\mc{A}$ such that
 \begin{equation}\label{eq:chipmx2}
  \chi^-(x,\la)= x^{-l} \,  \sum_{j=0}^{\infty}g^{-}_j(\la^2 x^{k+2})\, x^j, \qquad x\to 0.
 \end{equation}  
\item If $\Re l < -\frac12$ there exists $\chi^-\in\mc{A}$ such that
 \begin{equation}\label{eq:chipmx2bis}
  \chi^-(x,\la)= x^{-l} \,  \sum_{j=0}^{\infty}g^{-}_j(\la^2 x^{k+2})\, x^j, \qquad x\to 0.
 \end{equation} 
If $-(2l+1) \notin \lbrace (k+2) i+j, (i,j)\in \mathbb{Z}_{\geq 0}^2 \rbrace$, then  there exists $\chi^+\in\mc{A}$ such that
 \begin{equation}\label{eq:chipmxbis}
  \chi^+(x,\la)= x^{l+1} \,  \sum_{j=0}^{\infty}g^{+}_j(\la^2 x^{k+2})\, x^j, \qquad x\to 0.
 \end{equation} 
\item The pair $\lbrace \chi^+(x,\la),\chi^-(x,\la) \rbrace$ is a $\mathcal{O}^*_{\la}-$basis of $\mathcal{A}$.
The Wronskian (with respect to $x$) is given by
\begin{equation}\label{eq:chipmwronskian}
 \on{Wr}(\chi^+,\chi^-)=-(2l+1).
\end{equation}
\item
For $\theta \in \R$  the rotated solutions  $\chi^\pm_{[\theta]}$ (see \eqref{psi theta}) are given by
\ben\label{rotated chi}
\chi^\pm_{[\theta]}(x,\la)=e^{\pm(2l+1)\pi i\theta} x^{\frac12\pm(l+\frac12)} \,  \sum_{j=0}^{\infty}g^{\pm}_j(\la^2 x^{k+2})e^{2\pi i\theta j}\, x^j, \qquad x\to 0.
\een
The rotated solutions $\chi^\pm_{[-\frac12]}$ and $\chi^\pm_{[\frac12]}$ satisfy:
\begin{align}
 \on{Wr}(\chi^\pm_{[-\frac12]},\chi^\pm_{[\frac12]})&=0,\label{wrchipmpm}\\
  \on{Wr}(\chi^\pm_{[-\frac12]},\chi^\mp_{[\frac12]})&=\mp(2l+1)e^{\mp(2l+1)\pi i}.\label{wrchipmmp}
\end{align}
\item With respect to the basis $\lbrace \chi^+(x,\la),\chi^-(x,\la) \rbrace$ the monodromy matrix \eqref{monodromy matrix} is diagonal
\ben\label{monodromy matrix chi}
\mathcal{M}
\begin{pmatrix}
\chi^+\\
\chi^-
\end{pmatrix}
=\begin{pmatrix}
e^{(l+1)2\pi i} & 0\\
0 & e^{-l2\pi i}
\end{pmatrix}
\begin{pmatrix}
\chi^+ \\
\chi^- 
\end{pmatrix}.
\een
In other words, $\chi^\pm$ are eigenfunctions of the monodromy operator (\ref{eq:monodromyoperator}). 
\item The functions $\chi^\pm$ are invariant under the map \eqref{e pi i la}, namely 
\ben\label{second map chi}
\chi^{\pm}(x,e^{\pi i}\la)=\chi^\pm(x,\la),
\een 
and satisfy
\ben\label{third map chi}
\chi^{\pm}(x,e^{(k+3)\pi i}\la)=-e^{\mp(2l+1)\pi i}\chi^\pm(e^{2\pi i}x,\la).
\een
\end{enumerate}
\end{cor}
\begin{proof}
\begin{enumerate}[i)]
\item The estimate \eqref{eq:recursionconvergence+} 
implies that the series (\ref{eq:chipmx})  defines a  function $\chi^+$ holomorphic in the domain $\lbrace x \in \widetilde{\C^*}, |x| \leq \min_i |s_i| \rbrace \times \C$.   Since \eqref{diffeq:L1s} has trivial monodromy at the singular points $s_i, i=1,\dots,d_0$,  for every $\la$, it follows that $\chi^+$ can be analytically extended to the domain $\left(\widetilde{\C^*} \setminus \bigcup_{i=1}^{d_0} \Pi^{-1} (s_i) \right) \times \widetilde{\C^*}$. By construction, $\chi^+$ satisfies equation \eqref{diffeq:L1s} and therefore, $\chi^+\in \mc{A}$. The proof for $\chi^-$ is similar, with the only difference that the estimate \eqref{eq:recursionconvergence-} holds under the condition $2l+1 \notin \lbrace (k+2) i+j, (i,j)\in \mathbb{Z}_{\geq 0}^2 \rbrace$.
\item Same as in part i), replacing $l\to -l-1$.
\item The Wronskian (in $x$) of $\chi^+$ and $\chi^-$ is constant (in $x$). Substituting the the asymptotic expansion \eqref{eq:chipmx} into the Wronskian and using $g_0^{\pm}(0)=1$, we get \eqref{eq:chipmwronskian}. In particular, $\chi^+$ and $\chi^-$ are linearly independent for every $\la$.
\item The rotated solutions \eqref{rotated chi} are obtained applying \eqref{psi theta} to \eqref{eq:chipmx} and \eqref{eq:chipmx}. The proof of  \eqref{wrchipmpm}  and  \eqref{wrchipmmp}  is then similar to part ii).
\item The identity \eqref{monodromy matrix chi} follows easily from \eqref{rotated chi} with $\theta=1$.  
\item The identities \eqref{second map chi} follow again from \eqref{eq:chipmx}, \eqref{eq:chipmx2}, and from \eqref{monodromy matrix chi} we obtain \eqref{third map chi}:
$$\chi^\pm(e^{2\pi i}x,\la)=e^{(1\pm(2l+1))\pi i}\chi^\pm(x,e^{(k+2)\pi i}\la)=e^{(1\pm(2l+1))\pi i}\chi^\pm(x,e^{(k+3)\pi i}\la),$$
where in the last equality we used \eqref{second map chi}.
\end{enumerate}
\end{proof}

\subsection{Sibuya solutions at \texorpdfstring{$x=\infty$}{x=infinity}}
Since $k>-2$, the point $x=\infty$ is  an irregular singularity for equation (\ref{diffeq:L1s}). In the asymptotic analysis of equations with irregular singularities a prominent role is played by a primitive of the square root
of the potential
\begin{equation}\label{eq:Vpot}
V(x,\la)=\lambda^2x^k(x-1)+\frac{l(l+1)}{x^2}+\sum_{i=1}^{d_0}\left(\frac{2}{(x-s_i)^2}+\frac{k+s_i/(s_i-1)}{x(x-s_i)}\right),
\end{equation}
 of the operator (\ref{diffeq:L1sop}). In order to understand the asymptotic behaviour of solutions about $x=\infty$,
we need to find an approximate primitive of $\sqrt{V}$, namely a function $F(x,\la)$ that satisfies
the asymptotics
$\partial_x F(x,\la) - \big( V(x,\la) \big)^{\frac12}  = O(x^{-1-\e})$ as $ x \to \infty$, for
for some $\e>0$.
A convenient solution of the above inequality is given as follows.
Denote by $c_j$ ($j\geq 0$) the coefficients of the Taylor expansion of $\sqrt{1-x^{-1}}$ at $x=\infty$, and set  $a=\frac{k+3}{2}$.  Then define
\begin{equation}\label{eq:truncatedaction}
 R(x)=\begin{cases}
 \frac{1}{a}x^{a} \left( 1+ \sum_{j=1}^{\floori{a}}
\frac{c_j\, x^{-j}}{1-\frac{j}{a}}\right),\quad & a \notin \mathbb{Z}_{\geq 0},\\
 \frac{1}{a}  x^{a} \left( 1+ \sum_{j=1}^{a-1} \frac{c_j\, x^{-j}}{1-\frac{j}{a}}\right)+ c_a \log x , \qquad & a \in \mathbb{Z}_{\geq 0}.
\end{cases}
\end{equation}
Clearly $R(x)$ is holomorphic in the domain $\widetilde{\C^*} $
and $$ R'(z)-x^{a-1}\sqrt{1-x^{-1}}=O(x^{a-\floori{a}-2}),\qquad x \to \infty. $$
Since $a>1/2$, as $x\to \infty$, we get 
\begin{align}\label{eq:calS-S}
\la R'-V^{1/2}=\la R'-\la x^{a-1}(\sqrt{1-x^{-1}}+O(x^{-2a}))=\notag \\ O(x^{-2+\{a\}})+O(x^{-a-1})=O(x^{-2+\{a\}}),
\end{align}
where $\{a\}=a-\floori{a}\in [0,1)$ is the fractional part of $a$. This inequality holds uniformly in $\la$ in any compact subset of $\widetilde{\C^*}$. 
The Stokes sectors for equation (\ref{diffeq:L1s}) are obtained by looking at the dominant term in $\la R(x)$, which is given by
$\frac{2 \, \la}{k+3} x^{\frac{k+3}{2}} $.  More precisely,  for given $\lambda\in\widetilde{\C^*}$, define the Stokes rays  by the equation $\Re(\Pi\la x^{\frac{k+3}{2}})=0$, for variable $x\in \widetilde{\C^*}$. Further, define the Stokes sectors as the sectors between two consecutive Stokes rays. The Stokes sectors are parameterized by $j \in\Z$ and are explicitly given by
$$
 \Sigma_j[\la]= \left\{ x \in \widetilde{\C^*}: \left| \arg \la + \frac{k+3}{2}\arg x - \pi j  \right| \leq \frac{\pi}{2} \right\},\qquad j\in\mathbb{Z}.
$$
We notice that for $\theta\in \mathbb{R}$ and $j\in \mathbb{Z}$ the $j$-th Stokes sector satisfy
\begin{equation}\label{eq:twistedStokes}
\Sigma_j[e^{i \pi \theta}\la] = e^{-\frac{2 \pi i \theta}{k+3}} \Sigma_j[\la],
\end{equation}
and
\begin{equation}\label{eq:twistedStokes2}
\Sigma_{j+1}[\la]=e^{\frac{2 \pi i}{k+3}} \Sigma_j[\la]=\Sigma_j[e^{-i \pi}\la].
\end{equation}
Given the asymptotic relation (\ref{eq:calS-S}), we can proceed with the standard complex WKB analysis in order to define Sibuya solutions.
\begin{prop}\label{prop:sibuya}
\hspace{2em}
\begin{enumerate}[i)]
\item For every $j \in \mathbb{Z}$  there exists a unique solution of \eqref{diffeq:L1s}
$\psi^{(j)}\in\mathcal{A}$, which we call the $j$-th Sibuya solution, such that
for every $\la \in \widetilde{\C^*}$:
\begin{align}\label{eq:sibuyatrisectors}
& \lim_{x \to \infty }x^{\frac{k+1}{4}} e^{(-1)^j \la R(x)}
\psi^{(j)}(x,\la)=1   ,\\ \label{eq:sibuyatrisectorsder}
& \lim_{x \to \infty }  x^{-\frac{k+1}{4}} e^{(-1)^j \la R(x)}
\big(\partial_x \psi^{(j)}(x,\la)\big)=(-1)^{j+1}\la,
\end{align}
in any proper subsector of the sector $\Sigma_{j-1}[\la] \cup
\Sigma_j[\la] \cup \Sigma_{j+1}[\la]$, namely on every sector of the
form
$$\left|\arg\la+\frac{k+3}{2}\arg x - \pi j \right| \leq \frac{3
\pi}{2}-\e,\qquad \e>0.$$
The $j$-th Sibuya solution vanishes exponentially fast (it is
subdominant) as $x\to\infty$ in any proper subsector of $\Sigma_j[\la]$, and diverges
exponentially  fast (it is dominant) as $x\to\infty$ in any proper subsector of the
sectors $\Sigma_{j\pm1}[\la]$.
\item For every $j\in\Z$, the pair  $\lbrace \psi^{(j)},\psi^{(j+1)}\rbrace$ is a $\mathcal{O}^*_{\la}-$basis of $\mc{A}$ and
\begin{equation}\label{wronsian j j+1}
 \on{Wr}(\psi^{(j)},\psi^{(j+1)})=(-1)^j 2\lambda.
\end{equation}
\end{enumerate}
\end{prop}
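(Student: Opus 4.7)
The plan is to follow the standard Sibuya/Hsieh construction of WKB-normalized solutions at an irregular singularity, made possible by the integrability estimate \eqref{eq:calS-S}: since $\{a\}<1$, the discrepancy $\lambda R'(x)-\sqrt{V(x,\lambda)}$ is $O(x^{-2+\{a\}})$, hence absolutely integrable along rays to $\infty$. For part (i), I would fix a Stokes sector $\Sigma_j[\lambda]$ and seek the subdominant solution in the WKB ansatz form
\[
\psi^{(j)}(x,\lambda) = V(x,\lambda)^{-1/4}\, e^{-(-1)^j \lambda R(x)}\,\bigl(1+\varepsilon_j(x,\lambda)\bigr).
\]
Substituting into $L^G\psi=0$ produces a linear inhomogeneous ODE for $\varepsilon_j$, which I would convert to a Volterra integral equation with contour descending from $\infty$ along a path where $\Re\bigl(-(-1)^j\lambda R(x)\bigr)$ decreases toward $\infty$. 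The faster-than-$x^{-1}$ decay of the forcing gives a contraction on a weighted sup-norm space, so a unique bounded solution $\varepsilon_j$ exists with $\varepsilon_j,\partial_x\varepsilon_j\to 0$ as $x\to\infty$, uniformly in $\lambda$ on compacta of $\widetilde{\C^*}$. The integral kernel depends analytically on $\lambda$, so $\psi^{(j)}$ is jointly analytic in $(x,\lambda)$; triviality of the monodromy at each $s_i$ then allows one to extend it to $\bigl(\widetilde{\C^*}\setminus\bigcup_i\Pi^{-1}(s_i)\bigr)\times\widetilde{\C^*}$, placing $\psi^{(j)}\in\mathcal{A}$.

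To upgrade the asymptotics from $\Sigma_j[\lambda]$ to a triple sector of opening $3\pi-2\varepsilon$, I would exploit the freedom of deforming the integration contour: as long as $\Re\bigl((-1)^j\lambda R\bigr)$ is non-decreasing outward along the contour, the Volterra estimate survives. The admissible region is exactly $\bigl|\arg\lambda+\tfrac{k+3}{2}\arg x - \pi j\bigr|\leq \tfrac{3\pi}{2}-\varepsilon$, yielding \eqref{eq:sibuyatrisectors} and \eqref{eq:sibuyatrisectorsder}; in $\Sigma_j[\lambda]$ the exponential factor decays and in $\Sigma_{j\pm 1}[\lambda]$ it diverges, giving the subdominant/dominant dichotomy. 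Uniqueness of $\psi^{(j)}$ follows because the space of solutions of $L^G\psi=0$ decaying at $\infty$ inside $\Sigma_j[\lambda]$ is one-dimensional, and the prefactor is fixed to $1$ by \eqref{eq:sibuyatrisectors}.

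For part (ii), the Wronskian $\on{Wr}(\psi^{(j)},\psi^{(j+1)})$ is constant in $x$, so I would evaluate it along a ray interior to $\Sigma_j[\lambda]$, where the asymptotic formulae \eqref{eq:sibuyatrisectors}--\eqref{eq:sibuyatrisectorsder} hold simultaneously for both $\psi^{(j)}$ and $\psi^{(j+1)}$. Since $(-1)^{j+1}=-(-1)^j$ the exponential factors multiply to $1$, and the polynomial prefactors cancel in the Wronskian, giving
\[
\on{Wr}(\psi^{(j)},\psi^{(j+1)}) \;=\; \lim_{x\to\infty}\bigl[(-1)^j\lambda-(-1)^{j+1}\lambda\bigr] \;=\; (-1)^j\,2\lambda.
\]
This is nonzero on $\widetilde{\C^*}$, so $\{\psi^{(j)},\psi^{(j+1)}\}$ is pointwise linearly independent and therefore an $\mathcal{O}^*_\lambda$-basis of $\mathcal{A}$.

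The main technical obstacle is ensuring uniformity of all estimates in $\lambda\in\widetilde{\C^*}$ together with compatibility with the apparent singularities $s_i$: the integration contour from $\infty$ must depend on $\lambda$ to preserve phase monotonicity, while simultaneously avoiding a fixed set of finite punctures; one has to verify that such contours exist for every $\lambda$ in the relevant sector and that the weighted-norm contraction argument is uniform on compact subsets of $\widetilde{\C^*}$, so that the resulting $\psi^{(j)}$ is genuinely analytic (not merely defined pointwise) in the full product domain.
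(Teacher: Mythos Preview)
Your proposal is correct and follows essentially the same Sibuya/WKB construction as the paper; both arguments rest on the integrability estimate \eqref{eq:calS-S} and the subdominant/dominant uniqueness mechanism, and part (ii) is proved identically. The only presentational difference is that the paper first performs the change of variable $y=\lambda R(x)$, so that the equation becomes $\phi''(y)=(1+v(y,\lambda))\phi(y)$ with $v\in L^1$ along rays, and then cites a Levinson-type theorem \cite{CGM} for the existence of the normalized solutions $\underline{\varphi}^{(j)}$; your direct Volterra argument in the $x$-variable with ansatz $V^{-1/4}e^{-(-1)^j\lambda R}(1+\varepsilon_j)$ is the standard equivalent route and requires only the minor bookkeeping of matching the prefactor $V^{-1/4}\sim \lambda^{-1/2}x^{-(k+1)/4}$ to the normalization in \eqref{eq:sibuyatrisectors}.
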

\begin{proof}
\begin{enumerate}[i)]
\item We prove the existence of the solutions $\psi^{(j)}$. Let  $R$ be the function defined in (\ref{eq:truncatedaction}) and let $D_{\la}$ be a punctured disc, whose radius may depend on $\la$, and chosen such
that the function
$$
  y:\widetilde{\C^{*}}\setminus \Pi^{-1}(D_{\la}) \to
\widetilde{\C^{*}}, \qquad y(x)=\la R(x)
$$
 is a biholomorphism (onto its image). We consider the above map as a ($\la$-dependent) local change of variable $x\mapsto y=\la R(x)$.
 The function 
 $$\psi(x,\la)= R'(x)^{-\frac12} \phi(\la R(x),\la)$$ 
 is a solution of
(\ref{diffeq:L1s}) if and only if $\phi$ satisfies the following ODE
\ben\label{eq:pertubedexp}
\partial^2_y\phi(y,\la)=\left(1+ v(y,\la)  \right)  \phi(y,\la) ,
 \een
 where $v(y,\la)$ is given by
 \ben\label{v(y,la)}
 v(\la R(x),\la)= \big(\la R'(x)\big)^{-2} \left( V(x,\la)-\big(\la R'(x)\big)^2 +
\frac{1}{2} \lbrace R,x \rbrace \right) 
 \een
 for every $x\in \widetilde{\C^{*}}\setminus \pi^{-1}(D_{\la})$.  In the above equation $V$ is the potential \eqref{eq:Vpot} and $\lbrace R , x \rbrace$ denotes the Schwarzian derivative. Applying equation (\ref{eq:calS-S}) to \eqref{v(y,la)} we deduce  for every $\la\in\widetilde{\C^*}$ the asymptotic relation
$$
  v(y,\la)=O\big(y^{-1-\frac{\left\{a\right\}+1}{a}}\big),
\qquad  y \to \infty, 
$$
 where $a=\frac{k+3}{2}$  holds uniformly in $\la$ on compact subsets of $\widetilde{\C^*}$.
 It is convenient to write
 \eqref{eq:pertubedexp} in the matrix form
 \begin{equation}
 \partial_y\underline{\phi}(y,\la)= \begin{pmatrix} 0 & 1 \\ 1+ v(y) & 0 \end{pmatrix} \underline{\phi}(y,\la), \label{eq:ymatricial}
  \end{equation}
 with 
 $$ \underline{\phi}(y,\la)=(\phi(y,\la),\partial_y \phi(y,\la))^T,$$
and where in the above equation ${}^T$ denotes the transposition. In fact, the constant gauge transformation 
$$
 \underline{\phi}=
  \begin{pmatrix} 1 & -1 \\ -1 & -1 \end{pmatrix} \underline{\varphi}
$$
reduces \eqref{eq:ymatricial}
to the a linear ODE with a constant diagonal term plus an $L^1$-integrable perturbation
\begin{equation}\label{eq:CGM}
 \partial_y\underline{\varphi}= \left( \begin{pmatrix} -1 & 0 \\ 0 & 1 \end{pmatrix} +
 O\big(y^{-1-\frac{\left\{a\right\}+1}{a}}\big)\right) \underline{\varphi}.
\end{equation}
Equations of the above kind were studied in \cite[Theorem 3.21]{CGM}, where the following result is proved: if we let $\la$ vary in an open subset $A \subset \widetilde{\C^*}$ whose closure is compact, then for any $j \in \mathbb{Z}$ and any $\e>0$ equation \eqref{eq:CGM} admits a (unique) solution satisfying the following asymptotics
$$
  \underline{\varphi}^{(j)}(y,\la)= e^{(-1)^{j+1}y} \left( (1,0)^T + O\big(y^{-\frac{\left\{a\right\}+1}{a}}\big) \right),
$$
as $y \to +\infty$ in the sector 
$$\lbrace |\arg y -j \pi | < \frac{3 \pi}{2}-\e \rbrace.$$ 
Moreover, the remainder is uniform in $\la$, and
$\underline{\varphi}^{(j)}(y,\la)$ is analytic with respect to $y$ and $\la$ on its domain; this can be taken to be in $\widetilde{\C^{*}}\setminus \Pi^{-1} \big(D_{A}\big) \times A$ with $D_{A}$ a punctured disc depending on $A$.
Let 
$$\phi^{(j)}(y,\la)= \underline{\varphi}^{(j)}_1(y,\la)-\underline{\varphi}^{(j)}_2(y,\la),$$ 
where $ \underline{\varphi}^{(j)}_{i}(y)$
is the $i$-th component of the vector solution $\underline{\varphi}^{(j)}$.
By \eqref{eq:CGM}, \eqref{eq:ymatricial}, $\phi^{(j)}(y,\la)$ is a solution of \eqref{eq:pertubedexp}.
Since  
$$y=\la R(x)=\la x^{\frac{k+3}{2}} \left( 1 + O\big( x^{-1}
\big)\right)$$
as $x \to \infty,$
the solution
$$\psi^{(j)}(x,\la)=R'(x)^{-\frac12} \phi^{(j)}(\la R(x),\la)$$
of equation (\ref{diffeq:L1s}) satisfies
the asymptotic estimates \eqref{eq:sibuyatrisectors}, \eqref{eq:sibuyatrisectorsder}.
Thus we have constructed a solution  $\psi^{(j)}(x,\la)$ to equation (\ref{diffeq:L1s}), that satisfies the
asymptotic estimates \eqref{eq:sibuyatrisectors}, \eqref{eq:sibuyatrisectorsder} and that is analytic on a domain which contains the open sets
$  \widetilde{\C^{*}}\setminus \pi^{-1} \big(D_{A}\big) \times A$, where $A$ is an arbitrary open with compact closure and $D_A$ is a punctured disc depending on $A$.
Since for any fixed $\la$ the analytic continuation of $\psi^{(j)}(x,\la)$ is
single-valued, $\psi^{(j)}(x,\la)$ can be extended
to an element of $\mc{A}$.
To prove the uniqueness of the Sibuya solutions $\psi^{(j)}$, it is sufficient to notice that the solution $\psi^{(j)}$ (resp. $\psi^{(j+1)}$) is subdominant in $\Sigma_j[\la]$ (resp. $\Sigma_{j+1}[\la]$)
while $\psi^{(j+1)}$ (resp. $\psi^{(j)}$) is dominant  in $\Sigma_j[\la]$ (resp. $\Sigma_{j+1}[\la]$). Since $\{\psi^{(j)},\psi^{(j+1)}\}$ is a basis, every solution not proportional to $\psi^{(j)}$ (resp. $\psi^{(j+1)}$) is dominant in $\Sigma_j[\la]$ (resp. $\Sigma_{j+1}[\la]$).
\item Equation \eqref{wronsian j j+1} follows from the fact that the Wronskian is constant (with respect to $x$) and that, due to Proposition \ref{prop:sibuya}, the Sibuya solutions $\psi^{(j)}$, $\psi^{(j+1)}$ satisfy \eqref{eq:sibuyatrisectors} and \eqref{eq:sibuyatrisectorsder} in the sector $\Sigma_j[\la] \cup \Sigma_{j+1}[\la]$.
\end{enumerate}
 \end{proof}
 \begin{prop} Let $\psi^{(j)}$ ($j\in\Z$) be the Sibuya solutions described in Proposition \ref{prop:sibuya}.
\begin{enumerate}[i)]
\item  Applying the transformation \eqref{transformationxla1} to $\psi^{(j)}$ we have
\ben\label{rotated psi}
\psi^{(j)}_{[1]}(x,\la)=e^{-\frac{(k+3)\pi i}{2}}\psi^{(j-1)}(x,\la),\qquad j\in\Z,
\een
while under \eqref{e pi i la} we get
\ben\label{second map psi}
\psi^{(j)}(x,e^{\pi i}\la)=\psi^{(j-1)}(x,\la),\qquad j\in \Z.
\een
Moreover, the following identity holds:
\ben\label{rotated la psi}
\psi^{(j)}(x,e^{(k+3) \pi i} \la)= e^{\frac{(k+1)\pi i}{2}} \psi^{(j)}(e^{2\pi i} x,\la),\qquad j\in \Z.
\een
\item Every Sibuya solution $\psi^{(j)}$ is obtained from $\psi^{(0)}$ as
\ben\label{psi j psi 0}
\psi^{(j)}(x,\la)=e^{-\frac{j(k+3)\pi i}{2}}\psi^{(0)}_{[-j]}(x,\la)=\psi^{(0)}(x,e^{-j\pi i}\la),\qquad j\in\Z.
\een
\noindent
\item The rotated solutions $\psi^{(0)}_{[\pm  \frac{1}{2}]}$ form a basis of solutions of $\mc{A}_{\frac{1}{2}}$, and
\ben\label{wr pm 1/2}
\on{Wr}(\psi^{(0)}_{[- \frac{1}{2}]},\psi^{(0)}_{[\frac{1}{2}]})=-2e^{\frac{i\pi}{2}}\lambda.
\een
\end{enumerate}
\end{prop}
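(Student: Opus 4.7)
The plan is to reduce all three identities in part~i) to the uniqueness clause of Proposition~\ref{prop:sibuya}: any element of $\mc{A}$ whose leading asymptotic in an open subsector of $\Sigma_{j-1}[\la]\cup\Sigma_j[\la]\cup\Sigma_{j+1}[\la]$ agrees with that of $\psi^{(j)}(x,\la)$ must be a scalar multiple of it. For \eqref{rotated psi}, I would first compute the transformation of the primitive $R(x)$ in \eqref{eq:truncatedaction} under $x\mapsto e^{2\pi i}x$: for generic $a=(k+3)/2\notin\Z_{\ge 0}$ the explicit formula gives $R(e^{2\pi i}x)=e^{(k+3)\pi i}R(x)$. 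Combining this with the prefactor shift $(e^{2\pi i}x)^{-(k+1)/4}=e^{-(k+1)\pi i/2}x^{-(k+1)/4}$, the outer $e^{-\pi i}$ from \eqref{psi theta}, and the Stokes sector identification from \eqref{eq:twistedStokes}, one verifies that $e^{(k+3)\pi i/2}\psi^{(j)}_{[1]}(x,\la)$ satisfies the characterising asymptotic of $\psi^{(j-1)}(x,\la)$ in the correct sector, and uniqueness yields \eqref{rotated psi}.

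For \eqref{second map psi}, I would substitute $\la\mapsto e^{\pi i}\la$ directly into \eqref{eq:sibuyatrisectors}: the exponent $(-1)^{j+1}e^{\pi i}\la R(x)=(-1)^{(j-1)+1}\la R(x)$ has the correct sign, while $\Sigma_j[e^{\pi i}\la]=\Sigma_{j-1}[\la]$ follows from \eqref{eq:twistedStokes2}, so uniqueness identifies $\psi^{(j)}(x,e^{\pi i}\la)=\psi^{(j-1)}(x,\la)$. Identity \eqref{rotated la psi} is then purely algebraic: writing \eqref{rotated psi} in the equivalent form $\psi^{(j)}(e^{2\pi i}x,e^{-(k+2)\pi i}\la)=e^{-(k+1)\pi i/2}\psi^{(j-1)}(x,\la)$, substituting $\la\mapsto e^{(k+3)\pi i}\la$ so that $e^{-(k+2)\pi i}\la$ becomes $e^{\pi i}\la$, and applying \eqref{second map psi} on both sides produces \eqref{rotated la psi} with no further asymptotic input. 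Part~ii) then follows by first observing $(\psi_{[\theta_1]})_{[\theta_2]}=\psi_{[\theta_1+\theta_2]}$, which is immediate from \eqref{psi theta}, and iterating \eqref{rotated psi} to obtain $\psi^{(j)}_{[n]}=e^{-n(k+3)\pi i/2}\psi^{(j-n)}$ for every $n\in\Z$; setting $j=0$, $n=-j$ yields the first equality in \eqref{psi j psi 0}, while iterating \eqref{second map psi} gives the second.

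For part~iii), the Dorey--Tateo symmetry \eqref{dorey-tateo} gives $L^G_{[-1/2]}=L^G_{[1/2]}$, so both $\psi^{(0)}_{[\pm 1/2]}$ belong to $\mc{A}_{[1/2]}$. A direct computation from \eqref{eq:sibuyatrisectors}--\eqref{eq:sibuyatrisectorsder} using the definition \eqref{psi theta} yields, for any $\theta\in\R$,
\begin{align*}
\psi^{(0)}_{[\theta]}(x,\la)&\sim e^{-(k+3)\pi i\theta/2}\,x^{-(k+1)/4}\exp\!\bigl(-e^{\pi i\theta}\la R(x)\bigr),\\
\partial_x\psi^{(0)}_{[\theta]}(x,\la)&\sim -\la\,e^{-(k+1)\pi i\theta/2}\,x^{(k+1)/4}\exp\!\bigl(-e^{\pi i\theta}\la R(x)\bigr).
\end{align*}
For $\theta=\pm\tfrac12$ the exponents $\mp i\la R(x)$ are opposite, so the exponentials cancel in the Wronskian; since the Wronskian is constant in $x$, its value is read off from the algebraic prefactors as $-\la\bigl(e^{i\pi/2}-e^{-i\pi/2}\bigr)=-2e^{i\pi/2}\la$, establishing \eqref{wr pm 1/2}. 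Non-vanishing of this Wronskian then identifies $\{\psi^{(0)}_{[-1/2]},\psi^{(0)}_{[1/2]}\}$ as an $\mathcal{O}^*_\la$-basis of $\mc{A}_{[1/2]}$.

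The main obstacle is the careful bookkeeping in the logarithmic case $a=(k+3)/2\in\Z_{\ge 0}$, where $R(e^{2\pi i}x)=R(x)+2\pi ic_a$ rather than $e^{(k+3)\pi i}R(x)$; this additive constant produces a $\la$-dependent multiplicative correction to the asymptotic matching in \eqref{rotated psi} that must be absorbed either by a compatible choice of branch for $R$ or by treating the integer cases separately. A secondary technical point appearing throughout is verifying that the matched asymptotic genuinely holds on a sector wide enough for the uniqueness statement of Proposition~\ref{prop:sibuya} to apply, which is done by tracking the sector transformations \eqref{eq:twistedStokes}, \eqref{eq:twistedStokes2} at each step.
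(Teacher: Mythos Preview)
Your approach to parts i) and ii) is essentially the paper's: both argue via the uniqueness of subdominant solutions after tracking the Stokes sectors with \eqref{eq:twistedStokes}, \eqref{eq:twistedStokes2}, then pin down the proportionality constant from the leading asymptotics; and both obtain \eqref{rotated la psi} algebraically from \eqref{rotated psi} and \eqref{second map psi}, and \eqref{psi j psi 0} by iteration. Your more explicit treatment of $R(e^{2\pi i}x)$ and the integer-$a$ bookkeeping is additional detail the paper omits, but not a different idea.

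For part~iii) you take a genuinely different route. The paper does not compute the Wronskian from asymptotics at all: it uses part~ii) to rewrite $\psi^{(0)}_{[-1/2]}=\psi^{(0)}_{[-1+1/2]}=e^{(k+3)\pi i/2}\psi^{(1)}_{[1/2]}$, then applies \eqref{wr theta} to reduce $\on{Wr}(\psi^{(0)}_{[-1/2]},\psi^{(0)}_{[1/2]})$ to the already-known $\on{Wr}(\psi^{(1)},\psi^{(0)})$ from \eqref{wronsian j j+1}, evaluated at a shifted $\la$. Your direct computation from the rotated asymptotics is equally valid and arguably more transparent, but it requires the extra verification (which you correctly flag) that the trisectors for $\psi^{(0)}_{[1/2]}$ and $\psi^{(0)}_{[-1/2]}$ overlap so that both asymptotic expansions hold on a common ray. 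The paper's reduction sidesteps this entirely by leaning on \eqref{wronsian j j+1}, whose derivation already handled the sector overlap once and for all; your route is self-contained but repeats that work.
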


\begin{proof}
\hspace{2em}
\begin{enumerate}[i)]
\item To prove \eqref{rotated psi} and \eqref{second map psi}, we first notice that due to Lemma \ref{lemmaLA}, the functions  $\psi^{(j-1)}(x,\la)$, $\psi^{(j)}_{[1]}(x,\la)$ and $\psi^{(j)}(x,e^{\pi i}\la)$ satisfy the same differential equation. Moreover, by definition,  $\psi^{(j-1)}$ vanishes as $x\to \infty$ in the sector $\Sigma_{j-1}[\la]$, while using \eqref{eq:twistedStokes} and \eqref{eq:twistedStokes2} we get that $\psi^{(j)}_{[1]}$ vanishes as $x\to \infty$ in the sector 
$$e^{-2\pi i}\Sigma_j[e^{-(k+2)i \pi}\la] = e^{-2\pi i}e^{\frac{(k+2)2 \pi i }{k+3}} \Sigma_j[\la]=e^{-\frac{2 \pi i }{k+3}} \Sigma_j[\la]=\Sigma_{j-1}[\la],$$ 
and using \eqref{eq:twistedStokes2} we obtain that $\psi^{(j)}(x,e^{\pi i}\la)$ vanishes, as $x\to \infty$ in the sector 
$\Sigma_{j}[e^{i \pi}\la]=\Sigma_{j-1}[\la].$ It then follows that the three functions are proportional.  The proportionality factors are obtained comparing the asymptotic behaviour of the solutions in the sector $\Sigma_{j-1}[\la]$. Identity \eqref{rotated la psi} follows from \eqref{rotated psi} and \eqref{second map psi}.
\item It follows from \eqref{rotated psi} and \eqref{second map psi}.
\item Using \eqref{psi j psi 0} we get that $\psi^{(0)}_{[-\frac{1}{2}]}=\psi^{(0)}_{[-1+\frac{1}{2}]}=e^{\frac{(k+3)\pi i}{2}}\psi^{(1)}_{[\frac{1}{2}]}$, so that
\begin{align*}
\on{Wr}(\psi^{(0)}_{[- \frac{1}{2}]},\psi^{(0)}_{[\frac{1}{2}]})(x,\la)=e^{\frac{(k+3)\pi i}{2}}\on{Wr}(\psi^{(1)}_{ [\frac{1}{2}]},\psi^{(0)}_{[\frac{1}{2}]})(x,\la)\\ =e^{\frac{(k+3)\pi i}{2}}\on{Wr}(\psi^{(1)},\psi^{(0)})(e^{\pi i}x,e^{-\frac{(k+2)\pi i}{2}}\la),
\end{align*}
where in the last equality we used \eqref{wr theta}. The thesis then follows from \eqref{wronsian j j+1}.
\end{enumerate}
 \end{proof}
\begin{cor}
\begin{enumerate}[i)]
\item The matrix of change of basis among $\lbrace \psi^{(j+1)},\psi^{(j)}\rbrace$ and $\lbrace \psi^{(j-1)},\psi^{(j)}\rbrace$ is unipotent:
\begin{equation}\label{stokesmultipliers}
\begin{pmatrix}
 \psi^{(j+1)}\\
  \psi^{(j)}
\end{pmatrix}
= 
\begin{pmatrix}
1 & \sigma_j\\
0 & 1 
\end{pmatrix}
\begin{pmatrix}
 \psi^{(j-1)}\\
  \psi^{(j)}
\end{pmatrix}
\end{equation}
for some $\sigma_j \in \mathcal{O}^*_{\la}$ called the $j$-th Stokes multiplier. 
\item With respect to the basis $\lbrace \psi^{(j+1)}(x,\la),\psi^{(j)}(x,\la) \rbrace$ the monodromy matrix \eqref{monodromy matrix} reads
\ben\label{monodromy matrix psi}
\mathcal{M}
\begin{pmatrix}
\psi^{(j+1)}\\
\psi^{(j)}
\end{pmatrix}
=
e^{-\frac{(k+1)\pi i}{2}}
\begin{pmatrix}
0 & 1\\
1 & -\sigma_j
\end{pmatrix}
\begin{pmatrix}
\psi^{(j+1)} \\
\psi^{(j)}
\end{pmatrix}.
\een
\item The Stokes multipliers $\sigma_j$ satisfy the identity
\ben\label{sigma j invariance}
\sigma_{j}(e^{(k+3)\pi i}\la)=\sigma_j(\la) ,\qquad j\in\Z.
\een
Moreover, all $\sigma_j$ can be obtained from $\sigma_0$ as
\ben\label{sigma j 2}
\sigma_j(\la)=\sigma_0(e^{-j\pi i}\la),\qquad j\in\Z.
\een
and can be written as
\begin{align}
\sigma_j(\lambda)&=
\frac{e^{-j(k+2)\pi i}}{2\la}\on{Wr}(\psi^{(0)}_{[-(j+1)]},\psi^{(0)}_{[-(j-1)]})(x,\la),\qquad j\in\Z.\label{sigma j wr2}
\end{align}
\end{enumerate}
\end{cor}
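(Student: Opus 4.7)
The plan is to derive all three assertions as formal consequences of the Wronskian normalisation \eqref{wronsian j j+1} and the rotation identities \eqref{rotated psi}, \eqref{rotated la psi}, \eqref{psi j psi 0}; no new asymptotic analysis is required. The proof is essentially bookkeeping of rotation exponents in $\mathcal{A}$.

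For part (i), since $\lbrace\psi^{(j-1)},\psi^{(j)}\rbrace$ is an $\mathcal{O}^*_\la$-basis of $\mathcal{A}$ by Proposition \ref{prop:sibuya}(ii), I expand $\psi^{(j+1)}=a\psi^{(j-1)}+b\psi^{(j)}$ uniquely with $a,b\in\mathcal{O}^*_\la$. Taking $\on{Wr}(\cdot,\psi^{(j)})$ of both sides gives $a=\on{Wr}(\psi^{(j+1)},\psi^{(j)})/\on{Wr}(\psi^{(j-1)},\psi^{(j)})$, and by \eqref{wronsian j j+1} both Wronskians equal $(-1)^{j+1}2\la$, so $a=1$; I then set $\sigma_j:=b$. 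For part (ii), combining $\mathcal{M}\psi=-\psi_{[1]}$ from \eqref{eq:monodromyoperator} with \eqref{rotated psi} yields $\mathcal{M}\psi^{(m)}=-e^{-(k+3)\pi i/2}\psi^{(m-1)}=e^{-(k+1)\pi i/2}\psi^{(m-1)}$ (using $-1=e^{i\pi}$); applying this to $\psi^{(j+1)}$ and $\psi^{(j)}$ and rewriting $\psi^{(j-1)}=\psi^{(j+1)}-\sigma_j\psi^{(j)}$ via part (i) immediately gives the asserted matrix.

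Part (iii) splits into three independent manipulations. The invariance \eqref{sigma j invariance} will follow by substituting $\la\mapsto e^{(k+3)\pi i}\la$ in the identity of part (i): by \eqref{rotated la psi}, each $\psi^{(m)}$ transforms with the \emph{same} prefactor $e^{(k+1)\pi i/2}$ and the \emph{same} shift $x\mapsto e^{2\pi i}x$, so the transformed relation coincides with the original relation evaluated at $(e^{2\pi i}x,\la)$; by uniqueness of the expansion this forces $\sigma_j(e^{(k+3)\pi i}\la)=\sigma_j(\la)$. The relation \eqref{sigma j 2} is obtained by substituting $\psi^{(m)}(x,\la)=\psi^{(0)}(x,e^{-m\pi i}\la)$ from \eqref{psi j psi 0} into the identity of (i) and setting $\mu=e^{-j\pi i}\la$, which is precisely the defining relation of $\sigma_0(\mu)$. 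Finally, for \eqref{sigma j wr2}, I take $\on{Wr}(\cdot,\psi^{(j-1)})$ of the identity of (i) to get $\sigma_j=\on{Wr}(\psi^{(j+1)},\psi^{(j-1)})/((-1)^j 2\la)$; then by the first equality in \eqref{psi j psi 0}, namely $\psi^{(m)}=e^{-m(k+3)\pi i/2}\psi^{(0)}_{[-m]}$, together with the bilinearity of the Wronskian and \eqref{wr theta}, the numerator acquires the prefactor $e^{-j(k+3)\pi i}$, which combines with $1/(-1)^j=e^{-j\pi i}$ to give the stated factor $e^{-j(k+2)\pi i}$.

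I expect no serious obstacle: everything reduces to linear algebra in $\mathcal{A}$ plus the rotation rules already recorded in the preceding propositions. The only mildly delicate point is consistent sign simplification, such as $-e^{-(k+3)\pi i/2}=e^{-(k+1)\pi i/2}$ in (ii) and $e^{-j(k+3)\pi i}/(-1)^j=e^{-j(k+2)\pi i}$ in (iii).
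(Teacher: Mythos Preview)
Your proposal is correct; all three parts go through as you describe. The overall shape matches the paper's proof, but there are two small methodological differences worth noting.

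For part (i), the paper does not use the Wronskian identity \eqref{wronsian j j+1}. Instead it writes $\psi^{(j+1)}=\tau_j\psi^{(j-1)}+\sigma_j\psi^{(j)}$ and computes $\tau_j$ by taking the ratio $\psi^{(j+1)}/\psi^{(j-1)}$ as $x\to\infty$ inside $\Sigma_j[\la]$, where $\psi^{(j)}$ is subdominant; the common leading asymptotics \eqref{eq:sibuyatrisectors} then gives $\tau_j=1$. Your Wronskian argument is a legitimate alternative and is arguably cleaner, since it avoids re-invoking the asymptotics and reduces the step to the already recorded identity $\on{Wr}(\psi^{(j+1)},\psi^{(j)})=\on{Wr}(\psi^{(j-1)},\psi^{(j)})=(-1)^{j+1}2\la$.

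For part (iii), the paper proceeds recursively: it applies $\la\mapsto e^{\pi i}\la$ to the Stokes relation and uses \eqref{second map psi} to get $\sigma_j(e^{\pi i}\la)=\sigma_{j-1}(\la)$, then separately applies the $[1]$-rotation and \eqref{rotated psi} to get $\sigma_j(e^{-(k+2)\pi i}\la)=\sigma_{j-1}(\la)$; combining these two recursions yields both \eqref{sigma j 2} and \eqref{sigma j invariance}. Your direct substitutions (plugging \eqref{psi j psi 0} and \eqref{rotated la psi} straight into the Stokes relation) reach the same conclusions in one step each. The paper's route has the minor advantage of isolating the intermediate identity $\sigma_j(e^{\pi i}\la)=\sigma_{j-1}(\la)$, but your approach is equally valid. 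Parts (ii) and the derivation of \eqref{sigma j wr2} are essentially identical to the paper's.
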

\begin{proof}
\begin{enumerate}[i)]
\item To prove \eqref{stokesmultipliers}, let $\psi^{(j+1)}=\sigma_j\psi^{(j)}+\tau_j\psi^{(j-1)}$, for some $\sigma_j,\tau_j\in\mc{O}^\ast_\la$. Since
$$\lim_{\substack{x\to \infty\\x\in\Sigma_j[\la]}}\frac{\psi^{(j)}}{\psi^{(j-1)}}=0,$$
then
$$\lim_{\substack{x\to \infty\\x\in\Sigma_j[\la]}}\frac{\psi^{(j+1)}}{\psi^{(j-1)}}=\tau_j,$$
and due to \eqref{eq:sibuyatrisectors} it follows that $\tau_j=1$.
\item Using \eqref{rotated psi} and \eqref{stokesmultipliers} we obtain:
\begin{align*}
\mathcal{M}
\begin{pmatrix}
\psi^{(j+1)}\\
\psi^{(j)}
\end{pmatrix}
&=
e^{\pi i}
\begin{pmatrix}
\psi_{[1]}^{(j+1)}\\
\psi_{[1]}^{(j)}
\end{pmatrix}
=
e^{-\frac{(k+1)\pi i}{2}}
\begin{pmatrix}
\psi^{(j)}\\
\psi^{(j-1)}
\end{pmatrix}
\\
&=
e^{-\frac{(k+1)\pi i}{2}}
\begin{pmatrix}
0 & 1\\
1 & -\sigma_j
\end{pmatrix}
\begin{pmatrix}
\psi^{(j+1)} \\
\psi^{(j)}
\end{pmatrix}.
\end{align*}
\item We first prove \eqref{sigma j 2}. Due to \eqref{stokesmultipliers} we know that
\ben\label{psi sigma j pf}
\psi^{(j+1)}(x,\la)=\psi^{(j-1)}(x,\la)+\sigma_j(\la) \psi^{(j)}(x,\la),
\een
so that 
$$\psi^{(j+1)}(x,e^{\pi i}\la)=\psi^{(j-1)}(x,e^{\pi i}\la)+\sigma_j(e^{\pi i}\la) \psi^{(j)}(x,e^{\pi i}\la).$$  
Using \eqref{second map psi} the latter is equal to 
$$\psi^{(j)}(x,\la)=\psi^{(j-2)}(x,\la)+\sigma_j(e^{\pi i}\la) \psi^{(j-1)}(x,\la),$$ 
and therefore 
\ben\label{sigma j 2pf}
\sigma_j(e^{\pi i}\la)=\sigma_{j-1}(\la).
\een
Applying this procedure $j$ times we get \eqref{sigma j 2}. In a similar way, from \eqref{psi sigma j pf} we get 
$$\psi_{[1]}^{(j+1)}(x,\la)=\psi_{[1]}^{(j-1)}(x,\la)+\sigma_j(e^{-(k+2)\pi i}\la) \psi^{(j)}_{[1]}(x,\la),$$ 
and using \eqref{rotated psi} we obtain 
$$\psi^{(j)}(x,\la)=\psi^{(j-2)}(x,\la)+\sigma_j(e^{-(k+2)\pi i}\la) \psi^{(j-1)}(x,\la).$$ 
Therefore,
\ben\label{sigma j 1pf}
\sigma_j(e^{-(k+2)\pi i}\la)=\sigma_{j-1}(\la).
\een
Combining \eqref{sigma j 2pf} with \eqref{sigma j 1pf} we obtain \eqref{sigma j invariance}. To prove \eqref{sigma j wr2}, notice that on one hand from \eqref{psi sigma j pf} and using \eqref{wronsian j j+1}, we obtain 
$$\on{Wr}(\psi^{(j+1)},\psi^{(j-1)})=\sigma_j \on{Wr}(\psi^{(j+1)},\psi^{(j)})=(-1)^{j+1}2\lambda \sigma_j,$$ 
on the other hand due to \eqref{psi j psi 0}, we get 
$$\on{Wr}(\psi^{(j+1)},\psi^{(j-1)})=e^{-j(k+3)\pi i}\on{Wr}(\psi^{(0)}_{[-(j+1)]},\psi^{(0)}_{[-(j-1)]}).$$
\end{enumerate}
\end{proof}

\subsection{\texorpdfstring{$QQ$}{QQ}-system and \texorpdfstring{$TQ$}{TQ}-system}
Let $\chi^\pm$ be the Frobenius solutions defined in Corollary \ref{cor:frobenius}, and $\psi^{(j)}$ ($j\in\Z$) be the Sibuya solutions defined in 
Proposition \ref{prop:sibuya}. We define 
\begin{align}
Q^*_\pm(\la)&=e^{-\frac{i\pi}{4}}(l+\tfrac12)^{-\frac12}\la^{-\frac{1}{2}\pm\frac{2l +2r+1}{k+3}}\on{Wr}(\psi^{(0)},\chi^\pm)(x,\la),\label{Q*}\\
T^*_j(\la)&=\frac{e^{\frac{i\pi}{2}}}{2\lambda}\on{Wr}(\psi^{(0)}_{[-\frac{j+1}{2}]},\psi^{(0)}_{[\frac{j+1}{2}]})(x,\la),\qquad j\in\Z,\label{T*}
\end{align}
where $r=d_0-d_1$ and  where the normalization constants are chosen to simplify formulae in the $QQ$-system and $TQ$-system, to be introduced below. Note that due to \eqref{wr pm 1/2} and \eqref{sigma j wr2} (with $j=0$), we have
\begin{align}
T^*_0(\la)&=1,\notag\\
T^*_1(\la)&=e^{\frac{i\pi}{2}}\sigma_0(\la).\label{T1sigma0}
\end{align}
We now prove the $QQ$-system.

\begin{thm}\label{thm:QQsystem}
i)  Let $Q^*_\pm$ be given by \eqref{Q*}. Then, $Q^*_\pm\in \mathcal{O}^*_{\la}$ and the identity
\ben\label{Q* invariance}
Q^*_{\pm}(e^{(k+3)\pi i}\la)=Q^*_{\pm}(\la)
\een
holds for every $\la\in \widetilde{\C^*} $. It follows that the functions
\ben\label{Q Q*}
Q_\pm(\lambda)=Q^*_\pm(\la^{\frac{k+3}{2}}),\qquad \la\in \widetilde{\C^*},
\een
are holomorphic on $\C^*$. \\
ii)  The functions $Q_\pm$ satisfy the following $QQ$-system (or quantum Wronskian condition  \cite{BLZ4}):
\ben\label{QQsystem}
\gamma_{r} Q_+(q\la)Q_-(q^{-1}\la)-\gamma_{r}^{-1}Q_+(q^{-1}\la)Q_-(q\la)=1,\qquad \la\in\C^*,
\een
where
\ben\label{gammaQQ}
q=e^{\frac{k+2}{k+3}\pi i},\qquad \gamma_{r}=e^{2 \frac{l+\frac{1}{2} -r (k+2)}{k+3}\pi i}.
\een
iii) The function $Q_+(\la)$ satisfies the following Bethe equations for Quantum KdV \cite{BLZ4}:
\begin{equation}
\label{eq:BAEQintro}
\gamma_{r}^{-2} \frac{Q_{+}\left(\la^*\, q^{-2}\right)}{Q_{+}\left(\la^* \,q^2 \right)} = -1
\end{equation}
for every $\la^* \neq 0$ such that  $Q_+(\la^*)=0$.
\end{thm}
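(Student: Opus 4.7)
The plan is to prove the three parts in sequence, each building on the previous.

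\emph{Part (i).} I would first observe that $\on{Wr}(\psi^{(0)},\chi^\pm)(x,\la)$ is in fact independent of $x$, since the operator \eqref{diffeq:L1sop} is a second-order operator with no $\partial_x$ term, and holomorphic in $\la\in\widetilde{\C^*}$ by Proposition \ref{prop:sibuya} and Corollary \ref{cor:frobenius}. Multiplication by the nowhere-vanishing prefactor $\la^{-1/2\pm(2l+1)/(k+3)}$ then places $Q^*_\pm$ in $\mathcal{O}^*_\la$. For the periodicity \eqref{Q* invariance}, I would combine \eqref{rotated la psi} and \eqref{third map chi}: under $\la\mapsto e^{(k+3)\pi i}\la$ the Wronskian acquires the phase $e^{(k+1)\pi i/2}\cdot(-e^{\mp(2l+1)\pi i})$ (the apparent $x\mapsto e^{2\pi i}x$ is invisible since the Wronskian is $x$-independent), while the prefactor acquires $e^{-(k+3)\pi i/2\pm(2l+1)\pi i}$. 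These two phases multiply to $-e^{-\pi i}=1$. The invariance then lets us define $Q_\pm(\la):=Q^*_\pm(\la^{(k+3)/2})$ unambiguously on $\C^*$.

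\emph{Part (ii).} The strategy is to compute $\on{Wr}(\psi^{(0)}_{[-1/2]},\psi^{(0)}_{[1/2]})$ in two different ways. On one hand, \eqref{wr pm 1/2} gives $-2e^{i\pi/2}\la$. On the other, use the Frobenius decomposition $\psi^{(0)}=A_-\chi^++A_+\chi^-$ with $A_\pm=\pm(2l+1)^{-1}\on{Wr}(\psi^{(0)},\chi^\pm)$ (by \eqref{eq:chipmwronskian}); applying the twist \eqref{psi theta} gives
$$\psi^{(0)}_{[\theta]}(x,\la)=A_-(e^{-(k+2)\pi i\theta}\la)\,\chi^+_{[\theta]}(x,\la)+A_+(e^{-(k+2)\pi i\theta}\la)\,\chi^-_{[\theta]}(x,\la).$$
Expanding by bilinearity, the like-like cross terms $\on{Wr}(\chi^\pm_{[-1/2]},\chi^\pm_{[1/2]})$ vanish by \eqref{wrchipmpm}, and only the two mixed Wronskians, evaluated by \eqref{wrchipmmp}, survive. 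Equating the two expressions and translating back to $Q^*_\pm$ via \eqref{Q*} produces a relation of the form
$$\gamma\,Q^*_+(q^*\la)\,Q^*_-((q^*)^{-1}\la)-\gamma^{-1}\,Q^*_+((q^*)^{-1}\la)\,Q^*_-(q^*\la)=1,\qquad q^*=e^{(k+2)\pi i/2}.$$
Setting $\la\mapsto\la^{(k+3)/2}$ converts $q^*$ into $q$ from \eqref{gammaQQ} and yields \eqref{QQsystem}. The only nontrivial bookkeeping, and the main technical nuisance of the proof, is checking that the accumulated phases from the $\la$-prefactors in the definition of $Q^*_\pm$ and from the rotated $\chi^\pm_{[\pm 1/2]}$ combine to produce exactly $\gamma^{\pm 1}$; this is purely mechanical.

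\emph{Part (iii).} This follows directly from (ii). Specialising \eqref{QQsystem} to $\la=q\la^*$ and to $\la=q^{-1}\la^*$ and using $Q_+(\la^*)=0$ to kill one term in each, one obtains
$$\gamma\,Q_+(q^2\la^*)\,Q_-(\la^*)=1\qquad\text{and}\qquad -\gamma^{-1}\,Q_+(q^{-2}\la^*)\,Q_-(\la^*)=1.$$
Taking the ratio of these two equations eliminates $Q_-(\la^*)$ and yields the Bethe equation \eqref{eq:BAEQintro}.
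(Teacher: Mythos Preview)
Your proposal is correct and follows essentially the same route as the paper's own proof: part (i) via the transformation laws \eqref{rotated la psi} and \eqref{third map chi}, part (ii) by expanding $\psi^{(0)}_{[\pm 1/2]}$ in the Frobenius basis and equating the two computations of the Wronskian \eqref{wr pm 1/2} using \eqref{wrchipmpm}--\eqref{wrchipmmp}, and part (iii) by specializing the $QQ$-system at $\la=q^{\pm 1}\la^*$. The only minor remark is that the ratio you obtain in (iii) is $\gamma^{-2}\,Q_+(q^{-2}\la^*)/Q_+(q^{2}\la^*)=-1$, which is the form used everywhere else in the paper; the missing $\gamma^{-2}$ in \eqref{eq:BAEQintro} appears to be a typo in the statement rather than a flaw in your argument.
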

\begin{proof}
\begin{enumerate}[i)]
\item Using \eqref{third map chi} and \eqref{rotated la psi} we get
$$\on{Wr}(\psi^{(0)},\chi^\pm)(x,e^{(k+3)\pi i}\la)=e^{\frac{(k+3)\pi i}{2}}e^{\pm(2l+1)\pi i}\on{Wr}(\psi^{(0)},\chi^\pm)(x,\la),$$
from which \eqref{Q* invariance} follows, and the functions $Q_\pm$ defined by  \eqref{Q Q*} are single-valued on $\C^\ast$.
\item Expand the function $\psi^{(0)}$ in the basis $\{\chi^+,\chi^-\}$ as
\ben\label{psi0AA}
\psi^{(0)}=A_-\chi^++A_+\chi^- 
\een
for some $A_\pm\in \mathcal{O}^*_{\la}$. Using \eqref{rotated chi} we obtain
\begin{align*}
\psi^{(0)}_{[-\frac12]}(x,\la)&=A_-(e^{\frac{(k+2)\pi i}{2}}\la)\chi^+_{[-\frac12]}(x,\la)+A_+(e^{\frac{(k+2)\pi i}{2}}\la)\chi^-_{[-\frac12]}(x,\la),\\
\psi^{(0)}_{[\frac12]}(x,\la)&=A_-(e^{-\frac{(k+2)\pi i}{2}}\la)\chi^+_{[\frac12]}(x,\la)+A_+(e^{-\frac{(k+2)\pi i}{2}}\la)\chi^-_{[\frac12]}(x,\la).
\end{align*}
Substituting this into \eqref{wr pm 1/2} and using \eqref{wrchipmpm} and \eqref{wrchipmmp} we get
\begin{align}\label{AAsystem}
&e^{-(2l+1)\pi i}A_-(e^{\frac{(k+2)\pi i}{2}}\la)A_+(e^{-\frac{(k+2)\pi i}{2}}\la)-\notag\\ 
& \hspace{30pt} -e^{(2l+1)\pi i}A_+(e^{\frac{(k+2)\pi i}{2}}\la)A_-(e^{-\frac{(k+2)\pi i}{2}}\la)=\frac{2e^{\frac{i\pi}{2}}\lambda}{2l+1}.
\end{align}
From the definition \eqref{Q*} of $Q^*_\pm$ and using \eqref{eq:chipmwronskian} we get the relations
\ben\label{AQ*}
A_\pm(\la)=\pm e^{\frac{i\pi}{4}}\sqrt{2}(2l+1)^{-1/2}\la^{\frac{1}{2}\mp\frac{2l +2r+1}{k+3}}Q^*_\pm(\la),
\een
from which \eqref{AAsystem} takes the form
\begin{align}
\gamma_{r}Q^*_+(e^{\frac{(k+2)\pi i}{2}}\la)Q^*_-(e^{-\frac{(k+2)\pi i}{2}}\la)-\notag\hspace{60pt}\\
-\gamma_{r}^{-1}Q^*_+(e^{-\frac{(k+2)\pi i}{2}}\la)Q^*_-(e^{\frac{(k+2)\pi i}{2}}\la)=1.\label{preQQ}
\end{align}
Finally notice that from \eqref{Q Q*} we get
\begin{equation}\label{QqQ*}
Q^*_\pm(e^{\frac{j(k+2)\pi i}{2}}\la^{\frac{k+3}{2}})=Q_\pm(q^j\la),\qquad j\in\Z.
\end{equation}
Substituting $\la\mapsto\la^{\frac{k+3}{2}}$ in \eqref{preQQ} and using \eqref{QqQ*} we obtain \eqref{QQsystem}.
\item
Let $\la^\ast$ be such that $Q_+(\la^\ast)=0$. Evaluating the $QQ$-system \eqref{QQsystem} at $\la=q^{\pm1}\la^\ast$ we obtain \eqref{eq:BAEQintro}.
\end{enumerate}
\end{proof}
We call functions $Q_{\pm}$ the $Q$-functions or the spectral determinants of the operator $L^G$, see \eqref{diffeq:L1sop}.

We now prove the $TQ$-system.

\begin{thm}\label{thm:TQ}
\hspace{2em}
\begin{enumerate}[i)]
\item  For $j\in\Z$ let $T^*_j$ be given by \eqref{T*}. Then, $T^*_j\in \mathcal{O}^*_{\la}$ and the identity
\ben\label{T* invariance}
T^*_j(e^{(k+3)\pi i}\la)=T^*_j(\la)
\een
holds for every $\la\in \widetilde{\C^*} $.  It follows that the functions
$$T_j(\lambda)=T^*_j(\la^{\frac{k+3}{2}}),\qquad \la\in \widetilde{\C^*},\qquad j\in\Z,$$
are holomorphic on $\C^*$. The functions $T_j$ can be written in terms of the $Q_\pm$ introduced in \eqref{Q Q*} as
\ben\label{TfromQ}
T_{j-1}(\la)=\gamma_{r}^jQ_+(q^j\la)Q_-(q^{-j}\la)-\gamma_{r}^{-j}Q_+(q^{-j}\la)Q_-(q^{j}\la),
\een
with $\la\in\C^*$ and where $q$ and $\gamma$ are given by  \eqref{gammaQQ}.
\item The functions $T_1$ and $Q_\pm$ satisfy the following \emph{$TQ$-system}:
\ben\label{TQsystem}
T_1(\la)Q_\pm(\la)=\gamma_{r}^{\pm1}Q_\pm(q^2\la)+\gamma_{r}^{\mp1}Q_\pm(q^{-2}\la), \qquad \la\in\C^*.
\een
\item The functions $T_j$ satisfy the following \emph{fusion relations}:
\ben\label{fusionrelations}
T_1(\la)T_j(q^{j+1}\la)=T_{j-1}(q^{j+2}\la)+T_{j+1}(q^{j}\la),\qquad \la\in\C^*.
\een
\end{enumerate}
\end{thm}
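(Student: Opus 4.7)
Part (i) is the substantive one; parts (ii) and (iii) then follow by purely algebraic manipulations of the $QQ$-system and of \eqref{TfromQ}. For (i), holomorphy $T^*_j\in\mc{O}^*_\la$ is immediate from the analyticity of the Sibuya solutions. The crucial point is that $(j+1)/2$ and $-(j+1)/2$ differ by the integer $j+1$, so by Lemma \ref{lemmaLA}(i) the rotated operators coincide, $L^G_{[-(j+1)/2]}=L^G_{[(j+1)/2]}$, and hence $\psi^{(0)}_{[\pm(j+1)/2]}$ satisfy the same linear ODE in $x$; their Wronskian in $x$ is therefore constant, making $T^*_j$ a bona fide function of $\la$ alone. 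To derive the explicit formula, I follow the template of the proof of Theorem \ref{thm:QQsystem}: expand $\psi^{(0)}=A_-\chi^++A_+\chi^-$ with $A_\pm$ given by \eqref{AQ*}, set $\theta=(j+1)/2$ and $\mu_\pm=e^{\mp(k+2)\pi i\theta}\la$, and apply the rotation identity $\psi^{(0)}_{[\pm\theta]}=A_-(\mu_\pm)\chi^+_{[\pm\theta]}+A_+(\mu_\pm)\chi^-_{[\pm\theta]}$. From \eqref{rotated chi} one reads off the integer-shift identity $\chi^\pm_{[\theta+n]}=e^{\pm(2l+1)\pi in}\chi^\pm_{[\theta]}$ for $n\in\Z$ (the phases $e^{2\pi i\theta j}$ in the Frobenius series are unaffected by integer shifts of $\theta$); in particular $\chi^\pm_{[-\theta]}=e^{\mp(2l+1)\pi i(j+1)}\chi^\pm_{[\theta]}$. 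Combined with the elementary normalisation $\on{Wr}(\chi^+_{[\theta]},\chi^-_{[\theta]})=-(2l+1)$ (computed from the leading Frobenius terms at $x=0$), the Wronskian defining $T^*_j$ reduces to a linear combination of $A_-(\mu_+)A_+(\mu_-)$ and $A_+(\mu_+)A_-(\mu_-)$. Substituting \eqref{AQ*}, using $\mu_+\mu_-=\la^2$, and performing the cosmetic replacement $\la\mapsto\la^{(k+3)/2}$, under which $\mu_\pm^{2/(k+3)}\mapsto q^{\pm(j+1)}\la$, produces precisely \eqref{TfromQ}. The invariance \eqref{T* invariance} is then automatic, since each $Q^*_\pm$ is invariant under $\mu\mapsto e^{(k+3)\pi i}\mu$ by \eqref{Q* invariance} and the arguments satisfy $\mu_\pm\mapsto e^{(k+3)\pi i}\mu_\pm$ under $\la\mapsto e^{(k+3)\pi i}\la$.

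For part (ii), take $j=2$ in \eqref{TfromQ} to obtain $T_1(\la)=\gamma^2Q_+(q^2\la)Q_-(q^{-2}\la)-\gamma^{-2}Q_+(q^{-2}\la)Q_-(q^2\la)$. Multiplying by $Q_+(\la)$ and applying the $QQ$-system \eqref{QQsystem} twice---once with $\la\mapsto q^{-1}\la$ to rewrite $Q_+(\la)Q_-(q^{-2}\la)$, and once with $\la\mapsto q\la$ to rewrite $Q_+(\la)Q_-(q^2\la)$---the cubic cross-terms $\pm Q_+(q^2\la)Q_+(q^{-2}\la)Q_-(\la)$ cancel, leaving exactly $\gamma Q_+(q^2\la)+\gamma^{-1}Q_+(q^{-2}\la)$. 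The statement for $Q_-$ follows from the symmetric manipulation with the roles of $\pm$ exchanged.

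For part (iii), applying \eqref{TfromQ} to $T_{j-1}(q^{j+2}\la)+T_{j+1}(q^j\la)$ and regrouping yields
\begin{align*}
T_{j-1}(q^{j+2}\la)+T_{j+1}(q^j\la)&=\gamma^{j+1}Q_+(q^{2j+2}\la)\bigl[\gamma^{-1}Q_-(q^2\la)+\gamma Q_-(q^{-2}\la)\bigr]\\
&\quad-\gamma^{-(j+1)}Q_-(q^{2j+2}\la)\bigl[\gamma Q_+(q^2\la)+\gamma^{-1}Q_+(q^{-2}\la)\bigr].
\end{align*}
By part (ii) the two brackets equal $T_1(\la)Q_-(\la)$ and $T_1(\la)Q_+(\la)$ respectively; factoring $T_1(\la)$, the surviving expression $\gamma^{j+1}Q_+(q^{2j+2}\la)Q_-(\la)-\gamma^{-(j+1)}Q_-(q^{2j+2}\la)Q_+(\la)$ is precisely $T_j(q^{j+1}\la)$ by \eqref{TfromQ} applied at the shifted argument $q^{j+1}\la$, completing \eqref{fusionrelations}.

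The only genuine technical obstacle lies in part (i): one must simultaneously track how the $[\theta]$-rotation acts on the scalar prefactors in \eqref{rotated chi} and on the $\la$-arguments of $Q^*_\pm$ through \eqref{AQ*}, and verify that the resulting phases together with the fractional-power substitution $\la\mapsto\la^{(k+3)/2}$ conspire to produce the antisymmetric form of \eqref{TfromQ}. Parts (ii) and (iii) are then essentially formal.
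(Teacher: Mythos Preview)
Your proof is correct. Part (i) follows exactly the route the paper indicates (``follows the same steps as in the proof of the $QQ$-system''), and you supply the details the paper leaves implicit; your deduction of the invariance \eqref{T* invariance} from the $Q^*_\pm$-expression is a legitimate shortcut, since after the $A_\pm\to Q^*_\pm$ substitution all explicit $\la$-powers cancel and the result is a polynomial in $Q^*_\pm(\mu_\pm)$ with $\mu_\pm=e^{\mp(k+2)\pi i\theta}\la$.

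The genuine difference is in part (ii). The paper does \emph{not} derive the $TQ$-system algebraically from \eqref{TfromQ}; instead it goes back to the Stokes structure at $x=\infty$. Using \eqref{stokesmultipliers} and \eqref{psi j psi 0} one has $\sigma_0(\la)\psi^{(0)}(x,\la)=\psi^{(0)}(x,e^{-\pi i}\la)-\psi^{(0)}(x,e^{\pi i}\la)$; expanding in the Frobenius basis gives $\sigma_0(\la)A_\pm(\la)=A_\pm(e^{-\pi i}\la)-A_\pm(e^{\pi i}\la)$, and substituting \eqref{AQ*}, \eqref{T1sigma0} yields \eqref{TQsystem} directly. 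Your route---take $j=2$ in \eqref{TfromQ} and eliminate the cross-terms via two applications of the $QQ$-system---is purely algebraic and makes no further appeal to the Sibuya solutions. It is shorter and shows that the $TQ$-system is a formal consequence of the $QQ$-system together with the definition \eqref{TfromQ}; the paper's argument, by contrast, exhibits $T_1$ as the Stokes multiplier $\sigma_0$ and so ties \eqref{TQsystem} to the connection problem at infinity. Part (iii) is handled identically in both: the paper simply says ``follow from \eqref{TfromQ} and \eqref{TQsystem}'', and your explicit regrouping is the natural way to do this.
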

\begin{proof}
\begin{enumerate}[i)]
\item The proof of \eqref{T* invariance} and the proof of existence of the functions $T_j$ are similar to the proofs of analogues results for $Q^*_\pm$ in Theorem \ref{thm:QQsystem}.  To prove \eqref{TfromQ} one starts from \eqref{T*} and follows the same steps as in the proof of the $QQ$-system \eqref{QQsystem} which is $j=1$ case of \eqref{TfromQ} .
\item Using \eqref{stokesmultipliers} and \eqref{psi j psi 0} one obtains the relation 
$$\sigma_0(\lambda)\psi^{(0)}(x,\la)=\psi^{(0)}(x,e^{-\pi i}\la)-\psi^{(0)}(x,e^{\pi i}\la).$$ 
Expanding $\psi^{(0)}$ as in \eqref{psi0AA} we obtain 
$$\sigma_0(\la)A_\pm(\la)=A_\pm(e^{-\pi i}\la)-A_\pm(e^{\pi i}\la),$$ 
and using \eqref{AQ*} and \eqref{T1sigma0} we get
$$T^*_1(\la)Q^*_\pm(\la)=\gamma_{r}^{\pm 1}Q^*_\pm(e^{-\pi i}\la)+\gamma_{r}^{\mp 1}Q^*_\pm(e^{\pi i}\la), \qquad \la\in\widetilde{\C^*}.$$
This identity is invariant under the map $\la\mapsto e^{(k+3)\pi i}\la$. Due to \eqref{Q* invariance} we have
$Q^*_{\pm}(e^{-j\pi i}\la)=Q^*_{\pm}(e^{j(k+2)\pi i}\la)$ ($j\in\Z$), and using \eqref{QqQ*} we get the $TQ$-system \eqref{TQsystem}.
\item Relations  \eqref{fusionrelations} follow from \eqref{TfromQ} and \eqref{TQsystem}.
\end{enumerate}
\end{proof}

\begin{rem} Recall that the monodromy matrix is given by \eqref{monodromy matrix chi} in the basis $\{\chi^+,\chi^-\}$, and by \eqref{monodromy matrix psi}  (with $j=0$) in the basis $\{\psi^{(1)},\psi^{(0)}\}$. Let $B(\la)$ be the change of basis matrix defined by
$$
\begin{pmatrix}
\psi^{(1)}\\
\psi^{(0)}
\end{pmatrix}
=B(\la)
\begin{pmatrix}
\chi^+\\
\chi^-
\end{pmatrix}.
$$
Due to \eqref{Q*} and \eqref{psi j psi 0}, the coefficients of $B(\la)$ can be written in terms of $Q^*_\pm$ (and therefore in terms of $Q_\pm$).
Under the above change of basis, the monodromy matrix transforms as 
$$B(e^{(-k+2)\pi i}\la)\begin{pmatrix}
e^{(l+1)2\pi i} & 0\\
0 & e^{-l2\pi i}
\end{pmatrix}
=
e^{-\frac{(k+1)\pi i}{2}}
\begin{pmatrix}
0 & 1\\
1 & -\sigma_0
\end{pmatrix}
B(\la).
$$
Writing the above relations explicitly and using \eqref{T1sigma0} one obtains the $TQ$-system  \eqref{TQsystem}. \qed
\end{rem}

\subsection{Action of the reproduction on $Q$-functions}\label{rep sec}
Let $(y_0,y_1)$ be a generic solution to the Gaudin BAE \eqref{Gaudin BAE BLZ}, and let
$$L_1(x,\la;l,y_0)=\partial_x^2-V(x,\la;l,y_0)$$
be the corresponding Schroedinger operator for the form \eqref{L1 BLZ}, where as in Subsection \ref{sec:reproductiononopers} we omit the dependence on $k=2m+2l$, since this quantity is invariant under reproduction. We first consider reproduction in the $0$-th direction. Recall from \eqref{eq:R0L1} that the action on $L_1$ of the reproduction in the $0$-th direction is defined as
$$R_0[L_1(x,\la;l,y_0)]=\partial_x^2-V(x,\la;R_0[l],R_0[y_0]).$$
We want to compare $\ker L_1$ with $\ker R_0[L_1]$. Recall the gauge $B_0$ given by \eqref{eq:B0}, and  define
\begin{equation}
\psi_{B_0}=\left(1+\frac{H_0a_-}{P\lambda^2}\right)\psi-\frac{H_0}{P\lambda^2}\psi',
\end{equation}
where $H_0$ is given in \eqref{eq:H0rep}.
\begin{lem} \label{lem:reproonfunctions}
\begin{enumerate}
\item $\psi\in\ker{L_1}$ if and only if $\psi_{B_0}\in\ker R_0[L_1]$.
\item If $\psi,\phi\in\ker L$ then
\begin{equation}\label{eq:psiB0}
\on{Wr}(\psi_{B_0},\phi_{B_0})=\on{Wr}(\psi,\phi).
\end{equation}
\end{enumerate}
\end{lem}
\begin{proof}
\begin{enumerate}
\item Let $\bar{L}_1$ be given by \eqref{bar L reproduction}, and let
\begin{equation}
\Psi=
\begin{pmatrix}
\psi_1\\\psi_2
\end{pmatrix}
\qquad 
\bar{L}_1\Psi=0.
\end{equation}
Using \eqref{R0bL1} we have that $\Psi\in\bar{L}_1$ if and only if $B_0\Psi\in R_0[\bar{L}_1]$. Now
\begin{equation}
B_0\Psi=
B_0
\begin{pmatrix}
\psi_1\\\psi_2
\end{pmatrix}
=
\begin{pmatrix}
\psi_1+\frac{H_0}{P\lambda}\psi_2\\\psi_2
\end{pmatrix},
\end{equation}
Due to \eqref{eq:L1toscalar} we have that $\psi_1\in \ker{L}$ and that $\lambda\psi_2=-\psi_1'+a_-\psi_1$, from which we obtain that
\begin{equation}
\psi_{B_0,1}=\left(1+\frac{H_0a_-}{P\lambda^2}\right)\psi_1-\frac{H_0}{P\lambda^2}\psi_1'\in\ker R_0[L_1].
\end{equation}
\item A direct computation, using the identity \eqref{H0 eq}.
\end{enumerate}
\end{proof}
We apply the above Lemma to the case of Frobenius solutions $\chi^\pm$ and the Sibuya solution $\psi^{(0)}$. For a generic solution $(y_0,y_1)$ to the Gaudin  BAE with parameters $l$ and $m$, let $\chi^\pm(x,\la;l,y_0)$ be the Frobenius solutions defined in  Corollary \ref{cor:frobenius}, and $\psi^{(0)}(x,\la;l,y_0)$ be the Sibuya solutions defined in  Proposition \ref{prop:sibuya}. 
\begin{prop}\label{prop:frobeniusrepro}
Assume $2l+1,2R_0[l]+1\notin \{(k+2)i+j,(i,j)\in\mathbb{Z}^2_{\geq 0}\}$.
\begin{enumerate}[i)]
\item Let $\chi^\pm_{B_0}\in\ker R_0[L_1]$ be defined as in \eqref{eq:psiB0}. Then:
\begin{equation}\label{frobeniusreproduction}
\chi^\pm_{B_0}(x,\la;l,y_0)=c_\mp\la^{\mp2}\chi^\mp(x,\la;R_0[l],R_0[y_0]),
\end{equation}
where $c_-=(k-2l+1)(2l+1)$ and $c_+=-[(k-2l+1)(k-l+2)]^{-1}$.
\item Let $\psi^{(0)}_{B_0}\in\ker R_0[L_1]$ be defined as in \eqref{eq:psiB0}. Then:
\begin{equation}\label{sibuyareproduction}
\psi^{(0)}_{B_0}(x,\la;l,y_0)=\psi^{(0)}(x,\la;R_0[l],R_0[y_0]).
\end{equation}
\end{enumerate}
\end{prop}

\begin{proof}
\begin{enumerate}[i)]
\item We have:
\begin{align*}
H_0(x)&=\ln'\left(\frac{x^{2m+1}R_0[y_0]}{y_0}\right)=\frac{2m+1}{x}+O(1),\qquad x\mapsto 0,\\
a_-(x)&=-\frac12 \ln'\frac{T_1(x;l)y_0^2}{y_1^2}=-\frac{l}{x}+O(1),\qquad x\mapsto 0,\\
\frac{1}{P(x)\lambda^2}&=\frac{1}{x^k(x-1)\lambda^2}=-\lambda^{-2}x^{-k}(1+O(x)),\qquad x\mapsto 0,
\end{align*}
from which it follows that
\begin{align*}
\chi_{B_0}^+&=\left(1+\frac{ha_-}{P\lambda^2}\right)\chi^+-\frac{h}{P\lambda^2}\partial_x\chi^+\\
&=(2m+1)\left(2l+1\right)\lambda^{-2}x^{l-k-1}(1+O(x))\\
&=c_-\lambda^{-2}x^{-R_0[l]}(1+O(x)).
\end{align*}
The functions $\chi_{B_0}^+(x,\la;l,y_0)$ and $\chi^\pm(x,\la;R_0[l],R_0[y_0])$ belong to $\ker R_0[L_1]$ and (up to the factor $c_-\lambda^{-2}$) they have the same asymptotic behaviour at $x=0$. By uniqueness of Frobenius solutions we obtain \eqref{frobeniusreproduction} for $\chi^+$.
To prove the identities for $\chi_{B_0}^-$, note that by Lemma \ref{lem:reproonfunctions} we have
\begin{equation*}
\on{Wr}(\chi^+_{B_0},\chi^-_{B_0})(x,\lambda)=\on{Wr}(\chi^+,\chi^-)(x,\lambda)=-(2l+1),
\end{equation*}
from which we get
\begin{align*}
\chi^-_{B_0}(x,\lambda;l,y_0)&=\frac{\lambda^2}{(2k-2l+3)(k-2l+1)}x^{k+2-l}(1+O(x))\\
&=c_+\lambda^2 x^{R_0[l]+1}(1+O(x)).
\end{align*}
Reasoning as above, we obtain \eqref{frobeniusreproduction} for $\chi^-$.
\item Both $\psi^{(0)}_{B_0}(x,\la;l,y_0)$ and $\psi^{(0)}(x,\la;R_0[l],R_0[y_0])$ belong to $\ker R_0[L_1]$ and they satisfy the conditions of Proposition \ref{prop:sibuya}. By uniqueness of Sibuya solutions, they coincide.
\end{enumerate}
\end{proof}
We now consider the $Q$-functions. By definition  we have
$$Q^*_\pm(\la;l,y_0)=e^{-\frac{i\pi}{4}}(l+\tfrac12)^{-\frac12}\la^{-\frac{1}{2}
\pm\frac{2l +2r +1}{k+3}}\on{Wr}(\psi^{(0)}(x,\la;l,y_0),\chi^\pm(x,\la;l,y_0)),$$
and
$$Q_\pm(\la;l,y_0)=Q^*_\pm(\la^{\frac{k+3}{2}};l,y_0),$$
see  \eqref{Q*} and \eqref{Q Q*}.

\begin{prop}\label{prop:R0Qpm}
The following identities holds:
\begin{align}
\label{eq:TjTjR0}
& T_j(\la;l,y_0)=T_j(\la;R_0[l],R_0[y_0]), \quad \forall j \in \mathbb{Z}, \\
\label{eq:QpmQmpR0}
& Q_{\pm}(\la;l,y_0)=d_\mp Q_{\mp}(\la;R_0[l],R_0[y_0]),
\end{align}
where
$d_\pm=(2l+1)(2k-2l+3)^{-1}c_\pm$ and $c_\pm$ are defined in Proposition \ref{prop:frobeniusrepro}.
\end{prop}
\begin{proof}
Identity \eqref{eq:TjTjR0} follows from  \eqref{sibuyareproduction}.

We now prove \eqref{eq:QpmQmpR0}. We have
\begin{align*}
\on{Wr}(\psi^{(0)}(x&,\la;l,y_0),\chi^\pm(x,\la;l,y_0))=\on{Wr}(\psi_{B_0}^{(0)}(x,\la;l,y_0),\chi_{B_0}^\pm(x,\la;l,y_0))\\
&=\on{Wr}(\psi^{(0)}(x,\la;R_0[l],R_0[y_0]),c_\mp \la^{\mp 2}\chi^\mp(x,\la;R_0[l],R_0[y_0]))\\
&=c_\mp \la^{\mp 2}\on{Wr}(\psi^{(0)}(x,\la;R_0[l],R_0[y_0]),\chi^\mp(x,\la;R_0[l],R_0[y_0]))\\
&=c_\mp \la^{\mp 2}c[R_0[l]]^{-1}\la^{\frac{1}{2}\pm\frac{2R_0[l] +2 R_0[r]+1}{k+3}}Q^*_\mp(\la;R_0[l],R_0[y_0]).
\end{align*}
Since $R_0[l] + R_0[r]=k+2-l-r$, and letting $c_l=e^{-\frac{i\pi}{4}}(l+\tfrac12)^{-\frac12}$, we get
\begin{align*}
Q^*_\pm(\la;l,y_0)&=c_l\la^{-\frac{1}{2}\pm\frac{2l +2 r +1}{k+3}}\on{Wr}(\psi^{(0)}(x,\la;l,y_0),\chi^\pm(x,\la;l,y_0))\\
&=\frac{c_lc_\mp}{c_{R_0[l]}} Q^*_\mp(\la;R_0[l],R_0[y_0])=d_\pm  Q^*_\mp(\la;R_0[l],R_0[y_0]).
\end{align*}
\end{proof}

For the reproduction in the $1$-st direction the computations are much simpler, we give the results omitting the details.  Recall from \eqref{eq:R1L1} that the action on $L_1$ of the reproduction in the $1$-th direction leaves the oper $L_1$ invariant
$$R_1[L_1(x,\la;l,y_0)]=\partial_x^2-V(x,\la;R_1[l],y_0)=\partial_x^2-V(x,\la;R_1[l],y_0),$$
as it is tantamount to a change of the parameter $l\mapsto R_1[l]=-l-1$. Consequently, for the Frobenius and Sibuya solutions we have
\begin{equation}\label{frobeniusreproductionR1}
\chi^\pm(x,\la;l,y_0)=\chi^\mp(x,\la;R_1[l],y_0),
\end{equation}
\begin{equation}\label{sibuyareproductionR1}
\psi^{(0)}(x,\la;l,y_0)=\psi^{(0)}(x,\la;R_1[l],y_0).
\end{equation}
For the $Q$-functions, we have
\begin{prop}\label{prop:R1Qpm}
The following identities hold:
\begin{align}
\label{eq:TjTjR1}
& T_j(\la;l,y_0)=T_j(\la;R_1[l],y_0), \quad \forall j \in \mathbb{Z}, \\
\label{eq:QpmQmpR1}
& Q_{\pm}(\la;l,y_0)= Q_{\mp}(\la;R_1[l],y_0).
\end{align}
\qed
\end{prop}
Thus, Propositions \ref{prop:R0Qpm} and \ref{prop:R1Qpm} assert that,
essentially, it is sufficient to study $Q$-functions in the $r=0$ sector.

\subsection{\texorpdfstring{$QQ$}{QQ}-system for BLZ opers}
The $QQ$-system \eqref{QQsystem} and the $TQ$-system \eqref{TQsystem} are well known in the literature about
BLZ opers \eqref{BLZ oper}. Indeed, the discovery that these opers provide solutions to the $QQ$ and $TQ$-systems  marked the starting point of the ODE/IM correspondence \cite{DTa, DT, BLZ4}.
For the sake of being able to compare $L^G$opers and BLZ opers, we review briefly below
the $QQ$-system for BLZ opers; the reader must refer to \cite{DTa, DT,BLZ4} or to the review paper \cite{DDT} for all details, including the definitions of the determinants
$\overline{T}_j, j \geq 1$.
One considers the differential equation $\bar{L}\psi=0$, where $\bar{L}$ is given by \eqref{BLZ oper} with $z_i$  solutions to \eqref{BLZ BAE}.
The Sibuya solution $\bar{\psi}^{(0)}(z,\bar{\la})$ and the Frobenius solutions $\bar{\chi}_{\pm}(z,\bar{\la})$ are defined\footnote{The first rigorous definition of such solutions, akin to the one we use in this paper, appeared in \cite{MR1}, well after the original papers.} by the asymptotics properties:
\begin{align*}
 &\bar{\psi}^{(0)}(z,\bar{\la}) = z^{\frac14} e^{- 2 z^{\frac12}+c(z;\bar{\la})} \left(1 + o(1)\right),\qquad &z \to +\infty,\\ 
 &\bar{\chi}_{\pm}(z,\bar{\la}) = z^{\frac12 \pm \left(\bar{l}+\frac12\right)} \left(1 + O(z) \right), \qquad &z \to 0,
\end{align*}
where $c(z;\la)=O(z^{\bar{k}-\frac12})$ is a certain polynomial expression in the variable $z^{\bar{k}-1}$.
The spectral determinants
\begin{align}\label{bar Q wronskian}
 & \overline{Q}_{\pm}(\bar{\la})=\tfrac12 e^{-\frac{i\pi}{4}(\bar{l}+\tfrac12)^{-\frac12}}\on{Wr}(\bar{\psi}^{(0)}(z,\bar{\la}),\bar{\chi}_{\pm}(z,\bar{\la})),
\end{align}
are entire functions of $\bar{\la}$ and they
satisfy the $QQ$-system
\begin{equation}\label{eq:QQbarsystem}
\bar{\gamma} \, \overline{Q}_+(\bar{q}\bar{\la})\,\overline{Q}_-(\bar{q}^{-1}\bar{\la})-\bar{\gamma}^{-1}\,\overline{Q}_+(\bar{q}^{-1}\bar{\la})\,\overline{Q}_-(\bar{q}\bar{\la})=1,
\end{equation}
for  $\bar{\la} \in \C$ and where
\begin{equation}\label{eq:bargammabarq}
\bar{q}=e^{2 \pi i \bar{k}}, \quad \bar{\gamma}=e^{2\pi i (\bar{l}+\frac12)}.
\end{equation}
The above formula can be found e.g. in \cite[Equation (9)]{BLZ4}.
Taking into account the change of parameters \eqref{parameters change}, we get $\bar{q}=q$ and $\bar{\gamma}=\gamma$, and the relation \eqref{eq:QQbarsystem}
coincides with the $QQ$-system (\ref{QQsystem}) for the spectral determinants $Q_{\pm}$. In particular, the function $\overline{Q}_+$ satisfies the Quantum KdV  Bethe equations \eqref{eq:BAEQintro}:
$$\gamma^{-2} \frac{\overline{Q}_+\left(\la^*\, q^{-2}\right)}{\overline{Q}_+\left(\la^* \,q^2 \right)} = -1,$$
for every $\la^* \neq 0$ such that  $\overline{Q}_+(\la^*)=0$.

\begin{rem}
From \cite{BLZ4} we know that $\gamma=e^{2\pi i p}$, where $p$ the quantum KdV momentum. Due to \eqref{eq:bargammabarq} we therefore obtain  $p=\bar{l}+\frac12$ (cf. Remark \ref{rem:blz parameters}). The shift $\bar{l}\mapsto\bar{l}-r\bar{k}$ discussed in Remark \ref{rem: shift} induces the scaling $\gamma\mapsto \gamma q^{-2r}$. \qed
\end{rem}

Recall that if $d_0=0$, $L^G$ opers and BLZ opers are related by the change of variables \eqref{eq:phichange} and
of parameters \eqref{parameters change}.  
Since under the map $x=\varphi(z)$ the opers transform as \eqref{L change of variable}, the corresponding solutions transform as
$$\psi(x)\mapsto \varphi'(z)^{-\frac12}\psi(\varphi(z)).$$
If follows from this that under \eqref{eq:phichange},  \eqref{parameters change},  the Sibuya solutions are mapped to  the Sibuya solutions:
$$\psi^{(0)}(x,\la)\mapsto (1-\bar{k})^{-\frac12}\bar{\la}^{\frac{1}{4(1-\bar{k})}}\bar{\psi}^{(0)}(z,\bar{\la}).$$ 
 Similarly, the Frobenius solutions are mapped to the Frobenius solutions: 
 $$\chi_\pm(x,\la)\mapsto (1-\bar{k})^{-\frac12}\bar{\la}^{\mp\frac{\bar{l}+\frac12}{1-\bar{k}}}\bar{\chi}_\pm(z,\bar{\la}).$$
A straightforward computation shows that
\begin{align}\label{eq:QQbargroundstate}
& \overline{Q}_\pm(\bar{\la})=(\frac{1}{k+3})^{\pm(2\bar{l}+1)}Q_\pm((k+3)^{\frac{2}{k+3}} \, \bar{\la}),
\end{align}
where $Q_{\pm}$ are the spectral determinants \eqref{Q Q*} of $L^G$ with $d_0=0$, and $\bar{Q}_{\pm}$ are the spectral determinants \eqref{bar Q wronskian} of the BLZ oper $L^{BLZ}$  with $d_0=0$. As remarked before, the above transformation does not exist in the case $d_0 >0$.

\section{Studying solutions of BAE}\label{BAE limit chapter}
In this section we study the asymptotics of the solutions of the Gaudin BAE \eqref{Gaudin BAE} as $l\to\infty$. The first order of the asymptotics is controlled by the zeroes of some remarkable polynomials. These zeroes are apparent poles of monodromy-free differential operators \eqref{O osc}, \eqref{osc-1}.
\subsection {Monodromy-free extensions of the harmonic oscillator}
Consider differential operators of the form
\bean\label{O osc}
\partial_x^2-x^2-\sum_{i=1}^{d_0} \frac2{(x-v_i)^2}.
\end{align}
It is known that such operators with trivial monodromy correspond to partitions of $d_0$ as follows.
Let 
\be
H_n(x)=(-1)^n e^{x^2} \frac{d^n}{dx^n} e^{-x^2}
\ee
be the $n$-th Hermite polynomial. For a partition $\mu=(\mu_1,\dots,\mu_a)$, $\mu_1>\dots>\mu_a>0$, of $d_0$ let 
\be
V_\mu(x)=\on{Wr}\big(H_{\mu_a }(x),H_{\mu_{a-1}+1 }(x),\dots, H_{\mu_1+a-1 }(x)\big)
\ee
be the Wronskian of corresponding Hermite polynomials. Then $V_\mu(x)$ is a polynomial of degree $d_0$. Let $v^\mu_1,\dots,v^\mu_{d_0}$ be the roots of $V_\mu$:
\bean\label{Wr zero}
V_\mu=b_\mu\prod_{i=1}^{d_0}(x-v_i^{\mu}), \qquad b_\mu=2^{\sum_{i=1}^m (\mu_i+i-1)}\prod_{i<j}(\mu_i-\mu_j+i-j).
\end{align}
The first few polynomials are:
\bea
V_\emptyset=1, \qquad V_{(1)}=2x, \qquad V_{(2)}=2(2x^2-1),  \qquad V_{(1,1)}=4(2x^2+1),\\
V_{(3)} = 4(2x^3-3x),\qquad V_{(2,1)}=32x^3, \qquad  V_{(1,1,1)}=64x(2x^2+3).
\end{align*}
It is also good to note that an addition of zeroes to partition does not change zeroes of $V_\mu$, as the ratio
\be
\frac{V_{(\mu_1,\dots,\mu_a,0)}}{V_{(\mu_1,\dots,\mu_a)}}
\ee
is always a constant.
It is proved in \cite{O} that an operator of form \eqref{O osc} is monodromy-free if and only if the set of poles coincides with the set of zeroes of $V_\mu(x)$ for some $\mu$. That is  \eqref{O osc} is monodromy-free if and only if  $\{v_j\}_{j=1}^{d_0}=\{v_j^\mu\}_{j=1}^{d_0}$ for some partition $\mu$ of $d_0$.
A partition $\mu$ is called non-degenerate if all $v_j^\mu$, $j=1,\dots,d_0$, are distinct.
The zeroes of $V_\mu(x)$ have been studied in \cite{FHV}. Let
\be
m_\mu=\sum_{j=1}^a(-1)^{\mu_j+m-j},
\ee
then $V_\mu(x)$ has a zero at $x=0$ of order $m_\mu(m_\mu+1)/2$. It is conjectured that if $x=v$ is a multiple root of a $V_\mu(x)$ then $v=0$.
In particular, this implies that a partition $\mu$ is non-degenerate if and only if $V_\mu(x)$ does
not have a multiple root at $x=0$. In other words, one expects $\mu$ to be non-degenerate if and only if $-2\leq m_\mu\leq 1$.
For example, among $19$ partitions of numbers at most $5$ there are $3$ degenerate partitions: $(2,1)$, $(4,1)$, $(2,1,1,1)$, and $m_\mu=2$ for these three partitions.
It is also known that $\{v^\mu_j\}_{j=1}^{d_0}=\{-v^\mu_j\}_{j=1}^{d_0}=\{ i v^{\mu_*}_j\}_{j=1}^{d_0}$, where $\mu_*$ is the partition dual to $\mu$ (the Young diagrams of dual partitions $\mu$ and $\mu_*$ are transposed to each other).
Thus, monodromy-free operators of the form \eqref{O osc} with  distinct $v_i$ correspond to non-degenerate partitions of $d_0$. Then equation \eqref{monodromy eq} of trivial monodromy around $x=v_i$ takes the form
\be
2v_i-\sum_{j=1, j\neq i}^{d_0}\frac{4}{(v_i-v_j)^3}=0, \qquad i=1,\dots,d_0.
\ee
Thus for any complex $c,\tilde c,$ $c\neq 0,$ the algebraic system of $d_0$ equations with $d_0$ variables
\bean\label{O eq}
2c^4 (v_i-\tilde c)-\sum_{j=1, j\neq i}^{d_0}\frac{4}{(v_i-v_j)^3}=0, \qquad i=1,\dots,d_0,
\end{align}
has isolated solutions (up to permutations of $v_i$) given by $\{v_i\}_{i=1}^{d_0}=\{\tilde c+v_i^\mu/c\}_{i=1}^{d_0}$ for non-degenerate partitions $\mu$ of $d_0$.

\subsection{Large \texorpdfstring{$l$}{l} limit, the case of \texorpdfstring{$d_0=d_1$}{d0=d1}}

We start with the case of $d_0=d_1$, that is $r=0$.
For convenience we write $l$ as a square 
\be
l=\ell^2.
\ee

\begin{conj}\label{conj d0=d1}
For generic $l,m$ the number of solutions of the Gaudin BAE \eqref{Gaudin BAE BLZ}, up to permutations of coordinates equals $\on{Part}(d_0)$, the number of partitions of $d_0$, and all solutions are isolated and non-degenerate. For fixed $d_0$, there exists $C\in\R$ such that parameters $l,m$ are generic if $|l|,|m|>C$. 
Moreover, for generic $l, k$ and a partition $\mu$ of $d_0$, there exists a solution $\{s_i^\mu,t_i^\mu\}_{i=1}^{d_0}$  with the asymptotics
\begin{subequations}\label{st asymp}
\bean
&s_i^\mu=\frac{k+2}{k+3}\left(1+\frac{2^{1/4}}{(k+2)^{1/4}(k+3)^{1/4}}\,v^\mu_i\, \ell^{-1}+O(\ell^{-2})\right),\label{s asymp}\\
&t_i^\mu=s_i^\mu(1 -\ell^{-2})+O(\ell^{-4}),  \label{t asymp}
\end{align}
\end{subequations}
$i=1,\dots,d_0$, as $l\to\infty$ and $k$ is fixed. Here $v^\mu_i$ are zeroes of $V_\mu$, see \eqref{Wr zero}.\qed
\end{conj}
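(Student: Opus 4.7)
The strategy is to verify the ansatz by substituting it into the trivial-monodromy condition and matching orders in $1/\ell$, with the leading non-trivial equation for the $v_i^\mu$'s reducing, after a remarkable cancellation, to the harmonic-oscillator monodromy equation \eqref{O eq}.

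First I would rescale $x = s_0(1 + u/\ell)$ with $s_0 = (k+2)/(k+3)$, so that the apparent singularities of \eqref{L1 BLZ} sit at $u = cv_i$. Substituting $l = \ell^2$, a direct expansion of the potential yields
\[
\tilde V(u) := \tfrac{s_0^2}{\ell^2} V(x(u)) = \ell^2 - 2u\ell + 3u^2 + 1 + \sum_{i=1}^{d_0} \tfrac{2}{(u - c v_i)^2} + O(1/\ell),
\]
which is a shifted quantum harmonic oscillator with apparent singularities at $u = c v_i$.

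The heart of the computation is the monodromy condition. Writing the potential near an apparent singularity as $V = 2/(x-s)^2 + b_{-1}/(x-s) + b_0 + b_1(x-s) + \ldots$, the condition \eqref{monodromy eq} reads $b_{-1}^3 - 4 b_0 b_{-1} + 4 b_1 = 0$. Because \eqref{L1 BLZ} comes from a $\lambda$-oper, the $\lambda^2$-coefficient equation \eqref{residue at si} already forces $b_{-1} = p_i = \ln'(P)(s_i) = -2/s_0 + O(1/\ell)$, so in $u$-coordinates $\tilde b_{-1} = s_0 b_{-1}/\ell = -2/\ell + O(1/\ell^2)$. Direct computation gives $\tilde b_0 = \ell^2 - 2 c v_i \ell + O(1)$ and $\tilde b_1 = -2\ell + 6 c v_i - \sum_{j \neq i} \tfrac{4}{c^3 (v_i - v_j)^3} + O(1/\ell)$, so the dangerous $O(\ell)$ term $-2\ell$ in $4\tilde b_1$ cancels against the $O(\ell)$ piece of $-4 \tilde b_0 \tilde b_{-1} \sim 8\ell$. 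At $O(1)$ the residual equation is
\[
4(k+2)(k+3)\, c\, v_i - \tfrac{16}{c^3} \sum_{j \neq i} \tfrac{1}{(v_i - v_j)^3} = 0,
\]
and choosing $c^4 = 2/((k+2)(k+3))$, which recovers exactly the conjecture's $c = 2^{1/4}/((k+2)(k+3))^{1/4}$, this reduces to $2 v_i = \sum_{j \neq i} 4/(v_i - v_j)^3$, the harmonic-oscillator trivial-monodromy equation \eqref{O eq}. By \cite{O}, its solutions are exactly the zero-sets $\{v_i^\mu\}$ of the Wronskians $V_\mu$ \eqref{Wr zero}, one system per non-degenerate partition $\mu$ of $d_0$. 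A parallel expansion of \eqref{Gaudin BAE BLZ 2} forces $t_i - s_i = -s_i/\ell^2 + O(\ell^{-3})$ from the leading $\ell^2$-cancellation of $l/t_i$ against $1/(t_i - s_i)$, and the absence of an $\ell^{-3}$ correction follows from the $O(1/\ell)$ balance, recovering $t_i^\mu = s_i^\mu(1 - \ell^{-2}) + O(\ell^{-4})$. Non-degeneracy of the Jacobian of \eqref{O eq} at each non-degenerate $\{v_i^\mu\}$ then lets the implicit function theorem promote the leading-order ansatz to an exact solution of the full BAE \eqref{Gaudin BAE BLZ} for all sufficiently large $\ell$.

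The main obstacle is the completeness claim: showing that for generic $l,m$ \emph{every} solution of \eqref{Gaudin BAE BLZ} with $d_0 = d_1$ arises this way, so that the count is exactly $\on{Part}(d_0)$. This would require a compactness argument (any $\ell \to \infty$ family of BAE solutions admits a subsequential limit in rescaled variables satisfying \eqref{O eq}) combined with a matching dimension count, most naturally via the Sugawara coset decomposition \eqref{coset} identifying singular vectors with a basis of the level-$d_0$ component of $M_{c,\Delta_0}$. Degenerate partitions pose a further subtlety: at leading order they do not give isolated $v_i^\mu$, so one must use the $m$-dependent subleading corrections to see whether such $\mu$ split into several isolated BAE solutions or drop out entirely for generic parameters.
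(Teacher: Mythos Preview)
First, note that the statement is a \emph{conjecture}; the paper does not prove it in full. What the paper does prove is the existence part for non-degenerate partitions~$\mu$ (Proposition~\ref{old prop}), and your last paragraph correctly identifies completeness as the genuine open obstacle. So the comparison is really between your existence argument and the paper's proof of Proposition~\ref{old prop}.

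Your route is genuinely different: the paper substitutes \eqref{t sub}, \eqref{s sub} directly into the \emph{BAE} \eqref{Gaudin BAE BLZ} and reads off a block-triangular limiting $2d_0\times 2d_0$ system --- the $t_i$-equation at order~$1$ gives \eqref{tilde equation} (linear in~$\tilde t_i$), and the sum of the $s_i$- and $t_i$-equations at order~$\ell^{-1}$ gives the oscillator system \eqref{order 0 system}. The implicit function theorem is then applied to this full BAE system. You instead expand the \emph{trivial-monodromy equation} \eqref{la monodromy eq} for $L^G$, which is a closed $d_0$-equation system in the~$s_i$ alone, and obtain the oscillator equation from that.

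There is a gap in this strategy. The monodromy system \eqref{la monodromy eq} is a \emph{consequence} of the BAE (Proposition~\ref{prop: L0 L1 from BAE}), not equivalent to it: the converse direction is precisely the open Conjecture~\ref{gaudin - no monodromy conj}. So applying the implicit function theorem to \eqref{la monodromy eq} yields~$s_i$ with trivial monodromy, and your separate expansion of \eqref{Gaudin BAE BLZ 2} produces~$t_i$ --- but you have not verified \eqref{Gaudin BAE BLZ 1}. Equivalently, knowing the leading order of the monodromy system does not tell you the leading order of the full $2d_0\times 2d_0$ BAE system, which is what the IFT on ``the full BAE'' requires. The paper avoids this by never passing through the monodromy equations: it works with the BAE from the start and exhibits the limiting Jacobian directly.

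The fix is simple: carry out your expansion on \eqref{Gaudin BAE BLZ 1} itself (with the same substitutions \eqref{t sub}, \eqref{s sub}). You will find that its order-$1$ term is again \eqref{tilde equation} (up to sign), so that the sum with \eqref{Gaudin BAE BLZ 2} drops to order~$\ell^{-1}$ and gives \eqref{order 0 system}. Your monodromy computation is then a useful consistency check, but not a substitute for this step.
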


For special $l,m$ the number of isolated solutions (counted with multiplicity) is expected to be at most $\on{Part}(d_0)$. In some cases solutions appear in infinite families. 
For example, if $m$ or $l$ are non-negative integers, such families are produced by the reproduction procedure \eqref{rep 1} or \eqref{rep 2}
We prove the existence of solutions predicted by Conjecture \ref{conj d0=d1} for the case when $\mu$ is a non-degenerate partition of $d_0$.
\begin{prop}\label{old prop}
Let $\mu$ be a non-degenerate partition. Then there exists a solution $\{s_i^\mu, t_i^\mu\}_{i=1}^{d_0}$  of 
equations \eqref{Gaudin BAE BLZ}
 with asymptotics \eqref{st asymp}.
\end{prop}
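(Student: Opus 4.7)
The proof follows a perturbative strategy: I would rescale the BAE near the point $s_0=(k+2)/(k+3)$ and show that in the limit $\ell\to\infty$ the system reduces, at leading order, to equation \eqref{O eq} whose solutions are zeroes of Wronskians of Hermite polynomials (Oblomkov's theorem). The Implicit Function Theorem then extends the leading-order solution to all orders in $\ell^{-1}$.

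Concretely: substitute the ansatz $s_i = s_0\bigl(1 + \alpha\, v_i^\mu\, \ell^{-1} + O(\ell^{-2})\bigr)$ into the BAE \eqref{Gaudin BAE BLZ}, with $\alpha= 2^{1/4}/((k+2)(k+3))^{1/4}$. Equation \eqref{Gaudin BAE BLZ 2} at leading order exhibits the dominant balance $l/t_i+1/(t_i-s_i)\approx 0$ between the two $O(\ell^{2})$ contributions, which yields $t_i-s_i\approx -s_i/l$; this is exactly $t_i=s_i(1-\ell^{-2})+O(\ell^{-4})$, matching \eqref{t asymp}. Higher-order corrections to $t_i$ are determined uniquely as analytic functions of the $s_i$'s by IFT applied to \eqref{Gaudin BAE BLZ 2}. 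Substituting these expressions into \eqref{Gaudin BAE BLZ 1}, or equivalently using the trivial-monodromy equation \eqref{la monodromy eq} for $L^G$ (which, by Corollary \ref{cor: L1 from BAE special case}, depends only on the $s_i$'s), and expanding in powers of $\ell^{-1}$, the leading $\ell^3$ order reduces to the system
\[
2\,\xi_i - \sum_{j\neq i} \frac{4}{(\xi_i-\xi_j)^{3}} = 0, \qquad i=1,\dots,d_0,
\]
which is precisely equation \eqref{O eq} in the normalisation fixed by $\alpha$. By Oblomkov's theorem (as recalled just above), the solutions, modulo permutation, are exactly $\{\xi_i\}=\{v_i^\mu\}$ for partitions $\mu$ of $d_0$.

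For a non-degenerate partition $\mu$, the zeroes $\{v_i^\mu\}$ are distinct, so the rescaled BAE are well-defined analytic equations in a neighbourhood of this point, and the full system is an analytic perturbation of the leading-order equations in the small parameter $\ell^{-1}$. The Implicit Function Theorem then produces, for each sufficiently large $\ell$, a unique solution $\{s_i^\mu(\ell),t_i^\mu(\ell)\}$ deforming $\{v_i^\mu\}$, with the asymptotics \eqref{s asymp} and \eqref{t asymp}. The key technical input is the invertibility of the Jacobian of the leading-order system at $\{v_i^\mu\}$: this is the main obstacle, since Oblomkov's classification provides isolation of non-degenerate solutions but not, a priori, smoothness. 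One route is an explicit computation of the Jacobian using the Wronskian structure of $V_\mu$ (e.g., using that $V_\mu$ satisfies identities coming from the bispectral involution on Hermite polynomials); a second route is a deformation argument that identifies the relevant Jacobian with one associated to the discrete parameter space of partitions. A secondary, largely routine, obstacle is to track the cancellations at sub-leading orders of \eqref{Gaudin BAE BLZ 1}, \eqref{Gaudin BAE BLZ 2}, which must conspire so that the analytic continuation of $t_i(s)$ produced by BAE \eqref{Gaudin BAE BLZ 2} is consistent with BAE \eqref{Gaudin BAE BLZ 1} order by order.
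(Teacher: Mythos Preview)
Your overall strategy (rescale near $s_0=(k+2)/(k+3)$, identify the Oblomkov system \eqref{O eq} as the leading-order limit, then apply an implicit-function deformation) matches the paper's. There is, however, one genuine gap and one mechanical difference worth noting.

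\textbf{The gap.} Your claim that substituting $t_i(s)$ from \eqref{Gaudin BAE BLZ 2} into \eqref{Gaudin BAE BLZ 1} is ``equivalent'' to the trivial-monodromy equation \eqref{la monodromy eq} is not justified. The paper proves only one direction (BAE $\Rightarrow$ trivial monodromy, Proposition~\ref{prop: L0 L1 from BAE} and Corollary~\ref{cor: L1 from BAE special case}); the converse is precisely Conjecture~\ref{gaudin - no monodromy conj}. Since the proposition asserts existence of solutions to the full BAE \eqref{Gaudin BAE BLZ}, working with \eqref{la monodromy eq} alone would not prove it. Your route (a) through the BAE directly is fine; route (b) is not.

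\textbf{The mechanical difference.} The paper does not eliminate $t$ and substitute. After writing $t_j=s_j(1-\ell^{-2})+\tilde t_j\ell^{-4}$ and $s_j=(k+2)/(k+3)+v_j/\ell$, it \emph{adds} the $i$-th equation of \eqref{Gaudin BAE BLZ 1} to the $i$-th equation of \eqref{Gaudin BAE BLZ 2}, so that the singular terms $1/(s_i-t_i)$ and $1/(t_i-s_i)$ cancel exactly. The sum then has top order $\ell^{-1}$ and is the Oblomkov system \eqref{order 0 system} for the $v_i$. Separately, the $t_i$-equation alone at order $\ell^{0}$ is the \emph{linear} equation \eqref{tilde equation}, which determines each $\tilde t_i$ explicitly from the $v_j$'s. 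This triangular decoupling (solve for $v_i$, then read off $\tilde t_i$) makes the deformation argument cleaner than your substitution route, which would require expanding $t_i(s)$ to higher order before the Oblomkov system emerges from \eqref{Gaudin BAE BLZ 1}.

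On the Jacobian: you are right that non-degeneracy of the leading-order Jacobian is not verified; the paper simply invokes deformation of an ``isolated'' solution. Both arguments share this lacuna.
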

\begin{proof}
Substitute
\ben\label{t sub}
t_j=s_j(1-\ell^{-2})+\tilde t_j\ell^{-4},\qquad j=1,\dots,d_0,
\een 
and then 
\ben\label{s sub}
s_j=\frac{k+2}{k+3}+\frac{v_j}{\ell}, \qquad j=1,\dots,d_0,
\een
to \eqref{Gaudin BAE BLZ}.
Write the resulting equation for $t_i$ as a Laurent series in $\ell$ at $\ell=\infty$. Then the top term is of order $1$  and coincides with the left hand side of:
\ben\label{tilde equation}
\frac{k+3}{k+2}-\frac{(k+3)^2}{(k+2)^2}\, w_i+\sum_{j=1,j\neq i}^{d_0}\frac{(k+2)}{(k+3)(v_i-v_j)^2}=0.
\een
Add the resulting equations for $s_i$ and $t_i$ (with fixed $i$) and write the answer as a Laurent series at $\ell=\infty$. Then the top term is of order $\ell^{-1}$  and coincides with the left hand side of:
\bean\label{order 0 system}
\frac12 (k+3)^3 v_i-\sum_{j=1,j\neq i}^{d_0}\frac{2(k+2)^2}{(k+3)^2(v_i-v_j)^3}=0.
\end{align}
Note that \eqref{order 0 system} is nothing but system \eqref{O eq} with
\ben\label{1/c}
c^{-1}=\frac{k+2}{k+3}\left(\frac{2}{(k+2)(k+3)}\right)^{1/4}, \qquad \tilde c=0.
\een
Thus, Gaudin BAE equations \eqref{Gaudin BAE BLZ}, for variables $v_j$, $\tilde t_j$ are deformations of equations \eqref{O eq} together with equations \eqref{tilde equation}, with respect to parameter $\ell^{-1}$.
Given an isolated solution $\{v_j^{(0)}\}_{j=1}^{d_0}$  of \eqref{O eq} with \eqref{1/c}, we compute $\tilde t_j^{(0)}$ using \eqref{tilde equation}.
Then for large $\ell$, the set $\{v_j^{(0)}, \tilde t_j^{(0)}\}_{j=1}^{d_0}$ can be deformed to $\{v_j, \tilde t_j\}_{j=1}^{d_0}$ such that $\{s_j,t_j\}_{j=1}^{d_0}$ given by \eqref{s sub}, \eqref{t sub} is a solution of 
 \eqref{Gaudin BAE BLZ}.
The  proposition follows. 
\end{proof}
Recall that for every solution $\{s_i\}_{i=1}^{d_0}, \{t_j\}_{j=1}^{d_1}$  of  Gaudin BAE \eqref{Gaudin BAE BLZ}  the operators \eqref{L0 BLZ} and \eqref{L1 BLZ} have trivial monodromy. The operator \eqref{L1 BLZ} with distinct $s_i$ has trivial monodromy if and only if $s_i$
satisfy equation \eqref{la monodromy eq}. Thus, Proposition \ref{old prop} shows the existence of solutions $\{s_i\}_{i=1}^{d_0}$  of 
equations \eqref{la monodromy eq}
 with asymptotics \eqref{s asymp} for all non-degenerate partitions $\mu$. 
Similar asymptotics for solutions $\{z_i\}_{i=1}^{d}$ of the BLZ equations \eqref{BLZ BAE} were proved in \cite{CM2}.

\subsection{Monodromy-free extensions of oscillator with potential \texorpdfstring{$1+\la x$}{1+lambda x}}
Consider the Gaudin BAE system \eqref{Gaudin BAE BLZ}. Substitute
\begin{align*}
s_i=1+ w_i\, l^{-1}, \qquad i=1,\dots,d_0, \\
t_j=1+u_j\,l^{-1}, \qquad j=1,\dots,d_1.
\end{align*}
Then the main term when $l\to \infty$ is of order $l$ and is given by the left hand sides of
\begin{subequations}\label{BAE exp}
\bean
&\frac{1}{w_i}+\sum_{j=1}^{d_1}\frac{2}{w_i-u_j}-\sum_{j=1, j\neq i}^{d_0}\frac{2}{(w_i-w_j)}-2=0,\label{BAE exp 1}\\
&\hspace{25pt}\sum_{i=1}^{d_0}\frac{2}{u_j-w_i}-\sum_{i=1, i\neq j}^{d_1}\frac{2}{(u_j-u_i)}+2=0.\label{BAE exp 2} 
\end{align}
\end{subequations}
These equations are exponential BAE \eqref{Gaudin exp BAE} for 
\be
{\mc L}=L_{1,0}
\ee
with the non-zero but special twist $n=(n_0,n_1)=(-1,1)$. In other words
\be
T_0=xe^{-2x}, \qquad T_1=e^{2x}, \qquad P=T_0T_1=x.
\ee 
Recall that if $n=(0,0)$, as in \eqref{Gaudin BAE} then it is expected that the Gaudin Hamiltonians commute with the action of $\slth$ and, therefore, the solutions of BAE correspond to a basis of singular vectors ${\mc L}^{sing}_{d_0,d_1}$, see Section \ref{periodic sec}. 
For generic $n=(n_1,n_2)$ the symmetry is broken and
the solutions of BAE are expected to  correspond to a basis of ${\mc L}_{d_0,d_1}$, that is all weight vectors, see Section \ref{exp sec}.
For the case $n_1+n_2=0$ (that is when $P(x)$ is a polynomial), we expect a partial symmetry: namely we expect that the Gaudin Hamiltonians commute with the Heisenberg subalgebra of $\slth$. Therefore we expect that the solutions of \eqref{BAE exp} correspond to vectors in ${\mc L}_{d_0,d_1}$ pictured in Figure \ref{basic module pic} by filled circles.
We have the following statement.
\begin{prop}\label{one orbit prop}
If
\bean\label{vir h w}
d_0=r^2, \qquad d_1=r(r-1), \qquad r\in\Z,
\end{align} 
then system \eqref{BAE exp} has a unique generalized solution.
\end{prop}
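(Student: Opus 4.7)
The plan is to proceed by induction on $|r|$, using the exponential reproduction procedure \eqref{exp rep} to relate ground-state BAE solutions at different values of $r$.

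Base cases: For $r=0$ the system \eqref{BAE exp} is empty, and $(y_0,y_1)=(1,1)$ is manifestly the unique generalized solution. For $r=1$ the BAE reduces to the single equation $1/w_1=2$, yielding the unique solution $(y_0,y_1)=(x-\tfrac12,1)$. Treating $r=1$ as a second base case is necessary because it lies in a different orbit of the reproduction action than $r=0$.

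Inductive step: Under exponential reproduction, the twist $(n_0,n_1)$ is acted on by the simple reflections $s_0\colon(n_0,n_1)\mapsto(-n_0,n_0+n_1)$ and $s_1\colon(n_0,n_1)\mapsto(n_0+n_1,-n_1)$, so the twist $(-1,1)$ has the finite orbit $\{(-1,1),(1,0),(0,-1)\}$; any composite reproduction preserving this twist must involve at least three reflections. I would analyze the two natural three-step compositions $s_0s_1s_0$ and $s_1s_0s_1$: tracking the degree changes in the Wronskian equations of \eqref{exp rep} shows that they map the ground-state pair $(r^2,r(r-1))$ to ground-state pairs at $r'=2-r$ and $r'=-r-1$ respectively, while their composition is a translation $r\mapsto r\pm 3$. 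Starting from $r=0$ and $r=1$, this covers the two orbits $\{r\equiv 0,2\pmod 3\}$ and $\{r\equiv 1\pmod 3\}$, reducing every $r\in\Z$ to a base case. Since each elementary reproduction originating from a twist with both components nonzero is uniquely invertible (see the discussion after \eqref{exp rep 2}), uniqueness would propagate along the reproduction chain.

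The main obstacle is that the composite reproductions inevitably pass through the degenerate intermediate twists $(1,0)$ or $(0,-1)$, where one of $n_0,n_1$ vanishes and the exponential reproduction degenerates to the periodic case. In the periodic case the polynomial $\tilde y_i$ solving $\on{Wr}(y_i,\tilde y_i)=T_iy_{1-i}^2$ is determined only up to $\tilde y_i\mapsto\tilde y_i+cy_i$, so uniqueness of the composite reproduction is not automatic. Resolving this requires showing that the degree constraint on $\tilde y_i$ forced by the next reflection uniquely fixes the free parameter $c$, which amounts to a linear-algebra compatibility check on the leading coefficients of the Wronskian equation; carrying out this verification in all the cases arising in the reduction is the technical heart of the proof.
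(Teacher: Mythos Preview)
Your approach has a genuine gap stemming from an incorrect formula for how the twist transforms under exponential reproduction. You use $s_0\colon(n_0,n_1)\mapsto(-n_0,n_0+n_1)$ and $s_1\colon(n_0,n_1)\mapsto(n_0+n_1,-n_1)$; this is the rule appropriate to a Cartan matrix with off-diagonal entry $-1$. For $\slth$, whose Cartan matrix has off-diagonal entries $-2$, the correct transformation is $s_0\colon(n_0,n_1)\mapsto(-n_0,n_1+2n_0)$ and $s_1\colon(n_0,n_1)\mapsto(n_0+2n_1,-n_1)$. (The formula printed after \eqref{exp rep 2} appears to contain a typo; the paper's own proof uses the corrected version, and one can check it directly: reproducing $(x-\tfrac12,1)$ in direction $1$ yields $(x-\tfrac12,\,x^2-2x+\tfrac54)$, whose roots satisfy the BAE with twist $(1,-1)$, not $(0,-1)$.) With the correct rule, \emph{both} single reproductions send $(-1,1)$ to $(1,-1)$, so the orbit you compute is wrong and the degenerate twists $(1,0),(0,-1)$ never arise.

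This means you are missing the decisive simplification. Since $(1,-1)=-(-1,1)$, the substitution $x\mapsto -x$---which carries any BAE solution with twist $(n_0,n_1)$ to one with twist $(-n_0,-n_1)$---returns the twist to $(-1,1)$ after a single reproduction. The composite ``one reproduction followed by $x\mapsto -x$'' is therefore a bijection between ground-state solution sets at twist $(-1,1)$, realising $r\mapsto 1-r$ (direction $0$) or $r\mapsto -r$ (direction $1$). These two involutions generate the infinite dihedral group acting transitively on $\Z$, so only the base case $r=0$ is needed; no three-step composites, no degenerate intermediate twists, and no second base case $r=1$ are required. Your proposed resolution of the ``main obstacle''---fixing the integration constant $c$ by degree constraints---is in any case left as an unverified claim and is not obviously correct, since in the periodic reproduction the generic member of the family $\tilde y_i+cy_i$ has the same degree.
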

\begin{proof} 
Given a generalized solution $(y_0,y_1)$ corresponding to some weight $2r$, degree $d_0$ and twist $(n_0,n_1)$ the reproduction procedure \eqref{exp rep} produces a generalized solution $(\tilde y_0,y_1)$
corresponding to weight $-2r+2$ degree $d_0-2r+1$ and twist $(-n_0,n_1+2n_0)$; 
and a solution $(y_0,\tilde y_1)$ corresponding to weight $-2r$ degree $d_0$ and twist $(n_0+2n_1,-n_1)$.
We have $(n_0,n_1)=(-1,1)$, thus 
\be
(-n_0,n_1+2n_0)=(n_0+2n_1,-n_1)=(1,-1)=(-n_0,-n_1).
\ee
If $(y_0(x),y_1(x))$ is a solution 
corresponding to some weight $2r$, degree $d_0$ and twist $(n_0,n_1)$ then $(y_0(-x),y_1(-x))$ is a solution 
corresponding to same weight $2r$, same degree $d_0$ and the opposite twist $(-n_0,-n_1)$.
Thus, we have a bijection between sets of (generalized) solutions with any two different $r\in\Z$. Since for $r=0$, the solution is unique $(y_0,y_1)=(1,1)$, the proposition follows.
\end{proof}
Effectively, all the solutions correspond to the single population of solutions generated from $(y_0,y_1)=(1,1)$. This population corresponds to the orbit of affine Weyl group which is identified with the filled circles in Figure \ref{basic module pic}.
We note that the proof is similar to  the proof of Theorem 3.9 in \cite{MV3} but with some new features due to presence of the twist.
We also expect
\begin{conj}
If \eqref{vir h w} does not hold then system \eqref{BAE exp} has no solutions. \qed
\end{conj}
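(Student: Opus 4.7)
The plan is to combine the reproduction-orbit argument of Proposition~\ref{one orbit prop} with an affine-Weyl invariant that should force any hypothetical solution to lie on the top orbit. A direct degree count for the exponential reproduction \eqref{exp rep} shows that the $0$-th (respectively $1$-st) reproduction sends the bi-degree $(d_0,d_1)$ to $(2d_1+1-d_0,d_1)$ (respectively to $(d_0,2d_0-d_1)$). Writing $r=d_0-d_1$ and $\epsilon=d_0-r^2\in\Z$, both moves preserve $\epsilon$ while acting on $r$ as $r\mapsto 1-r$ and $r\mapsto -r$; their composition is the integer translation $r\mapsto r-1$. Hence the set of generalized solutions to \eqref{BAE exp} decomposes by $\epsilon$ into affine-Weyl orbits indexed by $r\in\Z$, and the orbit $\epsilon=0$ is precisely the one realised in Proposition~\ref{one orbit prop}. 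Only $\epsilon\neq 0$ remains to be ruled out.

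For $\epsilon<0$ the argument should be manageable by direct inspection. Along the orbit $d_0=r^2+\epsilon$ and $d_1=r^2-r+\epsilon$ must both be non-negative integers, which already forbids most orbit points; on the remaining low-bi-degree configurations the BAE \eqref{BAE exp} admits elementary incompatibilities. For instance, $(d_0,d_1)=(0,1)$ gives the single equation $2=0$ and $(d_0,d_1)=(0,2)$ gives $u_1-u_2=1=u_2-u_1$. A finite case check, or an induction on $-\epsilon$ using reproduction whenever it is well-defined, should handle all $\epsilon<0$ uniformly.

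The genuinely nontrivial case is $\epsilon\geq 1$. Here my strategy is to re-cast \eqref{BAE exp} as the trivial-monodromy condition for a Schr\"odinger operator with linear potential and $d_0$ apparent double poles, in parallel with Corollary~\ref{cor: L1 from BAE special case} applied to $T_0=xe^{-2x}$, $T_1=e^{2x}$, $P=x$. Since $\epsilon\geq 0$ keeps all bi-degrees on the orbit non-negative, reproduction produces solutions at every orbit point $r\in\Z$. Pushing $r\to -\infty$ forces $|w_i|\to\infty$; after an $r$-dependent rescaling $w_i=\rho(r)v_i$, the leading-order trivial-monodromy equations should degenerate to a monodromy-free problem for the Airy-type operator $\partial_x^2-x$ extended by $d_0$ apparent double poles. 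An Airy analogue of Oblomkov's theorem \cite{O} would then classify the admissible leading configurations, and matching this against the orbit degrees $(r^2+\epsilon,r(r-1)+\epsilon)$ in the limit $r\to-\infty$ should force $\epsilon=0$.

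The main obstacle will be the last step. Unlike the harmonic-oscillator case, where Wronskians of Hermite polynomials give a complete classification of monodromy-free extensions, no closed-form description is known (to my knowledge) of the monodromy-free extensions of $\partial_x^2-x$ by apparent double poles. A full proof would require developing this Airy analogue along the Duistermaat--Gr\"unbaum/Oblomkov line and, in parallel, controlling the genericity of every intermediate pair $(\tilde y_0,y_1)$, $(y_0,\tilde y_1)$ along the infinite reproduction chain, since the inductive step breaks at any non-generic reproduction. These two technical gaps are likely the reason the statement is left as a conjecture.
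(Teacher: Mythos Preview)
The paper does not prove this statement: it is stated as a conjecture and left without argument. There is therefore no ``paper's own proof'' to compare against; the \qed symbol here simply marks the end of the conjecture environment.

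Your proposal is honest about this and correctly identifies itself as a strategy rather than a proof. The orbit-invariant computation is accurate: under the exponential reproduction \eqref{exp rep} with $T_0=xe^{-2x}$, $T_1=e^{2x}$, the quantity $\epsilon=d_0-r^2$ is indeed preserved while $r\mapsto 1-r$ (direction $0$) and $r\mapsto -r$ (direction $1$), and these generate all integer shifts of $r$. This is a clean reformulation of the problem and reduces the conjecture to showing that no solution exists on any orbit with $\epsilon\neq 0$.

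That said, neither branch of your argument is close to complete. For $\epsilon<0$ you give two correct spot-checks, but ``a finite case check, or an induction on $-\epsilon$ using reproduction whenever it is well-defined'' is not a proof: the orbit for each negative $\epsilon$ still contains infinitely many admissible bi-degrees (all $|r|$ large enough), and you have not shown how reproduction propagates nonexistence across the points where it \emph{is} defined, let alone across the gaps where the target bi-degree is negative. For $\epsilon\geq 1$ the proposed scaling to an Airy-type operator is speculative; you would need both the (currently unavailable) classification of monodromy-free rational extensions of $\partial_x^2-x$ and a uniform control of genericity along an infinite reproduction chain, exactly the two gaps you flag yourself. So the proposal is a reasonable roadmap, but not a proof --- consistent with the paper's decision to leave the statement as a conjecture.
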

Let us denote the unique solution corresponding to $r\in\Z$ by $\{w_i^{r}\}_{i=1}^{r^2}$, $\{u_j^{r}\}_{j=1}^{r(r-1)}$. Let us also denote the polynomials with roots $w_i^{r}$  and $u_j^{r}$:
\ben\label{W U}
W_r(x)=2^{r^2} \,\prod_{i=1}^{r^2}(x-w_i^{r}), \qquad U_r(x)=2^{r(r-1)}\prod_{j=1}^{r(r-1)}(x-u_j^{r}).
\een
The recursion (the reproduction procedure) which allows to compute $W_r(x)$ and $U_r(x)$ recursively is explicitly described in the next lemma.
\begin{lem}
For all $r\in\Z$, we have
\be
 W_{-r}(x)=(-1)^r W_r(-x), \qquad U^{-r}(x)=U_{r+1}(-x).
\ee
For $r\geq 0$,
\begin{align*}
 &\on{Wr}\big(U_r,e^{2x}\,U_{-r}(-x)\big)=2 e^{2x}\,W_r^2,\\
 &\on{Wr}\big(W_{-r+1},e^{-2x}\,W_r(-x)\big)= (-1)^{r+1} 4 x e^{-2x}\,U_{-r+1}^2.
 \end{align*} 
\end{lem}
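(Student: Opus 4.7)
My plan is to deduce all four identities from the exponential reproduction procedure \eqref{exp rep}, combined with the uniqueness statement of Proposition \ref{one orbit prop} and the reflection $x\mapsto -x$ used in its proof.

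For $r\geq 0$, applying reproduction \eqref{exp rep 2} in the $1$-st direction to the unique solution $(W_r,U_r)$ of \eqref{BAE exp} with twist $(n_0,n_1)=(-1,1)$ produces a polynomial $\tilde U_r$ of degree $r(r+1)$ satisfying
\[
\on{Wr}(U_r,\,e^{2x}\tilde U_r)=e^{2x}W_r^2,
\]
and $(W_r,\tilde U_r)$ solves the BAE with flipped twist $(1,-1)$. Applying $x\mapsto -x$ flips the twist back to $(-1,1)$ and yields a solution of degree $(r^2,r(r+1))$; by the uniqueness in Proposition \ref{one orbit prop} this solution coincides with $(W_{-r},U_{-r})$ up to independent rescalings of the two components. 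Comparing leading coefficients (using the normalizations \eqref{W U}) then gives $W_{-r}(x)=(-1)^r W_r(-x)$ together with the auxiliary identity $U_{-r}(x)=2\,\tilde U_r(-x)$.

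Analogously, applying reproduction \eqref{exp rep 1} in the $0$-th direction produces $\tilde W_r$ of degree $(r-1)^2$ with $\on{Wr}(W_r,\,e^{-2x}\tilde W_r)=xe^{-2x}U_r^2$, and the same reflection-plus-uniqueness argument identifies $(\tilde W_r(-x),U_r(-x))$ with $(W_{1-r},U_{1-r})$. Comparing leading coefficients on the second component yields $U_{1-r}(x)=U_r(-x)$, i.e., $U_{-r}(x)=U_{r+1}(-x)$ after the shift $r\mapsto r+1$. Combining this with $U_{-r}(x)=2\tilde U_r(-x)$ from the previous paragraph delivers the useful consequence $U_{r+1}(x)=2\tilde U_r(x)$.

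The first Wronskian identity now follows by substituting $U_{r+1}=2\tilde U_r$ into the direction-$1$ reproduction relation and using $U_{-r}(-x)=U_{r+1}(x)$. For the second identity, I would apply reproduction in the $0$-th direction to the pair $(W_{-r+1},U_{-r+1})$ to obtain $\on{Wr}(W_{-r+1},\,e^{-2x}\tilde W_{-r+1})=xe^{-2x}U_{-r+1}^2$; the same symmetry argument identifies $\tilde W_{-r+1}(x)$ with a scalar multiple of $W_r(-x)$, and a leading-coefficient comparison pins the constant at $(-1)^{r+1}/4$, after which the claim follows upon multiplying through by $4(-1)^{r+1}$. The main technical obstacle throughout is the careful bookkeeping of leading coefficients and signs under the reflection $x\mapsto -x$: the factors $(-1)^r$, $2$, and $4(-1)^{r+1}$ appearing in the statement all arise from precisely these ratios of leading coefficients between $W_r,U_r$ and the reproduced polynomials $\tilde W_r,\tilde U_r$.
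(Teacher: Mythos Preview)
Your proposal is correct and is precisely the argument the paper has in mind: the paper's proof is the single sentence ``the lemma follows directly from the proof of Proposition \ref{one orbit prop},'' and what you have written is exactly the unpacking of that sentence --- exponential reproduction in each direction, the reflection $x\mapsto -x$ to flip the twist $(1,-1)$ back to $(-1,1)$, uniqueness from Proposition \ref{one orbit prop}, and leading-coefficient bookkeeping to fix the scalars $(-1)^r$, $2$, and $(-1)^{r+1}4$.
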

\begin{proof}
The lemma follows directly from the proof of Proposition \ref{one orbit prop}.
\end{proof}
Several first examples are
\begin{align*}
\big(W_0(x/2),U_0(x/2)\big)=\big(1,\ 1\big), \qquad  \big(W_1(x/2),U_1(x/2)\big)=\big(x-1,\ 1\big), \\
\big(W_{-1}(x/2),U_{-1}(x/2)\big)=\big(x+1,\ x^2+4x+5\big),\hspace{45pt}
\\
\big(W_2(x/2),U_2(x/2)\big)=\big(x^4-80x^3+40x^2-70x+35, \ x^2-4x+5\big),\hspace{5pt}
\end{align*}
and
\begin{align*}
&\big(W_{-2}(x/2),U_{-2}(x/2)\big)=\\&\quad \big(x^4-80x^3+40x^2-70x+35,\\
&\qquad \qquad x^6+20x^5+175x^4+840x^3+2275x^2+3220x+1925\big).
\end{align*}
We expect the following additional properties.
\begin{conj}\label{no double root conj}
For all $r$, polynomials $W_r,U_r$ have no multiple roots and no common roots.
For $r\geq 0$,
\begin{align*}
  W_{-r}(x/2),\  U_{-r}(x/2)\in\Z_{\geq 0}[x].
 \end{align*} 
 \qed
\end{conj}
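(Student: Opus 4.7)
Proof plan.

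The conjecture has two parts: a genericity claim (no multiple and no common roots) and a positivity/integrality claim for negative $r$. These will be attacked separately, both anchored by the explicit small-$r$ data tabulated above and the reproduction Wronskian identities from the preceding Lemma.

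For the genericity claim, I would argue by induction on $|r|$ starting from the base cases $r=0,\pm 1$ and using
\begin{equation*}
\on{Wr}(U_r,\, e^{2x} U_{-r}(-x)) = 2 e^{2x} W_r^2, \qquad \on{Wr}(W_{-r+1},\, e^{-2x} W_r(-x)) = (-1)^{r+1} 4 x\, e^{-2x} U_{-r+1}^2
\end{equation*}
to move between consecutive values of $r$. A multiple root of $W_r$ at $w\neq 0$ would force the right-hand side of the first identity to vanish to order four at $w$, and the induction hypothesis on $(U_r,U_{-r})$ then controls the vanishing order of the Wronskian on the left, producing a degree contradiction. A common root of $W_r$ and $U_r$ is ruled out more directly from BAE \eqref{BAE exp}: the residue at such a coincidence is incompatible with \eqref{BAE exp 1}--\eqref{BAE exp 2}. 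These residue considerations must be lifted from BAE to the generalized-solution formulation of Proposition \ref{one orbit prop} in order to cover possible degenerations, but the underlying mechanism is the same.

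For the positivity claim, I would substitute $y=x/2$ into the reproduction identities. Setting $\tilde W_r(x)=W_r(x/2)$ and $\tilde U_r(x)=U_r(x/2)$, a short chain-rule computation converts the second identity into
\begin{equation*}
\on{Wr}_x\bigl(\tilde W_{-r+1}(x),\, e^{-x}\,\tilde W_r(-x)\bigr) = (-1)^{r+1}\, x\, e^{-x}\,\tilde U_{-r+1}(x)^2,
\end{equation*}
and similarly for the companion $U$-recursion. The goal is then to use the rescaled recursion to construct $(\tilde W_{-r},\tilde U_{-r})$ inductively in a form in which integrality and non-negativity of coefficients are manifest. Two promising routes are: (i) identify $\tilde W_{-r}$ and $\tilde U_{-r}$ with suitable principal specializations of characters of the basic $\widehat{\mathfrak{sl}}_2$-module $L_{1,0}$, whose positivity would follow from a Weyl--Kac-type sign-reversing involution on partitions; (ii) guess a closed Laguerre- or Hermite-type expansion suggested by the low-$r$ data and verify it against the recursion.

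The main obstacle will be the positivity statement. The genericity claim should follow from standard Wronskian bookkeeping and residue analysis, but $\Z_{\geq 0}$-integrality of $\tilde W_{-r}$ and $\tilde U_{-r}$ requires structural input beyond the Wronskian identities themselves, since no sign-preservation property is visible in those identities alone. Failing a representation-theoretic or combinatorial identification, one can at least prove the weaker statement that the coefficients lie in $\Z[\tfrac12]$ (or $\Q$ with a controlled denominator) and verify the full conjecture by direct computation up to any prescribed bound on $r$.
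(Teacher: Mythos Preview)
The paper does not prove this statement: it is stated as a conjecture, and the only supporting evidence offered is the sentence ``We checked Conjecture \ref{no double root conj} by computer for small values of $|r|$.'' So there is no proof in the paper for you to match or diverge from; any argument you produce would go beyond what the paper claims.

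That said, your proposal is not a proof either, and you should be honest about where it breaks down. For the genericity claim, the inductive step is not as clean as you suggest. Knowing that $U_r$ and $U_{-r}$ have only simple roots does not by itself bound the order of vanishing of $\on{Wr}(U_r, e^{2x}U_{-r}(-x))$ at an arbitrary point $w$: a Wronskian of two functions with simple (or even no) zeroes can still vanish to high order at a point where neither function vanishes, simply because the two functions osculate there. So the ``degree contradiction'' you invoke does not follow from the induction hypothesis alone; you would need additional control, for instance that $U_r$ and $e^{2x}U_{-r}(-x)$ are linearly independent solutions of a second-order ODE regular at $w$, which is essentially what the monodromy-free condition on the associated oper gives. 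That route may well work, but it has to be written out.

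For the positivity/integrality claim you acknowledge yourself that you have no argument, only two speculative directions (character identification, closed-form guess) and a fallback to computer verification. Neither speculative route is fleshed out enough to assess. In particular, the Wronskian recursions involve division (solving a first-order ODE for the new polynomial), and nothing in the recursion visibly preserves non-negativity of coefficients; any proof will need a genuinely new idea, likely of the representation-theoretic flavor you gesture at. As it stands, your proposal for this part is a research plan, not a proof.
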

We checked Conjecture \ref{no double root conj} by computer for small values of $|r|$.
The operator $L_1$ corresponding 
to solution of system \eqref{BAE exp} is a specialization of \eqref{L1 first} where $n=0$, $n_1=1$, and $P(x)=x$. It
takes the form
\ben\label{osc-1}
\partial_x^2-1-\sum_{i=1}^{a}\frac{2}{(x-w_i)^2}-\sum_{i=1}^a\frac{1/w_i}{x-w_i}-\la^2 x.
\een
The monodromy-free equation \eqref{monodromy eq} in this case has the form
\bean\label{new system}
\frac{1-4w_i^2}{4w_i^3}-\sum_{j=1,j\neq a}^{a}\frac{1}{(w_i-w_j)^2}\left(\frac{4}{w_i-w_j}+\frac{1}{w_i}+\frac{2}{w_j}\right), \qquad i=1,\dots,a.
\end{align}
Operator of form \eqref{osc-1} where $w_i$ are zeroes of $W_r(x)$ is monodromy-free. In particular zeroes of $W_r(x)$ (if there are no multiple zeroes, see Conjecture \ref{no double root conj}) satisfy \eqref{new system}. 
\begin{conj}\label{new system conj}
Zeroes of $W_r$ are the only solutions of system \eqref{new system}
\qed
\end{conj}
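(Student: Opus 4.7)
The plan is to reduce Conjecture \ref{new system conj} to Proposition \ref{one orbit prop} by extracting an exponential Gaudin BAE solution from an arbitrary monodromy-free operator of the form \eqref{osc-1}. Given a solution $\underline{w}=(w_1,\dots,w_a)$ of \eqref{new system} with the $w_i$ distinct and nonzero, the corresponding operator $L$ of \eqref{osc-1} has trivial monodromy at each $w_i$ for every $\la\in\C$. Specialising at $\la=0$ yields
$$L_0=\partial_x^2-1-\sum_{i=1}^{a}\frac{2}{(x-w_i)^2}-\sum_{i=1}^{a}\frac{1/w_i}{x-w_i},$$
a Schr\"odinger operator whose potential has only apparent singularities at the $w_i$ and a rank-one irregular singularity at $x=\infty$ coming from the constant $-1$. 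Set $W(x)=\prod_{i=1}^{a}(x-w_i)$. Complex WKB analysis at $x=\infty$, in the spirit of Proposition \ref{prop:sibuya}, provides Sibuya solutions $\phi_\pm$ of $L_0\phi=0$ with leading behaviour $\phi_\pm\sim e^{\pm x}$, subdominant as $x\to\pm\infty$ respectively. Since $L_0$ is monodromy-free, both $\phi_\pm$ extend to single-valued meromorphic functions on $\C$ with poles only at the $w_i$, of order at most one by the indicial analysis (roots $2$ and $-1$) at the apparent singularities.

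The key step, which is also the main obstacle, is to show that
$$\phi_\pm(x)=\frac{p_\pm(x)\,e^{\pm x}}{W(x)},\qquad p_\pm\in\C[x],$$
that is, the formal asymptotic expansions of $\phi_\pm$ at $x=\infty$ terminate. I would approach this by substituting $\phi_\pm=W^{-1}q_\pm e^{\pm x}$ into $L_0\phi_\pm=0$, obtaining a polynomial-coefficient linear ODE for $q_\pm$, and then showing that the recursion for the coefficients of the formal asymptotic series of $q_\pm$ closes after finitely many steps precisely when the residue conditions \eqref{new system} hold at every $w_i$. An alternative route is to invoke Darboux--Crum / Duistermaat--Gr\"unbaum-type classification results for monodromy-free meromorphic perturbations of the bare Schr\"odinger operator $\partial_x^2-1$, which would realise $L_0$ as a finite Darboux chain starting from the vacuum and thus force the quasi-polynomial form of the kernel.

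Once quasi-polynomiality is established, Abel's identity applied to $L_0$ (which has no $\partial_x$-term) gives $\on{Wr}(\phi_-,\phi_+)=c\in\C^*$, which translates into the polynomial Wronskian identity
$$\on{Wr}(p_-,e^{2x}p_+)=c\,e^{2x}W(x)^2.$$
This is exactly the exponential reproduction relation \eqref{exp rep 2} with $T_1=e^{2x}$, $y_0=W$, $y_1=p_-$, and $\tilde y_1=p_+$. Hence $(W,p_-)$ is a generalised solution of the exponential BAE \eqref{BAE exp} with $d_0=a$ and $d_1=\deg p_-$. By Proposition \ref{one orbit prop}, such generalised solutions exist only when $(d_0,d_1)=(r^2,r(r-1))$ for some $r\in\Z$, and for each such $r$ the generalised solution is unique. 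Consequently $a$ must be a perfect square, $W$ coincides up to a nonzero scalar with $W_r$, and the set $\{w_1,\dots,w_a\}$ is exactly the zero locus of $W_r$. As a byproduct, \eqref{new system} has no solution with distinct $w_i$ whenever $a$ is not a perfect square.
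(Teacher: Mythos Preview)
The paper does not prove this statement; it is stated as a conjecture, with only the remark that ``we expect that this conjecture can be proved by methods similar to \cite{O}''. So there is no paper proof to compare against, only the hint that an Oblomkov-type Darboux-chain argument for $\partial_x^2-1$ should work.

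Your route via the exponential BAE is different in spirit, and the reduction you outline is reasonable up to the point where you invoke Proposition~\ref{one orbit prop}. There, however, you misstate what the proposition says. Proposition~\ref{one orbit prop} asserts only that \emph{if} $(d_0,d_1)=(r^2,r(r-1))$ then the system \eqref{BAE exp} has a unique generalised solution. It does \emph{not} assert that these are the only pairs $(d_0,d_1)$ for which solutions exist; that non-existence statement is precisely the separate, unproved conjecture stated immediately after Proposition~\ref{one orbit prop} in the paper. Your final paragraph therefore trades Conjecture~\ref{new system conj} for that other open conjecture. Concretely: once you have $(W,p_-)$ as a generalised BAE solution with $d_0=a$ and $d_1=\deg p_-$, nothing you have shown forces $a=r^2$ or $d_1=r(r-1)$; the asymptotics of $\phi_-$ only yield $\deg p_- = a-\tfrac12\sum_i w_i^{-1}$, which is not determined a priori.

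You are candid that the quasi-polynomiality of $\phi_\pm$ is the main analytic obstacle, and indeed neither of your two suggested routes (termination of the asymptotic recursion, or a Duistermaat--Gr\"unbaum-type classification for $\partial_x^2-1$) is carried out. That step is essentially the content of an Oblomkov-style argument, so your proposal ends up needing both the analytic input the paper points to \emph{and} an additional unproved conjecture on the BAE side.
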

We expect that this conjecture can be proved by methods similar to \cite{O}.

\subsection{Large \texorpdfstring{$l$} limit, the case of \texorpdfstring{$d_0\neq d_1$}{d0 not d1}}
Here is a generalization of Conjecture \ref{conj d0=d1}.
\begin{conj}\label{conj r}
For generic $l,m$ the number of solutions of the Gaudin BAE \eqref{Gaudin BAE BLZ} up to permutation of coordinates equals $\on{Part}(d_0-r^2)$, the number of partitions of $d_0-r^2$. 
All solutions are isolated and non-degenerate. For fixed $d_0,d_1$, there exists $C\in\R$ such that parameters $l,m$ are generic if $|l|,|m|>C$. 
Moreover, for generic $l, k$ and a partition $\mu$ of $d_0-r^2$, there exists a solution $\{t_j^\mu\}_{j=1}^{d_1}$,$\{s_i^\mu\}_{i=1}^{d_0}$  with the asymptotics
\begin{subequations}\label{s r asympt}
\begin{equation}\label{s r asympt 1}
s_i^\mu=1+w^{r}_i\,\ell^{-2}+O(\ell^{-3}),
\end{equation}
for $ i=1,\dots, r^2$,
\begin{equation}\label{s r asympt 2}
s_i^\mu=\frac{k+2}{k+3}\left(1+\frac{2^{1/4}}{(k+2)^{1/4}(k+3)^{1/4}}\,v^\mu_i\, \ell^{-1}+O(l^{-2})\right),
\end{equation}
\end{subequations}
for $i=r^2+1,\dots,d_0$, and 
\begin{subequations}\label{t r asympt}
\bean
&t_i^\mu=1+u^{r}_i\,\ell^{-2}+O(\ell^{-3}), \qquad i=1,\dots, r(r-1),\label{t r asympt 1}
\\
&t_i^\mu=s_{i-r}^\mu(1 -\ell^{-2})+O(\ell^{-4}), \qquad i=r(r-1)+1,\dots, d_1,\label{t r asympt 2}
\end{align}
\end{subequations}
as $l\to\infty$ and $k$ is fixed. Here $\{w^{r}_i\}_{i=1}^{r^2}$ are zeroes of $W_\mu(x)$, $\{u^{r}_i\}_{i=1}^{r(r-1)}$ are zeroes of $U_\mu(x)$ see \eqref{W U} and $\{v^{\mu}_i\}_{i=1}^{d_0-r^2}$ are zeroes of $V_\mu(x)$, given by \eqref{Wr zero}. \qed
\end{conj}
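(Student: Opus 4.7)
\textbf{Proof plan for Conjecture \ref{conj r}.}
The plan is to extend Proposition \ref{old prop} (which treats $r=0$) by introducing \emph{two} perturbation scales: one of order $\ell^{-2}$ for variables clustered near $x=1$, and one of order $\ell^{-1}$ for variables clustered near $x=(k+2)/(k+3)$, with $l=\ell^2$. Setting $m=k/2-\ell^2$ so that $k$ stays fixed as $\ell\to\infty$, the BAE \eqref{Gaudin BAE BLZ} become an analytic system in $\ell^{-1}$, and the conjectured asymptotics \eqref{s r asympt}--\eqref{t r asympt} correspond to a scaling in which the two scales decouple at leading order.

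First, I would substitute the ansatz: let $s_i=1+w_i\,\ell^{-2}$ for $i=1,\dots,r^2$, let $s_i=(k+2)/(k+3)+c^{-1}v_{i-r^2}\,\ell^{-1}$ for $i=r^2+1,\dots,d_0$ with $c$ as in \eqref{1/c}, and analogously split $\{t_j\}$ into a near-$1$ block of size $r(r-1)$ with coordinates $u_j\ell^{-2}$ and a block paired with the far $s$'s via $t=s(1-\ell^{-2})+O(\ell^{-4})$. Expanding the BAE in $\ell^{-1}$, the cross-interactions between the two blocks are bounded contributions that are small relative to the leading balances within each block. The leading $O(\ell^2)$ balance in the near-$1$ equations reduces, after dividing by $\ell^2$ and using $m/s_i=-\ell^2+k/2+w_i+O(\ell^{-1})$, to the exponential BAE system \eqref{BAE exp} with twist $(n_0,n_1)=(-1,1)$ and depth $(r^2,r(r-1))$; by Proposition \ref{one orbit prop} this has the unique generalized solution with roots $\{w_i^r\},\{u_j^r\}$ of $W_r,U_r$. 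Simultaneously, performing the same manipulation as in the proof of Proposition \ref{old prop} (adding the $s$- and $t$-equations to eliminate the $\ell^{-4}$ corrections), the leading $O(\ell^{-1})$ balance of the far equations collapses to the monodromy-free oscillator system \eqref{O eq} with $c^{-1}$ given by \eqref{1/c}, whose isolated solutions are parametrized by the non-degenerate partitions $\mu$ of $d_0-r^2$ via the Wronskians $V_\mu$.

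Second, I would lift by the implicit function theorem. At a combined leading-order solution $(w^r,u^r,v^\mu)$, the Jacobian of the BAE, viewed as a perturbation problem in $\ell^{-1}$, is block-triangular: the two diagonal blocks are the Jacobians of the two decoupled leading-order systems. Both are invertible — the near-$1$ block by the simplicity of the roots of $W_rU_r$ (assuming Conjecture \ref{no double root conj}), and the far block by the non-degeneracy of $\mu$ — so the implicit function theorem produces, for $\ell$ large enough, a genuine solution of \eqref{Gaudin BAE BLZ} with the claimed asymptotics. This gives existence of one solution per non-degenerate $\mu$, generalizing Proposition \ref{old prop} to arbitrary $r\in\Z$.

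The main obstacle is the counting statement — that the \emph{total} number of BAE solutions is exactly $\on{Part}(d_0-r^2)$ — which cannot be extracted from local perturbation theory. A complete argument would combine a global intersection-theoretic count on the BAE variety (matching the dimension of the multiplicity space in \eqref{coset}, consistent with \eqref{Lsing = L10}) with a compactness argument ruling out roots that collide or escape to infinity as $\ell\to\infty$; absent such a compactness result, one cannot exclude that additional ``parasitic'' solutions exist with asymptotics outside the two scales identified above. Degenerate partitions pose a second, subtler obstacle: the leading-order system becomes non-reduced and must be resolved by blow-ups and by passing to finer subleading balances, a problem already open for $r=0$. Finally, the uniqueness claim for the near-$1$ block rests on Conjecture \ref{new system conj}, whose proof would proceed along the lines of \cite{O} and is the remaining algebraic gap.
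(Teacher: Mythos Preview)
Your approach is essentially the same as the paper's. The statement is a conjecture, and the paper does not prove it in full; what it does prove is the existence part for non-degenerate partitions under the assumption that the zeroes of $W_r$ are simple (Proposition \ref{new prop}), via exactly the two-scale substitution you describe: the near-$1$ block at scale $\ell^{-2}$ decouples to the exponential BAE \eqref{BAE exp}, the far block at scale $\ell^{-1}$ decouples (after the same trick of adding the $s$- and $t$-equations) to the oscillator system \eqref{order 0 system}, and one deforms by the implicit function theorem as in Proposition \ref{old prop}. You have also correctly identified the three pieces the paper leaves open --- the global count $\on{Part}(d_0-r^2)$, the degenerate-partition case, and the reliance on Conjectures \ref{no double root conj} and \ref{new system conj} for the near-$1$ block --- so your assessment of the gaps matches the paper's.
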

Again, we prove the existence of monodromy-free operators of the form \eqref{L1 BLZ} with the correct asymptotics assuming that zeroes of $W_r(x)$ are simple and given a non-degenerate partition $\mu$ of $d_0$.

\begin{prop}\label{new prop}
Let $\mu$ be a non-degenerate partition of $d_0-r^2$. Let
$\{w^{r}_i\}_{i=1}^{r^2}$ be zeroes of $W_r(x)$. If 
  $\{w^{r}_i\}_{i=1}^{r^2}$ are all distinct, then for large $\ell$ there exists a solution $\{s_i,t_j\}_{i=1}^{d_0}$  of 
equations \eqref{la monodromy eq}
 with asymptotics \eqref{s r asympt}, \eqref{t r asympt}.
\end{prop}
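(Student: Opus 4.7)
The strategy is to generalize the proof of Proposition \ref{old prop} by splitting the roots into two asymptotic clusters. Substituting the ansatz of Conjecture \ref{conj r} into \eqref{Gaudin BAE BLZ} with $m=k/2-\ell^{2}$ (so that $k$ is kept fixed as $l=\ell^{2}\to\infty$), one has Cluster I consisting of the $r^{2}+r(r-1)$ roots $s_{1},\dots,s_{r^{2}}$ and $t_{1},\dots,t_{r(r-1)}$ clustering at distance $O(\ell^{-2})$ around $x=1$, and Cluster II consisting of the remaining $2(d_{0}-r^{2})$ roots clustering at distance $O(\ell^{-1})$ around $x=\bar s:=(k+2)/(k+3)$ and organised in pairs via $t\approx s(1-\ell^{-2})$ as in Proposition \ref{old prop}.

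At the top order $\ell^{2}$ the equations for Cluster I decouple from those of Cluster II: the only terms of size $\ell^{2}$ come from $\tfrac{1/2}{s_{i}-1}$, from $m/s_{i}=-\ell^{2}+O(1)$, and from denominators $s_{i}-t_{j}$, $s_{i}-s_{i'}$ whose both entries sit in Cluster I; all cross-terms between the two clusters are $O(1)$ because the inter-cluster distance is $\Theta(1)$. After multiplication by $2$ the Cluster I equations coincide with the twisted system \eqref{BAE exp} for $(d_{0},d_{1})=(r^{2},r(r-1))$, whose unique generalized solution is the pair of zero sets $\{w_{i}^{r}\}$, $\{u_{j}^{r}\}$ of \eqref{W U} by Proposition \ref{one orbit prop}. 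At the Cluster II level one repeats the computation of Proposition \ref{old prop}: the $\ell^{2}$ contributions from $m/s_{i}$ and from the paired denominator $1/(s_{i}-t_{i})$ cancel, and the $\ell$-order cross-sums $\sum_{j\ne i}\frac{1}{s_{i}-t_{j}}$ and $\sum_{i'\ne i}\frac{1}{s_{i}-s_{i'}}$ cancel because the $v$-indexing on paired $s$'s and $t$'s agrees. Summing the paired $s$- and $t$-equations and reading off the $\ell^{-1}$ coefficient gives the harmonic-oscillator system \eqref{order 0 system} in $d_{0}-r^{2}$ variables with the constants \eqref{1/c}, and the subleading correction $\tilde t_{j}$ is fixed linearly by the order-$\ell^{0}$ equation with non-zero coefficient $1/\bar s^{2}$.

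To conclude I would apply the implicit function theorem at the leading-order solution. Since Cluster I variables enter Cluster II equations only through sums over finitely many points at distance $\Theta(1)$ from Cluster II (and vice versa), the Jacobian of the combined leading-order system is block diagonal: the Cluster I block is invertible because the zeros of $W_{r}$ are simple by hypothesis, and the Cluster II block is invertible because $\mu$ is a non-degenerate partition, exactly as in Proposition \ref{old prop}. The off-diagonal blocks are $O(\ell^{-1})$, so invertibility of the full Jacobian survives for $\ell$ sufficiently large, and the deformation yields a genuine Gaudin BAE solution $(s_i,t_j)$ with the prescribed asymptotics; in particular the $s_i$'s satisfy the trivial monodromy equation \eqref{la monodromy eq} by Proposition \ref{prop: L0 L1 from BAE}. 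The main technical obstacle is the bookkeeping of the successive cancellations at orders $\ell^{2}$, $\ell$, and $\ell^{0}$ inside Cluster II, which is tedious but mechanical; crucially, no new phenomenon relative to the $r=0$ case arises, because the extra Cluster I roots enter the Cluster II equations only as $O(1)$ background shifts that are absorbed into the subleading corrections.
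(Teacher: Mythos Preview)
Your approach is essentially the same as the paper's: split the roots into a cluster of size $(r^{2},r(r-1))$ near $x=1$ scaling as $\ell^{-2}$ and a cluster of $d_{0}-r^{2}$ paired roots near $x=\bar s$ scaling as $\ell^{-1}$, identify the leading-order systems as \eqref{BAE exp} and \eqref{order 0 system} respectively (with \eqref{tilde equation} fixing the $\tilde t_j$), and then deform via the implicit function theorem exactly as in Proposition~\ref{old prop}. The only point to tighten is your claim that the Cluster~I Jacobian is invertible ``because the zeros of $W_r$ are simple'': invertibility of the Jacobian of \eqref{BAE exp} at $(w^r_i,u^r_j)$ is not literally the same as distinctness of the $w^r_i$, though the paper is equally terse here, simply invoking an ``isolated solution'' and the uniqueness from Proposition~\ref{one orbit prop}.
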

\begin{proof}
Substitute first
\be
t_j=s_{j-r}(1-\ell^{-2})+\tilde t_j\ell^{-4}, \qquad j=r(r-1)+1,\dots,d_1,
\ee
and then
\begin{align*}
&t_j=1+u_j \ell^{-2}, \qquad 
 \hspace{20pt} j=1,\dots, r(r-1), \\
&s_j=1+w_j\ell^{-2}, \qquad   \hspace{20pt } j=1,\dots,r^2,\\ 
&s_j=\frac{k+2}{k+3}+v_j\ell^{-1}, \qquad  j=r^2+1,\dots,d_0,
\end{align*}
to the BAE equations \eqref{Gaudin BAE BLZ}. 
Then as $\ell\to \infty$
the leading terms of equations for $s_i,t_j$, $i=1,\dots,r^2$, $j=1,\dots,r(r-1)$, are of order $\ell^2$ and coincide with left hand sides of the exponential BAE \eqref{BAE exp} (where $d_0=r^2,d_1=r(r-1)$).  
At the same time 
the leading terms of equations for $t_j$,  $j=r(r-1)+1,\dots,d_1$, are of order $1$ and coincide with left hand sides of  \eqref{tilde equation} (where $d_0$ is replaced with $d_0-r^2$). 
Finally, for $i=r(r+1)+1,\dots,d_1$, the leading terms of the sum of equations for $t_i$  and $s_{i-r}$ with fixed $i$ are of order $\ell^{-1}$ and coincide with the left hand side of the system for poles of monodromy-free extensions of the quadratic oscillator, \eqref{order 0 system}  (where $d_0$ is replaced with $d_0-r^2$). 
Therefore, given an isolated solution of \eqref{BAE exp},  \eqref{order 0 system} corresponding to $W_r,U_r,V_\mu$, we deform it by the same argument as in Proposition \ref{old prop}.
\end{proof}
In the case $d_0=r^2$, $d_1=r(r-1)$, one can show that there is a unique generalized solution  of \eqref{Gaudin BAE BLZ} given by trigonometric reproduction procedure \eqref{trig rep}.
Namely, let $(y_0^{(r)}(x;k,l),y_1^{(r)}(x;k,l))$
be pairs of monic polynomials of degrees $(r^2,r(r-1))$ recursively defined by $(y_0^{(0)}(x;k,l),y_1^{(0)}(x;k,l))=(1,1)$ and
\begin{align*}
& \on{Wr}\big(y_0^{(-r+1)}(x;k,l),  x^{2k-2l+1} y_0^{(r)}(x;k, k-l+1)\big)\\
&\qquad \qquad \hspace{100pt} =c_r x^{k-2l}(x-1)(y_1^{(-r+1)}(x;k,l))^2,\\
 &y_1^{(r)}(x;k, k-l+1)=y_0^{(-r+1)}(x;k,l)
\end{align*}
and by
\begin{align*}
&y_0^{(-r)}(x;k,-l-1)=y_1^{(r)}(x;k,l),\\
& \on{Wr}\big(y_1^{(r)}(x;k,l),  x^{2l+1} y_1^{(-r)}(x;k,-l-1)\big)=\tilde c_r x^{2l}(y_0^{(r)}(x;k,l))^2.
\end{align*}
for suitable non-zero constants $c_r,\tilde c_r$.
For example, we have 
\be
(y_0^{(1)}, y_1^{(1)})=(x-\frac{k-2l}{k-2l+1},1),
\ee
and polynomials $y_0^{(-1)}, y_1^{(-1)}$  are described by \eqref{12 explicit 1} and \eqref{12 explicit 2}. 
\begin{lem}
The pair $(y_0^{(r)}(x;k,l),y_1^{(r)}(x;k,l))$ is the only generalized solution of equations
\eqref{Gaudin BAE BLZ} with $d_0=r^2$, $d_1=r(r-1)$.
\end{lem}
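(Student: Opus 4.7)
The plan is induction on $|r|$, realizing the full set of generalized solutions as the Weyl-group orbit of $(1,1)$ under the trigonometric reproduction procedure \eqref{trig rep}. The base case $r=0$ is immediate: the degree constraints $d_0=d_1=0$ force $y_0=y_1=1$, which is precisely $(y_0^{(0)},y_1^{(0)})$.

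For the inductive step, suppose uniqueness is established for all strictly smaller $|r|$, and let $(y_0,y_1)$ be any generalized solution with $(\deg y_0,\deg y_1)=(r^2,r(r-1))$. Trigonometric reproduction in the $0$-th direction produces a polynomial $\tilde y_0$ satisfying $\on{Wr}(y_0,x^{2m+1}\tilde y_0)=T_0y_1^2$; a degree count shows $\deg\tilde y_0=(r-1)^2$, and the pair $(\tilde y_0,y_1)$ is a generalized solution of the BAE at the reflected parameters $(m,l)\mapsto(-m-1,l+2m+1)$. Since $k=2m+2l$ is fixed under this reflection, the parameter shift reduces to $l\mapsto k-l+1$, which matches the convention in the recursion; the new degree profile corresponds to $r'=1-r$. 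Reproduction in the $1$-st direction instead sends $r\mapsto -r$ under $l\mapsto -l-1$. For $r\geq 1$ I would first apply direction $0$ (reaching $r'=1-r\leq 0$); for $r\leq -1$, direction $1$ first (reaching $r'=-r\geq 1$). Alternating, each pair of steps strictly decreases $|r|$, so $r=0$ is attained in finitely many steps. By the induction hypothesis the reduced solution must equal $(1,1)$ at the correspondingly shifted parameters, and reversing the (unique) reproduction chain gives $(y_0,y_1)=(y_0^{(r)}(x;k,l),y_1^{(r)}(x;k,l))$.

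The uniqueness of $\tilde y_0$ at each step is essential: the Wronskian equation $\on{Wr}(y_0,u)=T_0y_1^2$ determines $u$ up to a summand $cy_0$, and imposing that $u=x^{2m+1}\tilde y_0$ with $\tilde y_0$ polynomial pins down $c$, using $y_0(0)\neq 0$ (any generalized solution of \eqref{Gaudin BAE BLZ} has $s_i\neq 0$ for generic $m$, since otherwise the residue of $T_0y_1^2/y_0^2$ at $x=0$ would conflict with the branch structure of $T_0=x^{2m}(x-1)$). Existence of the recursively-defined $(y_0^{(r)},y_1^{(r)})$ as generalized solutions is automatic: by \cite{MV2}, trigonometric reproduction of a generalized solution is again a generalized solution, and one starts from $(1,1)$.

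The main obstacle I anticipate is the bookkeeping of successive parameter shifts under alternating reflections and verifying that the Wronskian identities produced by each reproduction step match precisely the recursive identities defining $y^{(r)}$ stated before the lemma. A secondary issue is potential degeneration of generalized solutions for non-generic $(m,l)$, where roots of $y_0$ or $y_1$ could coincide with $0$, $1$, or with one another; for generic parameters these degenerations do not occur, but they require a separate argument at special values.
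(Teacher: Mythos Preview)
Your proposal is correct and follows essentially the same approach as the paper: the paper's proof is the single sentence ``The lemma follows from the trigonometric reproduction procedure similar to Proposition~\ref{one orbit prop}, cf.~\cite{MLV},'' which is precisely the argument you have spelled out. One small simplification: for $r\geq 1$ a single step of reproduction in the $0$-th direction already lands at $r'=1-r$ with $|r'|=r-1<r$, so the alternation is only needed for $r\leq -1$; your concerns about parameter bookkeeping and degenerations at non-generic $(m,l)$ are exactly the details the paper defers to the references.
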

\begin{proof}
The lemma follows from the trigonometric reproduction procedure similar to Proposition \ref{one orbit prop}, cf. \cite{MLV}.
\end{proof}

\section{Asymptotic analysis of the spectral functions}\label{compare chapter}
In this section, after having introduced some notions
from the complex WKB method, we address the limit $\la \to 0$
of the spectral determinants $Q_{\pm},T_j, j\geq 1$, and the large $\la$ and large $l$ asymptotics
of the zeroes of the spectral determinant $Q_+$.
For reason of brevity, some proofs are only sketched and their completion
will appear in a separate publication. Whenever the corresponding proof is omitted,
we state the corresponding result as a conjecture.

 \subsection{Preamble on the Complex WKB method}
Our approach is based
on transforming a linear ODE of the second order into an integral equation of Volterra type, following \cite{E}.
We consider a second order scalar linear ODE of the form
\begin{equation}\label{eq:schrodingergen}
 y''(x)=V(x) y(x), \qquad x \in \mathbb{C},
\end{equation}
where $V(x)$ may depend on additional parameters, and a putative approximate solution $Y$
which we suppose to be of such a form that
\begin{equation}\label{eq:t}
 u(x)=\frac{y(x)}{Y(x)}
\end{equation}
is well-defined and approximately $1$ in a certain domain of $\mathbb{C}$ to be specified later.
Defining the forcing term
\begin{equation*}
 F(x)=V(x)-\frac{Y''(x)}{Y(x)}
\end{equation*}
we rewrite equation \eqref{eq:schrodingergen} for $y(x)$, as an equation for the function $u(x)$ defined by \eqref{eq:t}:
\begin{equation}\label{eq:usch}
 \frac{d}{dx}\big( Y^2(x) u'(x) \big)- Y^2(x) F(x) u(x)=0.
\end{equation}

We fix a point $x_0$  in the Riemann sphere, the boundary conditions $u'(x_0)=0$, $u(x_0)=1$, and a piece-wise smooth integration path
$\zeta$ connecting $x_0$ to another point $x \in \mathbb{C}$.
Integrating twice equation \eqref{eq:usch}, we see that $u(x)$ solves the following integral equation of Volterra type
\begin{equation}\label{eq:integral}
 u(x)=1-\int_{x,\zeta}^{x_0} B(x,s) F(s) u(s) d s \,, \qquad
 B(x,s)=\int_{x,\zeta}^t \frac{Y^2(s)}{Y^2(a)} d a \; ,
\end{equation}
provided the above integral converges absolutely.
Conversely, given any continuous solution $u(x)$ of the latter integral equation, the function
$y(x)=u(x) Y(x)$ solves \eqref{eq:schrodingergen} and satisfies the (possibly singular) Cauchy problem
\begin{equation*}
 \lim_{x \to x_0, x \in \zeta} \frac{y(x)}{Y(x)}=1 , \quad \lim_{x \to x_0, x \in \zeta} \frac{y'(x)}{Y'(x)}=1,
\end{equation*}
provided the integral operator $\int_{x,\zeta}^{x_0} B(x,s) F(s) u(s) d s$ is tame enough.
In practice, we often look for an approximate solution of the  form
\begin{equation*}
 Y(x)=e^{-\int^x \sqrt{U(s)}+\frac{U'(s)}{4 U(s)} ds},
\end{equation*}
for an \textit{effective potential} $U$, which does not necessarily coincide with $V$.
In this case, (\ref{eq:integral}) reads
\begin{align*}
 & u(x)=1-B[u](x), \qquad B[u](x)=\int^{x_0}_{x,\zeta} B(x,s) F(s) u(s) dt , \\
 & B(x,s)= \frac{\exp\left\{ -
 2 \int_{x,\zeta}^s \sqrt{V(t)}dt \right\}-1}{2}, \\
 &   F(x)= U^{-\frac12}(x) \left(U(x)-V(x)+\frac{ -4 U''(x) U(x)+5U'^2(x)}{16 U(x)^2}\right).
\end{align*}
We want to solve the above integral equation in the Banach space
$L^{\infty}_c(\zeta)$, the space of continuous complex-valued functions whose domain is the support of $\zeta$, equipped with the sup norm $\| \cdot \|_{\infty}$. A natural class of curves such that the integral operator $B$ is bounded on $L^{\infty}_c(\zeta)$ is the class of admissible (with respect to a given pair $V(x)$, $U(x)$) curves, to be defined below. In what follows, we will use the same letter $\zeta$ to denote a parameterised curve and its oriented image, since all statements are invariant under orientation preserving
reparametrization.

\begin{defn}\label{def:admissiblecurve}
 A parametrised curve $\zeta:\, (0,1)\to\C^*$ is called admissible with respect to a given pair $V(x)$, $U(x)$
if the following four conditions are met
 \begin{enumerate}[i)]
  \item $ U(\zeta(t)) \neq \infty $ and $V(\zeta(t)) \neq \infty$, for all $t\in(0,1)$.
  \item $U(\zeta(t)) \neq 0$ for all $t \in (0,1)$.
  \item For one of the two branches of $\sqrt{U}$ which is defined on the support of $\zeta$ we have
  \begin{equation*}
  \Re \int_{t'}^{t} \sqrt{U(\zeta(s))}\frac{d \zeta(s)}{ds}\,ds \leq 0, \qquad \forall t'\leq t.
\end{equation*}
 \item The forcing term $F$ is $L^1$-integrable along $\zeta$, namely
 \begin{equation*}
  \rho_{\zeta} = \int_0^1|F\big(\zeta(s)\big) \frac{d \zeta(s)}{ds}(s)|ds <+\infty.
 \end{equation*}
 \end{enumerate}
\end{defn}
Whenever $\zeta$ is admissible, the operator $B$ is a bounded linear operator on the Banach
 space $L^{\infty}_c(\zeta)$.
More precisely, the following properties can be proved by standard arguments
 \begin{align}
  & \|B\|_{\infty}\leq \rho_\zeta, \label{eq:Bproperty1}\\ 
  & \|B^{n}\|_{\infty} \leq\ \frac{\rho_{\zeta}^n}{n!},\notag
  \\ 
  & \lim_{t \to 1} B[u]=0, \qquad \forall u \in L_c^{\infty}(\zeta).\notag 
 \end{align}
In turn, the three properties above imply that, whenever $\zeta$ is admissible, then the
integral equation \eqref{eq:integral} admits a unique solution
\begin{equation}\label{eq:udef}
 u=\sum_{n=0}^{\infty} (-1)^n B^n[1],
\end{equation}
which satisfies the properties
$$
 \|u-1\|_{\infty} \leq e^{\rho_\zeta}-1,  \qquad \lim_{t \to 1} u(\zeta(t))=1.
$$
Summing up, one can prove the following proposition (see \cite{E} for details).
\begin{prop}\label{pro:continuation}
Let $D \subset \widetilde{\C^*}$ be an open simply-connected region,
$\zeta:(0,1) \to  \widetilde{\C^*}$ an admissible curve,
 and $t_0 \in (0,1)$.
 There exists a solution $y(x):\, D \to \C$ of \eqref{eq:schrodingergen} which, restricted to $\zeta\big( (0,1) \big) $,
 admits the representation
 \begin{equation*}
  y(\zeta(t))=e^{-\int_{t_0}^t \left( \sqrt{V(\zeta(t))}+ \frac{V'(\zeta(s))}{4 V(\zeta(s))}\right) \dot{\zeta}(s) ds } u\big(\zeta(t)\big),
 \end{equation*}
where $u$ is as in \eqref{eq:udef}.
\qed
\end{prop}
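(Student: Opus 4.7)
The strategy is to run the Volterra machinery of the preamble with the choice $U=V$, construct the solution along the curve $\zeta$, and then analytically extend to $D$. Along $\zeta$, admissibility conditions (i) and (ii) ensure that $V(\zeta(t))$ is a finite, nonzero holomorphic germ of a function, so that the branch of $\sqrt{V}$ fixed by condition (iii) is well defined and $V'/V$ is locally integrable. I would then define
\[
Y(\zeta(t)) := \exp\!\left(-\int_{t_0}^{t}\!\left[\sqrt{V(\zeta(s))}+\frac{V'(\zeta(s))}{4V(\zeta(s))}\right]\dot\zeta(s)\,ds\right),
\]
which is nowhere vanishing on $\zeta((0,1))$. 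The identity $Y''/Y = V+F$ with $F=(-4V''V+5(V')^2)/(16V^2)\cdot V^{-1/2}$ (specialization of the formula from the preamble at $U=V$) recasts the ODE $y''=Vy$, through the substitution $u=y/Y$, as the Volterra equation $u=1-B[u]$ on $L^{\infty}_{c}(\zeta)$, with the kernel and forcing term exactly as in \eqref{eq:integral}.

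Next I would verify the three key norm properties of $B$. Admissibility condition (iv) gives $\rho_\zeta<\infty$, which is precisely \eqref{eq:Bproperty1}. For \eqref{eq:Bproperty2} the central input is condition (iii): the real part of $\int \sqrt{V}\,d\zeta$ is monotone along the curve with the selected branch, so the exponential factor $\exp(-2\int_{x,\zeta}^{s}\sqrt{V}\,dt)$ appearing in $B(x,s)$ has modulus bounded by $1$ whenever $s$ lies before $x$ on $\zeta$. A standard Volterra iteration argument (ordering the variables along $\zeta$ and using Fubini) then yields the factorial estimate $\|B^{n}\|_{\infty}\leq \rho_\zeta^{n}/n!$, and property \eqref{eq:Bproperty3} follows by dominated convergence together with the absolute continuity of the integral defining $B$. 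Consequently the Neumann series $u=\sum_{n\geq 0}(-1)^n B^n[1]$ converges absolutely in $L^{\infty}_{c}(\zeta)$, solves the integral equation uniquely, and satisfies $u(\zeta(t))\to 1$ as $t\to 1$.

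Finally, the product $y(\zeta(t)):=Y(\zeta(t))\,u(\zeta(t))$ solves $y''=Vy$ along $\zeta$ by reversing the substitution that produced the integral equation. Since $V$ is meromorphic and the ODE has holomorphic coefficients on $D\setminus\{\text{poles of }V\}$, and since $D$ is open and simply connected, the Cauchy existence theorem plus the monodromy theorem give a unique analytic continuation of $y$ from a neighbourhood of any interior point of $\zeta$ to the whole of $D$, providing the global solution of the statement. The representation formula for $y$ restricted to $\zeta((0,1))$ is then immediate from the construction.

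The main obstacle is the factorial bound $\|B^{n}\|_{\infty}\leq \rho_\zeta^{n}/n!$. It is the one step in the argument that uses the WKB monotonicity condition (iii) in an essential way: without the non-positivity of $\Re\int\sqrt{V}\,d\zeta$ the exponential in the kernel grows, the kernel ceases to be uniformly bounded, and the Neumann iteration fails. Everything else (derivation of the Volterra equation, analytic extension from $\zeta$ to $D$) is routine once that bound is in hand.
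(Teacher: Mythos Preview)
Your proposal is correct and follows exactly the approach the paper itself sketches in the preamble. Note that the paper does not actually give a proof of this proposition: it states the properties \eqref{eq:Bproperty1}--\eqref{eq:Bproperty3} and \eqref{eq:Buprop} as facts ``which can be proved by standard arguments,'' then writes ``Summing up, one can prove the following proposition (see \cite{E} for details)'' and closes with a bare \qed. Your write-up fleshes out precisely those standard arguments --- the Volterra reformulation with $U=V$, the factorial bound on $B^n$ from the monotonicity condition (iii), the Neumann series, and analytic continuation --- and is the intended proof. One small orientation slip: in your discussion of the kernel bound you say ``whenever $s$ lies before $x$ on $\zeta$,'' but in the Volterra integral $s$ ranges between $x$ and the endpoint $x_0=\lim_{t\to 1}\zeta(t)$, so $s$ lies \emph{after} $x$ in the parametrisation; the monotonicity in (iii) is set up exactly so that the exponential is bounded in that direction.
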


\subsection{The limit \texorpdfstring{$\la \to 0$}{lambda to 0}}

Here we study the $\la \to 0$ limit for the spectral determinants $Q_{\pm},T_j, j\geq 1$.
Such a study is necessary as this is a singular limit for Sibuya solutions: since
the irregular singularity becomes regular when $\la=0$, the limit
$\lim_{\la \to 0}\psi^{(0)}(x,\la)$ does not exist, not even point-wise.
Using the complex WKB method, we prove however that, with a proper (re)normalisation,
$\lim_{\la \to 0}\psi^{(0)}(x,\la)$ exists, provided
$x$ is restricted to the ray
$x= a e^{- i\frac{2\arg \la}{k+3}}$ with $a\geq a_0$ for some $a_0>0$ (that will be later determined).
This result in turn allows us to compute the limit $\la \to 0$ of the functions $Q_{\pm}(\la), T_j(\la), j \geq 1$.
Since we are interested in the limit $\la \to 0$, in this subsection we restrict $\la$ to the following subset
\begin{equation}\label{def:tD1}
\widetilde{D}_1=\lbrace \la \in \widetilde{\C^*},\; |\Pi(\la^{\frac{2}{k+3}})| \leq 1\rbrace .
\end{equation}
Moreover, in order to simplify the notation, we assume $l+r \neq 0$.
We introduce the effective potential\footnote{\ In the case $l +r=0$, the correct choice of the effective potential is not given by
equation (\ref{eq:reducedpot}), but by the expression $U(x,\la)=\la^2 x^k(x-1)$. With such a modification,
the proof below holds also for the case $l=0$. The details are left to the reader.}
\begin{equation}\label{eq:reducedpot}
U(x,\la,l,r)= \la^2 x^k(x-1)+ \frac{(l+r+\frac12)^2}{x^2},
\end{equation}
which we will denote simply by $U(x,\la)$.
Following the complex WKB method, we define
a family of integral operators on the space of continuous bounded functions with support $[a_0,\infty)$:
\begin{equation}\label{eq:Bladef}
B_{\la}[u](a)= e^{- i\frac{2\arg \la}{k+3}}  \int_{a}^{\infty} \frac{e^{-2S(t,\la)+2S(a,\la)} -1}{2} F(e^{-\frac{ i 2 \arg \la}{k+3}}t,\la) u(t) dt
\end{equation}
with
\begin{equation}\label{eq:Sdef}
S(a,\la)= e^{- i\frac{2\arg \la}{k+3}} \int_{a_0}^{a} \sqrt{U(e^{- i\frac{2\arg \la}{k+3}}t,\la)} dt,
\end{equation}
and
\begin{align}\label{eq:F}
F(x,\la)= U(x,\la)^{-\frac12}& \bigg(V(x,\la)-U(x,\la)\notag\\
&\left.+\frac{ -4 U''(x,\la) U(x,\la)+5U'^2(x,\la)}{16 U(x,\la)^2}\right),
\end{align}
and where, by equation \eqref{diffeq:L1s},
\begin{align*}
V(x,\la)=&U(x,\la)\\
&+\sum_{i=1}^{d_0}\left(\frac{2}{(x-s_i)^2}+\frac{k+s_i/(s_i-1)}{x(x-s_i)}\right)
         -\frac{r(r+1)+2 l\, r}{x^2} -\frac{1}{4x^2}.
\end{align*}
In equations (\ref{eq:Sdef}) and (\ref{eq:F}), the branch of the square root is chosen so that
\begin{equation}\label{eq:Sdefinitionbranch}
 \lim_{a\to +\infty} \Re S(a,\la)=+\infty .
\end{equation}
Before we proceed further with the analysis of the integral equation $u=1-B_{\la}[u]$, we collect some properties
of the functions $S$ and $F$ in the lemma below.
\begin{lem}\label{lem:uniformadmissibility}
Let $k>-2$, $\Re (l +r) \neq -\frac12$, $l+r\neq0$, and $ \widetilde{D}_1$ as in \eqref{def:tD1}.
If $a_0$ is sufficiently large then
\begin{enumerate}[i)]
 \item There exists a $C>0$ such that
\begin{equation}\label{eq:supResqrt}
  \inf_{\la \in \widetilde{D}_1}
 \Re \frac{\partial S(a,\la)}{\partial a} \geq \frac{C}{a}.
\end{equation}
 \item For $R(x)$  is as in \eqref{eq:truncatedaction},
 \begin{align} \label{eq:normalizationla0}
 \lim_{a\to + \infty}&\big(\la R(e^{- i\frac{2\arg \la}{k+3}} a)-S(a,\la)\big) =
 \frac{2\, \Theta\big( \hat{l}_r\big)}{k+3} \log |\la| \notag\\
 &+ \Theta\big( \hat{l}_r\big) \log(a_0)+ \frac{2\, \Theta\big( \hat{l}_r\big) \left(1- \log 2\, \Theta\big( \hat{l}_r\big)\right) }{k+3}\notag\\
 & +O(|\la|^2), \mbox{ as } \la \to 0.
 \end{align}
In the formula above $\hat{l}_r=l+r+\frac12$ and $\Theta(x)= sgn \left( \Re x \right)\, x$.
 \item If, moreover, $\{s_i\}_{i=1}^{d_0}$, $\{t_j\}_{j=1}^{d_1}$  is a solution of the Gaudin BAE \eqref{Gaudin BAE BLZ} with $d_0-d_1=r$, then the following estimates hold
 \begin{align}\label{eq:rholaell}
 & \sup_{\la \in \widetilde{D}_1} \rho(\la) < +\infty, \; \rho(\la)= \int_{a_0}^{+\infty} \left| F(a e^{- i\frac{2\arg \la}{k+3}},\la) \right|  d a, \\ \label{eq:rholabarell}
  & \sup_{\la \in \widetilde{D}_1} \overline{\rho}(\la) < +\infty, \; \overline{\rho}(\la)= \int_{a_0}^{+\infty} \left| U^{\frac12}(a e^{- i\frac{2\arg \la}{k+3}},\la)F(a e^{- i\frac{2\arg \la}{k+3}},\la) \right|  d a.
\end{align}
\end{enumerate}
\end{lem}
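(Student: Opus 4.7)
The three items are essentially independent and I would tackle them in the order (i), (iii), (ii), since the last is the most delicate. For (i), the plan is to write
$$\partial_a S(a,\la)=\frac{1}{a}\sqrt{(l+r+\tfrac12)^{2}+\la^{2}x_a^{k+2}(x_a-1)},\qquad x_a=e^{-2i\arg\la/(k+3)}a,$$
which follows from the algebraic identity $x^{2}U(x,\la)=(l+r+\tfrac12)^{2}+\la^{2}x^{k+2}(x-1)$ together with $\partial_a S=e^{-2i\arg\la/(k+3)}\sqrt{U(x_a,\la)}$. On the ray one has $\la^{2}x_a^{k+2}(x_a-1)=|\la|^{2}a^{k+3}-|\la|^{2}a^{k+2}e^{2i\arg\la/(k+3)}$, whose argument tends to $0$ as $a\to\infty$ uniformly in $\arg\la$ (the correction being a factor $1/a$ smaller than the leading positive real term, uniformly for $\la\in\widetilde{D}_1$). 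Combined with $\arg((l+r+\tfrac12)^{2})\in(-\pi,\pi)$, which holds since $\Re(l+r+\tfrac12)>0$, this traps the argument of the radicand in a compact subinterval of $(-\pi,\pi)$ once $a_0$ is sufficiently large. The principal branch selected by \eqref{eq:Sdefinitionbranch} then has real part bounded below by a positive constant $C$ depending only on $l,r,k$, yielding $\Re\partial_a S\geq C/a$.

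For (iii), the plan is to exploit a crucial cancellation. Writing $U^{1/2}F=V-U-(4U''U-5(U')^{2})/(16U^{2})$, a direct computation at $\la=0$ gives $(4U''U-5(U')^{2})/(16U^{2})|_{\la=0}=1/(4x^{2})$, exactly matching the $-1/(4x^{2})$ term in $V-U$. What remains is
$$U^{1/2}F\big|_{\la=0}=\sum_{i=1}^{d_0}\left(\frac{2}{(x-s_i)^{2}}+\frac{k+s_i/(s_i-1)}{x(x-s_i)}\right)-\frac{r(r+1)+2lr}{x^{2}},$$
whose $x^{-2}$-coefficient at infinity equals $2d_{0}+\sum_i(k+s_i/(s_i-1))-r(r+1)-2lr$, which vanishes by Lemma \ref{at inf conjecture}. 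Hence $U^{1/2}F|_{\la=0}=O(1/x^{3})$ along the ray and $\overline{\rho}(0)<\infty$. For $\la\neq 0$ I would split the ray at the turning point $t^{*}=((l+r+\tfrac12)/|\la|)^{2/(k+3)}$: on $[a_0,t^{*}]$ the $\la$-contributions to $U^{1/2}F$ are $O(\la^{2}t^{k+1})$ and integrate to $O(1)$ uniformly; on $[t^{*},+\infty)$ standard WKB estimates give $U^{1/2}F=O(t^{-3})$ uniformly. This yields the uniform bound on $\overline{\rho}$, and the bound on $\rho$ follows from $|F|\le|U^{1/2}F|/|U^{1/2}|$ combined with the lower bound of part (i).

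The main obstacle is (ii). Starting from $\la R(x_a)-S(a,\la)=\la R(x_0)+\int_{x_0}^{x_a}(\la R'(x)-\sqrt{U(x,\la)})dx$, the integrand is $O(x^{-2+\{(k+3)/2\}})$ by \eqref{eq:calS-S}, so the limit $a\to+\infty$ exists for any $\la\neq 0$. To extract the $\la\to 0$ asymptotics I would split the ray at $x^{*}=e^{-2i\arg\la/(k+3)}t^{*}$: on the inner interval $[x_0,x^{*}]$ the expansion $\sqrt{U(x,\la)}=\tfrac{l+r+1/2}{x}(1+O(\la^{2}x^{k+3}/(l+r+\tfrac12)^{2}))$ integrates to $(l+r+\tfrac12)\log(x^{*}/x_{0})+O(|\la|^{2})$; on the outer interval $[x^{*},+\infty)$ the expansion of $\la R'(x)-\sqrt{U(x,\la)}$ in inverse powers of $\la^{2}x^{k+3}$ integrates to a bounded quantity plus $O(|\la|^{2})$, while $\la R(x^{*})\to(2(l+r)+1)/(k+3)$. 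Using $\log(x^{*}/x_{0})=\tfrac{2}{k+3}(\log(l+r+\tfrac12)-\log|\la|)-\log a_{0}$ and reassembling the three pieces yields the stated formula. The delicate point is the matching region around $x^{*}$ where neither expansion is uniformly accurate: this requires a uniform Airy-type WKB analysis at the turning point, together with very careful bookkeeping of the constant terms and of the fractional-power corrections in $|\la|^{2j/(k+3)}$ coming from $\la R(x^{*})$ and from the inner and outer integrals, which must cancel so that only the $O(|\la|^{2})$ remainder survives and the constant $(2(l+r)+1)(1-\log(2(l+r)+1))/(k+3)$ emerges.
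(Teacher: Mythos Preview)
Your treatment of (i) coincides with the paper's: both reduce to bounding $\arg\big(|\la|^2(a^{k+3}-a^{k+2}e^{2i\arg\la/(k+3)})+\hat\ell^2\big)$ away from $\pm\pi$ once $a_0$ is large. For (iii) your use of Lemma~\ref{at inf conjecture} to kill the $x^{-2}$ coefficient is exactly the paper's key step; however, two of your subsidiary claims are off. First, on the outer range $[t^*,\infty)$ one has $U^{1/2}F=\frac{(k+3)^2}{16x^2}+O(x^{-3})$, not $O(t^{-3})$: the Schwarzian contribution does \emph{not} cancel the residual $\frac{1}{4x^2}$ there. This still gives a finite $\overline{\rho}$, but your shortcut ``$\rho$ follows from $|F|\le|U^{1/2}F|/|U^{1/2}|$ together with (i)'' then fails, because (i) only yields $|U^{1/2}|\ge C/a$ and hence $|F|=O(a^{-1})$, which is not integrable. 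The paper instead writes $F$ explicitly as a sum of three terms with denominator $(1+(\la/\hat\ell)^2 x^{k+3}(1-x^{-1}))^{5/2}$ or $^{1/2}$, and bounds each by splitting the integral at the scale $a_0|\la/\hat\ell|^{-2/(k+3)}$; the $\rho$ and $\overline{\rho}$ bounds are obtained independently.

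For (ii) the paper takes a genuinely simpler route than your turning-point splitting. The observation is that
\[
\frac{|\la|^2 t^{k+2}}{|\la|^2 t^{k+3}+\hat\ell^2}=O\big(|\la|^{2/(k+3)}\big)
\]
\emph{uniformly} over $t\in[a_0,\infty)$, so one may factor the integrand as $\sqrt{|\la|^2 t^{k+1}+\hat\ell^2/t^2}\cdot\sqrt{1-(\text{that ratio})}$ and Taylor-expand the second square root globally, with no splitting. The leading term has the explicit primitive
\[
\frac{2}{k+3}\left(\sqrt{|\la|^2t^{k+3}+\hat\ell^2}+\frac{\hat\ell}{2}\log\frac{\hat\ell+\sqrt{|\la|^2t^{k+3}+\hat\ell^2}}{-\hat\ell+\sqrt{|\la|^2t^{k+3}+\hat\ell^2}}\right),
\]
and each higher term integrates to a Gauss hypergeometric function whose value at $t=a_0$ is $O(|\la|^{2m})$ and whose value at $t=\infty$ matches the corresponding term of $\la R$. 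Expanding the explicit primitive at $t=a_0$ as $\la\to0$ then yields \eqref{eq:normalizationla0} directly. This bypasses entirely the matching problem at $t^*$ you flag: no Airy-type local analysis is needed, and the constant $(2(l+r)+1)(1-\log(2(l+r)+1))/(k+3)$ drops out of the expansion of the logarithm in the primitive rather than from patching inner and outer contributions.
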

\begin{proof}
In this proof we set $\hat{\ell}_r=l+r+\frac12$. Notice that
\begin{equation}\label{eq:Sdefn}
 S(a,\la)=
\int_{a_0}^a \sqrt{\left| \la \right|^2 \big( t^{k+3}-
t^{k+2} e^{i\frac{2\arg \la}{k+3}}\big)+\hat{\ell}_r^2} \frac{d t}{t} .
\end{equation}
\begin{enumerate}[i)]
\item We show that if $a_0$ is sufficiently large, there exists a $C>0$ such that
\begin{equation*}
 \Re \sqrt{\left| \la \right|^2 \big( a^{k+3}-
a^{k+2} e^{i\frac{2\arg \la}{k+3}}\big)+\hat{\ell}_r^2} \geq C, \qquad \forall a \geq a_0, \forall \la \in A^*.
\end{equation*}
Since $\Re \hat{\ell}_r>0$, then we have that either $0 \leq \arg \hat{\ell}_r^2 < \pi $ or $-\pi < \arg \hat{\ell}_r^2 <0 $.
Assuming first that $0 \leq \arg \hat{\ell}_r^2 < \pi $, we define
$\delta=\pi-\arg \hat{\ell}_r^2$ and choose a $r_0$ large enough so that
$$\left|\arg \left( a^{k+3}-a^{k+2} e^{i\frac{2\arg \la}{k+3}} \right) \right| \leq \frac{\delta}2,$$
$\forall a \geq a_0.$
Simple trigonometric considerations lead to the inequalities
$$\left| | \la |^2 \big( a^{k+3}-
a^{k+2} e^{i\frac{2\arg \la}{k+3}}\big)+\hat{\ell}_r^2\right| \geq \frac{1-\cos \delta}{2}>0,$$
and
$$-\frac{\delta}{2} \leq  \arg\left(| \la |^2 \big( a^{k+3}-
a^{k+2} e^{i\frac{2\arg \la}{k+3}}\big)+\hat{\ell}_r^2 \right) \leq \pi-\delta.$$
Since, by \eqref{eq:Sdefinitionbranch}, the argument of the square root of
$$\left(| \la |^2 \big( a^{k+3}-
a^{k+2} e^{i\frac{2\arg \la}{k+3}}\big)+\hat{\ell}_r^2\right)$$
tends to $0$ as $a\to +\infty$,
the above inequalities imply that
 $$ \Re \sqrt{\left| \la \right|^2 \big( a^{k+3}-
a^{k+2} e^{i\frac{2\arg \la}{k+3}}\big)+\hat{\ell}_r^2} \geq  \left(\frac{1-\cos \delta}{2}\right)^{\frac12}
\cos (\frac{\pi-\delta}{2}) >0.$$
Assuming on the contrary that $-\pi < \arg \hat{\ell}_r^2 <0 $, we define
$\delta=\pi+\arg \hat{\ell}_r^2$ and following the same steps as above we prove the thesis.
\item
A standard computation yields that
\begin{equation}\label{eq:smalltermsU}
\sup_{t \in[a_0,\infty)}\left|\frac{|\la|^2 t^{k+2}}{\la^2 t^{k+3}+\hat{\ell}_r^2}\right| =O\big(|\la|^{\frac{2}{3+k}}\big),
\mbox{ as }
\la \to 0 .
\end{equation}
Using (\ref{eq:Sdefn}) and the above inequality, we have that
\begin{align*} %
S(a,\la)&= \int_{a_0}^a \sqrt{|\la|^2 (t^{k+1}-t^k) + \frac{\hat{\ell}_r^2}{t^2}}\,dt\\& =\int_{a_0}^a \sqrt{|\la|^2 t^{k+1}+\frac{\hat{\ell}_r^2}{t^2}} \sqrt{1-\frac{|\la|^2 t^{k+2}}{|\la|^2 t^{k+3}+\hat{\ell}_r^2}} \, dt  \\ \nonumber
& = \int_{a_0}^a \sqrt{|\la|^2 t^{k+1}+\frac{\hat{\ell}_r^2}{t^2}} \left( 1 + \sum_{m=1}^{\infty} c_m \left(\frac{|\la|^2 t^{k+2}}
{|\la|^2 t^{k+3}+\hat{\ell}_r^2}\right)^m \right)\, dt\\ \nonumber
& = \frac{2}{k+3} \left. \left(\sqrt{|\la|^2t^{k+3}+\hat{\ell}_r^2}-\frac{\hat{\ell}_r}{2} \log\left(\frac{\hat{\ell}_r+\sqrt{|\la|^2t^{k+3}+\hat{\ell}_r^2}}
{-\hat{\ell}_r+\sqrt{|\la|^2t^{k+3}+\hat{\ell}_r^2}}\right)\right) \right|_{t=a_0}^{t=a} \\
& + \sum_{m=1}^{\infty} c_m \left( \frac{|\la|^{2m} \hat{\ell}_r^{1-2m} t^{(k+2)m}}{(k+2)m}{~} \times \right.\\
&\hspace{30pt}\times \prescript{}{2}{F}_1 \left. \left. \left(\frac{2m-1}{2},\frac{(2+k)m}{3+k},
1+\frac{(2+k)m}{3+k},-\frac{|\la|^2 t^{k+3}}{\hat{\ell}_r^2}\right) \right) \right|_{t=a_0}^{t=a} .
\end{align*}
In the above formulae, $_2 F_1 $ is the Gauss hypergeometric function and $c_m$ are the coefficients of the Taylor expansion of $\sqrt{1-x}$ at $x=0$
Moreover, a standard computation shows that\footnote{In the formula below, if $\frac{k+3}{2}-m = 0$, $t^{\frac{3+k}{2}-m}$ must be replaced with $\log t$.}
\begin{align*}
|\la|^{2m} \hat{\ell}_r^{1-2m} t^{(k+2)m} {~}_2 F_1 \left(\frac{2m-1}{2},\frac{(2+k)m}{3+k},
1+\frac{(2+k)m}{3+k},-\frac{|\la|^2 t^{k+3}}{\hat{\ell}_r^2}\right) \\
= |\la| t^{\frac{3+k}{2}-m}+o(1),
\end{align*}
as $ \to +\infty$, and
\begin{align*}
|\la|^{2m} \hat{\ell}_r^{1-2m} a_0^{(k+2)m} {~}_2 F_1 \left(\frac{2m-1}{2},\frac{(2+k)m}{3+k},
1+\frac{(2+k)m}{3+k},-\frac{|\la|^2 a_0^{k+3}}{\hat{\ell}_r^2}\right)\\ =O\big(|\la|^{2m}\big),
\end{align*}
as $ \la \to 0$. Comparing the latter computations with \eqref{eq:truncatedaction}, we get
\begin{align*}
 \lim_{a\to + \infty}\big(\la R(e^{- i\frac{2\arg \la}{k+3}} a)&-S(a,\la \big) =
 \frac{2 \sqrt{|\la|^2a_0^{k+3}+\hat{\ell}_r^2} }{k+3} + \\
 &- \frac{\hat{\ell}_r}{k+3} \log\left(\frac{\hat{\ell}_r+\sqrt{|\la|^2a_0^{k+3}+\hat{\ell}_r^2}}
{-\hat{\ell}_r+\sqrt{|\la|^2a_0^{k+3}+\hat{\ell}_r^2}}\right)+O(|\la|^2),
\end{align*}
as $ \la \to 0 $. The thesis follows from expanding the right hand-side of the above formula.
\item By (\ref{at inf eq}),
$\sum_{i=1}^{d_0}\left(\frac{2}{(x-s_i)^2}+\frac{k+s_i/(s_i-1)}{x(x-s_i)}\right) -\frac{r(r+1)+2 l\, r}{x^2}=O(x^{-3})$. It follows that
\begin{equation*}
V(x)-U(x)=-\frac{1}{4x^2} + x^{-3} f(x)
\end{equation*}
for some function $f$ regular at $\infty$, which does not depend on $\la$.
Hence we have
\begin{equation*}%
 F(x,\la)= \frac{1}{\hat{\ell}_r} \left( \frac{g(x,\frac{\la}{\hat{\ell}_r})}{(1+ \big(\frac{\la}{\hat{\ell}_r}\big)^2
 x^{3+k}(1-x^{-1}))^{\frac52}} +\frac{x^{-2} f(x)}{(1+ \big(\frac{\la}{\hat{\ell}_r}\big)^2
 x^{3+k}(1-x^{-1}))^{\frac12}}\right),
\end{equation*}
where
\begin{align*}%
 g(x,c)=  & -4 c^2 (k+3)^2 x^{k+2} \left(1- \frac{(k+2)^2}{(k+3)^2} x^{-1} \right) \\
 & +  c^4 (k+3)^2 x^{5+2k}
 \left(1 -2 \frac{(k+4)(k+1)}{(k+3)^2} x^{-1}+\frac{(k+2)^2}{(k+3)^2} x^{-2}\right).
\end{align*}
Using estimate \eqref{eq:smalltermsU} and the above decomposition, we deduce that
if $a_0$ is sufficiently large then
\begin{align}
  \rho(\la) = & \, O \left( \int_{a_0}^{\infty} \frac{\big|\frac{\la}{\hat{\ell}_r}\big|^2 \, a^{k+2} \, da}{ \left|1+\big(\frac{|\la|}{\hat{\ell}_r}\big)^2 a^{3+k}\right|^{\frac52} } \right) + O \left( \int_{a_0}^{\infty}
  \frac{\big|  \frac{\la}{\hat{\ell}_r}\big|^4 \, a^{5+2k} \, da}{ \left|1+\big(\frac{|\la|}{\hat{\ell}_r}\big)^2
  a^{3+k}\right|^{\frac52} } \right)  \notag\\
 + & \, O \left( \int_{r_0}^{\infty} \frac{a^{-2} \, da}{ \left|1+ \big(\frac{|\la|}{\hat{\ell}_r}\big)^2 a^{3+k}\right|^{\frac12} } \right) , \mbox{ as } \la\to 0.\label{eq:rholaintermediateestimate}
\end{align}
All three contributions are  $O(1)$ as $\la \to 0$.
The integral in the third term is bounded since $\int_{a_0}^{\infty }x^{-2}<+\infty$ and
$$ \left|1+ \big(\frac{|\la|}{\hat{\ell}_r}\big)^2 a^{3+k}\right|^{-\frac12}$$
is bounded. As for the first term, we split the integration in two and we deduce that
\begin{align*}
 &\int_{a_0}^{\infty} \frac{\big|\frac{\la}{\hat{\ell}_r}\big|^2 \, a^{k+2} \, da}{ \left|1+\big(\frac{|\la|}{\hat{\ell}_r}\big)^2 a^{3+k}\right|^{\frac52} }  \\& =
  \int_{a_0}^{a_0 \big|\frac{\la}{\hat{\ell}_r}\big|^{-\frac{2}{k+3}}}
  \frac{\big|\frac{\la}{\hat{\ell}_r}\big|^2 \, a^{k+2} \, da}{ \left|1+\big(\frac{|\la|}{\hat{\ell}_r}\big)^2 a^{3+k}\right|^{\frac52} }
+  \int_{a_0 \big|\frac{\la}{\hat{\ell}_r}\big|^{-\frac{2}{k+3}}}^{\infty} \frac{\big|\frac{\la}{\hat{\ell}_r}\big|^2 \, a^{k+2} \, da}{ \left|1+\big(\frac{|\la|}{\hat{\ell}_r}\big)^2 a^{3+k}\right|^{\frac52} } \\
& \leq C_1 \int_{a_0}^{a_0 \big|\frac{\la}{\hat{\ell}_r}\big|^{-\frac{2}{k+3}}}
 \big|\frac{\la}{\hat{\ell}_r}\big|^2 \, a^{k+2} \, da +
 C_2 \int_{a_0 \big|\frac{\la}{\hat{\ell}_r}\big|^{-\frac{2}{k+3}}}^{\infty} \frac{\big|\frac{\la}{\hat{\ell}_r}\big|^2 \, a^{k+2} \, da}{ \left|\big(\frac{|\la|}{\hat{\ell}_r}\big)^2 a^{3+k}\right|^{\frac52} }
\end{align*}
for some $C_1,C_2>0$. By explicit integration of the two latter integrals, we obtain that first term is bounded.
The same splitting argument, yields that also the second term in \eqref{eq:rholaintermediateestimate} is bounded.
As for $\overline{\rho}(\la)$, reasoning as above we obtain
\begin{align*}
  \overline{\rho}(\la) = & \, O \left( \int_{a_0}^{\infty} \frac{\big|\frac{\la}{\hat{\ell}_r}\big|^2 \, a^{k+1} \, da}{ \left|1+ \big(\frac{|\la|}{\hat{\ell}_r}\big)^2 a^{3+k}\right|^2 } \right) + O \left( \int_{a_0}^{\infty}
  \frac{\big|  \frac{\la}{\hat{\ell}_r}\big|^4 \, a^{4+2k} \, da}{ \left|1+ \big(\frac{|\la|}{\hat{\ell}_r}\big)^2
  a^{3+k}\right|^2 } \right)+  \\
 + & \, O \left( \int_{a_0}^{\infty} a^{-3} \, da \right) , \mbox{ as } \la\to 0.
\end{align*}
The above expression can be analysed in the same way as \eqref{eq:rholaintermediateestimate} to deduce that
$\overline{\rho}(\la)$ is bounded.
\end{enumerate}
 \end{proof}

\begin{prop}\label{prop:psi0tochi+}
Let $\{s_i\}_{i=1}^{d_0}$, $\{t_j\}_{j=1}^{d_1}$ be a solution of the Gaudin BAE \eqref{Gaudin BAE BLZ} with $d_0-d_1=r$,
and $\widetilde{D}_1$ as in \eqref{def:tD1}; assume moreoveor that $\Re (l +r) \neq -\frac12$.
If $a$ is a large enough positive real number then
\begin{align}\label{eq:la0psi0}
  & \sup_{\la \in \widetilde{D}_1}\left|\la^{-\frac{1}{2}+\frac{2\Theta(l+r+\frac12)}{k+3}} \psi^{(0)}(a e^{- i\frac{2\arg \la}{k+3}},\la)\right| <+\infty , \\ \label{eq:la0pis'0}
   & \sup_{\la \in \widetilde{D}_1} \left|\frac{d}{dx} \left( \la^{-\frac{1}{2}+\frac{2\Theta(l+r+\frac12)}{k+3}} \psi^{(0)}(x,\la)\right)_{|x=a e^{- i\frac{2\arg \la}{k+3}}}\right| <+\infty,
 \end{align}
where in the formula above $\Theta(x)= sgn \left( \Re x \right)\, x$.
\end{prop}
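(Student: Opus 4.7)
The plan is to apply the complex WKB construction set up in the preamble with effective potential $U$ from \eqref{eq:reducedpot}, along the ray $\zeta_\la(t)=te^{-i2\arg\la/(k+3)}$, $t\in[a_0,+\infty)$. By Lemma \ref{lem:uniformadmissibility}(i) and (iii), for $a_0$ sufficiently large $\zeta_\la$ is admissible with respect to $(V,U)$ uniformly on $\widetilde{D}_1$: condition (iii) of Definition \ref{def:admissiblecurve} follows from \eqref{eq:supResqrt}, and condition (iv) from \eqref{eq:rholaell}. Proposition \ref{pro:continuation} then yields a solution $y(x,\la)$ of $L^G y=0$ which along the ray admits the representation
$$y(\zeta_\la(a),\la)=U(\zeta_\la(a),\la)^{-1/4}\,e^{-S(a,\la)}\,u(a,\la),$$
where $u=1-B_\la[u]$ solves the associated Volterra equation and, by \eqref{eq:Bproperty2}, \eqref{eq:Buprop} and Lemma \ref{lem:uniformadmissibility}(iii), satisfies $\|u-1\|_\infty\le e^{\rho(\la)}-1\le e^{\sup_{\la\in\widetilde D_1}\rho(\la)}-1<+\infty$ uniformly in $\la\in\widetilde{D}_1$.

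The next step is to identify $y$ with the Sibuya solution $\psi^{(0)}$. Since $\Re S(a,\la)\to+\infty$ by \eqref{eq:supResqrt}, the function $y$ is subdominant along $\zeta_\la$; and because $\zeta_\la$ lies in the middle of the Stokes sector $\Sigma_0[\la]$ for every $\la\in\widetilde D_1$, the uniqueness part of Proposition \ref{prop:sibuya} yields $\psi^{(0)}(x,\la)=C(\la)\,y(x,\la)$ for some $C(\la)\in\mc{O}^*_\la$. The factor $C(\la)$ is determined by taking $a\to+\infty$ in the ratio $\psi^{(0)}/y$: combining the Sibuya normalization \eqref{eq:sibuyatrisectors} with the explicit asymptotic $U^{1/4}(\zeta_\la(a),\la)\sim|\la|^{1/2}a^{(k+1)/4}e^{i\arg\la/(k+3)}$ and the limit \eqref{eq:normalizationla0} of Lemma \ref{lem:uniformadmissibility}(ii) gives
$$C(\la)=\la^{\frac12-\frac{2(l+r)+1}{k+3}}\,\kappa(\la),\qquad \sup_{\la\in\widetilde D_1}|\kappa(\la)|<+\infty.$$
Consequently
$$\la^{-\frac12+\frac{2(l+r)+1}{k+3}}\,\psi^{(0)}(\zeta_\la(a),\la)=\kappa(\la)\,y(\zeta_\la(a),\la),$$
and the first estimate \eqref{eq:la0psi0} follows once one verifies that for fixed $a\ge a_0$ the factor $y(\zeta_\la(a),\la)$ is uniformly bounded on $\widetilde D_1$: indeed $|U(\zeta_\la(a),\la)|\to(l+r+\tfrac12)^2/a^2$ as $\la\to 0$ so $U^{-1/4}$ stays bounded, $S(a,\la)$ tends to $(l+r+\tfrac12)\log(a/a_0)$ up to $O(|\la|^2)$ corrections so $|e^{-S(a,\la)}|$ is bounded, and $u$ has been controlled above.

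For the derivative bound \eqref{eq:la0pis'0} I would differentiate the above representation in $a$ and multiply by $da/dx=e^{i2\arg\la/(k+3)}$. This produces three contributions: a principal term proportional to $\partial_a S\cdot y=U^{1/2}y$ (times a phase), a term involving $\partial_a U^{-1/4}\cdot e^{-S}u$, and a term involving $\partial_a u$. The first two remain bounded for fixed $a\ge a_0$ because $|U^{1/2}|$ and $|\partial_a U|/|U|^{5/4}$ at $x=\zeta_\la(a)$ stay bounded as $\la$ ranges over $\widetilde D_1$; the third is handled by differentiating the Volterra equation $u=1-B_\la[u]$ under the integral sign, which combined with \eqref{eq:rholabarell} yields the pointwise bound $|\partial_a u(a,\la)|\le \overline\rho(\la)\|u\|_\infty$, uniformly bounded on $\widetilde D_1$ by Lemma \ref{lem:uniformadmissibility}(iii). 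Multiplying by $\la^{-\frac12+\frac{2(l+r)+1}{k+3}}$ and invoking the same identification $\psi^{(0)}=C(\la)y$ completes the argument.

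The principal technical obstacle is the explicit determination of the proportionality constant $C(\la)$. It requires careful tracking of the branches of $U^{1/4}$ and of $\sqrt{U}$ entering the definition of $S$, together with that of the universal-cover power $\la^{\alpha}$; only after these branches are matched precisely to the Sibuya normalization \eqref{eq:sibuyatrisectors} and to the asymptotic expansion \eqref{eq:normalizationla0} do the various $\la$-factors combine into the compensating normalization $\la^{-1/2+(2(l+r)+1)/(k+3)}$ appearing in the statement. Once this matching is done, the uniform boundedness of $\kappa(\la)$ amounts to the quantitative content of Lemma \ref{lem:uniformadmissibility}(ii).
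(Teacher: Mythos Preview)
Your proposal is correct and follows essentially the same approach as the paper: construct the WKB solution along the ray via the Volterra equation with effective potential $U$, use the $\rho(\la)$ bound of Lemma~\ref{lem:uniformadmissibility}(iii) to control $u$ uniformly on $\widetilde D_1$, identify the resulting subdominant solution with $\psi^{(0)}$ up to a proportionality factor whose $\la$-dependence is read off from \eqref{eq:normalizationla0}, and then handle the derivative via the $\overline\rho(\la)$ bound. The paper's computation of the proportionality factor is exactly the matching you flag as the principal technical obstacle, and it confirms your claim that $\kappa(\la)$ is bounded (indeed, bounded away from zero) on $\widetilde D_1$.
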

\begin{proof}
Let us study the integral equation
\begin{equation}\label{eq:intpsi0u}
 u=1-B_{\la}[u], \; u \in L_c^{\infty}\big([a_0,\infty)\big)
\end{equation}
with $B_{\la}$ as in \eqref{eq:Bladef}.
Estimates \eqref{eq:supResqrt}, \eqref{eq:rholaell} together with (\ref{eq:Bproperty1}) imply that
\begin{equation*}
\sup_{\la \in \widetilde{D}_1} \|B_{\la} \| < \infty.
\end{equation*}
Together with Proposition \ref{pro:continuation}, this implies that the integral equation \eqref{eq:intpsi0u}
admits a unique solution $u_{\la}$ such that
\begin{equation}\label{eq:ulabounded}
 \sup_{\la \in \widetilde{D}_1}  \|u_{\la}\| <\infty,
\end{equation}
and that the function
$$ \psi_u(a,\la)=e^{-S(a,\la)-\frac{1}{4}\log U\big(ae^{-\frac{2 i \arg \la}{k+3}})+
 \frac{1}{4}\log U\big(a_0e^{-\frac{2 i \arg \la}{k+3}}\big)}u_\la(ae^{-\frac{2 i \arg \la}{k+3}})
$$
is the restriction to the ray $x=a\,e^{-\frac{2 i \arg \la}{k+3}}, \, a\geq a_0$ of a solution to \eqref{diffeq:L1s}.
Moreover, due to \eqref{eq:ulabounded}, for any fixed $a\geq a_0$,
\begin{equation}\label{eq:psiubounded}
 \sup_{\la \in \widetilde{D}_1} \psi_u(a e^{- i\frac{2\arg \la}{k+3}},\la)<\infty.
\end{equation}
After \eqref{eq:normalizationla0}, such a solution is subdominant in $\Sigma_0[\la]$ and
hence, by uniqueness of subdominant solutions, proportional to the Sibuya solution $\psi^{(0)}$. More precisely, comparing the asymptotic behaviour of $\psi^{(0)}$ and $\psi_u$ using equation \eqref{eq:reducedpot}, Proposition \ref{prop:sibuya}, and Lemma \ref{lem:uniformadmissibility} iii), we compute that
\begin{align*}\nonumber
 \psi_u&=C(\la) \la^{-\frac12+\frac{2\Theta(l+r+\frac12)}{k+3}} \psi^{(0)}, \\ \label{eq:propfactor}
C(\la)&=
 \lim_{a \to +\infty} \left(\la^{-\frac12+\frac{2\Theta(l+r+\frac12)}{k+3}}\psi_u(a,\la) x^{\frac{k+1}{2}}e^{ \la R(e^{- i\frac{2\arg \la}{k+3}} a)}\right)\\
 &=e^{-\frac{2\, \Theta\big( l+r+\frac12\big) \left(1- \log \big(2\,
 \Theta(l+r+\frac12) \big)\right) }{k+3})+O\big(|\la|^2\big)}.
\end{align*}
Combining the latter identity and \eqref{eq:psiubounded}, \eqref{eq:la0psi0} follows.
In order to prove \eqref{eq:la0pis'0}, it is sufficient to show that
\begin{equation}\label{eq:u'labounded}
 \sup_{\la \in \widetilde{D}_1} \| \frac{d u_\la}{da} \| <\infty.
\end{equation}
In fact, if this holds then, for any fixed $a \geq a_0$,
$$ \sup_{\la \in \widetilde{D}_1}\left| \frac{d}{d x}  \left( \psi_u(x,\la)\right)_{|x=a e^{- i\frac{2\arg \la}{k+3}}}\right|<\infty.$$
Differentiating \eqref{eq:intpsi0u}, we get
\begin{equation*}
 \frac{d u_\la}{d\,a} =  e^{- i\frac{4\arg \la}{k+3}}  \int_{a}^{\infty} e^{-2S(t,\la)+2S(a,\la)}
 U^{\frac12}(e^{-\frac{ i 2 \arg \la}{k+3}}t,\la) F(e^{-\frac{ i 2 \arg \la}{k+3}}t,\la) u(t) dt.
\end{equation*}
Then \eqref{eq:u'labounded}
follows from the above integral representation and \eqref{eq:supResqrt}, \eqref{eq:rholabarell}, \eqref{eq:ulabounded}.
Proposition \ref{prop:extendability} is proved.
 \end{proof}
Recall the definition of the spectral determinants $Q_{\pm}$ and $T_j$, in Theorems \ref{thm:QQsystem} and \ref{thm:TQ}.
From Proposition \ref{prop:psi0tochi+}, we deduce that the function $Q_{+}$ is entire when $\Re \big( l+r+\frac12 \big)>0$, the function $Q_{-}$ is entire when $\Re \big( l+r+\frac12 \big)<0$, and the functions $T_j, j \in \mathbb{Z}_{\geq 0}$ are entire
if $\Re \big( l+r+\frac12 \big) \neq 0$.
\begin{prop}\label{prop:extendability}
Fix $k>-2$, $l$ such that $ \Re (l+r) \neq -\frac12$. Let  $\{s_i\}_{i=1}^{d_0}$, $\{t_j\}_{j=1}^{d_1}$ be a solution of the Gaudin BAE \eqref{Gaudin BAE BLZ} with $d_0-d_1=r$. Let $Q_+,Q_-, T_j, j \geq1$ be the spectral determinants of equation \eqref{diffeq:L1s} defined
in Theorems \ref{thm:QQsystem} and \ref{thm:TQ}.
\begin{enumerate}[i)]
 \item Assume that $\Re (l+r+\frac12) >0$. The holomorphic function $Q_{+}$ has a removable singularity at $\la=0$. If, moreover, we assume that $2l+1 \notin \lbrace (k+2) i+j, (i,j)\in \mathbb{Z}_{\geq 0}^2 \rbrace$, the holomorphic function
 $Q_{-}$, is meromorphically extendable $0$ where it has at most a pole of order
 $\lfloor \Re (2 (l+r)+1) \rfloor$. Furthemore, whenever $Q_{+}(0)\neq 0$, $Q_{-}$ is regular at $0$ and $Q_{-}(0) \neq 0$.
\item Assume that $\Re (l+r+\frac12) <0$. The holomorphic function $Q_{-}$ has a removable singularity at $\la=0$. If, moreover, we assume that $-(2l+1) \notin \lbrace (k+2) i+j, (i,j)\in \mathbb{Z}_{\geq 0}^2 \rbrace$, the holomorphic function
 $Q_{+}$, is meromorphically extendable $0$ where it has at most a pole of order
 $\lfloor -\Re (2 (l+r)+1) \rfloor$. Furthemore, whenever $Q_{-}(0)\neq 0$, $Q_{+}$ is regular at $0$ and $Q_{+}(0) \neq 0$.
 \item The holomorphic functions $T_j:\C^* \to \C$, $j\geq 1$ have a removable singularity at $\la=0$.
 \end{enumerate}
 \end{prop}

 \begin{proof}
\begin{enumerate}[i)]
\item As it is well-known, a holomorphic function on $\C^* \subset \C$ can be holomorphically extended to $0$ if and only if it is bounded on a punctured neighborhood of $0$.
Since by definition  $Q(\la^{\frac{2}{k+3}})=Q^*(\la)$, in order to prove the thesis it is sufficient show that
$Q^*(\la)$ is bounded  on the set $\widetilde{D}_1$ defined
in equation \ref{def:tD1}.
By definition of $Q^*$, equation \eqref{Q*},
\begin{equation*}
 Q^*(\la)=\frac{e^{-\frac{i\pi}{4}}}{\sqrt{2}(2l+1)^{1/2}}\la^{-\frac{1}{2}+\frac{2(l+r)+1}{k+3}}\on{Wr}(\psi^{(0)},\chi^+)(x,\la),
\end{equation*}
where $\chi_+(x,\la)$ is the Frobenius solution defined in Corollary \ref{cor:frobenius}. The thesis is proved if
we show that there exists $a>0$ such that the evaluation of the above Wronskian at the point $x=ae^{- i\frac{2\arg \la}{k+3}}$ is a bounded function of $\la \in \widetilde{D}_1$.
By Proposition \ref{prop:psi0tochi+}, the evaluation of the function $\la^{-\frac{1}{2}+\frac{2(l+r)+1}{k+3}}\psi^{(0)}$
and of its derivative at the point $x=ae^{- i\frac{2\arg \la}{k+3}}$ are bounded functions of $\la \in \widetilde{D}_1$ provided $a$ is large enough (fixed) real positive number.
Moreover, the evaluation of $\chi_+(x,\la)$ and of its derivative at  $x=a e^{- i\frac{2\arg \la}{k+3}}$ is a bounded function of $\la \in \widetilde{D}_1$, provided $a$ is a fixed positive real number such that $a \neq |s_i|$ for all $i=1,\dots,d_0$. This follows from the Frobenius series representation of $\chi_+(x,\la)$, equation \eqref{eq:chipmx}, and a standard perturbation analysis.

Reasoning as above, we deduce that $Q_{-}$ can be meromorphically extended to $0$, where it has at most a pole of order
$\lfloor \Re( 2 l+ 2r +1) \rfloor$.
Now we assume that $Q_{+}(0) =c_1 \neq 0$ and we prove that $Q_{-}$ has a non-zero finite limit at $0$.
Since $Q_{-}$ is meromorphic and it has at most a pole of order $\lfloor \Re( 2 l+ 2r+1) \rfloor$,
there exists a $c_2 \neq 0 $ and an integer $n\leq \lfloor \Re( 2 l+ 2r +1) \rfloor$ such that
$$Q_{-}(\hat{\la})=c_2 \hat{\la}^{-n} \left( 1+ O\big(\hat{\la}\big)\right)$$
as $\hat{\la} \to 0$. We use the $QQ$-system \eqref{QQsystem} to prove that $n=0$. In fact, the $QQ$-system implies that
\begin{equation} \label{eq:QQqgamma}
 c_1\, c_2 \left(\gamma_r q^{n}-\gamma_{r}^{-1} q^{-n}\right) \hat{\la}^{-n} \left( 1+ O\big(\hat{\la}\big)\right) = 1,
 \end{equation}
 with $\gamma_{r}$ and $q$ given by \eqref{gammaQQ}.  The first consequence of the above equation is that $n\geq0$. We prove that $n=0$ by contradiction.
 Assume that there exists an $n \geq 1$ such that
 equation (\ref{eq:QQqgamma}) is satisfied. This implies that
 $\gamma_r q^{n} -\gamma_{r}^{-1} q^{-n}=0$,  whence there exists a positive integer $j$
such that
\begin{equation*}
 \frac{1}{k+3}\big(2l+1-(n-2r)\big)+(n-2r) =j.
\end{equation*}
This implies that $2l+1$ is real and that $j-(n-r)\geq0$, since now $n-r\leq 2l+1$.
Therefore there exist two non-negative integers
$j>0$ and $ i=j-n+r \geq 0$ such that $2l+1=(k+2) i+j$. This is a contradiction.
\item Same as the proof of i).
\item Since $Q_+(\la)$ or $Q_-(\la)$ is analytic at $0$, the $TQ$-system \eqref{TQsystem} implies that the limit $\lim_{\hat{\la} \to 0}T_1(\hat{\la})$ exists, whence $T_1$ has a removable singularity at $\la=0$.
Using the fusion relation \eqref{fusionrelations}, we deduce by induction that all $T_j$ have a removable singularity at $\la=0$.
\end{enumerate}
\end{proof}

A lengthier and more careful analysis would prove
the following stronger version of Proposition \ref{prop:extendability}.
\begin{conj}\label{conj:rneq0}
Fix $k>-2$, $l, \Re \big( l+r\big) \neq -\frac12$. Let  $\{s_i\}_{i=1}^{d_0}$, $\{t_j\}_{j=1}^{d_1}$ be a solution of the Gaudin BAE \eqref{Gaudin BAE BLZ} with arbitrary $r=d_0-d_1$.
If $2l+1 \notin \lbrace (k+2) i+j, (i,j)\in \mathbb{Z}_{\geq 0} \rbrace$, the holomorphic functions $Q_{\pm}: \C^* \to \C$ are holomorphically extendable to $0$ and $Q_{\pm}(0) \neq 0$.
\qed
\end{conj}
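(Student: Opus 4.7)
The plan reduces everything to showing $Q_{+,r}(0)\neq 0$: once this is established, Proposition \ref{prop:extendability}(ii) immediately gives that $Q_{-,r}$ is regular at $0$ and $Q_{-,r}(0)\neq 0$, completing the conjecture. So everything is concentrated on the non-vanishing of $Q_{+,r}$ at the origin.

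First I would upgrade Proposition \ref{prop:psi0tochi+} from a uniform bound to a genuine limit. Restricting to $\la\to 0^+$ along the positive real axis (no rotation; $x=a>a_0$), I would apply dominated convergence to the Volterra equation $u_\la=1-B_\la[u_\la]$ defined by \eqref{eq:Bladef}: the kernel $\tfrac12(e^{-2S(t,\la)+2S(a,\la)}-1)$ is uniformly bounded by Lemma \ref{lem:uniformadmissibility}(i) and converges pointwise to $\tfrac12\bigl((a/t)^{2(l+r)+1}-1\bigr)$, while the forcing $F(t,\la)$ is dominated by an integrable majorant (Lemma \ref{lem:uniformadmissibility}(iii)) and converges to $F(t,0)$. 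The Neumann series thus converges term-by-term and gives $u_\la(a)\to u_0(a)$, where $u_0$ solves the limit integral equation. Inserting this into \eqref{eq:psiu} with $U(x,0)=(l+r+\tfrac12)^2/x^2$, a direct calculation yields
\[
\psi_0(a):=\lim_{\la\to 0^+}\la^{-\frac12+\frac{2(l+r)+1}{k+3}}\psi^{(0)}(a,\la)=C_0^{-1}(a/a_0)^{-l-r}u_0(a),
\]
where $C_0=\lim_{\la\to 0}C(\la)\neq 0$ is the prefactor from the proof of Proposition \ref{prop:psi0tochi+}.

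Since $\psi_0$ is a pointwise limit of solutions of $L^G\psi=0$ and the monodromy around each $s_i$ is trivial by Proposition \ref{prop: L0 L1 from BAE}, it extends by analytic continuation to a solution of $L^G|_{\la=0}\psi=0$ on $\widetilde{\C^*}\setminus\bigcup_i\Pi^{-1}(s_i)$. Because $u_0(a)\to 1$ as $a\to+\infty$, we have $\psi_0(x)\sim a_0^{l+r}C_0^{-1}\,x^{-l-r}$ at infinity, which selects $\psi_0$ among the two Frobenius solutions of the Fuchsian operator $L^G|_{\la=0}$ at $\infty$ (exponents $-l-r$ and $l+r+1$, distinguishable since $\Re(l+r)>-\tfrac12$). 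The kernel description \eqref{kernel} in the periodic case $n_1=0$, combined with the trigonometric Wronskian \eqref{trig rep 2} and the degree count $\deg\tilde y_1=2d_0-d_1=d_0+r$, identifies the solution with exponent $-l-r$ at infinity as $y_1(x)x^{-l}/y_0(x)$ (all polynomials being monic). Matching leading coefficients,
\[
\psi_0(x)=a_0^{l+r}C_0^{-1}\,\frac{y_1(x)}{y_0(x)}\,x^{-l}.
\]
The BAE forces $s_i\neq 0$ and $t_j\neq 0$, so $y_0(0)\neq 0$, $y_1(0)\neq 0$, and near $x=0$ one has $\psi_0(x)=a_0^{l+r}C_0^{-1}(y_1(0)/y_0(0))\,\chi^-(x,0)+O(x^{-l+1})$.

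Finally, since $\chi^+(x,\mu)\to\chi^+(x,0)$ as $\mu\to 0$ and Wronskians depend continuously on their arguments, evaluating $\on{Wr}(\psi_0,\chi^+(\cdot,0))$ via the expansion of $\psi_0$ in the basis $\{\chi^+(\cdot,0),\chi^-(\cdot,0)\}$ and using \eqref{eq:chipmwronskian} gives
\[
\on{Wr}(\psi_0,\chi^+(\cdot,0))=a_0^{l+r}C_0^{-1}\,\frac{y_1(0)}{y_0(0)}\,(2l+1)\neq 0.
\]
Passing to the variable $\mu=\la^{(k+3)/2}$ in \eqref{Q*} and in the definition $Q_{+,r}(\la)=\la^r Q^*_+(\mu)$, the prefactors combine to the exponent $\mu^{-\frac12+(2l+2r+1)/(k+3)}$; hence
\[
\lim_{\la\to 0^+}Q_{+,r}(\la)=e^{-i\pi/4}(l+\tfrac12)^{-1/2}\,\on{Wr}(\psi_0,\chi^+(\cdot,0))\neq 0,
\]
and since $Q_{+,r}$ is holomorphic at $0$ by Proposition \ref{prop:extendability}(i), the limit equals $Q_{+,r}(0)$. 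The main obstacle is the convergence step $u_\la\to u_0$: Lemma \ref{lem:uniformadmissibility} supplies the uniform admissibility bounds, but one has to verify termwise convergence of the Neumann series together with the matching of the $\la$-dependent prefactor $C(\la)$ anchored at $a=a_0$, and track the $\la$-dependence of the Wronskian along the rotating ray. A secondary (but much easier) point is the analytic continuation of $\psi_0$ past each apparent singularity, which follows directly from the trivial monodromy of $L^G|_{\la=0}$ at $s_i$.
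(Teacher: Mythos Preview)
The paper does not give a proof of this statement: it is explicitly labeled a Conjecture, and the only comment is that ``a lengthier and more careful analysis would prove'' it as a strengthening of Proposition~\ref{prop:extendability}. So there is nothing to compare against; your proposal is an attempt to supply what the authors left open.

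Your overall strategy is the natural one and almost certainly what the authors have in mind. The reduction to $Q_{+,r}(0)\neq 0$ via Proposition~\ref{prop:extendability}(ii) is correct. Upgrading Proposition~\ref{prop:psi0tochi+} from a uniform bound to an actual limit, and then identifying the limit $\psi_0$ with the explicit quasi-polynomial solution $y_1(x)\,x^{-l}/y_0(x)$ from the kernel of $L^G|_{\la=0}$, is exactly the right idea. The computation of the Frobenius exponents $-l-r$ and $l+r+1$ at $x=\infty$ (which uses Lemma~\ref{at inf conjecture} to obtain the coefficient $(l+r)(l+r+1)$ of $x^{-2}$) correctly singles out $y_1 x^{-l}/y_0$ as the unique kernel element with the decay $\psi_0\sim x^{-l-r}$. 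Since a BAE solution has $y_0(0),y_1(0)\neq 0$, the Wronskian with $\chi^+(\cdot,0)$ is nonzero, and you are done.

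A few technical points deserve more care. First, Lemma~\ref{lem:uniformadmissibility}(iii) only gives $\sup_\la\rho(\la)<\infty$; for dominated convergence in the Neumann series you need a single $\la$-independent integrable majorant for $|F(a,\la)|$ on $[a_0,\infty)$, which must be extracted from the explicit decomposition of $F$ in the proof of that lemma (the three integrands there do admit such a majorant, but you should say so). Second, the justification ``$\psi_0$ is a pointwise limit of solutions, hence solves the limit equation'' is not quite right as stated: rather, $u_0$ solves the limiting Volterra equation, so by Proposition~\ref{pro:continuation} the associated $\psi_0$ solves $L^G|_{\la=0}\psi=0$ on $[a_0,\infty)$ directly. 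Third, your reference to \eqref{kernel} is off: that displayed formula is for the exponential case $n_1\neq 0$; the kernel description you actually use (for the periodic BLZ case $n_1=0$) is the formula $\ker(L_1|_{\la=0})=\langle y_1,\;x^{2l+1}\tilde y_1\rangle\,T_1^{-1/2}y_0^{-1}$ appearing in the paragraph just before~\eqref{kernel}. None of these affects the validity of the approach.
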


\subsection{Asymptotics of the zeroes of the \texorpdfstring{$Q_+$}{Q+}-functions}\label{heu section}
Here we study the asymptotic distribution of the zeroes of the spectral determinant $Q_+$, assuming that $r=d_0-d_1=0$, $l$ is real and $l>-\frac12$.
By definition, the spectral determinant $Q_+(\la)$ is -- after the change of variable $\la\mapsto \la^{\frac{k+3}{2}}$ -- the Wronskian of the subdominant Frobenius solution
$\chi_+$ and of the Sibuya solution $\Psi^{(0)}$, see \eqref{Q Q*} and \eqref{Q*}. This means that the zeroes of $Q_+$ corresponds to those $\la$ for which there exists a non-trivial solution of the boundary value problem
\begin{align}\label{eq:bvgeneral}
 \begin{cases}
 & \psi''(x)=\left(\frac{l(l+1)}{x^2}+\lambda^2x^k(x-1)+
 \sum_{i=1}^{d_0}\left(\frac{2}{(x-s_i)^2}+\frac{k+s_i/(s_i-1)}{x(x-s_i)}\right)\right)\psi(x), \\
 &\lim_{t \to 0^+} t^{l} \psi(e^{-\frac{i2 \arg{\la}}{k+3}} t)=
 \lim_{t\to+\infty} \psi(e^{-\frac{i2 \arg{\la}}{k+3}} t)=0.
 \end{cases}
\end{align}
We note that if $l\geq 0$, the boundary conditions can be written in the simpler form
\begin{equation}\label{eq:bvlg0}
  \lim_{t \to 0^+} \psi(e^{-\frac{i2 \arg{\la}}{k+3}} t)=\lim_{t\to+\infty} \psi(e^{-\frac{i2 \arg{\la}}{k+3}} t)=0.
\end{equation}
 We are interested in two asymptotic regimes, the first is the
asymptotic distribution of the zeroes when $l$ is fixed, the second is the
asymptotic distribution of the zeroes when $l \to +\infty$.
We do not furnish complete proofs of all statements, but only sketches thereof or heuristic arguments.
A thoroughly analysis of the boundary value problem above will be
addressed in a separate publication, dedicated to it.
The asymptotic analysis of the boundary value problem \eqref{eq:bvgeneral} can be obtained via the complex WKB method.
Recall the definition of the effective potential \eqref{eq:reducedpot}. Setting $r=0$, it reads
\begin{equation*}
 U(x,\la)= \la^2 x^{k}(x-1)+\frac{\hat{\ell}^2}{x^2}, \quad  \hat{\ell}=l+\frac12.
\end{equation*}
Here we assume that the parameters $\la^2$ and $\hat{\ell}$ are real and positive.
We are interested in the zeroes of the function $U$ restricted to the positive semiaxis $x>0$.
These are called the turning points of the effective potential and their
location only depends on the ratios $u=\frac{\la}{\hat{\ell}}$.
Clearly the turning points $U$ coincide with the zeroes of $x^2 U$ (restricted to the positive semiaxis). Since the derivative of the function $x^2 U$ has a unique zero $x_*$,
\begin{equation}\label{eq:x*}
 x_*=\frac{k+2}{k+3},
\end{equation}
we deduce that there are three possible configurations of real and positive zeroes of $U$.
Letting
\begin{equation}\label{eq:u*}
 u_*=\frac{\big(k+3)^{\frac{k+3}{2}}}{\big(k+2\big)^{\frac{k+2}{2}}}
\end{equation}
these possible configurations are
\begin{enumerate}
 \item If $0<\la <u_* \, \hat{\ell}$, $U(x)>0$ for all $x>0$.
 \item If $\la=u_* \, \hat{\ell}$, $x=\frac{2+k}{3+k}$ is a double zero and
 $U(x)>0$ for all $x>0$,  $x \neq \frac{2+k}{3+k}$.
 \item If $\la>u_*\, \hat{\ell}$,
  $U$ has two real positive zeroes $x_-\big(\hat{\ell}/\la\big)<x_+\big(\hat{\ell}/\la\big)<1$. Moreover $U(x)>0$ when $x<x_-(\hat{\ell}/\la)$ or $x>x_+(\hat{\ell}/\la)$, while $U(x)<0$ when $x_-(\hat{\ell}/\la)<x<x_+(\hat{\ell}/\la)$.
 \end{enumerate}
We studied the configurations of the turning points, because
the large $\la$ (or large $\hat{\ell}$) asymptotic distribution of the zeroes of the $Q_+$ function
is obtained from the large $\la$ (or large $\hat{\ell}$) asymptotics of the
WKB integral defined for positive real $\la$ and $\hat \ell$ by the formula
$$
 I(\la,\hat{\ell}) = \begin{cases}
             0, \qquad& 0 < \la\leq u_* \hat{\ell},\\
            \frac{2}{\pi} \int_{x_-\left(\hat{\ell}/\la\right)}^{x_+\left(\hat{\ell}/\la\right)} \sqrt{-U(x,\la,\hat{\ell})} dx, \qquad & \la >u_* \, \hat{\ell}.
            \end{cases}
$$
Since
\begin{equation*}
 I(\la,\hat{\ell})=\la \, I(1,\frac{\hat{\ell}}{\la}),\qquad \ I(\la,\hat{\ell})=\hat{\ell} \, I(\frac{\la}{\hat{\ell}},1),
\end{equation*}
it is convenient to introduce the functions
$$
 J_1(u)=I(1,u), \qquad J_2(u)=I(u,1)=u J_1(\frac{1}{u}).
$$
\begin{lem}
The following asymptotic formulae hold
 \begin{align}\label{eq:J10}
  & J_1(u)= \frac{2}{k+2}\left(\frac{1}{\sqrt{\pi}}\frac{\Gamma\big(\frac{k+4}{2} \big)}{\Gamma \big(\frac{k+5}{2}\big)} - u \right)+ O(u^{-2}), \qquad\quad u \to 0^+, \\ \label{eq:J2u*}
  & J_2(u)= \left(\frac{8 (2+k)^{1+k}}{(3+k)^{4+k}} \right)^{\frac12}(u-u_*)+ O\big( (u-u_*)^2 \big),\quad  u - u_* \to 0^+.
 \end{align}
\end{lem}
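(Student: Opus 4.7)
\smallskip

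\noindent\textbf{Proof plan.}
The plan is to compute both asymptotic expansions by first extracting the constant term as a Beta-function integral and then identifying the subleading behaviour via a boundary-layer (respectively double-turning-point) rescaling. The main technical burden is localised: in (i) near $x=0$, where the effective potential becomes singular and naive Taylor expansion in $u$ produces a divergent integral; in (ii) near the coalescing turning points $x_\pm\to x_*$ as $u\to u_*^+$, where the integrand degenerates.

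\smallskip

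\noindent\emph{Proof of \eqref{eq:J10}.} First I observe that at $u=0$ the turning points become $x_-=0$, $x_+=1$, and
\[
J_1(0)=\frac{2}{\pi}\int_0^1\sqrt{x^k(1-x)}\,dx=\frac{2}{\pi}B\!\left(\tfrac{k+2}{2},\tfrac{3}{2}\right)=\frac{1}{\sqrt{\pi}}\frac{\Gamma\!\left(\tfrac{k+2}{2}\right)}{\Gamma\!\left(\tfrac{k+5}{2}\right)},
\]
which agrees with the constant term after using $\Gamma(\tfrac{k+4}{2})=\tfrac{k+2}{2}\Gamma(\tfrac{k+2}{2})$. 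Next I differentiate under the integral sign. Since $\sqrt{-U}$ vanishes at both endpoints, the boundary contributions drop out and
\[
\frac{dJ_1}{du}(u)=-\frac{2u}{\pi}\int_{x_-(u)}^{x_+(u)}\frac{dx}{x^2\sqrt{x^k(1-x)-u^2/x^2}}.
\]
I then perform the rescaling $x=u^{2/(k+2)}y$, which converts the integral to
\[
\frac{dJ_1}{du}(u)=-\frac{2}{\pi}\int_{1+o(1)}^{\,x_+(u)u^{-2/(k+2)}}\frac{dy}{y^2\sqrt{y^k\bigl(1-u^{2/(k+2)}y\bigr)-1/y^2}}.
\]
Dominated convergence yields the pointwise limit $u\to 0^+$, after which the substitution $t=y^{-(k+2)}$ maps the remaining integral to $\frac{1}{k+2}B(\tfrac12,\tfrac12)=\frac{\pi}{k+2}$. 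Hence $\lim_{u\to 0^+}\frac{dJ_1}{du}=-\frac{2}{k+2}$, which combined with $J_1(0)$ proves \eqref{eq:J10} (the stated error $O(u^{-2})$ should read $O(u^2)$, coming from the $u^{2/(k+2)}y$ correction inside the square root and the smallness of $1-x_+=O(u^2)$).

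\smallskip

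\noindent\emph{Proof of \eqref{eq:J2u*}.} Writing $-U(x,u,1)=x^{-2}\hat h(x,u)$ with $\hat h(x,u)=u^2x^{k+2}(1-x)-1$, I verify from $x_*=\tfrac{k+2}{k+3}$ and $u_*^2=\tfrac{(k+3)^{k+3}}{(k+2)^{k+2}}$ that $\hat h(x_*,u_*)=\hat h_x(x_*,u_*)=0$, and compute
\[
\hat h_u(x_*,u_*)=\frac{2}{u_*},\qquad \hat h_{xx}(x_*,u_*)=-\frac{(k+3)^3}{k+2}.
\]
Taylor expansion near $(x_*,u_*)$ then gives
\[
-U(x,u,1)=\frac{1}{x_*^2}\!\left[\frac{2(u-u_*)}{u_*}-\frac{(k+3)^3}{2(k+2)}(x-x_*)^2\right]+o\bigl(|u-u_*|+|x-x_*|^2\bigr).
\]
This is a harmonic-oscillator–type quadratic form, for which $x_\pm-x_*=\pm\sqrt{\tfrac{4(k+2)(u-u_*)}{u_*(k+3)^3}}$. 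The elementary integral $\int_{-b/a}^{b/a}\sqrt{b^2-a^2t^2}\,dt=\tfrac{\pi b^2}{2a}$ (with $a^2=\tfrac{(k+3)^3}{2(k+2)}$ and $b^2=\tfrac{2(u-u_*)}{u_*}$) yields
\[
J_2(u)=\frac{2}{\pi x_*}\cdot\frac{\pi b^2}{2a}+O\bigl((u-u_*)^2\bigr)=\frac{2(u-u_*)}{u_*x_*}\sqrt{\frac{2(k+2)}{(k+3)^3}}+O\bigl((u-u_*)^2\bigr),
\]
and the prefactor simplifies to $\bigl(\tfrac{8(k+2)^{k+1}}{(k+3)^{k+4}}\bigr)^{1/2}$ on using $u_*x_*=(k+3)^{(k+1)/2}/(k+2)^{k/2}$.

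\smallskip

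\noindent\emph{Main obstacle.} The only delicate point is the uniformity of the dominated-convergence step in (i): one must show that the rescaled integrand is dominated, uniformly in $u$ on a neighbourhood of $0^+$, by an integrable function on $[1,\infty)$, and that the upper cutoff $x_+(u)u^{-2/(k+2)}\to\infty$ does not introduce a spurious contribution. This follows from the decay $y^{-(k+4)/2}$ at infinity together with the bound $x_+(u)=1-O(u^2)$, but writing these estimates carefully (and controlling the $o$-errors in (ii) away from $x_*$) accounts for the bulk of the work.
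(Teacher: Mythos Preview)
Your proposal is correct, and it takes a genuinely different route from the paper's own proof. The paper does not compute anything directly: it introduces the function $\tau(\xi;\alpha)$ studied in \cite[Proposition 2.7]{CM2}, exhibits the change of variable $v=w^{2/(k+2)}u^{-2/(k+3)}$ which gives the identity
\[
J_2(u)=\frac{2\alpha}{1+\alpha}\,\tau\!\left(\alpha^{-\frac{1}{1+\alpha}}\tfrac{\alpha}{1+\alpha}u^{\frac{2\alpha}{1+\alpha}};\alpha\right),\qquad \alpha=\tfrac{1}{k+2},
\]
and then simply observes that under this identification \eqref{eq:J10} and \eqref{eq:J2u*} become \cite[(2.22)]{CM2} and \cite[(2.25)]{CM2} respectively. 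So the paper's ``proof'' is a reduction to the literature.

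Your argument, by contrast, is self-contained: a Beta-integral evaluation at $u=0$ followed by differentiation under the integral sign and the boundary-layer rescaling $x=u^{2/(k+2)}y$ for \eqref{eq:J10}, and a direct Morse-type expansion about the coalescing turning point $x_*$ for \eqref{eq:J2u*}. All the numerical coefficients check out (in particular $\hat h_{xx}(x_*,u_*)=-(k+3)^3/(k+2)$ and the final simplification of the prefactor via $u_*x_*=(k+3)^{(k+1)/2}/(k+2)^{k/2}$ are correct). What you gain is transparency and independence from \cite{CM2}; what the paper's route gains is brevity, at the cost of importing those external computations. Your observation that the stated error $O(u^{-2})$ in \eqref{eq:J10} is a typo for a term vanishing as $u\to 0^+$ is also correct; note, however, that your sketch as written only establishes $J_1(u)=J_1(0)-\tfrac{2}{k+2}u+o(u)$, and pinning down the error as $O(u^2)$ (rather than, say, $O(u^{1+2/(k+2)})$ from the $u^{2/(k+2)}y$ correction) would require one more step of bookkeeping, namely showing that the odd-in-$y$ part of the first correction integrates to zero.
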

\begin{proof}
The lemma is equivalent to formulae \cite[equation (2.25)]{CM2} and \cite[equation (2.22)]{CM2}.
More precisely, in
\cite[Proposition 2.7]{CM2}, the following function was studied
$$\tau(\xi;\alpha)=\frac{2(1+\alpha)}{\pi}\int_{w_-}^{w_+}
\sqrt{\frac{\alpha+1}{\alpha} \alpha^{\frac{1}{\alpha+1}}\xi w^2- w^{2\alpha+2}-1}\ \frac{dw}{w},
\quad \xi\geq 1,\ \alpha>0,$$
where $w_{\pm}$ are the two zeros of $t^{\frac{2}{k+3}} w^2- w^{\frac{2(3+k)}{2+k}}-1$.
Using the change of integration variable $v=w^{\frac2{k+2}}u^{-\frac{2}{k+3}}$, a direct computation yields that
\begin{equation}\label{eq:J2tau}
 J_2(u)=\frac{2\alpha}{1+\alpha} \tau\big(\alpha^{-\frac{1}{1+\alpha}}\frac
 {\alpha}{1+\alpha}u^{\frac{2\alpha}{1+\alpha}};\alpha \big), 
\end{equation}
where $ \alpha=\frac{1}{k+2}$. Taking into account (\ref{eq:J2tau}), \eqref{eq:J10} is equivalent to
\cite[equation (2.22)]{CM2}, and  (\ref{eq:J2u*}) is equivalent to \cite[equation (2.25)]{CM2}
\end{proof}

\begin{lem}\label{lem:lajas}
For every $\hat{\ell}>0$ and every $j \in \mathbb{Z}_{\geq 0}$, the implicit equation
\begin{equation*}\label{eq:deflajp}
I(\la,\hat{\ell})=2j+1,
\end{equation*}
with $ \la >0$ has a unique solution $\widetilde{\la}_j(\hat{\ell})$, $j \in \mathbb{Z}_{\geq 0} $.
The functions $\widetilde{\la}_j(\hat{\ell})$ are smooth, and the following asymptotics hold. For fixed  $\hat{\ell}$:
\begin{equation}
 \widetilde{\la}_j(\hat{\ell})=\frac{k+2}{2}\frac{\sqrt{\pi} \Gamma \big(\frac{k+5}{2}\big)}{\Gamma\big(\frac{k+4}{2} \big)}
 \left(2j+1+\frac{2\hat{\ell}}{k+2} \right)+ O(j^{-1}),\quad j\to +\infty  . \label{eq:lajlargej}
 \end{equation}
For fixed $j$:
\begin{equation}
 \widetilde{\la}_j(\hat{\ell})=\frac{\big(k+3)^{\frac{k+3}{2}}}{\big(k+2\big)^{\frac{k+2}{2}}}\, \hat{\ell}+ \left(\frac{8 (2+k)^{1+k}}{(3+k)^{4+k}} \right)^{-\frac12} \left(2j+1\right)+O\big(\hat{\ell}^{-1}\big),\quad  \hat{\ell}\to +\infty . \label{eq:lajlargep}
  \end{equation}
\end{lem}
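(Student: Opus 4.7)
The plan is to reduce everything to the two scaled functions $J_1, J_2$ defined in \eqref{eq:J1J2}, whose expansions are already in hand. First I would establish the scaling identities
\begin{equation*}
I(\la,\hat{\ell})=\la\, J_1(\hat{\ell}/\la)=\hat{\ell}\, J_2(\la/\hat{\ell}),
\end{equation*}
obtained by pulling the factor $\la$ (resp.\ $\hat{\ell}$) out of the square root in \eqref{eq:Ilap}. Then I would check that $u\mapsto J_2(u)$ is $C^\infty$ and strictly increasing on $(u_*,+\infty)$: differentiation under the integral sign (the endpoint contributions vanishing because the integrand vanishes at $x_\pm$) gives $J_2'(u)=\frac{2u}{\pi}\int_{x_-(1/u)}^{x_+(1/u)} \frac{x^k(1-x)}{\sqrt{-u^2 x^k(x-1)-1/x^2}}\,dx>0$ since $x_\pm(1/u)<1$. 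Combined with $J_2(u_*)=0$ and $J_2(u)\to+\infty$ as $u\to+\infty$ (from $J_2(u)=u J_1(1/u)\sim \frac{2A}{k+2}u$ with $A=\frac{1}{\sqrt{\pi}}\frac{\Gamma((k+4)/2)}{\Gamma((k+5)/2)}$ given by \eqref{eq:J10}), this shows that for every $\hat{\ell}>0$ and every $j\in\mathbb{Z}_{\geq 0}$ the equation $\hat{\ell}J_2(\la/\hat{\ell})=2j+1$ has a unique root $\widetilde{\la}_j(\hat{\ell})\in(u_*\hat{\ell},+\infty)$. The smoothness of $\widetilde{\la}_j(\hat{\ell})$ then follows from the inverse function theorem applied to $J_2$.

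For the large $j$ asymptotics at fixed $\hat{\ell}$, I would use the first identity. As $j\to\infty$ the root satisfies $\widetilde{\la}_j\to\infty$, so $\hat{\ell}/\widetilde{\la}_j\to 0$ and the expansion \eqref{eq:J10} yields
\begin{equation*}
2j+1=\widetilde{\la}_j J_1(\hat{\ell}/\widetilde{\la}_j)=\tfrac{2A}{k+2}\widetilde{\la}_j-\tfrac{2}{k+2}\hat{\ell}+O(\widetilde{\la}_j^{-1}).
\end{equation*}
Solving for $\widetilde{\la}_j$ and using $1/A=\sqrt{\pi}\Gamma((k+5)/2)/\Gamma((k+4)/2)$ gives \eqref{eq:lajlargej}, the error being $O(\widetilde{\la}_j^{-1})=O(j^{-1})$ since $\widetilde{\la}_j\sim\text{const}\cdot j$.

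For the fixed $j$ and $\hat{\ell}\to+\infty$ regime I would use the second identity. Here $J_2(\widetilde{\la}_j/\hat{\ell})=(2j+1)/\hat{\ell}\to 0$, so $\widetilde{\la}_j/\hat{\ell}\to u_*^+$. Substituting the expansion \eqref{eq:J2u*} with $C=\big(8(2+k)^{1+k}/(3+k)^{4+k}\big)^{1/2}$ into $\hat{\ell}J_2(\widetilde{\la}_j/\hat{\ell})=2j+1$ gives
\begin{equation*}
C(\widetilde{\la}_j-u_*\hat{\ell})+\hat{\ell}\cdot O\!\left((\widetilde{\la}_j/\hat{\ell}-u_*)^2\right)=2j+1,
\end{equation*}
and since $\widetilde{\la}_j/\hat{\ell}-u_*=O(1/\hat{\ell})$ the remainder is $O(\hat{\ell}^{-1})$, yielding \eqref{eq:lajlargep}.

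The only non-routine point is the quantitative control of the remainders in \eqref{eq:J10} and \eqref{eq:J2u*}: to invert these expansions one must know that the $O$-terms are genuine uniform bounds in the relevant regime. This is inherited from the analogous statement for $\tau(\xi;\alpha)$ proved in \cite{CM2} via the identity \eqref{eq:J2tau}; alternatively one can re-derive them directly from the integral representation of $J_1, J_2$ by expanding the integrand and uniformly estimating the tail. The main technical obstacle, therefore, is verifying uniformity of these remainders; once granted, the inversion is a one-line application of the implicit function theorem.
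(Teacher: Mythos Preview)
Your proposal is correct and follows essentially the same approach as the paper: both arguments establish strict monotonicity of $I(\la,\hat{\ell})$ in $\la$ by differentiating under the integral sign (the paper does this directly on $I$, you do it on $J_2$, which is the same computation up to scaling), invoke the implicit function theorem for smoothness, and then derive \eqref{eq:lajlargej} and \eqref{eq:lajlargep} by inverting the expansions \eqref{eq:J10} and \eqref{eq:J2u*}. Your write-up is in fact more explicit than the paper's, which simply asserts that the asymptotics ``follow directly'' from those expansions.
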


\begin{proof}
Since $x_-(\hat{\ell}/\la)<x_+(\hat{\ell}/\la)<1$, then
 $$
 \frac{\partial{I}}{\partial \la}= \frac{2 \la}{\pi} \int_{x_-}^{x_+}
 \frac{(1-x)x^k}{\sqrt{-U(x,\la,\hat{\ell})}} dx>0, \qquad \forall \la > u_* \hat{\ell}.
 $$
 Therefore, for fixed $\hat{\ell}$ and $j$, the equation $I(\la,\hat{\ell})=2j+1$ admits a unique solution,
 $\la_j(\hat{\ell})$. Moreover, by the implicit function theorem such a solution is a smooth function
 of $\ell$.
 Equations \eqref{eq:lajlargej}, \eqref{eq:lajlargep} follow directly from \eqref{eq:J10}, \eqref{eq:J2u*}.
\end{proof}

\subsubsection*{The case $d_0=0$}
Using Lemma \ref{lem:lajas} and the complex WKB method, one obtains a very precise characterization
of the zeroes of $Q_+$ for the ground state opers, namely for the solutions of (\ref{eq:bvgeneral})
with $d_0=0$.
\begin{prop}\label{prop:asymptoticground}
Fix $k>-2$.
Let $Q_+(\la;l)$ be the spectral determinant corresponding to the ground state $\la$-oper, namely $d_0=0$,
with $l$ real, $l>-\frac12$.
The zeroes of $Q_+(\la;l)$ are simple, real and positive. They form an increasing and diverging sequence
$\{\la_j\}_{j \in \mathbb{Z}_{\geq 0}}$ which satisfies the following asymptotics
\begin{align}\label{eq:hatlajellfixed}
& \left(\la_j\right)^{\frac{k+3}{2}}=
\frac{k+2}{2}\frac{\sqrt{\pi} \Gamma \big(\frac{k+5}{2}\big)}{\Gamma\big(\frac{k+4}{2} \big)}
 \left(2j+1+\frac{\hat{\ell}}{k+2} \right)+ O(j^{-1}),\qquad j\to +\infty,
 \end{align}
where $\hat{\ell}=l+\frac12$ Moreover, for fixed $j$, the following large $\hat{\ell}$ asymptotics holds
  \begin{align}
\label{eq:lahatjp}
& \left(\la_j\right)^{\frac{k+3}{2}}=
\frac{\big(k+3)^{\frac{k+3}{2}}}{\big(k+2\big)^{\frac{k+2}{2}}} \hat{\ell}+ \left(\frac{8 (2+k)^{1+k}}{(3+k)^{4+k}} \right)^{-\frac12} \left(2j+1\right)+O\big(\hat{\ell}^{-1}\big),\quad  \hat{\ell}\to +\infty.
\end{align}
\end{prop}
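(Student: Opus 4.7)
The plan is to reduce the spectral problem for the zeros of $Q_+$ to a standard self-adjoint Sturm--Liouville eigenvalue problem, and then extract the quantitative asymptotics from the complex WKB machinery developed earlier in this section together with Lemma~\ref{lem:lajas}.

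First, since $d_0=0$, I would apply the ($\la$-dependent) change of variables and parameters \eqref{eq:xtowparameters} to the boundary value problem \eqref{eq:bvgeneral} with $\la>0$ real. This transforms it into the half-line Schr\"odinger eigenvalue problem
\be
-\varphi''(w)+\left(w^{2\alpha}+\frac{\tilde l(\tilde l+1)}{w^2}\right)\varphi(w)=E\,\varphi(w),\qquad w\in(0,+\infty),
\ee
with $\alpha=1/(k+2)$, $\tilde l+\tfrac12=2(l+\tfrac12)/(k+2)$, $E=(2\la/(k+2))^{2/(k+3)}$, and $L^2$ boundary conditions $\varphi\in L^2(0,+\infty)$, $\varphi(w)=O(w^{\tilde l+1})$ as $w\to 0^+$. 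The potential is positive, bounded below, and grows at both endpoints, hence the associated operator is essentially self-adjoint with a simple, positive, discrete spectrum $0<E_0<E_1<\cdots\to+\infty$. Transporting back via \eqref{eq:xtowparameters}, this yields a simple, real, positive, divergent sequence of values of $\la$ at which $\on{Wr}(\psi^{(0)},\chi^+)$, and hence $Q_+(\la)$, vanishes. To rule out non-real zeros, I would compare the Weyl asymptotics of $E_j$ with the order of growth of $Q_+$ obtained from the Sibuya estimates of Proposition~\ref{prop:sibuya}: the two orders match, so a Hadamard factorization argument shows that the zero set of $Q_+$ is exactly the sequence $\{\la_j\}$ produced above.

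Second, for the quantitative asymptotics, I would implement the complex WKB construction of Proposition~\ref{pro:continuation} for both $\psi^{(0)}$ and $\chi^+$ along the ray $x=a\,e^{-i2\arg\la/(k+3)}$, $a>0$, using the effective potential \eqref{eq:reducedpot}. In the two classically forbidden intervals $(0,x_-)$ and $(x_+,+\infty)$, one obtains exponentially-decaying WKB templates for the subdominant solutions; in the classically allowed window $(x_-,x_+)$ one uses an oscillatory WKB template, and matches the three representations across the turning points $x_\pm$ by a local Airy model. After unwinding the normalizations in \eqref{Q*}--\eqref{Q Q*}, this produces a formula of the schematic form
\be
Q_+(\la)=c(\la,\hat\ell)\,\sin\!\left(\tfrac{\pi}{2}\bigl(I(\la^{(k+3)/2},\hat\ell)-1\bigr)+O(\la^{-\epsilon})\right),\qquad c(\la,\hat\ell)\neq 0,
\ee
so that the zeros of $Q_+$ are, at leading order, the solutions of the Bohr--Sommerfeld quantization condition on $\la^{(k+3)/2}$ studied in Lemma~\ref{lem:lajas}. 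The asymptotics \eqref{eq:hatlajellfixed} and \eqref{eq:lahatjp} then follow by inserting the expansions \eqref{eq:J10} of $J_1$ near $u=0$ and \eqref{eq:J2u*} of $J_2$ near $u=u_\ast$, respectively.

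The main obstacle is the uniform complex WKB analysis near the turning points. The Airy matching must be carried out with error bounds that are uniform in $(\la,\hat\ell)$ in each of the two asymptotic regimes, which requires careful bookkeeping of the forcing integrals in the Volterra representation of Proposition~\ref{pro:continuation}. The more delicate regime is $\la^{(k+3)/2}\sim u_\ast\hat\ell$, in which the two turning points coalesce at $x_\ast=(k+2)/(k+3)$: here the Airy model must be replaced by a parabolic-cylinder (harmonic-oscillator) model, and producing a single uniform WKB description that interpolates between the two situations and delivers the error terms $O(j^{-1})$ and $O(\hat\ell^{-1})$ is the technical core of the argument.
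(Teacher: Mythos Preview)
Your proposal is correct and follows essentially the same route as the paper: reduce via the change of variables \eqref{eq:xtowparameters} to the self-adjoint Dorey--Tateo problem to get simplicity, reality and positivity of the spectrum, then use the complex WKB representation of $\on{Wr}(\chi^+,\psi^{(0)})$ to obtain the Bohr--Sommerfeld condition $I(\la^{(k+3)/2},\hat\ell)=2j+1+O(\cdot)$ and read off \eqref{eq:hatlajellfixed}, \eqref{eq:lahatjp} from Lemma~\ref{lem:lajas}. The paper's proof is in fact sketchier than yours---it cites \cite{DT} for the Sturm--Liouville part and defers the uniform turning-point analysis to a future publication---whereas you spell out the Hadamard/order-matching step for ruling out non-real zeroes and correctly identify the parabolic-cylinder model needed in the coalescing regime $\la^{(k+3)/2}\sim u_*\hat\ell$.
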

\begin{proof}
As we studied in Section \ref{sub:BLZopers}, under the change of variable \eqref{eq:xtowparameters},
the boundary value problem \eqref{eq:bvgeneral} becomes
\begin{equation}\label{eq:doreytateoeq}
\begin{cases}
 \psi''(w)=\left(w^{\frac2{k+2
}}-\left(\frac{2}{k+2}\right)^{-\frac{2}{k+3}}\hat{\la}+\frac{\tilde{l}(\tilde{l}+1)}{w^2} \right)\psi(w),\\
 \lim_{w \to 0^+} w^{\tilde{l}} \psi(w)=
 \lim_{w\to+\infty} \psi(w)=0.
 \end{cases}
\end{equation}
The boundary value problem (\ref{eq:doreytateoeq}) is a self-adjoint Sturm Liouville problem which has been intensively studied  and, in particular, in connection to  the ODE/IM correspondence.
In \cite[Section 2]{DT}, the spectrum is shown to be simple and positive,
and the eigenfunction corresponding to the $j$-th eigenvalue $\hat{\la}_j$ has
has exactly $j$ zeroes (called \textit{nodes}) on the open semi-axis $w>0$.
Formula (\ref{eq:hatlajellfixed}) is also derived (but not proved) using the WKB approximation.
We outline the main steps in the derivation of (\ref{eq:hatlajellfixed}) and (\ref{eq:lahatjp}),
the details will appear elsewhere.
By means of the WKB method, one obtains that
\begin{equation}\label{eq:WKBwr}
 \on{Wr}(\chi_+(x,\la;l),\psi^{(0)}(x,\la;l))=C \left(1- e^{i \pi I(\la,\hat{\ell})} \left(1+ \rho(\la,\hat{\ell}) \right) \right),
\end{equation}where $C \neq 0$ is a normalizing factor, and $\rho$ a correction term such that
\begin{equation}\label{eq:rhowkb}
\rho(\la,\hat{\ell})=O\big((\la^2 +\hat{\ell}^2)^{-\frac12}\big),\qquad  \la+\hat{\ell} \to \infty .
\end{equation}
Therefore, zeroes of $Q_+^*(\la)=0$ are solutions of the  equation
\begin{equation*}
 I(\la_j,\hat{\ell})=2j+1 + O\big((\la^2 +\hat{\ell}^2)^{-\frac12}\big),\qquad j \in \mathbb{Z}_{\geq 0}.
\end{equation*}
This a small perturbation of equation (\ref{eq:deflajp}), which was shown to have a unique solution $\widetilde{\la}_j(\hat{\ell})$ for every $j, \hat{\ell}$; see Lemma \ref{lem:lajas}.
One can prove that if
$\tilde{\la}_j+\hat{\ell}$ is large enough then, in the interval of length  $O\big((\la_j +\hat{\ell})^{-\frac12}\big)$, centered at $\widetilde{\la}_j(\hat{\ell})$ 
there exists a unique zero $\la'_j$ of $Q^*_+$. Moreover, the eigenfunction corresponding to $\la'_j$ has exactly $j$ zeroes in the positive semi-axis.
Hence formulae \eqref{eq:hatlajellfixed}, \eqref{eq:lahatjp} follow from formulae
\eqref{eq:lajlargej}, \eqref{eq:lajlargep}.
\end{proof}

\subsection{Asymptotics of the zeroes, the case of \texorpdfstring{$d_0>0$}{d0>0}}
As it was shown above, in the case of $d_0=0$, the zeroes of $Q_+$ are simple, real, and positive,
and precise asymptotic formulae both in the large $\hat{\la}$ and in the large $l$ regime are available.
In the case $d_0>0$, the situation is rather different, since, in general, 
the boundary value problem (\ref{eq:bvgeneral}) cannot be transformed into a self-adjoint Sturm-Liouville problem. In fact, for general $l$, the zeroes of $Q^*$ may be not all real and positive.
However, precise asymptotic formulae still can be obtained and we present them in Conjecture \ref{conj:largepd0=d1}  which complements Conjectures \ref{conj d0=d1} and \ref{conj r}.
To a partition $\mu=(\mu_1,\dots,\mu_a)$ we associate the  sequence $n^\mu_{0},n^\mu_{1},\dots$ given by
\ben \label{eq:njmu}
n^\mu_j=j-\mu_{j+1}^t, \qquad j\in \Z_{\geq 0},
\een
where $\mu^t$ is the partition dual (transposed) to $\mu$ and $\mu_j^t=0$ for $j >\mu_1$.  Notice that the sequence $n^\mu_j$ for large
$j$ does not depend on $\mu$, namely, $\nu_j^{\mu}=j$ when $j$ is large enough.
\begin{conj}\label{conj:largepd0=d1}
Fix $d_0>0$ and $k>-2$.
\begin{enumerate}[i)]
\item Fix a real $l>-\frac12$ and a solution  $\{s_i\}_{i=1}^{d_0}$, $\{t_j\}_{j=1}^{d_1}$.
There exists a sufficiently large disc $D\subset \C$ and a $j_0 \in \mathbb{Z}_{\geq 0}$ such that the zeroes of the spectral determinant $Q_+$, corresponding to  $\{s_i\}_{i=1}^{d_0}$, restricted to $\C \setminus D$ are simple and form a sequence $\{\la_j\}_{j\geq j_0}$ such that (\ref{eq:hatlajellfixed}) holds.
\item If $l$ is sufficiently large and positive, for every solution a solution  $\{s_i\}_{i=1}^{d_0}$, $\{t_j\}_{j=1}^{d_1}$ of \eqref{Gaudin BAE BLZ} with $d_0=d_1$, the zeroes of the corresponding spectral determinant $Q_+$ are all simple, real, and positive.
They form an increasing and diverging sequence $\{\la_j\}_{ j \in \mathbb{Z}_{\geq 0}}$ which satisfies the large $j$ asymptotic formula (\ref{eq:hatlajellfixed}).
\item If $l$ is sufficiently large and positive, the solutions of the Gaudin BAE \eqref{Gaudin BAE BLZ}  with $d_0=d_1$ are parameterised by partitions of $d_0$ via the asymptotic formula (\ref{s asymp}).
For any partition $\mu$ of $d_0$ and for fixed $j \in \mathbb{Z}_{\geq 0}$ the $j$-th zero
$\la_j$ of the corresponding spectral determinant $Q_+$ satisfies the large $\hat{\ell}$ asymptotic formula,
 \begin{equation}\label{eq:lajpmu}
 \left(\la_j\right)^{\frac{k+2}{3}}=\frac{\big(k+3)^{\frac{k+3}{2}}}{\big(k+2\big)^{\frac{k+2}{2}}}\,\hat{\ell} +
 \left(\frac{8 (2+k)^{1+k}}{(3+k)^{4+k}} \right)^{-\frac12} (2n^{\mu}_j+1)+ O (\hat{\ell}^{-1}), 
\end{equation}
as $\hat{\ell} \to + \infty,$ where  $n_j^{\mu}$ is as in  (\ref{eq:njmu}).
\end{enumerate}
\end{conj}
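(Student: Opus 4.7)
The plan is to extend the complex WKB analysis of Proposition \ref{prop:asymptoticground} to the case $d_0>0$ by treating the apparent singularities at the $s_i$ as perturbative corrections that leave the WKB Wronskian formula \eqref{eq:WKBwr} essentially intact. Since trivial monodromy at $s=s_i$ means that local solutions extend single-valuedly across $s_i$, the Sibuya solution $\psi^{(0)}$ and the Frobenius solution $\chi_+$ can still be constructed globally, and the contour along which the WKB integral $I(\la,\hat{\ell})$ is computed can be deformed to avoid all the $s_i$ (which, by Conjectures \ref{conj d0=d1} and \ref{conj r}, cluster near the point $x_*=(k+2)/(k+3)$ on scales much smaller than $1$). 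The effective potential $U(x,\la)$ in \eqref{eq:reducedpot} is the same as in the $d_0=0$ case, so the leading WKB asymptotics will be identical; the apparent singularities contribute only through the forcing term $F(x,\la)$ of \eqref{eq:F}, where their effect is $O(1)$ locally but integrable against the exponential damping factor coming from $\Re S$.

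For part (i), the strategy is to show that outside a sufficiently large disc $D$ containing all $s_i$, the correction term $\rho(\la,\hat{\ell})$ in \eqref{eq:WKBwr} still satisfies $\rho=O(\la^{-1})$ as $\la\to\infty$ with $\hat{\ell}$ fixed. The enlarged disc $D$ must absorb the \emph{anti-Stokes curves} emanating from the $s_i$, whose contribution to the Volterra integral equation is uniformly bounded once $|\la|$ exceeds the inverse squared distance from $s_i$ to the WKB contour. Inversion of $I(\la_j,\hat{\ell})=2j+1+O(\la_j^{-1})$ by the implicit function theorem on $\widetilde{\la}_j(\hat{\ell})$ from Lemma \ref{lem:lajas} then yields \eqref{eq:hatlajellfixed} for $j\geq j_0$. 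For part (ii), I would argue that for $\hat{\ell}$ large the turning points $x_\pm(\hat{\ell}/\la)$ stay uniformly bounded away from each of the $s_i$, so the WKB approximation is valid on the entire contour of integration except in a shrinking neighbourhood of $x_*$; combining this with the fact that real $\la$ gives real eigenvalues of a self-adjoint problem (after an appropriate gauge transformation, possible once the apparent singularities are well-separated from the turning points) forces all zeros of $Q_+$ to lie on $\R_{>0}$.

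Part (iii) is where I expect the substantive work, and it hinges on rescaling near the double turning point. Set $x=x_*+\hat{\ell}^{-1}\,\xi$; by the asymptotics of \eqref{s asymp} the apparent singularities sit at $\xi=c^{-1}v_i^\mu+O(\hat{\ell}^{-1})$ with $c^{-1}=\tfrac{k+2}{k+3}\bigl(\tfrac{2}{(k+2)(k+3)}\bigr)^{1/4}$, and the effective potential $U(x,\la)$ in the rescaled coordinate, after pulling out an overall factor, approaches
\begin{equation*}
\partial_\xi^2-\xi^2-\sum_{i=1}^{d_0}\frac{2}{(\xi-c^{-1}v_i^\mu)^2}+E,
\end{equation*}
where $E$ is related to the rescaled spectral parameter through $\la^{(k+3)/2}=u_*\hat{\ell}+\bigl(\tfrac{8(2+k)^{1+k}}{(3+k)^{4+k}}\bigr)^{-1/2}E+O(\hat{\ell}^{-1})$, consistently with \eqref{eq:J2u*}. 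By Oblomkov's theorem, this local operator is precisely the $\mu$-th monodromy-free extension of the harmonic oscillator, whose discrete spectrum with decay at both ends of $\R$ is known to consist of the values $\{2n+1 : n\in\Z_{\geq 0}\setminus\{\mu_j^t+j-1\}\}$, i.e.\ exactly $\{2n_j^\mu+1 : j\in\Z_{\geq 0}\}$ with $n_j^\mu$ as in \eqref{eq:njmu}. Matching this inner expansion with the outer WKB expansion via a parabolic cylinder matching argument yields \eqref{eq:lajpmu}.

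The main obstacle is the singular matching of the outer WKB approximation with the inner Hermite-type eigenvalue problem at the coalescing turning points, uniformly in the partition $\mu$: one needs to verify both that the outer problem selects precisely the subdominant-at-$0$ and subdominant-at-$\infty$ boundary conditions in the inner variable (so that the quantisation agrees with the discrete spectrum of the Darboux-transformed oscillator), and that the corrections from the $O(\hat{\ell}^{-1})$ error in the positions of the $s_i$ produce an $O(\hat{\ell}^{-1})$ error in $\la_j$ rather than a larger one. A secondary difficulty is proving for part (ii) that the apparent singularities cannot produce spurious non-real zeros through a Stokes phenomenon; here I expect a deformation argument in $\hat{\ell}$, starting from the $l\to\infty$ regime where part (iii) puts the zeros on the real axis, and tracking them as $\hat{\ell}$ decreases, to be the cleanest route.
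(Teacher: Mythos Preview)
Your overall strategy matches the paper's own treatment closely: the paper presents only heuristic arguments for this conjecture (explicitly deferring a full proof to a separate publication), and those arguments consist of precisely the WKB-plus-rescaling scheme you outline. For part (i) the paper agrees that the apparent-singularity terms are bounded away from $0$ and $\infty$ and hence negligible as $|\la|\to\infty$, so the WKB formula \eqref{eq:WKBwr} carries over unchanged; for part (iii) the paper performs the same multiple-scaling reduction to the monodromy-free harmonic oscillator \eqref{O osc} with spectrum $\{2n_j^\mu+1\}$, just as you propose.

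Two concrete issues, however. First, your rescaling in part (iii) has the wrong exponent: recall that $l=\ell^2$ in \eqref{s asymp}, so $s_i-x_*=O(\ell^{-1})=O(\hat\ell^{-1/2})$, not $O(\hat\ell^{-1})$. The correct inner variable is $x=x_*+C\,t\,\hat\ell^{-1/2}$ (the paper's \eqref{eq:triplescaling}); with your $\hat\ell^{-1}$ scaling the $s_i$ escape to infinity in the $\xi$ coordinate and the reduction to the Oblomkov oscillator collapses. Second, your self-adjointness argument for part (ii) is not available: the paper explicitly notes, in the paragraph immediately preceding the conjecture, that for $d_0>0$ the boundary value problem \eqref{eq:bvgeneral} \emph{cannot} in general be cast as a self-adjoint Sturm--Liouville problem, which is precisely why reality and positivity of the roots are conjectural rather than automatic. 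Your deformation-in-$\hat\ell$ fallback is closer in spirit to what is done in \cite{CM1,CM2} for the BLZ side and is the more promising route, but it too remains to be made rigorous.
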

Below, we present some arguments to support our conjecture.
Conjecture \ref{conj:largepd0=d1} i) can be actually proved by means of the complex WKB method. In fact, when $l$ is fixed and $|\la|$ is large, the term of the potential
$$\sum_{i=1}^{d_0}\left(\frac{2}{(x-s_i)^2}+\frac{k+s_i/(s_i-1)}{x(x-s_i)}\right),$$
is bounded away from $0$ and $\infty$ and therefore it is negligible.
Hence, the very same asymptotic analysis developed in the case $d_0=0$ yields that the asymptotic formulae \eqref{eq:WKBwr}, \eqref{eq:rhowkb} hold when $d_0>0$ too.
We turn now our attention to the large $l$ limit, Conjecture \ref{conj:largepd0=d1} ii) and iii).
We deduce (\ref{eq:lajpmu}) using a heuristic method, that mimics the analysis
of the large momentum limit of the BLZ monster potentials \cite{CM1}. One can develop a rigorous approach similar to \cite{CM1}, but we do not pursue it in the present paper.
The heuristic method is based on the well-known fact that the bottom of the spectrum for
the boundary value problem of an anharmonic oscillator in the large $l$ limit is slightly above
the value of the spectral parameter where the potential has a double zero.
Moreover, the corresponding wave functions
are localized at the double zero of the potential. In practical terms,
one can deduce the large $\hat{\ell}$ asymptotics of the bottom of the spectrum by performing a multiple scaling limit
on equation (\ref{eq:bvgeneral}) with respect to the large parameter $\hat{\ell}=l+\frac12$. Namely, one
assumes that the coordinate $x$, the spectral parameter $\la$, and the poles  $\{s_i\}_{i=1}^{d_0}$, 
scale as powers of $\hat{\ell}$ as $\hat{\ell} \to +\infty$, and fixes the scaling parameters
in such a way that the boundary value problem in the limit is non-trivial.

Let us first illustrate this method in the case of the ground state, to obtain another derivation of the
formula (\ref{eq:lahatjp}).
As we have discussed after the introduction of the quantities $x_*$ ad $u_*$ in equations \eqref{eq:x*},\eqref{eq:u*},
the reduced potential $U$ has a double zero if and only if $\la_*=u_* \hat{\ell} +O(\hat{\ell}^{-1})$; moreover, the double zero is $x_*$. Therefore, we make the following linear change of variable
\begin{equation}\label{eq:doublescaling}
 x=\frac{k+2}{k+3}+ \left( \frac{4+2k}{3+k} \right)^\frac{1}{4} t \, \hat{\ell}^{-\frac12}, \quad \la=\frac{\big(k+3)^{\frac{k+3}{2}}}{\big(k+2\big)^{\frac{k+2}{2}}}\,\hat{\ell} + \left(\frac{8 (2+k)^{1+k}}{(3+k)^{4+k}} \right)^{-\frac12} \nu.
\end{equation}
After (\ref{eq:doublescaling}), the Schr{\"o}dinger equation
 $L^G\Psi(x)=0$ with $d_0=0$ reads 
 $$ \phi''(t)=\left(t^2- \nu+ O (\hat{\ell}^{-\frac12}) \right)\phi(t)$$ 
 and the boundary conditions (\ref{eq:bvlg0}) become 
 $$\lim_{t \to +\infty}\phi(t)=\lim_{t \to - a \hat{\ell}^{\frac12}}\phi(t)=0,\qquad a=\frac{k+2}{k+3} \left( \frac{4+2k}{3+k} \right)^{-\frac{1}{4}}.$$
Hence, up to $O(\hat{\ell}^{-\frac12})$ correction the boundary value problem (\ref{eq:bvgeneral}) reduces to
\begin{align*}
\begin{cases}
  &\phi''(t)=\left(t^2- \nu\right) \phi(t),\\
  &\lim_{t\to\pm \infty}\phi(t)=0.
\end{cases}
\end{align*}
This boundary value problem describes the spectrum of the quantum harmonic oscillator,
which is given by the sequence $\{\nu_j=2j+1\}_{j\in \mathbb{Z}_{\geq 0}}$.
Hence, using (\ref{eq:doublescaling}), we deduce that
$$\la_j=\frac{\big(k+3\big)^{\frac{k+3}{2}}}{\big(k+2\big)^{\frac{k+2}{2}}}\,\hat{\ell} + \left(\frac{8 (2+k)^{1+k}}{(3+k)^{4+k}} \right)^{-\frac12} (2j+1)+O(\hat{\ell}^{-\frac12}),$$
which is a slightly weaker\footnote{The $O(\hat{\ell}^{-\frac12})$ correction actually vanishes because the $O(\hat{\ell}^{-\frac12})$ term of the potential is odd.}
variant of (\ref{eq:lahatjp}).
We turn now our attention to the case of a non-trivial solution of Gaudin BAE satisfying the asymptotic formulae
(\ref{s asymp}). Fixing a partition $\mu=(\mu_1,\dots ,\mu_a)$ of $d_0$, we
define
\begin{align}\label{eq:triplescaling}
 x&=\frac{k+2}{k+3}+ \left( \frac{4+2k}{3+k} \right)^\frac{1}{4} t \, \hat{\ell}^{-\frac12},\notag\\
 \la&=\frac{\big(k+3)^{\frac{k+3}{2}}}{\big(k+2\big)^{\frac{k+2}{2}}}\,\hat{\ell} + \left(\frac{8 (2+k)^{1+k}}{(3+k)^{4+k}} \right)^{-\frac12}
 \nu, \notag \\
 s_i&=\frac{k+2}{k+3}+ \frac{2^{\frac{1}{4}} (k+2)^{^\frac{3}{4}}}{(k+3)^\frac{5}{4}} v_i^{\mu} \, \hat{\ell}^{-\frac12}+ \tau_i \hat{\ell}^{-1},
\end{align}
where $v_i^{\mu}$ is, as in \eqref{s asymp}, the $i$-th zero of the Wronskian of the Hermite polynomials associated
to the partition $\mu$.
Since in the new variables, the Schr{\"o}dinger equation $\psi''(x)=V(x)\psi(x)$ reads
\begin{equation*}
 \phi''(t)=\left(t^2- \nu+ \sum_{i=1}^{d_0} \frac{2}{(t-v^{\mu}_i)^2}+ O (\hat{\ell}^{-\frac12}) \right)\phi(t),
\end{equation*}
up to corrections of order $O(\hat{\ell}^{-\frac12})$, the boundary value problem (\ref{eq:bvlg0}) reduces to
\begin{align*}
\begin{cases}
  &\phi''(t)=\left(t^2+ \sum_{i=1}^{d_0} \frac{2}{(t-v^{\mu}_i)^2})- \nu \right) \phi(t),\\
  &\lim_{t\to\pm \infty}\phi(t)=0.
\end{cases}
\end{align*}
Non-trivial solutions of the above BVP are also explicitly known, see \cite[equation (3.2)]{CM1}.
The spectral points are $\nu_j^{\mu}=2 n_j^{\mu}+1$ with $n_j^{\mu}$ as in (\ref{eq:njmu}).
From this, using (\ref{eq:triplescaling}), we deduce the asymptotic formula (\ref{eq:lajpmu}) for the bottom of the spectrum in the large $l$ limit.

\subsection{Asymptotics of zeroes of the spectral determinants of BLZ opers}
The large momentum analysis of BLZ opers was developed in \cite{CM1}, where the focus was slightly different from the present paper.
In order to make the comparison simpler, we summarise the content of \cite{BLZ4} and \cite{CM1} in a single conjecture
similar to Conjecture \ref{conj:largepd0=d1}.
\begin{conj}\label{conj:largepblz}
 Fix $d\geq 1$ and $k>-2$ and let $l>-\frac12$.  For every $k$ and $l$, let $\bar{k}$ and $\bar{l}$ be given by \eqref{parameters change}.
\begin{enumerate}[i)]
\item \label{conj:largepblz1} For any $\bar{l}$, the number of solutions to the trivial monodromy system \eqref{BLZ BAE} is  at most the number of partition of $N$. Moreover, the system has exactly partition of $N$ solutions if $\bar{l}$ is generic.
\item\label{conj:largepblz2} If $\bar{l}$ is large enough, the zeroes of the spectral determinant $\overline{Q}_+$ are all simple, real, and positive.
 Denoting by $\{\bar{\la}_j\}_{ j \in \mathbb{Z}_{\geq 0}}$ the increasing sequence of such zeroes then the sequence
$$\{ \la_j\}_{ j \in \mathbb{Z}_{\geq 0}}=\{(k+3)^{\frac{2}{k+3}}\, \bar{\la}_j\}_{ j \in \mathbb{Z}_{\geq 0}}$$
satisfies the large $j$ asymptotics
 \eqref{eq:lajlargej}.
\item \label{conj:largepblz3} For every partition $\mu$ of $d$ there exists a solution $z_j^{\mu}$ of \eqref{BLZ BAE} such that
  \begin{equation}\label{eq:zjlargelb}
  z_j^{\mu}= \frac{(1-\bar{k})}{\bar{k}}(\bar{l}+\frac12)^2+ \frac{2^{\frac94} (1-\bar{k})^{-\frac94}}{\bar{k}} v_j^{\mu} (\bar{l}+\frac12)^{\frac32}+
  O(\bar{l}+\frac12), \qquad \bar{l} \to \infty,
 \end{equation}
where $v_j^\mu$ are  as in \eqref{s asymp}. Denoting by $\lbrace \bar{\la}^{\mu}_j\}_{j \in \mathbb{Z}_{\geq 0}}$ the sequence of roots of the corresponding spectral determinant $\overline{Q}$, the sequence $\{ \la^{\mu}_j \}_{j \in \mathbb{Z}_{\geq 0}}=\{(k+3)^{\frac{2}{k+3}}\,\bar{\la}^{\mu}_j\}_{j \in \mathbb{Z}_{\geq 0}}$ satisfies the large $l$ asymptotics \eqref{eq:lajlargep}.
 \end{enumerate}
\end{conj}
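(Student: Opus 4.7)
The plan is to prove the three parts in a coupled way, following the strategy outlined just above the conjecture which summarises \cite{BLZ4,CM1}. Part (iii) produces at least $p(d)$ distinct solutions of \eqref{BLZ BAE} in the large-$\bar l$ regime; part (i) then combines this with an upper bound to count all solutions; and part (ii) is a WKB calculation in the same large-$\bar l$ regime that perturbs the $d=0$ self-adjoint Dorey--Tateo problem treated in Proposition \ref{prop:asymptoticground}.

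For part (iii), I would substitute the ansatz
\begin{equation*}
z_j=\frac{1-\bar k}{\bar k}\hat\ell^{\,2}+\frac{2^{9/4}(1-\bar k)^{-9/4}}{\bar k}v_j\,\hat\ell^{\,3/2}+w_j\,\hat\ell,\qquad \hat\ell=\bar l+\tfrac12,
\end{equation*}
into \eqref{BLZ BAE} and expand in inverse half-powers of $\hat\ell$. The top-order term fixes the coefficient $(1-\bar k)/\bar k$ via $\bar f$; the first non-trivial correction reduces, after a routine computation, to the monodromy-free harmonic oscillator system \eqref{O eq}, forcing $\{v_j\}$ to be the zeros of $V_\mu$ for some non-degenerate partition $\mu$ of $d$. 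The subsequent order uniquely determines $w_j$, and the implicit function theorem upgrades the formal series to a genuine solution for $\hat\ell$ sufficiently large. To then deduce \eqref{eq:lajlargep} for the zeros of the associated $\overline{Q}_+^\mu$, I would perform the triple scaling
\begin{equation*}
z=\frac{1-\bar k}{\bar k}\hat\ell^{\,2}+\frac{2^{9/4}(1-\bar k)^{-9/4}}{\bar k}\,t\,\hat\ell^{\,3/2},\qquad \bar\la=\bar\la_*(\hat\ell)+c\,\nu,
\end{equation*}
on the ODE $\bar L\psi=0$, reducing it to the Darboux--Crum transformed harmonic oscillator
\begin{equation*}
\phi''(t)=\Big(t^2+\sum_{i=1}^{d}\frac{2}{(t-v_i^\mu)^{2}}-\nu\Big)\phi(t),\qquad \phi(\pm\infty)=0,
\end{equation*}
up to $O(\hat\ell^{-1/2})$ corrections. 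The spectrum $\{\nu_j^\mu=2n_j^\mu+1\}$ of this BVP is known explicitly via Hermite--Wronskian eigenfunctions, cf.\ the analogous computation leading to \eqref{eq:lajpmu} for $L^G$ opers in the excerpt; inverting the scaling yields \eqref{eq:lajlargep}.

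For part (ii), I would follow the WKB scheme of Proposition \ref{prop:asymptoticground} with the observation that the additional poles $\bar z_i\sim \hat\ell^{\,2}$ lie far from the turning points of the reduced potential and therefore perturb the WKB action \eqref{eq:Ilap} only by $O(\hat\ell^{-1/2})$. The leading WKB quantisation condition $I(\bar\la,\hat\ell)=2j+1+o(1)$ therefore has the same leading solutions as in Lemma \ref{lem:lajas}, yielding \eqref{eq:lajlargej}. Reality, positivity and simplicity for $\bar l$ large follow by perturbation from the self-adjoint Dorey--Tateo problem \cite{DT}. For part (i), the lower bound $\geq p(d)$ at generic $\bar l$ is precisely what part (iii) produces (each non-degenerate partition of $d$ giving a distinct isolated solution). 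For the upper bound, the cleanest route is the quantum KdV / Virasoro correspondence: each solution of \eqref{BLZ BAE} should correspond to a joint eigenvector of the KdV Hamiltonians at level $d$ in $M_{c,\Delta}$, whose level-$d$ subspace has dimension $p(d)$; an alternative is a direct Bezout/Bernstein bound on \eqref{BLZ BAE} together with verification of saturation in the large-$\bar l$ regime.

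The main obstacle will be the rigorous multiple-scaling reduction in part (iii): controlling the $O(\hat\ell^{-1/2})$ remainder uniformly in the eigenvalue index $j$ requires a delicate WKB analysis analogous to that of \cite{CM1}, and in particular the transition from the scaled BVP spectrum to honest zeros of $\overline Q_+^\mu$ must be carried out with enough uniformity to cover all $j$. A secondary obstacle is the upper bound $p(d)$ in part (i): a Bezout-type count for \eqref{BLZ BAE} is likely to be badly non-tight, and a conceptual proof via Bethe ansatz completeness for quantum KdV in generic Virasoro Verma modules does not seem to be available in full. Finally, extending statements (ii) and (iii) beyond the large-$\bar l$ regime (where non-real zeros of $\overline Q_+$ can appear and some partitions may give coincident solutions) lies outside the scope of the scaling method and would require independent algebraic input.
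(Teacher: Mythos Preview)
This statement is explicitly a \emph{conjecture} in the paper, not a theorem: the paper does not attempt a proof. Immediately after stating it, the paper adds a remark attributing part~(i) to \cite[Section~3]{BLZ4}, part~(ii) to \cite[equation~(2.2)]{BLZ4}, and part~(iii) to \cite{CM1}; in particular it notes that the existence claim in (iii) is itself still a conjecture in that reference (\cite[Conjecture~5.8]{CM1}), while the converse direction (every solution has the asymptotics \eqref{eq:zjlargelb} for some $\mu$) is \cite[Theorem~5.5]{CM1}. So there is no ``paper's own proof'' to compare against.

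Your proposal is a reasonable outline of how one would try to turn this into a theorem, and it tracks closely the heuristic scaling arguments the paper gives for the analogous $L^G$ case in Section~\ref{heu section} and the strategy of \cite{CM1,CM2}. You have also correctly identified the genuine obstacles: the uniform-in-$j$ control of the WKB remainder in the triple scaling, and above all the upper bound $\le p(d)$ in part~(i). Your two suggested routes for the latter (Bethe ansatz completeness for quantum KdV, or a Bezout count) are exactly the ones the community has in mind, and neither is presently a complete proof; this is why the statement remains a conjecture. One small correction: in part~(ii) you say the poles $\bar z_i\sim\hat\ell^{\,2}$ lie far from the turning points, but under the change of variables relating BLZ to the $L^G$ picture the poles actually concentrate near the double turning point $x_*=(k+2)/(k+3)$, which is precisely why the scaling limit picks up the Darboux--Crum oscillator rather than the unperturbed one; the large-$j$ asymptotics \eqref{eq:lajlargej} still holds because at large $\la$ the pole contribution is subleading, not because the poles are far away.
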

\begin{rem}
Conjecture \ref{conj:largepblz} \ref{conj:largepblz1})  can be found in \cite[Section 3]{BLZ4}. Conjecture \ref{conj:largepblz} \ref{conj:largepblz2}) is the precise statement of \cite[equation (2.2)]{BLZ4}. Conjecture \ref{conj:largepblz} \ref{conj:largepblz3}) can be found in \cite{CM1}.  In \cite[Theorem 5.5]{CM1} it is proved that any solution of \eqref{BLZ BAE} admits the asymptotic
expansion \eqref{eq:zjlargelb} for some partition $\mu$ of $d_0$; in \cite[Conjecture 5.8]{CM1} it is conjectured that for every partition $\mu$, there exists a unique solution of \eqref{BLZ BAE} satisfying \eqref{eq:zjlargelb};  finally formula \eqref{eq:lajlargep} is equivalent to \cite[formula (5.10)]{CM1}.\qed
\end{rem}

\section{Comparison between \texorpdfstring{$Q$}{Q}-functions}\label{comparison section}
In this section we compare the spectral determinant $Q_+$ and $\overline{Q}_+$ of respectively
$L^G$ opers with $d_1=d_0$ and the BLZ opers with $d=d_0$, assuming Conjectures \ref{conj:largepd0=d1} and \ref{conj:largepblz}.
\subsection{The case of \texorpdfstring{$d_0>0$}{d0>0}}
In the case $d_0=0$, see equation \eqref{eq:QQbargroundstate}, the spectral determinant $Q_+$ and
$\overline{Q}_+$ were shown to coincide using the fact that $L^G$ opers and BLZ opers are related by a change of variables.
We focus now on the case $d_0>0$, for which there is no direct relation between $L^G$ opers and BLZ opers.
Let us fix
$d_0>0$, $k$, and $l$ and denote by 
$$\underline{s}=\{s_i\}_{i=1}^{d_0},\qquad \underline{t}=\{t_j\}_{j=1}^{d_0}$$
solutions to the Gaudin BAE \eqref{Gaudin BAE BLZ} with $d_0=d_1$ and by
$$\underline{z}=\{\bar{z}_j\}_{j=1}^{d_0}$$ 
solutions to the trivial monodromy equations for BLZ opers \eqref{BLZ BAE},
with $\bar{k}$ and
$\bar{l}$ given by \eqref{parameters change}.
The sets of solutions to these two algebraic systems
induce two sets of
of spectral determinants $Q_+(\cdot;\underline{s})$ and $\overline{Q}_+(\cdot;\underline{z})$, which we aim to compare.
So far we have proved or provided evidence for the following facts.

1. Each of these spectral determinants satisfy the same $QQ$-system \eqref{QQsystem}, hence of the same
Bethe equations for Quantum KdV \eqref{eq:BAEQintro}. This is the content of Theorem \ref{thm:QQsystem} in the case of $L^G$ opers, while in the case of BLZ opers was proved in \cite{BLZ4}.

2. If $l$ is generic, the two sets
of spectral determinant have the same cardinality, namely the number of partitions of $d_0$.
Moreover, if  $l$ is large enough, solutions to \eqref{Gaudin BAE BLZ} and to \eqref{BLZ BAE} are explicitly parameterised by partitions of $d_0$ via formulae (\ref{s asymp}) and (\ref{eq:zjlargelb}). We denote them
by $\underline {t}^\mu,\underline{s}^{\mu}$ and $\underline{z}^{\mu}$.
This is the content of Conjecture \ref{conj d0=d1} and Conjecture \ref{conj:largepblz} i), ii).

3. Let
$Q_+(\la,\underline{s}^{\mu})$ and $\overline{Q}_+(\bar{\la},\underline{z}^{\mu})$ be the spectral determinants corresponding to $\underline {t}^\mu,\underline{s}^{\mu}$ and $\underline{z}^{\mu}$.
If $l$ is sufficiently large and positive,
the zeroes of $Q_+(\la,\underline{s}^{\mu})$ and $\overline{Q}_+(\bar{\la},\underline{z}^{\mu})$
are simple, real and positive.
If the zeroes of $Q_+(\la,\underline{s}^{\mu})$ and $\overline{Q}_+(\bar{\la},\underline{z}^{\mu})$ are ordered into the increasing sequence $\lbrace \la_j\rbrace_{j \in \mathbb{Z}_{\geq 0}}$ and $\lbrace \bar{\la}_j\rbrace_{j \in \mathbb{Z}_{\geq 0}}$, then
$\lbrace \la_j\rbrace_{j \in \mathbb{Z}_{\geq 0}}$ and $\lbrace (k+3)^{\frac{2}{k+3}}\,\bar{\la}_j\rbrace_{j \in \mathbb{Z}_{\geq 0}}$ satisfy the large $j$ (\ref{eq:hatlajellfixed}) and large $l$ asymptotics \eqref{eq:lajpmu}. This is the content of Conjecture \ref{conj:largepd0=d1} and Conjecture \ref{conj:largepblz} iii).

We show below that the above points 1,2, and 3, together with \eqref{eq:BAEQintro}, imply that if $l$ is large enough there exists a constant $c \neq 0$ -- possibly depending on $\mu,l,k$ -- such that
$\overline{Q}_+(\la,\underline{z}^{\mu})= c \,  Q_+((k+3)^{\frac{2}{k+3}}\, \la,\underline{s}^{\mu})$.
To this aim we briefly review below the results of \cite{CM2} on the classification of solutions of Bethe equations for Quantum KdV \eqref{eq:BAEQintro}
 in the large momentum limit\footnote{We remark that in \cite{CM2} the parameters are
$\alpha>0$ and $p$, with $ \alpha=\frac{1}{k+2}$ and $p=\frac{l+\frac12}{k+3}.$}.

\subsection{Large momentum analysis of the Bethe equations}
We fix a $k$, $-2<k<-1$ and consider the class of strictly increasing sequence of positive real numbers
$\lbrace\la_j\rbrace_{j \in \mathbb{Z}_{\geq 0}}$, that satisfies the large $j$ asymptotics \eqref{eq:hatlajellfixed}.
By our hypothesis on $k$, $\la_j=O(j^{1+\delta})$ with $\delta=\frac{1-k}{k+3}>0$, see \eqref{eq:hatlajellfixed}. Therefore the
Hadamard product
\begin{align}\label{eq:latoQ}
& Q(\la;\lbrace\la_j\rbrace_{j \in \mathbb{Z}_{\geq 0}})= \prod_{j \in \mathbb{Z}_{\geq 0}} \left(1-\frac{\la}{\la_j} \right),
\end{align}
converges to an entire function (of order of growth $\frac{k+3}{2}<1$). 
We assume that the sequence $\lbrace \la_j\rbrace$ satisfies the following system of equations:
\begin{equation}
\label{eq:BAEQ}
\gamma^{-2} \frac{Q\left(\la_{i} \, q^{-2};\lbrace\la_j\rbrace_{j \in \mathbb{Z}_{\geq 0}} \right)}{Q\left(\la_i \,q^2;\lbrace\la_j\rbrace_{j \in \mathbb{Z}_{\geq 0}}\right)} = -1 ,\; i \in \mathbb{Z}_{\geq 0}, \quad q=e^{\frac{k+2}{k+3}\pi i}, \; \gamma=e^{\frac{2l+1}{k+3}\pi i}.
\end{equation}
The above equations are the Bethe equations \eqref{eq:BAEQintro} and its solutions are called Bethe roots.
Define the  function $z:[0,\infty) \to [-\frac{2l+1}{k+3}, \infty)$ by
\begin{subequations} \label{eq:zfunction}
\begin{align}
  z(\la;\lbrace\la_j\rbrace_{j \in \mathbb{Z}_{\geq 0}})&= -\frac{2l+1}{k+3} + \frac{1}{2\pi i}
  \log \frac{Q\left(\la \, q^{-2};\lbrace\la_j\rbrace_{j \in \mathbb{Z}_{\geq 0}} \right)}  {Q\left(\la \,q^2;\lbrace\la_j\rbrace_{j \in \mathbb{Z}_{\geq 0}} \right)}, \label{eq:zfunction1}\\
 z(0)&=-\frac{2l+1}{k+3}, \label{eq:zfunction2}
\end{align}
\end{subequations}
where we used the analytic continuation from $\la=0$. Then  $z(\la;\lbrace\la_j\rbrace_{j \in \mathbb{Z}_{\geq 0}})$
is real analytic and
strictly increasing \cite[Lemma 1.1]{CM2}. It is known as the \textit{counting function} in the physics literature.
Using \eqref{eq:zfunction} and \eqref{eq:BAEQ}, and since $z$ is strictly monotone,
we deduce that there exists a strictly increasing sequence of integer numbers
$\lbrace n_j^{Q}\rbrace_{j \in \mathbb{Z}_{\geq 0}}$, with $n_0^Q \geq -\frac{2l+1}{k+3}$, such that
\begin{equation*}
 z(\la_i;\lbrace\la_j\rbrace_{j \in \mathbb{Z}_{\geq 0}})=n^Q_i+ \frac12.
\end{equation*}
This is the logarithmic form of the Bethe equations.
The sequence $n_j^Q$ is called the sequence of root-numbers\footnote{If an integer does not belong to the set of root-numbers is called a hole-number.}.
It is proved in \cite[Theorem 4.9]{CM2} that, owing to the large $j$ asymptotics \eqref{eq:lajlargej},
$n_j^Q$ is a stabilising sequence, namely $n_j^Q=j$ if $j$ is large enough.
As it is well-known, stabilising sequences are parameterised by partitions via formula
\eqref{eq:njmu}. It then follows that there exists a partition $\mu$ such that
\begin{equation}\label{eq:zbetheequationsmu}
z(\la_i;\lbrace\la_j\rbrace_{j \in \mathbb{Z}_{\geq 0}})=n^{\mu}_i+ \frac12, \quad i \in \mathbb{Z}_{\geq 0},
\end{equation}
where $n_i^{\mu}$ is as in \eqref{eq:njmu}.
We consider formula \eqref{eq:zbetheequationsmu}, together with \eqref{eq:zfunction} and \eqref{eq:latoQ}, as a system of equations for the sequence $\lbrace \la_j\rbrace_{j \in \mathbb{Z}_{\geq 0}}$. Since the value $z(0)=-\frac{2l+1}{k+3}$ is fixed and $z$ is monotonically increasing,  $l$ must be sufficenlty large that $z(0)-\frac12 < n^{\mu}_0$. The latter inequality is satisfied for every partition $\mu$ of a given integer $d_0$ if and only if
$2l+1> (k+3) (d_0-\frac12)$ \cite{CM2}.
We have the following theorem.
\begin{thm}\label{thm:BetheEquations}\cite[Theorem 1]{CM2}
Fix a $k<-1$ and a $d_0\geq0$.
\begin{enumerate}[i)]
\item If $l$ is large enough, for every partition $\mu$ of $d_0$, the
logarithmic Bethe equations \eqref{eq:zbetheequationsmu}
have a unique solution $\lbrace \la_j^{\mu}\rbrace_{j \in \mathbb{Z}_{\geq 0}}$ in the space of positive monotone sequences satisfying \eqref{eq:hatlajellfixed}.
\item Moreover, the solution $\lbrace \la_j^{\mu}\rbrace_{j \in \mathbb{Z}_{\geq 0}}$ satisfies the large momentum asymptotics \eqref{eq:lajpmu}. \qed
\end{enumerate}
\end{thm}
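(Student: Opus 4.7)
The plan is to cast the logarithmic Bethe equations as a fixed-point problem on a Banach space of sequences, and to combine a careful large $\hat{\ell}$ asymptotic analysis of the counting function $z$ with a contraction-mapping argument. Set $\hat{\ell}=l+\tfrac12$ and define
$$
\hat{\la}_j= \Bigl[\, A\,\hat{\ell} + B\,(2 n_j^\mu + 1)\, \Bigr]^{2/(k+3)}, \qquad A=\frac{(k+3)^{(k+3)/2}}{(k+2)^{(k+2)/2}}, \quad B= \Bigl(\tfrac{(3+k)^{4+k}}{8(2+k)^{1+k}}\Bigr)^{1/2},
$$
as the expected leading-order sequence predicted by \eqref{eq:lajpmu}. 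I would then introduce a Banach space $X_{\hat{\ell}}$ of strictly increasing positive sequences $\{\la_j\}$ equipped with a weighted sup norm tailored so that the ball $B_{\hat{\ell}}$ of radius $O(\hat{\ell}^{-\alpha})$ (for some small $\alpha>0$) centred at $\{\hat{\la}_j\}$ consists of sequences compatible with both the large $j$ asymptotics \eqref{eq:hatlajellfixed} and the large $\hat{\ell}$ asymptotics \eqref{eq:lajpmu}.

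First, I would derive a uniform asymptotic expansion of the counting function on $B_{\hat{\ell}}$. Starting from
$$
2\pi i\, z(\la;\{\la_j\})= -2\pi i\,\tfrac{2\hat{\ell}+1}{k+3} + \sum_{j\geq 0}\log\frac{1-\la q^{-2}/\la_j}{1-\la q^2/\la_j},
$$
I would split the sum at a cut-off index $j_*(\hat{\ell})$, handle the finite edge contribution termwise (only the indices $j$ with $n_j^\mu\neq j$ carry partition-dependent information, and these are finitely many), and treat the bulk tail via Euler--Maclaurin summation using the known large $j$ asymptotic \eqref{eq:lajlargej}. The outcome is an explicit formula $z(\la;\{\la_j\})= z_\infty(\la;\hat{\ell})+O(\hat{\ell}^{-\alpha})$, uniform for $\{\la_j\}\in B_{\hat{\ell}}$, whose inversion reproduces the ansatz $\hat{\la}_j$ at leading order.

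Second, I would define the self-map $T\colon B_{\hat{\ell}}\to X_{\hat{\ell}}$ by letting $T(\{\la_j\})_i$ be the unique positive root of $z(\cdot;\{\la_j\})=n_i^\mu+\tfrac12$; monotonicity of $z$ in its first slot makes $T$ well-defined. Invariance $T(B_{\hat{\ell}})\subset B_{\hat{\ell}}$ and the contraction property follow from Step 1: differentiating $z$ in the $\la_j$'s and summing gives a Jacobian whose off-diagonal part is $O(\hat{\ell}^{-\alpha})$, while the diagonal part is bounded below by the leading slope of $z_\infty$ in $\la$. Banach's fixed-point theorem produces a unique fixed point $\{\la_j^\mu\}\in B_{\hat{\ell}}$, and by construction it satisfies \eqref{eq:lajpmu}. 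To promote uniqueness from $B_{\hat{\ell}}$ to the full space of monotone sequences satisfying \eqref{eq:hatlajellfixed}, one observes that any such solution has a stabilising root-number sequence by \cite[Theorem 4.9]{CM2}, hence is attached to some partition $\nu$; the leading term of $z_\infty$ then forces $\nu=\mu$ and the solution into $B_{\hat{\ell}}$ once $\hat{\ell}$ is large enough.

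The hard part will be the uniform large $\hat{\ell}$ expansion of the counting function in Step 1. The sum is over an infinite sequence of complex logarithms whose branches must be tracked so that the imaginary part cancels and $z$ stays real, the convergence is only conditional, and the Euler--Maclaurin correction must be pushed one order beyond leading to recover the precise constant $B$ in \eqref{eq:lajpmu}. The soft-edge regime $\la\sim A\hat{\ell}$, where the local density of Bethe roots degenerates, requires a separate local analysis and is exactly the mechanism through which the harmonic-oscillator spacing $2n_j^\mu+1$ (rather than $2j+1$) enters the final asymptotic formula.
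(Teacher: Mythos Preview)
The paper does not actually prove this theorem: it is stated with the citation \cite[Theorem 1]{CM2} and closed with a \qed, meaning the authors import the result wholesale from Conti--Masoero and give no argument of their own. There is therefore nothing in the present paper to compare your proposal against.

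That said, your outline is broadly in the spirit of what \cite{CM2} does. The strategy there is indeed to recast the logarithmic Bethe equations as a fixed-point problem for the sequence of roots, using the counting function \eqref{eq:zfunction} and its monotonicity, and to run a contraction argument in a suitable sequence space once $\hat{\ell}$ is large. Your identification of the hard step --- obtaining a uniform large-$\hat{\ell}$ expansion of $z(\la;\{\la_j\})$ that is precise enough to recover the constant $B$ and to handle the soft-edge regime $\la\sim A\hat{\ell}$ --- is accurate; this is where most of the work in \cite{CM2} lies, and it requires a nonlinear-steepest-descent type analysis rather than a plain Euler--Maclaurin estimate. Your uniqueness-in-the-large argument via \cite[Theorem 4.9]{CM2} is also the right move. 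If you want to turn this into a self-contained proof you will need to consult \cite{CM2} directly for the analytic machinery, since none of it is reproduced here.
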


\subsection{Comparison}
Assuming Conjectures \ref{conj:largepd0=d1} and \ref{conj:largepblz},
the comparison between $Q_+$ and $\overline{Q}_+$ functions for $l$ large and positive and $-2<k<-1$ is a direct corollary of the above facts.
Indeed,
due to the large $j$ asymptotics \eqref{eq:hatlajellfixed}, the sequences
$\lbrace \la_j\rbrace_{j \in \mathbb{Z}_{\geq 0}}$ and $\lbrace (k+3)^{\frac{2}{k+3}}\,\bar{\la}_j\rbrace_{j \in \mathbb{Z}_{\geq 0}}$ satisfy the logarithmic Bethe equations \eqref{eq:zbetheequationsmu} for some partitions $\mu_1$ and $\mu_2$, respectively.
Moreover, due to the large $l$ asymptotics \eqref{eq:lajpmu} and Theorem \ref{thm:BetheEquations} ii),
the partitions  $\mu_1$ and $\mu_2$ coincide with $\mu$.
Hence, due to Theorem \ref{thm:BetheEquations} i), $\la_j= (k+3)^{\frac{2}{k+3}}\,\bar{\la}_j$ for all $j \in \mathbb{Z}_{\geq 0}$.
Therefore $Q((k+3)^{\frac{2}{k+3}}\, \la,\underline{s}^{\mu})$ and $ \overline{Q}(\la,\underline{z}^{\mu})$ coincide up to a multiplicative constant: there exists a constant $c=c(k,l,\mu) \neq 0$ such that
 \begin{equation}\label{eq:Q=barQ}
  \overline{Q}_+(\la,\underline{z}^{\mu})= c \; Q_+((k+3)^{\frac{2}{k+3}}\, \la,\underline{s}^{\mu}),\qquad  \forall \la \in \C.
 \end{equation}

We think that our argument can be completed to a rigorous proof  of \eqref{eq:Q=barQ}.  We recall here the technical difficulties on the way. First, our approach is based on a partially conjectural classification of the opers \eqref{L Gaudin-1} in terms of solutions to the Gaudin Bethe Ansatz equations, and on the partially conjectural classification of the latter solutions in the limit $l \to +\infty$, see Conjecture \ref{conj r}. This conjecture is established for non-degenerate solutions, see Proposition \ref{old prop}.

Second, the study of the local behaviour at $\la=0$ of the functions $Q_{\pm}$ in the case of $L^G$ is much more involved
than in the case of $L^{BLZ}$. We gave a proof that $Q_{+}$ is entire, without computing the order of zero, and we provided an upper bound for the (possible) pole of $Q_{-}$ at $\la=0$, see Proposition \ref{prop:extendability}. Third,  asymptotics  \eqref{eq:hatlajellfixed}, \eqref{eq:lajpmu} of zeroes of $Q_{+}$ is difficult to compute. For monster potential operators $L^{BLZ}$ it is done in \cite{CM2} (with some restrictions), a similar method could be used for $L^{G}$. We give a partial argument in Section \ref{heu section}.
Finally, as usual, it is not clear how to extend such large $l$ considerations to all values.

\medskip

We discussed the correspondence \eqref{eq:Q=barQ} for the case of $r=0$ (that is $d_0=d_1$), $-2<k<-1$ and $l$ large. Due Propositions \ref{prop:R0Qpm} and \ref{prop:R1Qpm} a
similar formula should hold for all $r$, $-2<k<-1$, generic $l$,  and  $d_0\geq r^2$. 

\bigskip

{\bf Acknowledgements.} DM and EM would like to thank the Simons Center for Geometry and Physics where this collaboration has started. EM and AR would like to thank the Group of Mathematical Physics of Lisbon University where this project started for hospitality. EM would like to that the University of Bergamo where the project was developed for the hospitality and support. EM is partially supported by the Simons foundation grant number \#709444. AR is partially supported by funds of INFN (Istituto Nazionale di Fisica Nucleare) by IS-CSN4 Mathematical Methods of Nonlinear Physics and  by the INdAM–GNFM Project CUP-E53C22001930001. DM is supported by the FCT projects UIDB/00208/2020,
2021.00091.CEE  CIND, and 2022.03702.PTDC (GENIDE), and he is a member of the COST Action CA21109 CaLISTA.

\end{document}